\appto\UrlBreaks{\do\-}
\DeclareMathOperator*{\argmin}{argmin}
\DeclareMathOperator{\polylog}{polylog}
\theoremstyle{definition}
\newtheorem{theorem}{Theorem}
\newtheorem{proposition}[theorem]{Proposition}
\theoremstyle{definition}
\newtheorem{definition}[theorem]{Definition}
\theoremstyle{remark}
\newtheorem{conjecture}{Conjecture}
\newtheorem{remark}[conjecture]{Remark}
\begin{document}

\title{Polylog-overhead highly fault-tolerant measurement-based quantum computation: all-Gaussian implementation with Gottesman-Kitaev-Preskill code}

\author{Hayata Yamasaki}
\email{hayata.yamasaki@gmail.com}
\affiliation{Photon Science Center, Graduate School of Engineering, The University of Tokyo, 7--3--1 Hongo, Bunkyo-ku, Tokyo 113--8656, Japan}
\author{Kosuke Fukui}
\affiliation{Department of Applied Physics, Graduate School of Engineering, The University of Tokyo, 7--3--1  Hongo, Bunkyo-ku, Tokyo 113--8656, Japan}
\author{Yuki Takeuchi}
\affiliation{NTT Communication Science Laboratories, NTT Corporation, 3--1 Morinosato-Wakamiya, Atsugi, Kanagawa 243--0198, Japan}
\author{Seiichiro Tani}
\affiliation{NTT Communication Science Laboratories, NTT Corporation, 3--1 Morinosato-Wakamiya, Atsugi, Kanagawa 243--0198, Japan}
\author{Masato Koashi}
\affiliation{Department of Applied Physics, Graduate School of Engineering, The University of Tokyo, 7--3--1  Hongo, Bunkyo-ku, Tokyo 113--8656, Japan}
\affiliation{Photon Science Center, Graduate School of Engineering, The University of Tokyo, 7--3--1 Hongo, Bunkyo-ku, Tokyo 113--8656, Japan}

\date{\today}

\begin{abstract}
  Scalability of flying photonic quantum systems in generating quantum entanglement offers a potential for implementing large-scale fault-tolerant quantum computation, especially by means of measurement-based quantum computation (MBQC). However, existing protocols for MBQC inevitably impose a polynomial overhead cost in implementing quantum computation due to geometrical constraints of entanglement structures used in the protocols, and the polynomial overhead potentially cancels out useful polynomial speedups in quantum computation. To implement quantum computation without this cancellation, we construct a protocol for photonic MBQC that achieves as low as \textit{poly-logarithmic overhead}, by introducing an entanglement structure for low-overhead qubit permutation. Based on this protocol, we design a fault-tolerant photonic MBQC protocol that can be performed by experimentally tractable \textit{homodyne detection and Gaussian entangling operations} combined with the Gottesman-Kitaev-Preskill (GKP) quantum error-correcting code, which we concatenate with the $7$-qubit code. Our fault-tolerant protocol achieves the threshold $7.8$~dB in terms of the squeezing level of the GKP code, outperforming $8.3$~dB of the best existing protocol for fault-tolerant quantum computation with the GKP surface code. Thus, bridging a gap between theoretical progress on MBQC and photonic experiments towards implementing MBQC, our results open a new way towards realization of a large class of quantum speedups including those polynomial.
\end{abstract}

\maketitle

\section{\label{sec:introduction}Introduction}

\begin{figure}[t]
    \centering
    \includegraphics[width=3.4in]{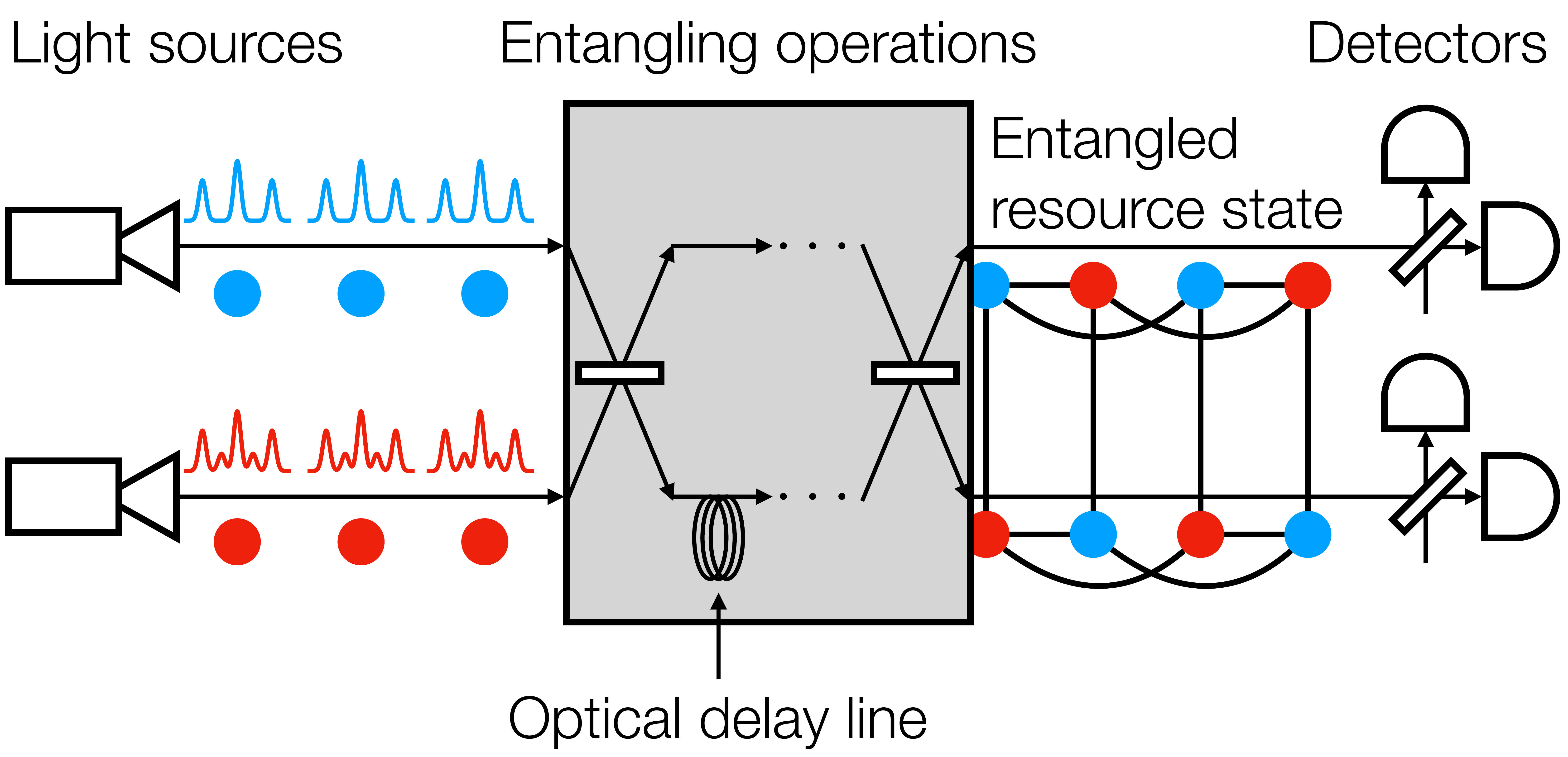}
    \caption{\label{fig:introduction}Measurement-based quantum computation (MBQC) through the time-domain multiplexing (TDM) approach. Light sources continuously emit optical modes.
      Shifting the timing by optical delay lines, we use different time domains of the light from the sources as subsystems (blue and red circles).
      We generate a resource state for MBQC by entangling these subsystems.
    Detectors immediately measure the resource state before decoherence. The resource state is independent of what to compute, and the MBQC is conducted by switching between different measurements. In our MBQC protocol, measurements are two types of homodyne detection in the position and momentum quadratures, and entangling operations can be Gaussian operations, while we concentrate the technological challenge of introducing non-Gaussianity on light sources, in the mode of which we encode qubits by means of the Gottesman-Kitaev-Preskill (GKP) code.}
\end{figure}

\begin{figure}[t]
    \centering
    \includegraphics[width=3.4in]{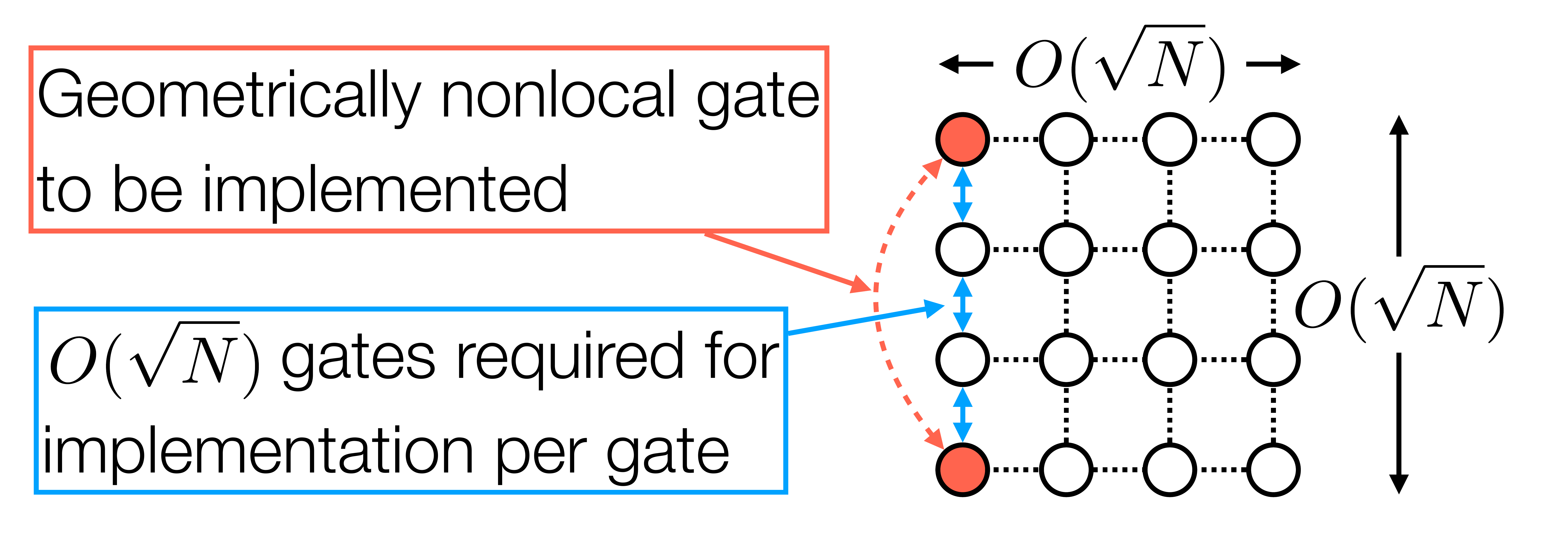}
    \caption{\label{fig:overhead}An example of polynomial overhead per implementation of a multiqubit gate on architectures with the constraints of geometrically local interactions, such as superconducting qubits. Suppose that $N$ qubits depicted as circles are allowed to interact only between those neighboring with respect to a $2$-dimensional geometrical constraint given by the square lattice in the figure. One way to implement a geometrically nonlocal two-qubit gate (red arrow) between distant qubits (red circles) is to implement $\textsc{SWAP}$ gates repeatedly between neighboring qubits along the blue arrows, so that the state of the two distant qubits can be brought to two neighboring qubits on which the two-qubit gate is implementable using the geometrically local interaction. In this case, the overhead of implementation per gate is polynomially large $O(\sqrt{N})$ in $N$ and potentially cancels out polynomial speedups in quantum computation such as those based on Grover search.}
\end{figure}

Photonic quantum technologies~\cite{C2,B8,doi:10.1063/1.5100160} provide a platform for generating multipartite quantum entanglement, correlation that is characteristic to multipartite quantum systems~\cite{H4,A12,Y5}, on a large scale that can be useful for quantum computation.
Quantum computation offers advantages over conventional classical computation in terms of speedup in computational time~\cite{W5,H3} and stronger security~\cite{B14,B15},
where whether the quantum speedup is exponential or polynomial may depend on computational tasks.
Owing to the nature of photonic systems that fly in space, photonic architectures can achieve the large-scale entanglement generation through a time-domain multiplexing (TDM) approach~\cite{Y3},
where different time domains of light emitted continuously from a light source are used many times, which are potentially unlimited, as subsystems for an entangled state to be generated.
Using the TDM approach, the experiment in Ref.~\cite{Y4} has generated a continuous-variable (CV) $1$-dimensional cluster state on the scale of $1.2\times 10^6$ entangled modes, and the experiments in Refs.~\cite{A7,Larsen369} have also generated large-scale $2$-dimensional CV cluster states.

Multipartite entanglement on this large scale matters to measurement-based quantum computation (MBQC)~\cite{B2,R3} and quantum error correction (QEC)~\cite{G,D,T10,B}.
MBQC, also known as one-way quantum computation, is a model of quantum computation implementable by generating a fixed multipartite entangled state as a computational resource, followed by sequentially measuring each subsystem of this resource entangled state.
In MBQC, the resource state is independent of what to compute, the computation is conducted by switching between measurements in different bases.
To achieve the quantum computation in a fault-tolerant way, QEC is a key technique for suppressing noise-induced errors in computation by redundantly representing a state of a logical quantum bit (qubit) as an entangled state of an adequate number of physical qubits.
Besides several strategies to implement photonic quantum computation such as the Knill, Laflamme, and Milburn scheme~\cite{K1} and the loop-based architecture~\cite{T1},
the scalability of the photonic architectures suggests a promising framework of MBQC through the TDM approach, using the large-scale entanglement generation for preparing the resource state, followed by immediate measurements of the resource to execute the computation~\cite{doi:10.1063/1.5100160,Y3,Y4,A7,Larsen369}, as illustrated in Fig.~\ref{fig:introduction}.

In this paper, we construct a \textit{low-overhead} protocol for implementing the fault-tolerant quantum computation using the photonic MBQC\@.
Algorithms for quantum computation are conventionally represented as quantum circuits where quantum logic gates may act on arbitrary qubits~\cite{N4},
but implementation of the algorithms in experiments may incur an overhead computational cost due to architectural constraints as well as QEC\@.
For example, matter-based superconducting qubits~\cite{W3,K8} arranged on a $2$-dimensional chip surface impose the constraints of geometrically local interactions only between neighboring qubits on the surface.
Within such an architectural constraint on the geometry, the overhead in applying a multiqubit gate to geometrically distant qubits can be polynomially large in terms of the number of qubits to mediate the distant interactions on the chip surface~\cite{B16}, as illustrated in Fig.~\ref{fig:overhead}.
While we may still achieve universal quantum computation only by geometrically local two-qubit gates, rewriting a geometrically nonlocal quantum circuit into a geometry-respecting circuit may increase the size of the circuit for interacting geometrically distant qubits;
\textit{e.g.}, as shown in Fig.~\ref{fig:overhead}, the implementation of the $\textsc{SWAP}$ gates repeated many times to mediate interactions at a distance incurs additional computational steps in the rewriting.
The additional computational steps (\textit{i.e.}, the additional circuit size) in the implementation of quantum computation, compared to the original geometrically nonlocal circuit, are referred to as the \textit{overhead}.

As for MBQC, whereas various multipartite entangled states are known to serve as resources~\cite{B1,R1,R2,B3,M9,R8,R7,R5,R6,W1,W2,K3,N1,N5,M2,M6,G4,Y1}, MBQC using the existing resource states similarly produces polynomial overheads since the resource states for MBQC are conventionally designed to be generated within the constraints of geometrically local interactions~\footnote{To be precise, in contrast with the other known resource states, the resource state in Ref.~\cite{Y1} cannot be generated within $2$-dimensional geometrical constraints, but the MBQC protocol in Ref.~\cite{Y1} incur a polynomial overhead cost.},
which would be relevant to the matter-based architectures, such as the superconducting qubits, or to spin systems appearing in condensed matter physics.
For example, the cluster state~\cite{B1,R1,R2}, that is, a graph state represented as a $2$-dimensional square lattice, can be generated by geometrically local interaction with respect to a square lattice.
To implement a given geometrically nonlocal circuit by MBQC using the cluster state,
we need to rewrite the given circuit into a circuit only with the geometrically local two-qubit gates that are implementable using the cluster state.
Due to this rewriting,
similarly to Fig.~\ref{fig:overhead},
MBQC using the cluster state incurs a polynomial overhead.
In general, the overhead arising from $2$-dimensional, or $d$-dimensional, geometrical constraints scales polynomially $O(N^\alpha)$ in terms of the number $N$ of the qubits in the circuit~\cite{B16}, where $\alpha>0$, and hence causes polynomial overhead in implementing quantum computation.

These polynomial overheads in implementing quantum computation are problematic because the polynomial overheads, totally or partially, cancel out promising polynomial speedups of quantum computation that have potential social impacts, such as those for \textsf{NP}-hard combinatorial optimization~\cite{H14,A8}, recommendation systems~\cite{K9,T3,A9}, machine learning~\cite{B17,C7,A10}, and a general class of quantum speedups based on Grover search~\cite{G7}.
Indeed, it has been an actual issue in Refs.~\cite{C5,C6,J2} how to compile a given geometrically nonlocal circuit into a circuit respecting the given geometry for an implementation at a small (yet polynomial) overhead cost, while the optimization in this compilation is computationally hard in general~\cite{J1,Botea2018OnTC,Siraichi2018,Tan2020}.
In contrast with the matter-based architectures with the geometrical constraints, the flying photonic systems are essentially free from the geometrical constraints.
However, as long as we use the resource states for MBQC that are generated by the geometry-respecting interactions, the photonic MBQC suffers from the same problem of the polynomial overheads.

To overcome this problem,
we here construct an MBQC protocol optimized for the photonic MBQC and achieving as low as a poly-logarithmic (polylog) overhead in implementing any quantum circuit composed of a computationally universal gate set,
where no geometrical constraint is assumed on the gates.
In our protocol, the resource state for MBQC is a multiqubit hypergraph state that can be encoded into optical modes using the Gottesman-Kitaev-Preskill (GKP) code~\cite{G1}, which encodes a discrete qubit into a CV mode.
This MBQC protocol is also implementable using discrete qubits, not only using the CV mode with the GKP code.
We refer to the logical qubit encoded into the CV mode via the GKP code as a \textit{GKP qubit}, and to the physical CV state of the GKP code as a \textit{GKP state}.

While MBQC requires adaptive switching of measurements to conduct the quantum computation, our MBQC protocol switches only between two types of \textit{homodyne detection} of the optical mode in the position quadrature and in the momentum quadrature.
Significantly, we can reliably perform this under current technologies by applying a phase shift to the local oscillator used in implementing the homodyne detection.
Furthermore, the resource state can be generated by experimentally tractable Gaussian operations for entangling the optical modes from light source, given that the light source can emit a GKP-codeword state denoted by $\Ket{0}$ and a GKP magic state denoted by $\Ket{\frac{\pi}{8}}$.
Our strategy concentrates the challenge in implementing quantum computation on realization of the states $\Ket{0}$ and $\Ket{\frac{\pi}{8}}$ of the GKP code, which are non-Gaussian.
Note that rather than realizing both $\Ket{0}$ and $\Ket{\frac{\pi}{8}}$, we can use a single source that emits only one of the states $\Ket{0}$ and $\Ket{\frac{\pi}{8}}$ to prepare the other by Gaussian operations~\cite{B12,Yamasaki2019}.
Non-Gaussian states and operations are technologically more costly to use than Gaussian states and operations,
but non-Gaussianity is necessary for universal quantum computation.
The non-Gaussianity is also indispensable for correcting errors in CV quantum computation~\cite{PhysRevLett.102.120501}, where the use of the GKP code is well motivated due to its high performance in QEC~\cite{F1,V1,N2}.
Advantageously, our resource state can be generated mostly using Gaussian operations and GKP $\Ket{0}$s similarly to the conventional multiqubit cluster states encoded by the GKP code; that is, the required number of logical non-Clifford gates on the GKP code for generating the resource state is kept small so as to simplify the resource state preparation.
Note that as long as we take the advantage of implementing MBQC only with homodyne detection, which performs Pauli-basis measurements of GKP qubits for our resource state, it is impossible to remove all the non-Clifford gates from our resource state due to the Gottesman-Knill theorem~\cite{G9,A6}.
The required logical non-Clifford gate, \textit{i.e.}, the controlled-controlled-$Z$ ($CCZ$) gate in our case, can be implemented by Gaussian operations on at most two optical modes combined with GKP magic states $\Ket{\frac{\pi}{8}}$s.
This implementation of a $CCZ$ gate may require a constant additional implementation cost of non-Gaussianity, but the cost of preparing a GKP magic state $\Ket{\frac{\pi}{8}}$ is comparable to that of non-Gaussian $\Ket{0}$ of the GKP code~\cite{Yamasaki2019}.
Owing to the periodicity of our resource state for MBQC, we can generate the resource state by a \textit{non-adaptive} optical circuit of the Gaussian operations.
Therefore, our MBQC protocol can be implemented feasibly in the all-Gaussian way as long as we can additionally realize the GKP code as the source of non-Gaussianity, where no other non-Gaussianity than the GKP code is necessary.

Our overhead reduction in MBQC is achieved by introducing an idea of sorting networks~\cite{K4} into the entanglement structure of our resource states for the MBQC, so that the multiqubit state during the computation can be rearranged in an arbitrary order at a \textit{polylog} overhead cost.
As a result, the required number of non-Gaussian GKP qubits for implementing our polylog-overhead MBQC protocol is significantly small compared to implementing the existing MBQC protocol with the polynomial overhead.
Moreover,
owing to the entanglement structure for efficient qubit permutation,
our MBQC protocol does not require the computationally hard compilation of a given geometrically nonlocal circuit into a geometry-respecting circuit.

Furthermore, based on this MBQC protocol, we construct a fault-tolerant MBQC protocol that leads to a QEC threshold outperforming that of the existing leading-edge protocol for CV quantum computation.
Following Refs.~\cite{M4,F2,F6,N3},
we evaluate the threshold in terms of the squeezing level of each non-Gaussian GKP state, $\Ket{0}$ and $\Ket{\frac{\pi}{8}}$, given initially by the source.
A smaller squeezing level of the GKP code means a worse initial quality of the GKP code, leading to more noise.
In contrast with the existing state-of-the-art technique for fault-tolerant CV quantum computation that combines topological QEC with the GKP code~\cite{F2,F6,N3,V4,Hanggli2020},
we use the concatenated $7$-qubit code~\cite{PhysRevLett.77.793,S3,N4} to represent our GKP-encoded multiqubit resource state for MBQC, discovering that not only the topological QEC but also the concatenated QEC can provide a promising scheme for fault-tolerant CV quantum computation.
Significantly, our numerical calculation shows that our protocol achieves the threshold $7.8$~dB in terms of the required squeezing level and outperforms the previous leading-edge protocol with threshold $8.3$~dB~\cite{F6} in the same noise model.
With this threshold, our fault-tolerant MBQC protocol is applicable to MBQC using an \textit{arbitrary} multiqubit resource state at a logical level, in contrast with the existing fault-tolerant MBQC protocols~\cite{R8,R7,R5,R6,N1,N5} requiring a \textit{fixed} geometry of interaction, such as the $3$D square lattice for MBQC using the $3$D cluster state.
Applying this protocol to our resource hypergraph state, we can achieve fault-tolerant MBQC at a polylog overhead cost.

To improve the threshold,
we introduce two new QEC techniques for reducing errors that we need to suppress in fault-tolerant computation with the GKP code.
The first technique reduces CV errors arising from variances of Gaussian functions representing a finitely squeezed GKP state of a physical CV mode.
Instead of a conventional variance-reduction algorithm, we introduce a technique to enhance performance of the variance reduction by adjusting variances of an auxiliary GKP qubit used in the algorithm.
The second technique reduces bit- and phase-flip errors of GKP qubits at the logical level of the GKP code.
To achieve this, we establish a low-overhead error-detection technique for post-selecting constant-size building blocks for generating our resource state in high fidelity, by using the $7$-qubit code not only as an error-correcting code but also as an error-detecting code.
Our protocol begins with preparing high-fidelity building-block states by post-selection to attain high fault tolerance,
and then, we switch into deterministically transforming these building blocks into our resource state to achieve low overhead asymptotically.
Each post-selection of the constant-size building block may incur at most a constant overhead, and our MBQC protocol including QEC is designed to achieve the \textit{polylog} overall overhead cost.
Remarkably, the asymptotic scaling of the overhead in QEC is the same as that of the existing low-overhead protocols for fault-tolerant computation with the $7$-qubit code~\cite{G6,Chamberland2019faulttolerantmagic} up to a constant factor.
Moreover, we can reduce the constant factor of the overhead by controlling a parameter of our fault-tolerant protocol depending on how much better squeezing level of the GKP code we can realize than the threshold $7.8$ dB.

Consequently, the combination of these MBQC protocols establishes a framework for realizing polylog-overhead fault-tolerant quantum computation, without canceling out a large class of quantum speedups including those polynomial.
Furthermore, the generality of our protocols suggests a wide range of applications beyond implementing quantum computation; \textit{e.g.}, our polylog-overhead MBQC protocol can be useful for achieving polylog overhead in blind quantum computation based on MBQC~\cite{B14,B15}, and our fault-tolerant resource preparation protocol is applicable to preparing multipartite entangled states that serve as resources for distributed quantum information processing over optical networks~\cite{Wehner2018,Y14,Y6,Y5}.
Our results open up a route of taking flight from the geometrically constrained architectures, towards realizing quantum information processing without canceling out not only exponential but also polynomial quantum advantages in terms of speedup.

The rest of this paper is structured as follows.
In Sec.~\ref{sec:preliminaries}, we provide preliminaries to the GKP code, universal MBQC, techniques for fault-tolerant CV MBQC, and overhead cost in implementing quantum computation.
In Sec.~\ref{sec:resource_state}, we introduce and analyze an MBQC protocol using a multiqubit resource hypergraph state for implementing quantum computation at a polylog overhead cost.
Based on this MBQC protocol, we construct a protocol for achieving fault-tolerant MBQC in Sec.~\ref{sec:fault_tolerant}.
Our conclusion is given in Sec.~\ref{sec:conclusion}.

\section{\label{sec:preliminaries}Preliminaries}

In this section, we provide preliminaries to this paper.
In Sec.~\ref{sec:gkp}, we recall the Gottesman-Kitaev-Preskill (GKP) code and fix notations for continuous-variable (CV) quantum computation.
In Sec.~\ref{sec:mbqc}, we recall the idea of measurement-based quantum computation (MBQC) and clarify the definition of the universality of MBQC\@.
In Sec.~\ref{sec:qec_gkp},
we summarize techniques for quantum error correction (QEC) and fault-tolerant CV MBQC\@.
In Sec.~\ref{sec:overhead},
we define the overhead cost in implementing quantum computation.
The readers who are familiar with the GKP code can skip Sec.~\ref{sec:gkp}, and those who are familiar with MBQC can skip Sec.~\ref{sec:mbqc}.
The readers who are interested only in Sec.~\ref{sec:resource_state} on our results of polylog-overhead MBQC can skip Secs.~\ref{sec:gkp} and~\ref{sec:qec_gkp}.

\subsection{\label{sec:gkp}Gottesman-Kitaev-Preskill (GKP) code}

In this section, we recall the GKP code. We first introduce the notion of ideal GKP code, and then proceed to introducing the approximate GKP code that we consider in this paper.

\textbf{Ideal GKP code}:
The GKP code~\cite{G1} is a CV code for encoding qubits
into oscillators' position quadrature $\hat{q}$ and momentum quadrature $\hat{p}$,
\begin{align}
  &\hat{q}=\frac{1}{\sqrt{2}}\left(\hat{a}+\hat{a}^\dag\right),&&\hat{p}=\frac{1}{\sqrt{2}\mathrm{i}}\left(\hat{a}-\hat{a}^\dag\right),\\
  \Leftrightarrow\quad&\hat{a}=\frac{1}{\sqrt{2}}\left(\hat{q}+\mathrm{i}\hat{p}\right),&&\hat{a}^\dag=\frac{1}{\sqrt{2}}\left(\hat{q}-\mathrm{i}\hat{p}\right),
\end{align}
and $\hat{a}$ and $\hat{a}^\dag$  are the annihilation and creation operators, respectively, of a bosonic mode representing a CV system~\cite{C2,B8}.
The GKP code is designed to correct displacement errors in the $\hat{q}$ and $\hat{p}$ quadratures, and this method of quantum error correction (QEC) can achieve nearly optimal performance against Gaussian errors~\cite{F1,V1,N2}.
Regarding notations on qubit-based quantum computation~\cite{N4},
Pauli operators on a qubit are written as
\begin{align}
    X&\coloneqq\Ket{0}\Bra{1}+\Ket{1}\Bra{0},\\
    Y&\coloneqq\mathrm{i}\Ket{1}\Bra{0}-\mathrm{i}\Ket{0}\Bra{1},\\
    Z&\coloneqq\Ket{0}\Bra{0}-\Ket{1}\Bra{1}.
\end{align}
The $X$ basis refers to
\begin{equation}
    \left\{\Ket{+}\coloneqq\frac{1}{\sqrt{2}}\left(\Ket{0}+\Ket{1}\right),\Ket{-}\coloneqq\frac{1}{\sqrt{2}}\left(\Ket{0}-\Ket{1}\right)\right\},
\end{equation}
the $Y$ basis refers to
\begin{equation}
    \left\{\frac{1}{\sqrt{2}}\left(\Ket{0}+\mathrm{i}\Ket{1}\right),\frac{1}{\sqrt{2}}\left(\Ket{0}-\mathrm{i}\Ket{1}\right)\right\},
\end{equation}
and the $Z$ basis refers to the computational basis
\begin{equation}
    \left\{\Ket{0},\Ket{1}\right\}.
\end{equation}
A one-qubit measurement in the $X$, $Y$, and $Z$ bases is called the $X$, $Y$, and $Z$ measurement, respectively, where the bit values $0$ and $1$ of the measurement outcome refer to those corresponding to $\Ket{0}$ and $\Ket{1}$ in the $Z$ measurement, respectively, and to $\Ket{+}$ and $\Ket{-}$ in the $X$ measurement.
The identity operator is denoted by $\mathbbm{1}$.

Each codeword of a GKP code is ideally a superposition of infinitely many Gaussian states that are infinitely squeezed.
The simplest class of the GKP codes is the one-mode square lattice GKP code, and its orthogonal codewords corresponding to the $Z$ basis $\left\{\Ket{0},\Ket{1}\right\}$ are represented in terms of eigenstates $\Ket{q}$ of $\hat{q}$ as
\begin{align}
    \label{eq:gkp_codeword_0}
    \Ket{0}&\propto\sum_{s=-\infty}^{\infty}\Ket{q=2s\sqrt{\pi}},\\
    \label{eq:gkp_codeword_1}
    \Ket{1}&\propto\sum_{s=-\infty}^{\infty}\Ket{q=\left(2s+1\right)\sqrt{\pi}},
\end{align}
where we have $\hat{q}\Ket{q=\tilde{q}}=\tilde{q}\Ket{q=\tilde{q}}$ for any $\tilde{q}\in\mathbb{R}$.
In this case, the codewords corresponding to the $X$ basis $\left\{\Ket{+},\Ket{-}\right\}$ are represented in terms of eigenstates $\Ket{p}$ of $\hat{p}$ as
\begin{align}
    \Ket{+}&\coloneqq\frac{1}{\sqrt{2}}\left(\Ket{0}+\Ket{1}\right)\propto\sum_{s=-\infty}^{\infty}\Ket{p=2s\sqrt{\pi}},\\
    \Ket{-}&\coloneqq\frac{1}{\sqrt{2}}\left(\Ket{0}-\Ket{1}\right)\propto\sum_{s=-\infty}^{\infty}\Ket{p=\left(2s+1\right)\sqrt{\pi}},
\end{align}
where we have $\hat{p}\Ket{p=\tilde{p}}=\tilde{p}\Ket{p=\tilde{p}}$ for any $\tilde{p}\in\mathbb{R}$.
We refer to the logical qubit encoded in this GKP code as a GKP qubit,
and a physical state of the GKP code as a GKP state.
As shown in Ref.~\cite{G1}, Gaussian operations~\cite{C2,B8} suffice to perform Clifford gates on GKP qubits, such as the $H$, $S$, and controlled-$Z$ ($CZ$) gates
\begin{align}
  H&\coloneqq\frac{1}{\sqrt{2}}\left(\Ket{0}\Bra{0}+\Ket{0}\Bra{1}+\Ket{1}\Bra{0}-\Ket{1}\Bra{1}\right),\\
  S&\coloneqq\Ket{0}\Bra{0}+\mathrm{i}\Ket{1}\Bra{1},\\
  \label{eq:cz}
  CZ&\coloneqq\Ket{0}\Bra{0}\otimes\mathbbm{1}+\Ket{1}\Bra{1}\otimes Z\nonumber\\
    &=\sum_{b_1,b_2=0}^{1}{\left(-1\right)}^{b_1 b_2}\Ket{b_1}\Bra{b_1}\otimes\Ket{b_2}\Bra{b_2}.
\end{align}
These gates are implemented by symplectic transformations of quadratures
\begin{align}
  H:\,&\hat{q}\to\hat{p},&&\hat{p}\to -\hat{q},\\
  S:\,&\hat{q}\to\hat{q},&&\hat{p}\to \hat{p}-\hat{q},\\
  \label{eq:cz_quadrature}
  CZ:\,&\hat{q}_1\to\hat{q}_1,&&\hat{p}_1\to \hat{p}_1+\hat{q}_2,\nonumber\\
       &\hat{q}_2\to \hat{q}_2,&&\hat{p}_2\to \hat{p}_2+\hat{q}_1,
\end{align}
where $\hat{q}_1, \hat{p}_1$ and $\hat{q}_2, \hat{p}_2$ are quadratures of the control and target modes, respectively.
Note that to implement Clifford unitary transformations, instead of the $CZ$ gate, we can use the controlled-$\textsc{NOT}$ ($\textsc{CNOT}$) gate, which we also call the controlled-$X$ ($CX$) gate
\begin{equation}
  \textsc{CNOT}\coloneqq\Ket{0}\Bra{0}\otimes\mathbbm{1}+\Ket{1}\Bra{1}\otimes X,
\end{equation}
and the $\textsc{CNOT}$ gate on GKP qubits is implemented by
\begin{equation}
  \label{eq:cx_quadrature}
  \begin{aligned}
    \textsc{CNOT}:\,&\hat{q}_1\to \hat{q}_1,&&\hat{p}_1\to \hat{p}_1-\hat{p}_2,\\
           &\hat{q}_2\to \hat{q}_1+\hat{q}_2,&&\hat{p}_2\to \hat{p}_2,
  \end{aligned}
\end{equation}
where the notations are the same as those in~\eqref{eq:cz_quadrature}.

\begin{figure}[t]
    \centering
    \includegraphics[width=3.4in]{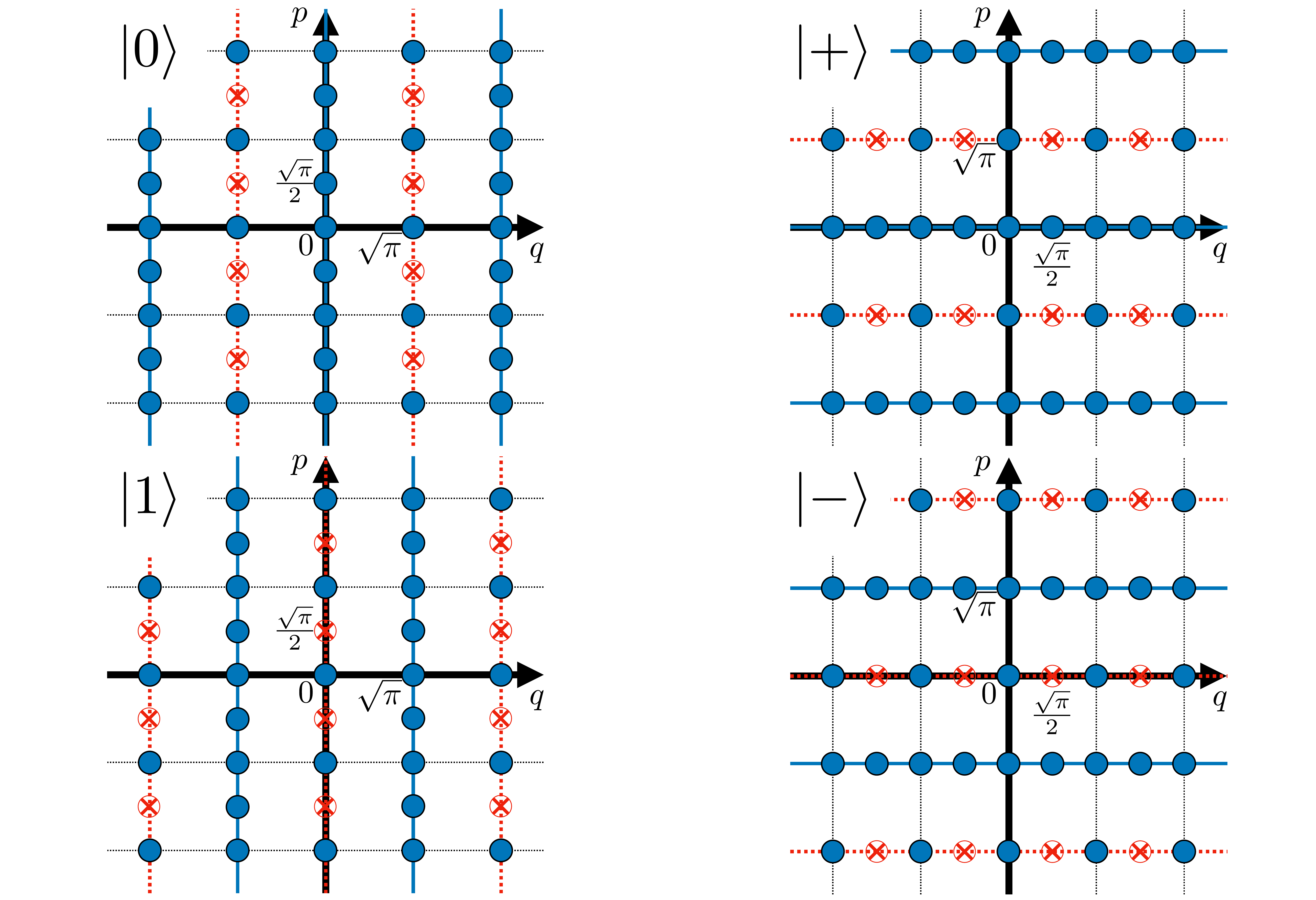}
    \caption{\label{fig:wigner_function}Wigner functions of GKP states in the $Z$ and $X$ bases of a GKP qubit, where each blue filled circle represents a positive delta function $\delta$, and each red circled X represents the corresponding negative delta function $-\delta$; \textit{e.g.}, $\Ket{0}$ is represented by $W(q,p)\propto\sum_{s,t\in\mathbb{Z}}{(-1)}^{st}\delta\left(q-\sqrt{\pi}s\right)\delta\left(p-\frac{\sqrt{\pi}}{2}t\right)$. The $Z$ basis $\left\{\Ket{0},\Ket{1}\right\}$ can be identified by measuring $\hat{q}$, where the blue solid lines at $2\sqrt{\pi}$ intervals represent the peaks of the probability density function in measuring $\hat{q}$, while the probability vanishes on red dotted lines due to interference. The $X$ basis $\left\{\Ket{+},\Ket{-}\right\}$ can be identified by measuring $\hat{p}$, similarly to the $Z$ basis.}
\end{figure}

The $Z$ and $X$ measurements of the GKP qubits can be performed by measuring the quadratures $\hat{q}$ and $\hat{p}$, respectively, and hence can be implemented by the most common Gaussian measurement, homodyne detection~\cite{B8}.
The ideal GKP states in the $Z$ and $X$ basis are illustrated in Fig.~\ref{fig:wigner_function} as Wigner functions, where the Winger function of a density operator $\hat\psi$ is defined as
\begin{equation}
  W\left(q,p\right)\coloneqq\frac{1}{2\pi}\int_{-\infty}^{\infty}dx\,\mathrm{e}^{\mathrm{i}xp}\Braket{q-\frac{x}{2}|\hat{\psi}|q+\frac{x}{2}},
\end{equation}
and the probability density functions in measuring $\hat{q}$ and $\hat{p}$ are represented, respectively, as
\begin{align}
    \Braket{q|\hat{\psi}|q}&=\int_{-\infty}^{\infty}W\left(q,p\right)dp,\\
    \Braket{p|\hat{\psi}|p}&=\int_{-\infty}^{\infty}W\left(q,p\right)dq.
\end{align}
The $Z$ measurement of a GKP qubit is implemented by homodyne detection of $\hat{q}$,
and the $X$ measurement by that of $\hat{p}$.
In the ideal case of infinitely squeezed GKP qubits for $\Ket{0}$ in~\eqref{eq:gkp_codeword_0} and $\Ket{1}$ in~\eqref{eq:gkp_codeword_1}, from the measurement outcome $\tilde{q}\in\mathbb{R}$ of the homodyne detection, we can obtain the logical bit value $\tilde{b}\in\left\{0,1\right\}$ of the $Z$ measurement of the GKP qubit by
\begin{equation}
  \label{eq:true_bit}
  \tilde{b}=\begin{cases}
    0&\text{if }\tilde{q}\in\left\{2s\sqrt{\pi}:s\in\mathbb{Z}\right\},\\
    1&\text{if }\tilde{q}\in\left\{\left(2s+1\right)\sqrt{\pi}:s\in\mathbb{Z}\right\}.
  \end{cases}
\end{equation}

\begin{figure}[t]
  \centering
  \includegraphics[width=2.0in]{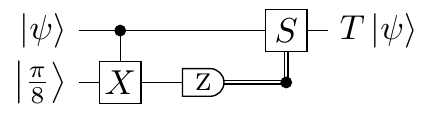}
  \caption{\label{fig:gate_teleportation}A state-injection quantum circuit for applying the $T$ gate to any input one-qubit state $\Ket{\psi}$. The circuit is composed only of Clifford gates except the auxiliary qubit prepared in the $\frac{\pi}{8}$ state $\Ket{\frac{\pi}{8}}$ defined as~\eqref{eq:t_state}, where the Clifford gates on GKP qubits can be implemented by Gaussian operations.}
\end{figure}

As for non-Clifford gates on GKP qubits, Ref.~\cite{G1} provides protocols using Gaussian operations and an auxiliary mode prepared in a GKP Hadamard eigenstate, a GKP $\frac{\pi}{8}$ state, or a cubic phase state for deterministically performing $T$ gates on a GKP qubit
\begin{equation}
    T\coloneqq\Ket{0}\Bra{0}+\mathrm{e}^{\mathrm{i}\frac{\pi}{4}}\Ket{1}\Bra{1}.
\end{equation}
In particular, the GKP $\frac{\pi}{8}$ state is defined as
\begin{equation}
  \label{eq:t_state}
  \Ket{\frac{\pi}{8}}\coloneqq T\Ket{+}=\frac{1}{\sqrt{2}}\left(\Ket{0}+\mathrm{e}^{\mathrm{i}\frac{\pi}{4}}\Ket{1}\right),
\end{equation}
and the $T$ gate can be applied to any input state $\Ket{\psi}$ by performing Gaussian operations that implement the quantum circuit in Fig.~\ref{fig:gate_teleportation} on GKP qubits.
Using the protocol for implementing $T$, we can deterministically implement the controlled-controlled-$Z$ ($CCZ$) gate, another non-Clifford gate, on GKP qubits
\begin{equation}
    \label{eq:ccz}
    CCZ\coloneqq\sum_{b_1,b_2,b_3=0}^{1}{\left(-1\right)}^{b_1 b_2 b_3}\Ket{b_1}\Bra{b_1}\otimes\Ket{b_2}\Bra{b_2}\otimes\Ket{b_3}\Bra{b_3},
\end{equation}
by a quantum circuit composed of Clifford gates (\textit{e.g.}, $H$, $S$, and $CZ$), four $T$ gates, and one auxiliary GKP qubit prepared in $\Ket{0}$~\cite{S2,C1,M1} (as we will see later in Fig.~\ref{fig:preparation_ccz} in Sec.~\ref{sec:resource_generation}),
or by a quantum circuit composed of Clifford gates and seven $T$ gates without any auxiliary GKP qubit~\cite{A4,G8}.
Similarly to $\Ket{\frac{\pi}{8}}$, we may write
\begin{align}
  \label{eq:ccz_state}
  \Ket{CCZ}&\coloneqq CCZ\Ket{+}^{\otimes 3}\nonumber\\
           &=\frac{1}{2\sqrt{2}}\sum_{b_1,b_2,b_3=0}^{1}{\left(-1\right)}^{b_1 b_2 b_3}\Ket{b_1}\otimes\Ket{b_2}\otimes\Ket{b_3}.
\end{align}

In this paper, in addition to Gaussian operations, light sources that can emit $\Ket{0}$ and $\Ket{\frac{\pi}{8}}$ of GKP qubits are assumed to be available, up to an approximation shown in the following.
Since $\Ket{0}$ and $\Ket{\frac{\pi}{8}}$ of the GKP code can be generated by Gaussian operations from $\Ket{\frac{\pi}{8}}$s and $\Ket{0}$s respectively~\cite{B12,Yamasaki2019},
one of these light sources are actually optional for universal quantum computation, but to simplify the presentation, we assume that both are available.

\textbf{Approximate GKP code}:
The codewords of the ideal GKP codes, such as the right-hand sides of~\eqref{eq:gkp_codeword_0} and~\eqref{eq:gkp_codeword_1}, are non-normalized and hence unphysical,
but this normalization problem can be circumvented by considering a superposition of finitely squeezed states weighted by a Gaussian envelope~\cite{G1,M10}.
In particular, we substitute $\Ket{q}$ in~\eqref{eq:gkp_codeword_0} and~\eqref{eq:gkp_codeword_1} with a normalized finitely squeezed vacuum state
\begin{equation}
  \label{eq:squeezed_vacuum}
  \hat{S}\left(-\ln\sqrt{2\sigma^2}\right)\Ket{\mathrm{vac}}={\left(\frac{1}{2\pi\sigma^2}\right)}^{\frac{1}{4}}\int_{-\infty}^{\infty}dq\,\mathrm{e}^{-\frac{1}{2\left(2\sigma^2\right)}q^2}\Ket{q},
\end{equation}
where $\Ket{\mathrm{vac}}$ is the vacuum state, $\hat{S}$ is the squeezing operator defined as
\begin{equation}
    \hat{S}\left(r\right)\coloneqq \mathrm{e}^{\frac{r}{2}\left({\hat{a}}^2 - {\hat{a}}^{\dag 2}\right)},\,r\in\mathbb{R},
\end{equation}
and $\sigma^2$ represents the variance of the Gaussian probability density function obtained by measuring the state~\eqref{eq:squeezed_vacuum} in the $\hat{q}$ quadrature.
Using the variance of $\Ket{\mathrm{vac}}$, \textit{i.e.}, $\frac{1}{2}$, as the reference,
we may represent the squeezing level corresponding to $\sigma^2$ in terms of the decibel (dB)
\begin{equation}
  \label{eq:squeezing_level}
  -10\log_{10}\left(2\sigma^2\right),
\end{equation}
where the negative sign is by convention~\cite{M4}.
To obtain approximate GKP codewords,
we consider the superposition of the squeezed vacuum states~\eqref{eq:squeezed_vacuum} displaced and weighted by a Gaussian envelope in such a way that the code space spanned by the codewords is symmetric, \textit{i.e.}, invariant under reversal of $\hat{q}$ and $\hat{p}$.
As a result, we have the following standard form~\cite{M10} of approximately orthogonal codewords of the symmetric approximate GKP code that converges to those of the ideal GKP code as $\sigma\to 0$
\begin{widetext}
  \begin{align}
    \label{eq:gkp_approximate_codeword_0}
    &\begin{aligned}
      \Ket{0}&\propto\sum_{s=-\infty}^{\infty}\int_{-\infty}^{\infty}{dq}\,\mathrm{e}^{-\frac{2\sigma^2}{2\left(1-4\sigma^4\right)}{\left(2s\sqrt{\left(1-4\sigma^4\right)\pi}\right)}^2}\mathrm{e}^{-\frac{1}{2\left(2\sigma^2\right)}{\left(q-2s\sqrt{\left(1-4\sigma^4\right)\pi}\right)}^2}\Ket{q}\\
             &=\sum_{s=-\infty}^{\infty}\mathrm{e}^{-\frac{2\sigma^2}{2\left(1-4\sigma^4\right)}{\left(2s\sqrt{\left(1-4\sigma^4\right)\pi}\right)}^2}\hat{V}\left(2s\sqrt{\left(1-4\sigma^4\right)\pi},0\right)\hat{S}\left(-\ln\sqrt{2\sigma^2}\right)\Ket{\mathrm{vac}}\\
             &=\hat{S}\left(-\ln\sqrt{1-4\sigma^4}\right)\sum_{s=-\infty}^{\infty}\mathrm{e}^{-\frac{2\sigma^2}{2}{\left(2s\sqrt{\pi}\right)}^2}\hat{V}\left(2s\sqrt{\pi},0\right)\hat{S}\left(-\ln\sqrt{\frac{2\sigma^2}{1-4\sigma^4}}\right)\Ket{\mathrm{vac}},\\
    \end{aligned}\\
    \label{eq:gkp_approximate_codeword_1}
    &\begin{aligned}
      \Ket{1}&\propto\sum_{s=-\infty}^{\infty}\int_{-\infty}^{\infty}{dq}\,\mathrm{e}^{-\frac{2\sigma^2}{2\left(1-4\sigma^4\right)}{\left(\left(2s+1\right)\sqrt{\left(1-4\sigma^4\right)\pi}\right)}^2}\mathrm{e}^{-\frac{1}{2\left(2\sigma^2\right)}{\left(q-\left(2s+1\right)\sqrt{\left(1-4\sigma^4\right)\pi}\right)}^2}\Ket{q}\\
             &=\sum_{s=-\infty}^{\infty}\mathrm{e}^{-\frac{2\sigma^2}{2\left(1-4\sigma^4\right)}{\left(\left(2s+1\right)\sqrt{\left(1-4\sigma^4\right)\pi}\right)}^2}\hat{V}\left(\left(2s+1\right)\sqrt{\left(1-4\sigma^4\right)\pi},0\right)\hat{S}\left(-\ln\sqrt{2\sigma^2}\right)\Ket{\mathrm{vac}}\\
             &=\hat{S}\left(-\ln\sqrt{1-4\sigma^4}\right)\sum_{s=-\infty}^{\infty}\mathrm{e}^{-\frac{2\sigma^2}{2}{\left(\left(2s+1\right)\sqrt{\pi}\right)}^2}\hat{V}\left(\left(2s+1\right)\sqrt{\pi},0\right)\hat{S}\left(-\ln\sqrt{\frac{2\sigma^2}{1-4\sigma^4}}\right)\Ket{\mathrm{vac}},
    \end{aligned}
    \end{align}
\end{widetext}
where $\hat{V}\left(r_q,r_p\right)$ is the generalized Weyl-Heisenberg displacement operator~\cite{M10} defined as
\begin{align}
  \label{eq:translation}
  &\hat{V}\left(r_q,r_p\right)\coloneqq\mathrm{e}^{-\frac{\mathrm{i}}{2}r_q r_p}\mathrm{e}^{\mathrm{i}r_p \hat{q}}\mathrm{e}^{-\mathrm{i}r_q \hat{p}},\,r_q,r_p\in\mathbb{R},\\
  &\hat{V}\left(r_q,0\right)\Ket{q=0}=\Ket{q=r_q},\\
  &\hat{V}\left(0,r_p\right)\Ket{p=0}=\Ket{p=r_p},
\end{align}
$\hat{S}\left(-\ln\sqrt{1-4\sigma^4}\right)$ contributes to making the approximate GKP code symmetric, and by convention, we will approximate $1-4\sigma^4\approx 1$ to simplify the presentation in this paper.
Note that $\hat{V}\left(r_q,r_p\right)$ can be represented in terms of the conventional displacement operator $\hat{D}\left(\alpha\right)$ as
\begin{equation}
  \hat{V}\left(r_q,r_p\right)=\hat{D}\left(\frac{r_q+\mathrm{i}r_p}{\sqrt{2}}\right),
\end{equation}
where $\hat{D}\left(\alpha\right)$ is defined as
\begin{equation}
  \hat{D}\left(\alpha\right)\coloneqq \mathrm{e}^{\alpha {\hat{a}}^\dag - \alpha^\ast \hat{a}},\,\alpha\in\mathbb{C}.
\end{equation}
When we measure a GKP qubit in the $\hat{q}$ quadrature,
each finitely squeezed vacuum state in the superposition of the approximate GKP codewords yields a Gaussian peak in the probability density function of the real-valued measurement outcome.
For the symmetric GKP code, a measurement in the $\hat{p}$ quadrature yields Gaussian peaks in the same positions as those in $\hat{q}$.
In this regard, the variances of the GKP qubit in $\hat{q}$ and $\hat{p}$ quadratures refer to the variances of the squeezed vacuum state corresponding to each of these Gaussian peaks.
In particular, for the symmetric approximate GKP code~\eqref{eq:gkp_approximate_codeword_0} and~\eqref{eq:gkp_approximate_codeword_1},
the squeezing level of the GKP code with the variance $\sigma^2$ is defined as~\eqref{eq:squeezing_level}.
Note that as we will discuss in Sec.~\ref{sec:qec_gkp}, even if initialized symmetrically, a GKP qubit during the computation may have different variances in $\hat{q}$ and $\hat{p}$ quadratures, which we write as $\sigma_q^2$ and $\sigma_p^2$, respectively.

\subsection{\label{sec:mbqc}Measurement-based quantum computation (MBQC)}

In this section,
we provide preliminaries to MBQC\@.
We first define MBQC and hypergraph states.
Then, we give a definition of complexity and universality of MBQC used in this paper, where our definition of complexity is based on the size of quantum circuits, rather than the depth of the circuits.
Finally, we describe properties of MBQC protocols that may affect implementability of MBQC, especially when Clifford operations are more tractable than non-Clifford operations as is the case of the GKP code.

\textbf{MBQC and hypergraph states}:
MBQC~\cite{B1,R1,R2} is a model of quantum computation that is performed by initially preparing a multipartite entangled resource state, followed by sequential quantum measurements of each subsystem for this resource state~\cite{B2,R3}.
The resource state for MBQC is independent of given computational tasks,
and the computation is driven by the measurements whose bases and temporal order can be adaptively chosen depending on the outcomes of previous measurements.
This adaptation of the measurements is necessary for correcting byproduct operators, that is, randomly applied unitary transformations caused by previous measurements.
The classical process of determining next measurements from the previous measurement outcomes is called feed-forward.

Several protocols for MBQC are known, and different protocols may use different families of resource states, such as the cluster state~\cite{B1,R1,R2}, the brickwork state~\cite{B3}, the graph states on the triangular lattice~\cite{M9}, the $3$D cluster states~\cite{R8,R7,R5,R6}, the Affleck-Kennedy-Lieb-Tasaki (AKLT) state~\cite{W1,W2}, the parity-phase (weighted) graph states~\cite{K3}, and the crystal-structure cluster states~\cite{N1,N5}.
References~\cite{M2,M6,G4,Y1} have shown that there also exist MBQC protocols using hypergraph states~\cite{Q1,R4,G5} as a resource, which will play a key role in our construction of MBQC protocols.
Hypergraph states are multiqubit entangled states that are defined using mathematical structure of hypergraphs, including graph states~\cite{H13} as special cases.
While a graph consists of vertices and edges connecting two vertices,
a hypergraph represents a generalized structure consisting of vertices and hyperedges connecting any number of vertices.
We may refer to a hyperedge between two vertices of a hypergraph as an edge between the two.
Let
\begin{equation}
    G=\left(V,E\right)
\end{equation}
be a hypergraph of $M$ vertices, where
\begin{equation}
    V=\left\{v_1,\ldots,v_M\right\}
\end{equation}
is the set of vertices, and
\begin{equation}
    E=\left\{e_1,e_2,\ldots\right\}
\end{equation}
is the set of hyperedges satisfying for each $e\in E$
\begin{equation}
    e=\left\{v_{m_1},v_{m_2},\ldots\right\}\subseteq V.
\end{equation}
The hypergraph state for $G$ is an $M$-qubit state denoted by
\begin{equation}
    \Ket{G}
\end{equation}
and obtained as follows:
first, for each $v_m\in V$, a qubit labeled $v_m$ is initialized as $\Ket{+}^{v_m}$, and then, for each hyperedge $e=\left\{v_{m_1},\ldots,v_{m_k}\right\}\in E$ and $k=\left|e\right|$, we apply a generalized controlled-$Z$ gate $C^{k}Z$~\cite{R4} to $k$ qubits corresponding to the $k$ vertices $v_{m_1},\ldots,v_{m_k}$ in the hyperedge $e$, where $\left|e\right|$ denotes the cardinality of $e$, and
\begin{equation}
    \label{eq:generalized_cz}
    C^{k}Z\coloneqq\mathbbm{1}^{v_{m_1},\ldots,v_{m_k}}-2\Ket{11\cdots 1}\Bra{11\cdots 1}^{v_{m_1},\ldots,v_{m_k}}.
\end{equation}
The generalized controlled-$Z$ gate $C^{k}Z$ is symmetric under permutation of qubits, and reduces to the $Z$ gate, the $CZ$ gate, and the $CCZ$ gate in the case of $k=1,2,3$, respectively.
In the same way as the hypergraph states,
graph states are defined for graphs, \textit{i.e.}, a special class of hypergraphs, where all the edges of the graphs correspond to $CZ$ gates.
While graph states can be prepared using only Clifford gates, that is, performing $CZ$ gates on $\Ket{+}$s,
a preparation of a hypergraph state may require non-Clifford gates since the generalized controlled-$Z$ gate in~\eqref{eq:generalized_cz} can be a non-Clifford gate such as the $CCZ$ gate.
As graphs are special hypergraphs, we may also refer to graph states as hypergraph states.

\textbf{Universality and complexity of MBQC}:
In this paper, especially in Sec.~\ref{sec:resource_state}, we will analyze the complexity of our MBQC protocol defined based on sequential implementation.
This definition especially matters to photonic MBQC based on the TDM approach using sequential resource state preparation and measurements.
We will also analyze required depths for resource state preparation and measurements in Appendices~\ref{sec:depth} and~\ref{sec:parallelizability}, respectively.

We refer to the number of computational steps of an MBQC protocol as the \textit{complexity} of the MBQC protocol, which consists of the preparation complexity, the quantum complexity, and the classical complexity~\cite{V2}.
Our definition of complexity is based on the size of quantum circuits, rather than the depth of the circuits.
The size of a quantum circuit composed of a certain gate set refers to the total number of elementary gates contained in the circuit, that is, the time steps required for performing the circuit under the condition that elementary gates in the circuit are applied sequentially.
The depth of a quantum circuit refers to the minimal time steps required for performing the circuit under the condition that at each time step, only at most a single elementary gate can be applied to each qubit,
but elementary gates on disjoint sets of qubits can be applied in parallel.

In particular, we define the complexity of the MBQC protocol as follows.
The preparation complexity refers to the required size of the circuit for preparing the resource state for the MBQC protocol from a fixed product state $\Ket{0}\otimes\Ket{0}\otimes\cdots$, where the resource state preparation in this paper is performed using a gate set that can exactly implement the unitary transformations $H$, $CZ$, and $CCZ$ using a fixed number of the elementary gates, such as the gate sets
\begin{align}
  &\left\{H, CZ, T\right\},\\
  \label{eq:H_S_CZ_CCZ}
  &\left\{H, S, CZ, CCZ\right\}.
\end{align}
The quantum complexity of the MBQC protocol is the required total size of quantum circuits for performing measurements in appropriately adapted bases, that is, for implementing each of these measurements in MBQC by a unitary transformation followed by a measurement in the computational basis.
The classical complexity of the MBQC protocol consists of the required total size of classical circuits for describing the classical processes in the protocol, which include preprocessing for obtaining the description of the quantum circuit for implementing the MBQC protocol from the given computational problem, and also include feed-forwards for providing a temporal order and bases of next measurements (\textit{i.e.}, the measurement patterns) from the sequence of the previous measurement outcomes.

To discuss the notion of universal MBQC, recall that universality in the circuit model of quantum computation can be defined in two different ways, \textit{i.e.}, strict universality and computational universality.
In the circuit model, a fixed initial $N$-qubit state $\Ket{0}^{\otimes N}$ in the computational basis, \textit{i.e.}, the $Z$ basis $\left\{\Ket{0},\Ket{1}\right\}$, is transformed by a sequence of quantum gates in a given gate set, and classical outcomes of the computation are obtained from the final state of this unitary transformation by single-qubit measurements in the computational basis.
A gate set is strictly universal if it can implement an arbitrary $N$-qubit unitary transformation with an arbitrarily high precision.
A gate set is computationally universal if for any output probability distribution of an $N$-qubit quantum circuit composed of $t$ gates in a strictly universal gate set, we can use a quantum circuit composed of the computationally universal gate set to sample the classical outcomes according to this probability distribution up to an error $\epsilon$ in the total variation distance within at most poly-logarithmic overhead in terms of $N$, $t$, and $\frac{1}{\epsilon}$~\cite{A1}.
For example, the gate set $\left\{H,CCZ\right\}$ is not strictly universal since it cannot generate a quantum state having imaginary coefficients with respect to the computational basis, but $\left\{H,CCZ\right\}$ is computationally universal~\cite{S1}.
We can rewrite any quantum circuit composed of the $H$ gate and the controlled-$S$ ($CS$) gate, which is strictly universal, as the corresponding circuit composed of $\left\{H,CCZ\right\}$ at a constant overhead cost~\cite{A1}.
Using this fact, we can rewrite any quantum circuit composed Clifford gates and the $T$ gate as the corresponding circuit composed of $\left\{H,CCZ\right\}$ within a polylog overhead in terms of $N$, $t$, and $\frac{1}{\epsilon}$,
while the gate sets $\left\{H,CS\right\}$ and $\left\{H,CCZ\right\}$ cannot exactly implement the $T$ gate at a constant overhead cost~\cite{Beverland2019}.

As for universality in MBQC, we use a definition of universality that corresponds to the computational universality in the circuit model.
We call a family of quantum states a \textit{universal} resource for MBQC if there exists a protocol for MBQC using a state in this family for simulating any $N$-qubit $D$-depth quantum circuit composed of a computationally universal gate set within a polynomial number of computational steps in terms of $N$ and $D$~\cite{G2,G3}, where the simulation of the circuit in this paper refers to sampling the classical outcomes from the output probability distribution of the circuit, and the resource states used in the protocol depend only on $N$ and $D$ and are independent of the $N$-qubit $D$-depth quantum circuit.
This definition of universality is called efficient CC-universality in Ref.~\cite{V2}, where CC refers to classical inputs and classical outputs,
while universality in the sense of performing arbitrary unitary transformations~\cite{V2,V3} is different from this definition of universality.
Note that this definition of universal resources based on deterministic and exact simulation of quantum circuits suffices for this paper, while generalization to probabilistic or approximate simulations is straightforward.

\textbf{Implementability of MBQC}:
The choice of resource states for MBQC crucially affects implementability of the MBQC protocols.
In particular, it is not necessary for universal MBQC with multiqubit resource states to realize arbitrary single-qubit rotation in implementing measurements, but a restricted subclass of the rotation may suffice.
For example, it suffices to use measurements in bases on the $XY$-plane for MBQC with the cluster state~\cite{M5} and the brickwork state~\cite{B3}, to use Pauli-$X$, $Y$, and $Z$ measurements for MBQC with the hypergraph states in Refs.~\cite{M2,G4}, and to use Pauli-$X$ and $Z$ measurements for MBQC with the parity-phase (weighted) graph states~\cite{K3} and the hypergraph state in Ref.~\cite{Y1}.
A resource state for universal MBQC only with measurements in Pauli bases is called Pauli universal.
Although resource states can be prepared by a fixed quantum circuit before we decide what to compute, measurement bases in MBQC are needed to be adapted by feed-forward processes and cannot be fixed before the computation.
In this regard, measurements are online operations in MBQC, and the required measurement bases for MBQC should be suitably chosen so that these measurements can be faithfully realized in experiments.
Such suitable measurement bases may depend on a given physical system used for implementing MBQC\@; \textit{e.g.}, for superconducting qubits where the $Z$ basis is the energy eigenbasis, $X$ measurements are not particularly easier to implement than those in other bases on the $XY$-plane.
But in cases of photonic GKP qubits, experimentally tractable homodyne detection corresponds to Pauli measurements such as $Z$ and $X$ measurements on GKP qubits.
Thus, aiming at MBQC using GKP qubits, we will provide universal resources for MBQC only with $Z$ and $X$ measurements in Sec.~\ref{sec:resource_state}.

In addition to the measurement bases, there is another factor affecting implementability of MBQC protocols, that is, the required number of \textit{costly elementary gates to realize in experiments} for generating the resource state.
Whether a gate is costly or not may depend on the physical system.
In MBQC using photonic GKP qubits,
the costly gates on the GKP qubits in preparing resources are logical non-Clifford gates on the GKP qubits.
Note that non-Clifford gates cannot be completely removed from a quantum circuit for preparing Pauli universal resources~\cite{M2}; otherwise, \textit{i.e.}, if only Clifford gates were used for both resource state preparation and measurements in MBQC, the whole MBQC could be efficiently simulated by a classical computation due to the Gottesman-Knill theorem~\cite{G9,A6}.
Hence in this paper, we will also discuss an advantage of our resource states in terms of the required number of logical non-Clifford gates, in particular, the $CCZ$ gates, for the resource preparation.

\subsection{\label{sec:qec_gkp}Quantum error correction (QEC) for photonic MBQC}

In this section, we present preliminaries to quantum error correction (QEC) in this paper.
We first fix our noise model and describe how to perform fault-tolerant MBQC using the GKP code.
In fault-tolerant computation using GKP qubits, we need to suppress two types of errors; one is variances of finitely squeezed GKP qubits that may accumulate as computation proceeds, and the other is the bit- and phase-flip errors in the GKP qubits as a result of this finite squeezing.
Then, we recall the definition of $7$-qubit code, which we use to correct the bit- and phase-flip errors of the GKP qubit.
QEC for GKP qubits is different from that for discrete qubits in that we may correct the bit- and phase-flip errors of GKP qubits using analog information that is given by real-valued outcomes of homodyne detection, in addition to digital information of discrete bit values that can be estimated using the strategy~\eqref{eq:decision} below.
The use of analog information leads to an improved QEC performance.
Thus finally, we review the following techniques for GKP error correction that we will use in combination:
\begin{itemize}
  \item \textit{Single-qubit quantum error correction (SQEC)}, which reduces accumulated variances for a data GKP qubit during computation using an auxiliary GKP qubit~\cite{G1,F6};
  \item \textit{Highly reliable measurements (HRMs)}, which uses post-selection to decrease the probability of misidentifying the bit value of the measurement outcome of a GKP qubit~\cite{F2,F6};
  \item \textit{Analog quantum error correction (AQEC)}, which uses analog information to improve the performance of QEC in deciding the logical bit value of the outcome of a logical measurement on a multiqubit quantum error-correcting code concatenated with GKP qubits~\cite{F1}.
\end{itemize}

\textbf{Noise model and QEC in fault-tolerant MBQC using the GKP code}:
To realize scalable quantum computation, it is indispensable to perform quantum error correction (QEC)~\cite{G,D,T10,B} for attaining fault tolerance.
One way to achieve fault-tolerant MBQC is to represent the resource state for MBQC as a logical state of a quantum error-correcting code.
For example, MBQC protocols using the $3$D cluster state~\cite{R7,R5,R6} combine the cluster state with the surface code for topological QEC~\cite{B6,K2,F3}, where syndrome measurements for QEC can be performed by measuring qubits for this resource state.
This technique of embedding a quantum error-correcting code in resource states is called foliation~\cite{B7}.
In QEC, the failure probability refers to the probability for a QEC protocol to fail to recover from the errors induced by the noise.
The noise model determines how the errors may occur,
and the threshold is the amount of noise below which we can arbitrarily suppress the failure probability in QEC\@.
Notice that the threshold should be regarded as a \textit{rough} standard for quantifying how highly fault-tolerant a protocol is, since the fault-tolerant protocols usually incur impractically large overheads near the threshold; however, the threshold is useful for comparing the performance of different protocols in terms of fault tolerance.

In this paper, we use the same noise model as the existing works~\cite{M4,F2,F6,N3},
where the amount of noise is quantified by the squeezing level of initial states of a GKP qubit given by the approximate GKP code shown in Sec.~\ref{sec:gkp}.
Note that Refs.~\cite{M4,F2,F6,N3} and this paper use the same approximate GKP code based on the one-mode square lattice GKP code.
In this noise model, the smaller the squeezing level, the worse the quality of the approximate GKP code, which leads to more noise.
The threshold in this paper is defined as the infimum of the squeezing level of the GKP qubit that a fault-tolerant protocol can use to reduce the failure probability to as small as desired.

In particular, we assume that we can initialize each GKP qubit in $\Ket{0}$ or $\Ket{\frac{\pi}{8}}\propto \Ket{0}+\mathrm{e}^{\mathrm{i}\frac{\pi}{4}}\Ket{1}$ using the approximate GKP code with a fixed nonzero variance $\sigma^2$ in $\hat{q}$ and $\hat{p}$ quadratures, where $\Ket{0}$ and $\Ket{1}$ are defined as~\eqref{eq:gkp_approximate_codeword_0} and~\eqref{eq:gkp_approximate_codeword_1}, respectively.
We regard the variance $\sigma^2$ as a main source of noise for simplicity, abstracting effects of other possible noise sources such as loss and imperfection in Gaussian operations and homodyne detections, while these effects may also be taken into account in terms of an increase of $\sigma^2$~\cite{F6}.
Even though Gaussian operations are noiseless in this model,
the variances of GKP qubits accumulate as the computation proceeds with implementing logical gates using the Gaussian operations.
To calculate the accumulation of the variances of GKP qubits feasibly,
in place of a GKP state consisting of multiple Gaussian peaks,
we may perform the calculation approximately for a single Gaussian function representing each peak in the same way as Ref.~\cite{M4,F2,F6,N3}; \textit{e.g.}, given two GKP qubits labeled $1$ and $2$ with variances $\left(\sigma_{1,q}^2,\sigma_{1,p}^2\right)$ and $\left(\sigma_{2,q}^2,\sigma_{2,p}^2\right)$, respectively, the variances of the two GKP qubits changes by applying the $CZ$ gate according to~\eqref{eq:cz_quadrature} as
\begin{equation}
  \label{eq:variance_accumulation_cz}
  \begin{aligned}
    \left(\sigma_{1,q}^2,\sigma_{1,p}^2\right)&\to\left(\sigma_{1,q}^2,\sigma_{1,p}^2+\sigma_{2,q}^2\right),\\
    \left(\sigma_{2,q}^2,\sigma_{2,p}^2\right)&\to\left(\sigma_{2,q}^2,\sigma_{2,p}^2+\sigma_{1,q}^2\right),
  \end{aligned}
\end{equation}
and by applying the $\textsc{CNOT}$ gate according to~\eqref{eq:cx_quadrature} as
\begin{equation}
  \label{eq:variance_accumulation_cx}
  \begin{aligned}
  \left(\sigma_{1,q}^2,\sigma_{1,p}^2\right)&\to\left(\sigma_{1,q}^2,\sigma_{1,p}^2+\sigma_{2,p}^2\right),\\
  \left(\sigma_{2,q}^2,\sigma_{2,p}^2\right)&\to\left(\sigma_{1,q}^2+\sigma_{2,q}^2,\sigma_{2,p}^2\right).
  \end{aligned}
\end{equation}
Note that this noise model is different from the code capacity model analyzed in Refs.~\cite{F2,V4,Hanggli2020}, where the accumulation of the variances in preparing the resource state for MBQC is ignored.
In contrast to the code capacity model, a $CZ$ gate on GKP qubits in our noise model doubles the variance $\sigma^2$ in $\hat{p}$ of the GKP qubits due to~\eqref{eq:variance_accumulation_cz}, and hence in the resource state preparation, we should use a technique for variance reduction, \textit{i.e.}, SQEC, as we review later in this section.

In a measurement of the GKP qubit,
the accumulated variances of a GKP qubit cause a nonzero probability of misidentifying the approximately orthogonal codewords of the GKP qubit; \textit{e.g.}, when a finitely squeezed GKP state $\Ket{0}$ is measured in the logical $Z$ basis of the GKP code, there is a nonzero probability of obtaining a measurement outcome corresponding to $\Ket{1}$.
In the cases of finite squeezing,
in contrast to the ideal case~\eqref{eq:true_bit},
the real-valued measurement outcome $\tilde{q}\in\mathbb{R}$ of homodyne detection may deviate from $2s\sqrt{\pi}$ and $(2s+1)\sqrt{\pi}$ due to the nonzero variances of the Gaussian functions of approximate GKP codewords~\eqref{eq:gkp_approximate_codeword_0} and~\eqref{eq:gkp_approximate_codeword_1}.
Then, one way to decide an estimate $\tilde{b}\in\left\{0,1\right\}$ of the logical bit value from $\tilde{q}$ is to estimate it by minimizing the absolute value of the deviation $\tilde{\Delta}$ from $2s\sqrt{\pi}$ and $(2s+1)\sqrt{\pi}$, that is,
\begin{align}
  \label{eq:decision}
  \tilde{b}&\coloneqq\argmin_{b\in\left\{0,1\right\}}\min_{s\in\mathbb{Z}}\left\{\left|\tilde{q}-q\right|:q=\left(2s+b\right)\sqrt{\pi}\right\},\\
  \tilde{s}&\coloneqq\argmin_{s\in\mathbb{Z}}\left\{\left|\tilde{q}-q\right|:q=\left(2s+\tilde{b}\right)\sqrt{\pi}\right\},\\
  \label{eq:peak_estimate}
  q_{\tilde{b},\tilde{s}}&\coloneqq\left(2\tilde{s}+\tilde{b}\right)\sqrt{\pi},\\
  \label{eq:deviation_estimate}
  \tilde{\Delta}&\coloneqq \tilde{q}-q_{\tilde{b},\tilde{s}}\in\left[-\frac{\sqrt{\pi}}{2},\frac{\sqrt{\pi}}{2}\right].
\end{align}
The effects of nonzero variances $\sigma_q^2$ and $\sigma_p^2$ of a GKP qubit are analogous to bit- and phase-flip errors at the logical bit level of the GKP code.
Note that these flips are not necessarily uniform over $Z$ and $X$ bases of a GKP qubit, since it may hold that $\sigma_q^2\neq\sigma_p^2$ as the variances accumulate in the computation.

We can attain fault tolerance by concatenating the GKP code with a multiqubit error-correcting code, so that the bit- and phase-flip errors caused by the accumulated variances of the GKP qubits can be suppressed.
While a quantum error-correcting code may require a large number of quantum systems to represent a logical state, large-scale entanglement generation through the TDM approach illustrated in Fig.~\ref{fig:introduction} serves as a promising candidate for achieving fault-tolerant MBQC in a scalable way.
As for protocols for fault-tolerant MBQC using photonic CV systems,
Ref.~\cite{M4} has first introduced a protocol using the Gaussian CV cluster state~\cite{Z1} as a resource state, where errors in this protocol arising from finite squeezing are corrected using auxiliary GKP qubits prepared in a codeword of a multiqubit error-correcting code.
Another direction of fault-tolerant photonic MBQC has later established in Refs.~\cite{F2,F6,N3,V4,Hanggli2020}, achieving a better threshold than that in Ref.~\cite{M4} by directly preparing GKP qubits in a multiqubit resource state to avoid errors arising from Gaussian CV states.
References~\cite{F2,F6,N3,V4,Hanggli2020} use a multiqubit $3$D cluster state represented by the GKP qubits, and errors caused by finite squeezing are corrected using topological QEC\@.
To achieve a good threshold, our protocol will be based on the latter approach of representing a multiqubit resource state using GKP qubits; however, our approach will be different from Ref.~\cite{F2,F6,N3,V4,Hanggli2020} in that the resource state is a multiqubit hypergraph state represented by the concatenated $7$-qubit code, and the errors will be corrected using concatenated QEC rather than topological QEC\@.

In addition to the choice of a quantum error-correcting code, the threshold for QEC in the given noise model is determined by how to detect and recover from errors using the quantum error-correcting code.
Techniques using post-selection are crucial for achieving better thresholds.
For example, magic state distillation~\cite{B11} and Knill's error-correcting $C_4 / C_6$ architecture~\cite{K5,K6} use post-selection for probabilistically discarding low-fidelity quantum states, so as to attain high fault tolerance in implementing quantum computation.
As for post-selection in fault-tolerant MBQC, higher fidelity in the preparation of resource states encoded by a quantum error-correcting code leads to better accuracy in MBQC\@.
When a success probability of a post-selection is lower bounded by $p$, repetitions of the post-selection $\frac{1}{p}$ times succeed at least once in expectation.
However, a post-selection of the whole resource state of $n$ qubits in high fidelity would succeed with an exponentially small probability as $n$ increases, and hence would incur an exponentially large computational cost in terms of $n$.
In contrast, Refs.~\cite{F4,F2} suggest a scalable protocol for resource state preparation with post-selection, which probabilistically select a constant number of qubits prepared in a graph state to guarantee high fidelity, incurring only a constant overhead cost.
Then, the constant-size high-fidelity graph states are deterministically entangled to obtain the whole resource state for MBQC\@.
In particular, Ref.~\cite{F4} combines the $7$-qubit code with the post-selection to achieve a comparable threshold to those of the surface code and Knill's error-correcting $C_4 / C_6$ architecture.
While Ref.~\cite{F4} uses the $7$-qubit code as a quantum error-correcting code, the novelty of our fault-tolerant protocol is to improve the threshold further by using the $7$-qubit code not only as the error-correcting code but also as an error-detecting code.
Moreover, in contrast to Knill's error-correcting $C_4 / C_6$ architecture~\cite{K5,K6} using an error-detecting code at all the levels of concatenation,  our fault-tolerant protocol is advantageous since we can switch the $7$-qubit code from the error-detecting code to the error-correcting code so as to reduce the overhead cost of post-selection at large concatenation levels.

\textbf{Steane's $7$-qubit code}:
We summarize definition and properties of the $7$-qubit code~\cite{PhysRevLett.77.793,S3,N4}.
Each codeword of the $7$-qubit code at a concatenation level $L\in\left\{0,1,\ldots\right\}$ consists of $7^L$ physical qubits, where a logical qubit at the concatenation level $0$ refers to a physical qubit and corresponds to a GKP qubit in our case, and for each $l\in\left\{1,\ldots,L\right\}$, a level-$l$ logical qubit consists of $7$ level-$\left(l-1\right)$ logical qubits.
The codewords at the concatenation level $l$ are represented in terms of $7$ level-$\left(l-1\right)$ qubits as
\begin{align}
  \label{eq:seven_qubit_code_0}
  \Ket{0^{\left(l\right)}}=\frac{1}{\sqrt{8}}(&\Ket{0000000}+\Ket{1010101}+\Ket{0110011} \nonumber \\
  +&\Ket{1100110}+\Ket{0001111}+\Ket{1011010} \nonumber \\
  +&\Ket{0111100}+\Ket{1101001}),\\
  \label{eq:seven_qubit_code_1}
  \Ket{1^{\left(l\right)}}=\frac{1}{\sqrt{8}}(&\Ket{1111111}+\Ket{0101010}+\Ket{1001100}\nonumber \\
  +&\Ket{0011001}+\Ket{1110000}+\Ket{0100101}\nonumber \\
  +&\Ket{1000011}+\Ket{0010110}),
\end{align}
where the state of each of the seven qubits on the right-hand sides is encoded using level-$\left(l-1\right)$ logical qubits.

The stabilizers of the $7$-qubit code are generated by
\begin{equation}
  \label{eq:syndromes}
  \begin{aligned}
  &\mathbbm{1}\otimes\mathbbm{1}\otimes\mathbbm{1}\otimes Z\otimes Z\otimes Z\otimes Z,\\
  &Z\otimes\mathbbm{1}\otimes Z\otimes\mathbbm{1}\otimes Z\otimes\mathbbm{1}\otimes Z,\\
  &Z\otimes Z\otimes\mathbbm{1}\otimes\mathbbm{1}\otimes Z\otimes Z\otimes\mathbbm{1},\\
  &\mathbbm{1}\otimes\mathbbm{1}\otimes\mathbbm{1}\otimes X\otimes X\otimes X\otimes X,\\
  &X\otimes\mathbbm{1}\otimes X\otimes\mathbbm{1}\otimes X\otimes\mathbbm{1}\otimes X,\\
  &X\otimes X\otimes\mathbbm{1}\otimes\mathbbm{1}\otimes X\otimes X\otimes\mathbbm{1},
  \end{aligned}
\end{equation}
and logical $Z$ and $X$ operators of the $7$-qubit code can be given respectively by
\begin{equation}
  \label{eq:logical}
  \begin{aligned}
  &Z_\mathrm{L}\coloneqq Z\otimes\mathbbm{1}\otimes\mathbbm{1}\otimes\mathbbm{1}\otimes\mathbbm{1}\otimes Z\otimes Z,\\
  &X_\mathrm{L}\coloneqq X\otimes\mathbbm{1}\otimes\mathbbm{1}\otimes\mathbbm{1}\otimes\mathbbm{1}\otimes X\otimes X,
  \end{aligned}
\end{equation}
which are equivalent to $Z^{\otimes 7}$ and $X^{\otimes 7}$ respectively up to multiplication of the stabilizers~\eqref{eq:syndromes}.
Logical Clifford gates of the $7$-qubit code have transversal implementations~\cite{D}; that is, a level-$l$ logical Clifford gate can be implemented at the $l-1$ level by performing a Clifford gate on all the physical qubits in parallel.
In particular, the logical $H$ and $S$ gates of the $7$-qubit code at the concatenation level $1$ can be implemented respectively at the physical level by
\begin{equation}
  \label{eq:transversal_implementation}
  \begin{aligned}
  &H^{\otimes 7},\\
  &S^{\dag \otimes 7}.
  \end{aligned}
\end{equation}
The logical $CZ$ and $\textsc{CNOT}$ gates can be implemented in the same way as the implementations~\eqref{eq:logical} of the logical $Z$ and $X$ gates, that is, by performing $CZ$ and $\textsc{CNOT}$, respectively, on $1$st, $6$th, and $7$th pairs of $2\times 7=14$ physical qubits for the control logical qubit and the target logical qubit.
Using the logical operators $Z^{\otimes 7}$ and $X^{\otimes 7}$ equivalent to~\eqref{eq:logical},
logical $Z$ and $X$ measurements of the $7$-qubit code can be implemented transversally by the $Z$ and $X$ measurements of all the physical qubits, respectively.
Note that logical non-Clifford gates for the $7$-qubit code cannot have transversal implementations even up to a reasonably small approximation~\cite{E1,F8,W4}, but we can use the state injection in Fig.~\ref{fig:gate_teleportation} to implement a logical $T$ gate, using the logical Clifford operations and the logical magic state $\Ket{\frac{\pi}{8}}$.

The $7$-qubit code can be used either as an error-correcting code or as an error-detecting code.
In the logical $Z$ and $X$ measurements of the $7$-qubit code,
we can correct up to one $Z$ error and up to one $X$ error, or can detect up to two $Z$ errors and up to two $X$ errors in the given state.
Two $Z$ errors and two $X$ errors cannot be corrected because the measurement outcomes of a $7$-qubit codeword with two errors are the same as another codeword with one error;
if two $Z$ errors and two $X$ errors occur, the operation for correcting the one error would cause a logical $Z$ and $X$ error, respectively.
Thus, to detect up to two errors using the $7$-qubit code, we perform the measurement in the same way as the error correction with the $7$-qubit code, and if we detect any error, we discard the post-measurement state.
In this paper, we use the $7$-qubit code as an error-correcting code, unless stated otherwise.

\begin{figure}[t]
    \centering
    \includegraphics[width=3.4in]{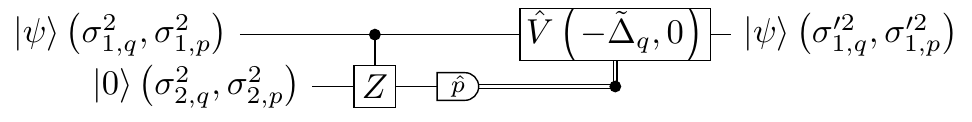}
    \includegraphics[width=3.4in]{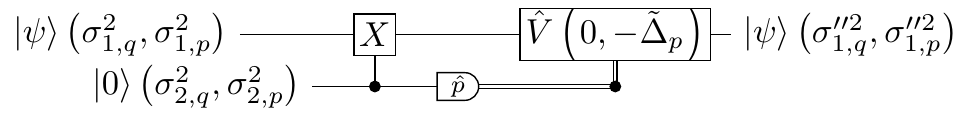}
    \caption{\label{fig:sqec}Single-qubit quantum error correction (SQEC) for reducing the variance in the $\hat{q}$ quadrature of a data GKP qubit $1$ with variances $\left(\sigma_{1,q}^2,\sigma_{1,p}^2\right)$ using an auxiliary GKP qubit $2$ with variances $\left(\sigma_{2,q}^2,\sigma_{2,p}^2\right)$ at the top ($\hat{q}$-SQEC), and that in the $\hat{p}$ quadrature at the bottom ($\hat{p}$-SQEC). Using the measurement outcome of the homodyne detection, the $\hat{q}$-SQEC estimates the deviation in $\hat{q}$ of the data GKP qubit $1$ as $\tilde{\Delta}_q$ given by~\eqref{eq:Delta_q}, and the $\hat{p}$-SQEC estimates that in $\hat{p}$ as $\tilde{\Delta}_p$ given by~\eqref{eq:Delta_p}, followed by the displacement $\hat{V}$ defined as~\eqref{eq:translation}. As a result, for any input state $\Ket{\psi}$ of the data GKP qubit, the output is the same state $\Ket{\psi}$ of a GKP qubit with variances $\left(\sigma_{1,q}^{\prime 2},\sigma_{1,p}^{\prime 2}\right)$ given by~\eqref{eq:sqec_q} in $\hat{q}$-SQEC and with $\left(\sigma_{1,q}^{\prime\prime 2},\sigma_{1,p}^{\prime\prime 2}\right)$ given by~\eqref{eq:sqec_p} in $\hat{p}$-SQEC\@.}
\end{figure}

\textbf{Single-qubit quantum error correction (SQEC)}:
SQEC aims to reduce variances of a GKP qubit that accumulate during computation as shown in~\eqref{eq:variance_accumulation_cz} and~\eqref{eq:variance_accumulation_cx}.

Given a data GKP qubit $1$ used for computation and an auxiliary GKP qubit $2$ prepared in $\Ket{0}$,
the $\hat{q}$ quadrature of these GKP qubits is denoted by $\hat{q}_1$ and $\hat{q}_2$,
the $\hat{p}$ quadrature by $\hat{p}_1$ and $\hat{p}_2$,
and the variances by $\left(\sigma_{1,q}^2,\sigma_{1,p}^2\right)$ and $\left(\sigma_{2,q}^2,\sigma_{2,p}^2\right)$, respectively.
The SQEC in the $\hat{q}$ quadrature ($\hat{q}$-SQEC) is performed by measuring $\hat{p}_2+\hat{q}_1$ using a quantum circuit shown in the upper part of Fig.~\ref{fig:sqec} for estimating the deviation of $\hat{q}_1$ for the data GKP qubit $1$, and the SQEC in the $\hat{p}$ quadrature ($\hat{p}$-SQEC) by measuring $\hat{p}_2-\hat{p}_1$ using a quantum circuit shown in the lower part of Fig.~\ref{fig:sqec} for estimating that of $\hat{p}_1$.
As established in Ref.~\cite{F6}, we can use the Gauss-Markov theorem, which is widely known in statistics, to perform a maximum-likelihood estimation of the deviations of a GKP qubit, which leads to reduced variances compared to the original SQEC~\cite{G1} without the maximum-likelihood estimation.

In the SQECs~\cite{F6} using the maximum-likelihood estimation, we first calculate $\tilde{\Delta}$ from the measurement outcome of the auxiliary GKP qubit $2$ in the same way as~\eqref{eq:deviation_estimate}, and then estimate the deviation of $\hat{q}_1$ or $\hat{p}_1$ in the data GKP qubit $1$, taking into account the variances of the auxiliary GKP qubit $2$.
To describe how the maximum-likelihood estimation works in the SQECs, instead of GKP states consisting of multiple Gaussian peaks, simply consider the data mode $1$ as one Gaussian peak with variances $\left(\sigma_{1,q}^2,\sigma_{1,p}^2\right)$ in $\hat{q}_1$ and $\hat{p}_1$, and the auxiliary mode $2$ as another Gaussian peak with variances $\left(\sigma_{2,q}^2,\sigma_{2,p}^2\right)$ in $\hat{q}_2$ and $\hat{p}_2$.
In the maximum-likelihood estimation, the deviations in $\hat{q}_1$ and $\hat{p}_1$ are respectively estimated by
\begin{align}
  \label{eq:Delta_q}
  \tilde{\Delta}_q&\coloneqq\frac{\sigma_{1,q}^2}{\sigma_{1,q}^2+\sigma_{2,p}^2}\tilde{\Delta},\\
  \label{eq:Delta_p}
  \tilde{\Delta}_p&\coloneqq\frac{\sigma_{1,p}^2}{\sigma_{1,p}^2+\sigma_{2,p}^2}\tilde{\Delta}.
\end{align}
If we perform the displacements in Fig.~\ref{fig:sqec} using these estimates $\tilde{\Delta}_q$ and $\tilde{\Delta}_p$,
in the case of one Gaussian peak,
the variances $\left(\sigma_{1,q}^{\prime 2},\sigma_{1,p}^{\prime 2}\right)$ of the mode $1$ after the $\hat{q}$-SQEC and $\left(\sigma_{1,q}^{\prime\prime 2},\sigma_{1,p}^{\prime\prime 2}\right)$ after the $\hat{p}$-SQEC are respectively given by
\begin{align}
  \label{eq:sqec_q}
  \left(\sigma_{1,q}^{\prime 2},\sigma_{1,p}^{\prime 2}\right)=\left(\frac{\sigma_{1,q}^2\sigma_{2,p}^2}{\sigma_{1,q}^2+\sigma_{2,p}^2},\sigma_{1,p}^2+\sigma_{2,q}^2\right),\\
  \label{eq:sqec_p}
  \left(\sigma_{1,q}^{\prime\prime 2},\sigma_{1,p}^{\prime\prime 2}\right)=\left(\sigma_{1,q}^2+\sigma_{2,q}^2,\frac{\sigma_{1,p}^2\sigma_{2,p}^2}{\sigma_{1,p}^2+\sigma_{2,p}^2}\right),
\end{align}
which reduce $\sigma_{1,q}^2$ and $\sigma_{1,p}^2$ to $\sigma_{1,q}^{\prime 2}$ and $\sigma_{1,p}^{\prime\prime 2}$, respectively.
While actual GKP states may be in superposition of multiple Gaussian peaks, we perform the SQECs for GKP qubits with the maximum-likelihood estimation in the same way.
We consider the variances of the data GKP qubit $1$ after the $\hat{q}$-SQEC and the $\hat{p}$-SQEC to be the same as~\eqref{eq:sqec_q} and~\eqref{eq:sqec_p}, respectively.

This approximate description of SQECs for GKP qubits using one Gaussian peak is reasonable as long as the estimation of $\tilde{\Delta}$ is correct.
However, since GKP states consist of multiple Gaussian peaks, the measurements in the SQECs may cause bit- or phase-flip errors in $\tilde{b}$ at the logical bit level of the GKP code, and unlike measuring one Gaussian peak, these errors may lead to errors in estimating $\tilde{\Delta}$.
Because of this trade-off relation between the reduction of the variances and the errors caused by logical bit or phase flips in SQECs,
we cannot arbitrarily reduce the variance of GKP qubits even if we repeat SQECs many times.
Thus in Sec.~\ref{sec:fault_tolerant}, we will develop an optimized SQEC algorithm that adjusts the squeezing levels of auxiliary GKP qubits, so as to enhance the performance of variance reduction per SQEC compared to the existing techniques.

\begin{figure}[t]
    \centering
    \includegraphics[width=3.4in]{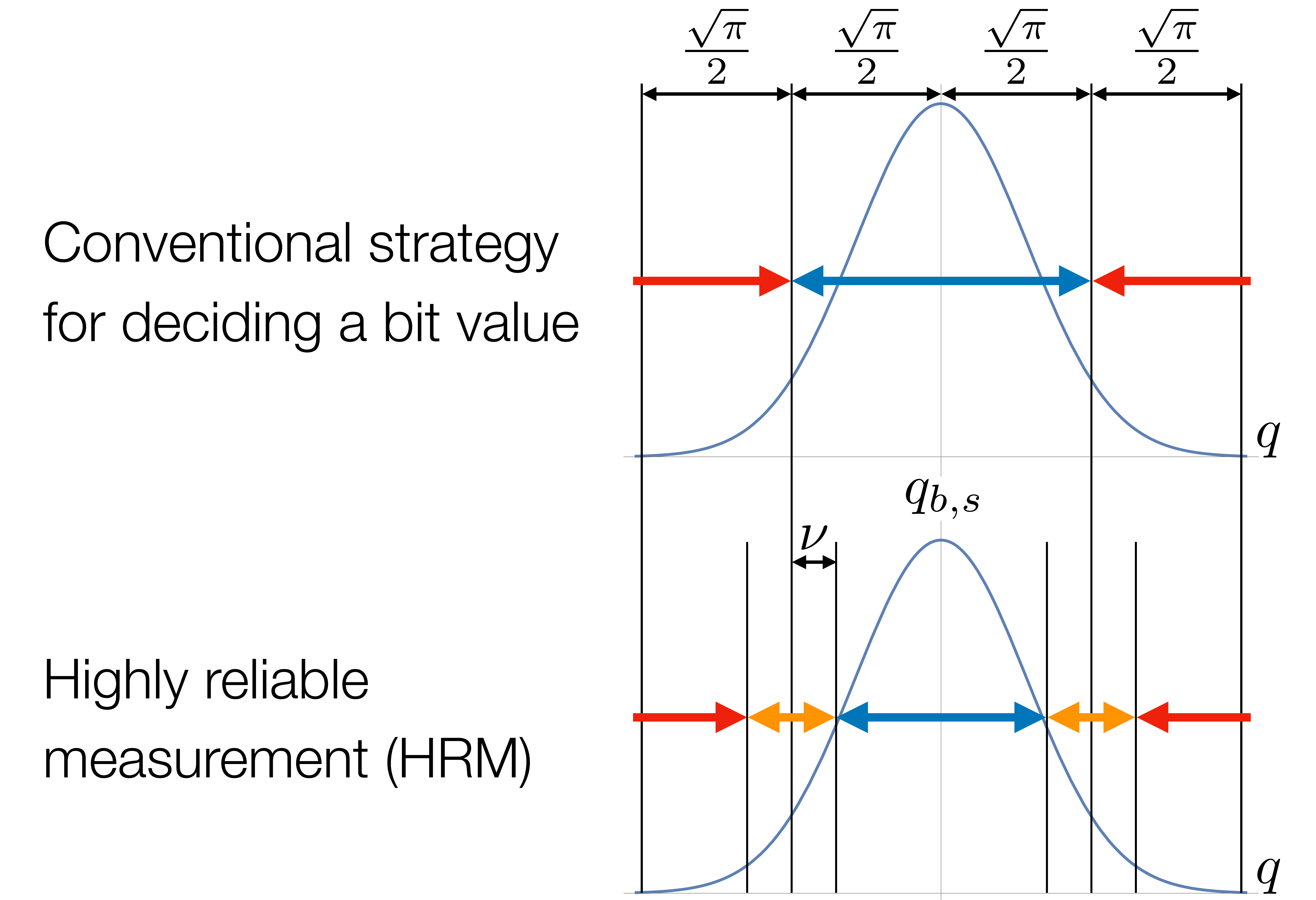}
    \caption{\label{fig:hrm}A conventional strategy for deciding a bit value of the outcome of a logical $Z$ or $X$ measurement of a GKP qubit from the real-valued outcome of homodyne detection at the top, and a highly-reliable measurement (HRM) using post-selection for decreasing the probability of misidentifying the bit value in this decision. In the conventional strategy~\eqref{eq:decision}, the decision is correct if the deviation $\Delta$ is in the blue region~\eqref{eq:upper_deviation}, and may be incorrect if $\Delta$ is in the red region~\eqref{eq:lower_deviation}. In contrast, the HRM introduces a safety margin $\nu$ and discards an unreliable decision if an estimate $\tilde{\Delta}$ of the deviation is not in a smaller region~\eqref{eq:hrm_condition} than the blue region of the conventional strategy by this margin. In the HRM, the decision is correct if $\Delta$ is in the blue region~\eqref{eq:hrm_upper}, is discarded if in the yellow region~\eqref{eq:hrm_discard}, and may be incorrect if in the red region~\eqref{eq:hrm_lower}, which is smaller than the red region in the conventional strategy.}
\end{figure}

\textbf{Highly reliable measurements (HRMs)}:
HRMs~\cite{F2,F6} aim to decrease the probability of misidentification in deciding a bit value of a GKP qubit from an outcome of homodyne detection compared to the conventional strategy~\eqref{eq:decision}.
We explain the HRM in the $Z$ basis of a GKP qubit, while that in the $X$ basis works in the same way.
For a bit value $b\in\left\{0,1\right\}$, consider a case where the GKP qubit to be measured is in state $\Ket{b}$.
If we perform homodyne detection of $\Ket{b}$ in the $\hat{q}$ quadrature,
the probability distribution of the measurement outcome $\tilde{q}\in\mathbb{R}$ consists of multiple Gaussian peaks centered at $\left\{q_{b,s}\coloneqq\left(2s+b\right)\sqrt{\pi}:s\in\mathbb{Z}\right\}$.
Suppose that $\tilde{q}$ is in the interval $\left[q_{b,s}-\sqrt{\pi},q_{b,s}+\sqrt{\pi}\right]$ for some $s$.
We here write the deviation $\Delta\coloneqq \tilde{q}-q_{b,s}\in\left[-\sqrt{\pi},\sqrt{\pi}\right]$.
As illustrated in the upper part of Fig.~\ref{fig:hrm},
if the deviation $\Delta$ satisfies
\begin{equation}
  \label{eq:upper_deviation}
  \left|\Delta\right|<\frac{\sqrt{\pi}}{2},
\end{equation}
then the decision $\tilde{b}$ in the conventional strategy~\eqref{eq:decision} on the bit value of the GKP qubit is correct.
On the other hand, if $\Delta$ satisfies
\begin{equation}
  \label{eq:lower_deviation}
  \left|\Delta\right|\geqq \frac{\sqrt{\pi}}{2},
\end{equation}
then $\tilde{b}$ may be incorrect.

In contrast to this conventional strategy, instead of the upper bound $\frac{\sqrt{\pi}}{2}$ in~\eqref{eq:upper_deviation}, HRM introduces a fixed safety margin denoted by
\begin{equation}
  \nu\in\left(0,\frac{\sqrt{\pi}}{2}\right),
\end{equation}
and discards the decision $\tilde{b}$ if the estimated deviation $\tilde{\Delta}$ does not satisfy
\begin{equation}
  \label{eq:hrm_condition}
  \left|\tilde{\Delta}\right|<\frac{\sqrt{\pi}}{2}-\nu.
\end{equation}
As illustrated in the lower part of Fig.~\ref{fig:hrm},
if $\Delta$ satisfies
\begin{equation}
  \label{eq:hrm_upper}
  \left|\Delta\right|<\frac{\sqrt{\pi}}{2}-\nu,
\end{equation}
then the decision $\tilde{b}$ in HRM is correct.
If $\Delta$ satisfies
\begin{equation}
  \label{eq:hrm_discard}
  \frac{\sqrt{\pi}}{2}-\nu\leqq\left|\Delta\right|\leqq\frac{\sqrt{\pi}}{2}+\nu,
\end{equation}
then this decision is considered to be unreliable and hence is discarded.
If $\Delta$ satisfies
\begin{equation}
  \label{eq:hrm_lower}
  \left|\Delta\right|>\frac{\sqrt{\pi}}{2}+\nu,
\end{equation}
then $\tilde{b}$ after the post-selection may still be incorrect,
but this region is smaller than that in~\eqref{eq:lower_deviation}.
The HRM can detect and avoid an unreliable decision of $\tilde{b}$ using this post-selection.
In this paper, we set $\nu$ for HRM
\begin{equation}
  \nu=\frac{2\sqrt{\pi}}{5}.
\end{equation}
In our protocol, we require that the number of HRMs should be upper bounded by a constant, and other measurements than these HRMs should be performed without post-selection, so that the post-selection of HRM may incur at most a constant overhead cost in implementing the computation.

\begin{figure}[t]
    \centering
    \includegraphics[width=3.4in]{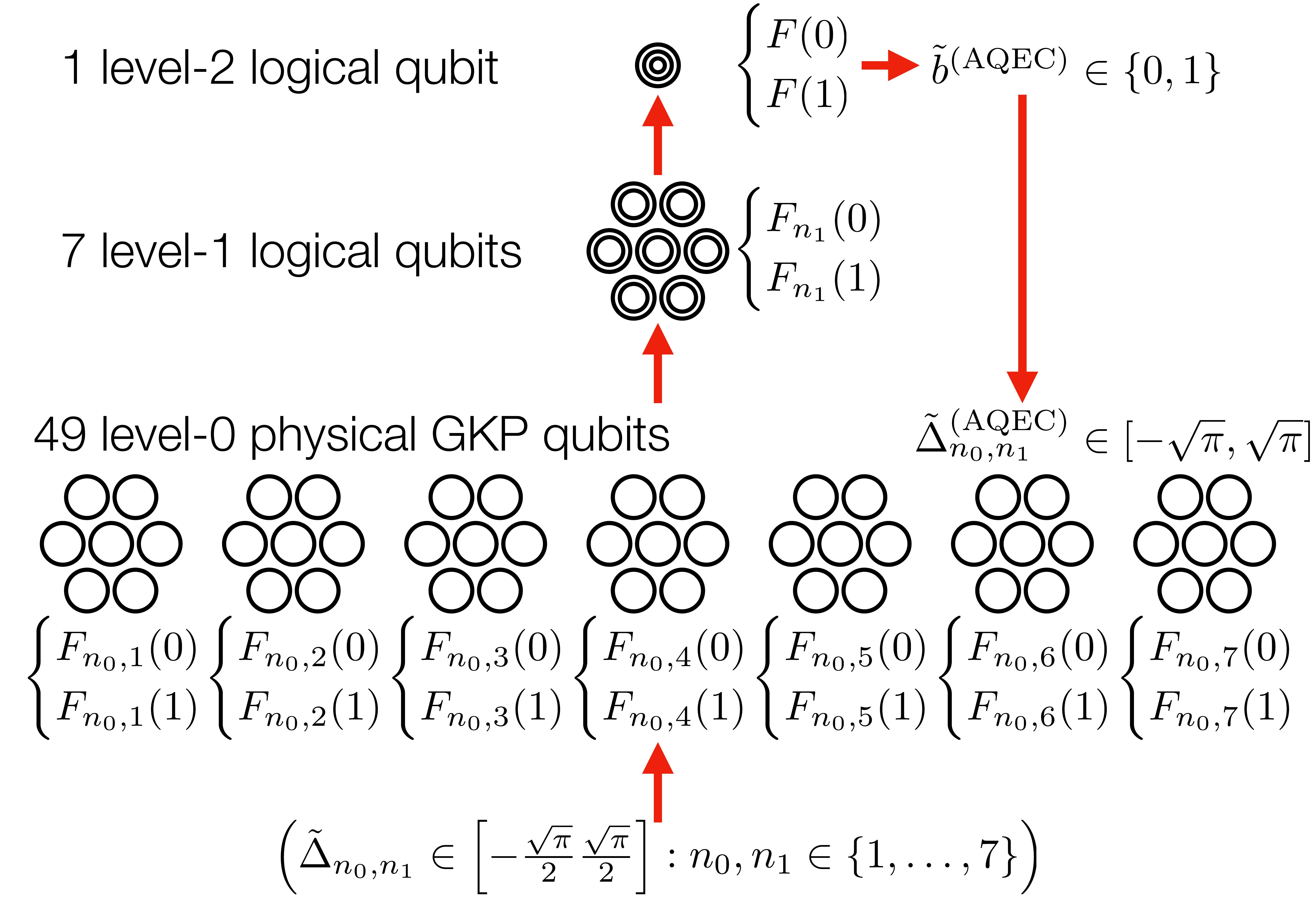}
    \caption{\label{fig:analog_qec}Analog quantum error correction (AQEC) for deciding a bit-value outcome $0,1$ of a logical $Z$ and $X$ measurement for the $7$-qubit code at the concatenation level $2$ from measurement outcomes of homodyne detection of all the physical GKP qubits in the $\hat{q}$ and $\hat{p}$ quadrature, respectively. Let $\left(n_0,n_1\right)$ for each $n_0,n_1\in\left\{1,\ldots,7\right\}$ label each of the $7\times 7=49$ physical GKP qubits comprising the level-$2$ logical qubit, where the $n_0$th of the $7$ physical GKP qubits comprises the $n_1$th level-$1$ logical qubit. Let $\tilde{\Delta}_{n_0,n_1}\in\left[-\frac{\sqrt{\pi}}{2},\frac{\sqrt{\pi}}{2}\right]$ denote an estimate of the deviation defined as~\eqref{eq:deviation_estimate} in measuring the $n_0$th physical GKP qubit comprising the $n_1$th level-$1$ logical qubit. For each level-$0$ physical GKP qubit $\left(n_0,n_1\right)$, we calculate the likelihood $F_{n_0,n_1}(0)$ of the level-$0$ bit value $0$ and  $F_{n_0,n_1}(1)$ of $1$, using the real-valued estimate $\tilde{\Delta}_{n_0,n_1}$ of the deviation as shown in~\eqref{eq:F_level_0_0} and~\eqref{eq:F_level_0_1}. For each level-$1$ logical qubit $n_1$, we calculate the likelihood $F_{n_1}(0)$ of the level-$1$ logical bit value $0$ and $F_{n_1}(1)$ of $1$, using the level-$0$ likelihoods $F_{n_0,n_1}(0)$ and $F_{n_0,n_1}(1)$ as shown in~\eqref{eq:F_level_1_0} and~\eqref{eq:F_level_1_1}. Recursively, we calculate the likelihood $F(0)$ of the logical bit value $0$ at the top level $2$, and $F(1)$ of $1$, using the level-$1$ likelihoods $F_{n_1}(0)$ and $F_{n_1}(1)$ as shown in~\eqref{eq:F_level_2_0} and~\eqref{eq:F_level_2_1}. We decide an estimate $\tilde{b}^{(\mathrm{AQEC})}\in\left\{0,1\right\}$ of the logical bit value using the top-level likelihoods $F(0)$ and $F(1)$ as shown in~\eqref{eq:b_aqec}, and these likelihoods also yield a more likely estimate $\tilde{\Delta}_{n_0,n_1}^{(\mathrm{AQEC})}\in\left[-\sqrt{\pi},\sqrt{\pi}\right]$ of the deviation defined as~\eqref{eq:deviation_estimate_aqec} than $\tilde{\Delta}_{n_0,n_1}$.}
\end{figure}

\textbf{Analog quantum error correction (AQEC)}:
AQEC~\cite{F1} improves the QEC performance using the analog information of homodyne detection of GKP qubits.
As shown in~\eqref{eq:decision}, in the $Z$ measurement of a GKP qubit, we can make a decision on an estimate $\tilde{b}\in\left\{0,1\right\}$ of the bit value from the measurement outcome $\tilde{q}\in\mathbb{R}$ of homodyne detection of the GKP qubit in the $\hat{q}$ quadrature.
In QEC using a concatenated quantum error-correcting code for discrete qubits,
given digital information of binary outcomes of syndrome measurements,
Ref.~\cite{P5} provides an efficient and optimal way of deciding how to recover from errors by calculating a likelihood function from the digital information.
When we concatenate the multiqubit quantum error-correcting code with GKP qubits, the likelihood function for this decision could be calculated from the bit values decided by the outcomes of homodyne detection using the strategy~\eqref{eq:decision}.
In contrast to this strategy based on digital information, the AQEC calculates likelihoods from analog information of real-valued outcomes $\tilde{q}\in\mathbb{R}$ of homodyne detection.
For a GKP qubit with variance $\sigma^2$, this likelihood is approximated using a Gaussian function
\begin{equation}
  f\left(x\right) \coloneqq \frac{1}{\sqrt{2\pi\sigma^{2}}} \mathrm{e}^{-\frac{x^2}{2\sigma^{2}}},
\end{equation}
where $x$ is the distance between $\tilde{q}$ and the closest Gaussian peak of the GKP state.
In particular, in the measurement of the logical $Z$ basis $\left\{\Ket{b}\right\}$ of the GKP code for $b\in\left\{0,1\right\}$,
we approximate the likelihood of $b=\tilde{b}$ using
\begin{align}
  \label{eq1}
  f_\mathrm{high}\left(\tilde{\Delta}\right)&\coloneqq f\left(\tilde{\Delta}\right)\nonumber\\
                                            &=\frac{1}{\sqrt{2\pi\sigma^{2}}} \mathrm{e}^{-\frac{\tilde{\Delta}^2}{2\sigma^{2}}},
\end{align}
and that of $b\neq\tilde{b}$ using
\begin{align}
  \label{eq2}
f_\mathrm{low}\left(\tilde{\Delta}\right)&\coloneqq f\left(\left|\sqrt{\pi}-\left|\tilde{\Delta}\right|\right|\right)\nonumber\\
                                           &= \frac{1}{\sqrt{2\pi\sigma^{2}}} \mathrm{e}^{-\frac{{\left(\sqrt{\pi}-\left|\tilde{\Delta}\right|\right)}^2}{2\sigma^{2}}}.
\end{align}
In the AQEC, we can reduce incorrect decisions by considering the likelihood of the joint event of the deviations that occur on multiple GKP qubits, and choosing the most likely candidate of the deviations.

In the following, we describe the AQEC applied to a concatenated code, specifically, the Steane's $7$-qubit code~\cite{PhysRevLett.77.793,S3,N4} introduced above.
As depicted in Fig.~\ref{fig:analog_qec},
we describe how to perform logical $Z$ measurement of the $7$-qubit code with AQEC, which can be used both for error correction and error detection.
When a measurement in the logical $Z$ basis of a level-$L$ logical qubit is performed, the AQEC decides an estimate of the logical bit value of the measurement outcome by calculating the likelihood using the real-valued outcomes of homodyne detection of all the $7^L$ physical GKP qubits in the $\hat{q}$ quadrature.
In particular, we explain the calculation of the likelihood of a level-$2$ bit value from outcomes of homodyne detection in the $\hat{q}$ quadrature, \textit{i.e.}, $L=2$, while the likelihood of the level-$L$ bit value for $L\geqq 3$ in general can be calculated from the likelihood of the level-$(L-1)$ bit values in a recursive manner.
Note that a logical bit value corresponding to a measurement in the logical $X$ basis is decided in the same way by substituting the $\hat{q}$ quadrature with the $\hat{p}$ quadrature.

A level-$2$ logical qubit consists of $7$ level-$1$ logical qubits where each consists of $7$ GKP qubits, \textit{i.e.}, $49$ GKP qubits in total, where the physical GKP qubits are considered to be at level $0$.
We label these $49$ GKP qubits as
\begin{equation}
  \label{eq:label}
  \left(n_0,n_1\right),
\end{equation}
where $n_1\in\left\{1,\ldots,7\right\}$ denotes a label representing each of the $7$ level-$1$ logical qubits that comprise the level-$2$ logical qubit, and for each $n_1$, $n_0\in\left\{1,\ldots,7\right\}$ denotes a label representing each of the $7$ GKP qubits that comprise the level-$1$ logical qubit labeled $n_1$.
Measuring these $7\times 7$ GKP qubits, we obtain the $49$ measurement outcomes $\tilde{q}_{n_0,n_1}$ and hence $49$ estimates $\tilde{b}_{n_0,n_1}$ of the bit values given by~\eqref{eq:decision} and $49$ estimates $\tilde{\Delta}_{n_0,n_1}$ of the deviations given by~\eqref{eq:deviation_estimate}, where subscripts represents the label of the GKP qubits.

For the GKP qubit labeled $\left(n_0,n_1\right)$,
the likelihood of the level-$0$ bit value $0$, \textit{i.e.}, $\Ket{0}$ of the approximate GKP qubit at the physical level, is given by
\begin{equation}
  \label{eq:F_level_0_0}
  F_{n_0,n_1}\left(0\right)=\begin{cases}
    f_{\mathrm{high}}\left(\tilde{\Delta}_{n_0,n_1}\right),&\text{if }\tilde{b}_{n_0,n_1}=0,\\
    f_{\mathrm{low}}\left(\tilde{\Delta}_{n_0,n_1}\right),&\text{if }\tilde{b}_{n_0,n_1}=1,
  \end{cases}
\end{equation}
and that of $1$, \textit{i.e.}, $\Ket{1}$, is given by
\begin{equation}
  \label{eq:F_level_0_1}
  F_{n_0,n_1}\left(1\right)=\begin{cases}
    f_{\mathrm{low}}\left(\tilde{\Delta}_{n_0,n_1}\right),&\text{if }\tilde{b}_{n_0,n_1}=0,\\
    f_{\mathrm{high}}\left(\tilde{\Delta}_{n_0,n_1}\right),&\text{if }\tilde{b}_{n_0,n_1}=1,
  \end{cases}
\end{equation}
where $f_{\mathrm{high}}$ and $f_{\mathrm{low}}$ are defined as~\eqref{eq1} and~\eqref{eq2}, respectively.
To calculate the likelihoods of a level-$1$ logical bit value, consider the $7$ GKP qubits $\left(1,n_1\right),\ldots,\left(7,n_1\right)$ comprising a level-$1$ logical qubit labeled $n_1$.
Suppose that the bit sequence $\left(\tilde{b}_{1,n_1},\ldots,\tilde{b}_{7,n_1}\right)$ that we obtain as the estimated bit values is one of $\left(0000000\right)$, $\left(1010101\right)$, $\left(0110011\right)$, $\left(1100110\right)$,$\left(0001111\right)$, $\left(1011010\right)$, $\left(0111100\right)$, and $\left(1101001\right)$, and then, it is likely that the logical bit value of the level-$1$ logical qubit $n_1$ is $0$ due to~\eqref{eq:seven_qubit_code_0}, where there is no errors on the $7$ physical GKP qubits for $n_1$.
In contrast, if the bit sequence is $\left(1000000\right)$, then we consider $8$ error patterns for each of the $2$ possible level-$1$ logical bit values $0$ and $1$ of $n_1$, where each of these $8\times 2=16$ patterns corresponds to each term on the right-hand side of~\eqref{eq:seven_qubit_code_0} and~\eqref{eq:seven_qubit_code_1}.
For the level-$1$ logical bit value $0$, the first error pattern is a single error on the physical GKP qubits that changes the state of a term on the right-hand side of~\eqref{eq:seven_qubit_code_0} from $\left(0000000\right)$ to $\left(1000000\right)$,
and the second pattern is triple errors that change $\left(1010101\right)$ to $\left(1000000\right)$.
The other patterns for the logical bit value $0$ can be described in the same way as the first and second patterns using each term on the right-hand side of~\eqref{eq:seven_qubit_code_0}.
Then, using~\eqref{eq1} and~\eqref{eq2}, we calculate the likelihoods of the first pattern ${F}^{(1)}_{n_1}\left(0\right)$ and the second error pattern ${F}^{(2)}_{n_1}\left(0\right)$ for the logical bit value $0$ of $n_1$ as
\begin{align}
  {F}^{(1)}_{n_1}(0)&=\prod_{n_0=1}^{7}F_{n_0,n_1}\left(0\right)\nonumber\\
  &=f_{\mathrm{low}}(\tilde{\Delta}_{1,n_1}) f_{\mathrm{high}}(\tilde{\Delta}_{2,n_1})f_{\mathrm{high}}(\tilde{\Delta}_{3,n_1})\nonumber \\
  &\quad\times f_{\mathrm{high}}(\tilde{\Delta}_{4,n_1})f_{\mathrm{high}}(\tilde{\Delta}_{5,n_1})\nonumber\\
  &\quad\times f_{\mathrm{high}}(\tilde{\Delta}_{6,n_1})f_{\mathrm{high}}(\tilde{\Delta}_{7,n_1}),\\
  {F}^{(2)}_{n_1}(0)&=F_{1,n_1}(1) F_{2,n_1}(0) F_{3,n_1}(1)\nonumber\\
         &\quad\times F_{4,n_1}(0) F_{5,n_1}(1) F_{6,n_1}(0) F_{7,n_1}(1)\nonumber\\
         &=f_{\mathrm{high}}(\tilde{\Delta}_{1,n_1})f_{\mathrm{high}}(\tilde{\Delta}_{2,n_1})f_{\mathrm{low}}(\tilde{\Delta}_{3,n_1})\nonumber \\
  &\quad\times f_{\mathrm{high}}(\tilde{\Delta}_{4,n_1})f_{\mathrm{low}}(\tilde{\Delta}_{5,n_1})\nonumber\\
  &\quad\times f_{\mathrm{high}}(\tilde{\Delta}_{6,n_1})f_{\mathrm{low}}(\tilde{\Delta}_{7,n_1}),
\end{align}
and those of the other patterns ${F}^{(3)}_{n_1}(0),\ldots,{F}^{(8)}_{n_1}(0)$ are also calculated in the same way.
The likelihood of the level-$1$ logical value $0$, that is, $\ket{0^{\left(1\right)}}$, for $n_1$ is given by
\begin{equation}
  \label{eq:F_level_1_0}
  {F}_{n_1}\left(0\right)={F}^{(1)}_{n_1}\left(0\right)+\cdots+{F}^{(8)}_{n_1}\left(0\right).
\end{equation}
In the same way, using~\eqref{eq:seven_qubit_code_1} instead of~\eqref{eq:seven_qubit_code_0}, the likelihood of the level-$1$ logical value $1$, that is, $\Ket{1^{\left(1\right)}}$, for $n_1$ is given by
\begin{equation}
  \label{eq:F_level_1_1}
  {F}_{n_1}\left(1\right)={F}^{(1)}_{n_1}\left(1\right)+\cdots+{F}^{(8)}_{n_1}\left(1\right).
\end{equation}

As for the likelihoods of the level-$2$ logical bit values $\Ket{0^{\left(2\right)}}$ and $\Ket{1^{\left(2\right)}}$, we consider the eight error patterns for the level-$1$ logical states corresponding to each term on the right-hand sides of~\eqref{eq:seven_qubit_code_0} and~\eqref{eq:seven_qubit_code_1}.
For example, the likelihoods corresponding to the first and second terms in~\eqref{eq:seven_qubit_code_0} are calculated respectively by
\begin{align}
  F^{(1)}(0)=&\prod_{n_1=1}^7 F_{n_1}(0),\\
  F^{(2)}(0)=&F_{1}(1) F_{2}(0) F_{3}(1)\nonumber\\
         &\times F_{4}(0) F_{5}(1) F_{6}(0) F_{7}(1).
\end{align}
Then, the likelihood of the level-$2$ logical bit value $0$ corresponding to $\Ket{0^{\left(2\right)}}$ and that of $1$ corresponding to $\Ket{1^{\left(2\right)}}$ are given by
\begin{align}
  \label{eq:F_level_2_0}
  F(0)&=F^{(1)}(0)+\cdots+F^{(8)}(0),\\
  \label{eq:F_level_2_1}
  F(1)&=F^{(1)}(1)+\cdots+F^{(8)}(1).
\end{align}
More generally, the likelihoods of the level-$l$ logical bit value for any $l\geqq 3$ can be obtained from the likelihoods of the level-$(l-1)$ bit values in a similar manner by substituting $n_1$ in the above calculation at level $2$ with $n_{l-1}$.

Consequently, given the outcomes of homodyne detection of all the $7^L$ GKP qubits, the AQEC decides an estimate $\tilde{b}^{(\mathrm{AQEC})}$ of the level-$L$ logical bit value by calculating the likelihoods $F(0)$ and $F(1)$ at the top level of concatenation, followed by choosing
\begin{equation}
  \label{eq:b_aqec}
  \tilde{b}^{(\mathrm{AQEC})}=\begin{cases}
    0&\text{if }F(0)\geqq F(1),\\
    1&\text{if }F(0)< F(1).
  \end{cases}
\end{equation}
Once $\tilde{b}^{(\mathrm{AQEC})}$ is obtained, we can also decide estimates of level-$l$ logical bit values for each $l\in\left\{L-1,\ldots,0\right\}$ using the likelihoods, as we explain for $L=2$ in the following.
Suppose that we have $\tilde{b}^{(\mathrm{AQEC})}=0$.
Then, we decide an estimate $\tilde{b}_{n_1}^{(\mathrm{AQEC})}$ of a level-$1$ logical bit value of each level-$1$ logical qubit labeled $n_1$ using the bit values maximizing the likelihoods among the eight patterns $\left\{F^{(1)}(0),\ldots,F^{(8)}(0)\right\}$ corresponding to~\eqref{eq:seven_qubit_code_0}.
For example, if $F^{(1)}(0)$ is the maximal, we decide $\left(\tilde{b}_{1}^{(\mathrm{AQEC})},\ldots,\tilde{b}_{7}^{(\mathrm{AQEC})}\right)$ as $(0000000)$, and if $F^{(1)}(1)$ is the maximal, we decide as $(1010101)$.
Suppose that we have $\tilde{b}^{(\mathrm{AQEC})}=1$.
Then, we decide $\tilde{b}_{n_1}^{(\mathrm{AQEC})}$ as those maximizing $\left\{F^{(1)}(1),\ldots,F^{(8)}(1)\right\}$ corresponding to~\eqref{eq:seven_qubit_code_1}.
By repeating this estimation in a recursive manner from level $L$ to level $0$, for each level-$0$ GKP qubit labeled $(n_0,n_1)$, we can decide an estimate $\tilde{b}_{n_0,n_1}^{(\mathrm{AQEC})}$ of a level-$0$ logical bit value, which may be different from the estimate $\tilde{b}_{n_0,n_1}$ given by~\eqref{eq:decision} without AQEC\@.
This estimate $\tilde{b}_{n_0,n_1}^{(\mathrm{AQEC})}$ is more likely than $\tilde{b}_{n_0,n_1}$ in terms of the likelihoods with respect to the $7$-qubit code~\eqref{eq:seven_qubit_code_0} and~\eqref{eq:seven_qubit_code_1}.
Then, similarly to the estimate $\tilde{\Delta}_{n_0,n_1}$ of the deviation of the GKP qubit labeled $(n_0,n_1)$ given by~\eqref{eq:deviation_estimate} without AQEC, we can use $\tilde{b}_{n_0,n_1}^{(\mathrm{AQEC})}$ to obtain a more likely estimate $\tilde{\Delta}_{n_0,n_1}^{(\mathrm{AQEC})}$ than $\tilde{\Delta}_{n_0,n_1}$, that is,
\begin{align}
  \tilde{s}_{n_0,n_1}^{(\mathrm{AQEC})}&\coloneqq\argmin_{s\in\mathbb{Z}}\left\{\left|\tilde{q}_{n_0,n_1}-q\right|:\right.\nonumber\\
                                       &\quad \left.q=\left(2s+\tilde{b}_{n_0,n_1}^{(\mathrm{AQEC})}\right)\sqrt{\pi}\right\},\\
                                       \label{eq:peak_estimate_aqec}
  q_{\tilde{b}_{n_0,n_1}^{(\mathrm{AQEC})},\tilde{s}_{n_0,n_1}^{(\mathrm{AQEC})}}&\coloneqq\left(2\tilde{s}_{n_0,n_1}^{(\mathrm{AQEC})}+\tilde{b}_{n_0,n_1}^{(\mathrm{AQEC})}\right)\sqrt{\pi},\\
  \label{eq:deviation_estimate_aqec}
  \tilde{\Delta}_{n_0,n_1}^{(\mathrm{AQEC})}&\coloneqq \tilde{q}_{n_0,n_1}-q_{\tilde{b}_{n_0,n_1}^{(\mathrm{AQEC})},\tilde{s}_{n_0,n_1}^{(\mathrm{AQEC})}}\nonumber\\
                                            &\in\left[-\sqrt{\pi},\sqrt{\pi}\right],
\end{align}
where the range $\left[-\sqrt{\pi},\sqrt{\pi}\right]$ of $\tilde{\Delta}_{n_0,n_1}^{(\mathrm{AQEC})}$ is larger than that of $\tilde{\Delta}_{n_0,n_1}$ since $q_{\tilde{b}_{n_0,n_1}^{(\mathrm{AQEC})},\tilde{s}_{n_0,n_1}^{(\mathrm{AQEC})}}$ is not necessarily the closest Gaussian peak to the outcome $\tilde{q}_{n_0,n_1}$.
In Sec.~\ref{sec:fault_tolerant}, we will combine $\tilde{\Delta}_{n_0,n_1}^{(\mathrm{AQEC})}$ obtained from the AQEC with the SQEC to suppress errors caused by bit or phase flips in the SQEC\@.

Given $n=7^L$ physical qubits for QEC\@, the AQEC calculates $O(n)$ likelihood functions in total, and if each likelihood function can be calculated within a constant time using a numerical library, the calculation of these likelihoods to decide the level-$L$ logical bit value can be performed \textit{efficiently} within time
\begin{equation}
  \label{eq:aqec_complexity}
  O\left(n\right).
\end{equation}

\subsection{\label{sec:overhead}Overhead in implementing quantum computation}

In this section, we define overhead in implementing quantum computation, and give examples of the overhead.
We first define the overhead.
Then, we discuss overheads that appear in MBQC and QEC\@.

\textbf{Definition of overhead in implementing quantum computation}:
Given a quantum circuit representing quantum computation,
an overhead in implementing the quantum computation refers to the extra computational steps required for its implementation compared to the size of the original quantum circuit.
Algorithms for the quantum computation are conventionally represented as quantum circuits where elementary gates may act on arbitrary qubits in the circuits~\cite{N4}, and we call these circuits and gates \textit{geometrically nonlocal} circuits and gates, respectively.
In contrast to the geometrically nonlocal circuits, it is also possible to consider circuits consisting of one-qubit gates and multiqubit gates acting only on neighboring qubits with respect to a certain geometrical alignment of the qubits, and such circuits and gates are called \textit{geometrically local} circuits and gates, respectively.
It is important to reduce overheads in implementing a given geometrically nonlocal quantum circuit since polynomial overheads may cancel out some polynomial quantum speedups such as those shown in Sec.~\ref{sec:introduction}.
Whether we can realize polynomial quantum speedups also matters to security analysis of post-quantum cryptography~\cite{10.1007/978-3-319-79063-3_9,10.1007/978-3-030-26948-7_2}.
The geometrical constraints, as well as quantum error correction (QEC), may cause overheads in the implementation, as we discuss in this section.

The overhead of an MBQC protocol refers to the extra computational steps of the MBQC protocol compared to the sizes of the quantum circuits to be simulated by the MBQC\@.
In particular, an $N$-qubit $D$-depth quantum circuit may consist of $O\left(DN\right)$ elementary gates.
In this case, let $t\left(D,N\right)$ be the complexity of an MBQC protocol for simulating the $N$-qubit $D$-depth quantum circuit.
Then, the overhead refers to the function
\begin{equation}
  \label{eq:def_overhead}
  \frac{t\left(D,N\right)}{DN},
\end{equation}
where the denominator $DN$ is the maximal number of elementary gates in $N$-qubit $D$-depth quantum circuits to be simulated, and we are interested in the scaling of an upper bound of this function in simulating arbitrary $N$-qubit $D$-depth quantum circuits.
In case an MBQC protocol probabilistically succeeds in simulating the circuit, we repeat the MBQC protocol until we succeed, and the overhead of the MBQC protocol may refer to the function~\eqref{eq:def_overhead} to succeed once in expectation.
Photonic MBQC illustrated in Fig.~\ref{fig:introduction} performs resource state preparation and measurements sequentially,
and the above definition of overhead based on the size of quantum circuits especially matters to this sequential implementation of quantum computation using the photonic architecture.
We also discuss the depth of quantum circuits to be simulated and MBQC implementing the circuit using parallel measurements in Appendix~\ref{sec:parallelizability}.

\textbf{Examples of overhead}:
An overhead in MBQC typically arises from geometrical constraints.
As discussed in Sec.~\ref{sec:introduction},
a resource state for MBQC that can be prepared only by geometrically local multiqubit gates, \textit{e.g.}, those with respect to a lattice on a $2D$ plane, leads to the polynomial overheads.
However, whether such geometrical locality and planarity of interactions are necessary or not may depend on physical quantum systems for realizing MBQC\@.
Some matter-based systems such as superconducting qubits may use qubits arranged on a $2D$ plane,
but in the case of photonic MBQC, these geometrical locality and planarity are not essential since the photonic GKP qubits can be easily moved in space compared to the matter-based systems.
Our goal here is to avoid the polynomial overhead in MBQC and achieve an implementation of geometrically nonlocal circuits within a \textit{poly-logarithmic} overhead,
by taking advantage of the property of the photonic GKP qubits moving in space.

In contrast to the polynomial overheads caused by the geometrical constraints, QEC using a concatenated quantum error-correcting code conventionally incur a poly-logarithmic overhead cost, unless we use too many post-selections.
One way of implementing a given circuit in a fault-tolerant way is to replace the initialization of a physical qubit in $\Ket{0}$ with a preparation of a logical qubit in $\Ket{0}$ of a concatenated quantum error-correcting code, and to replace each physical elementary gate in the circuit with an implementation of the corresponding logical elementary gate on the code followed by QEC\@.
While the overhead cost of this implementation may depend on the concatenation level of the code, the required concatenation level $L$ for implementing a circuit of size $t$ within a failure probability $\epsilon$ is known to scale~\cite{N4}
\begin{equation}
  \label{eq:overhead_qec}
  L=O\left(\log\left(\polylog\left(\frac{t}{\epsilon}\right)\right)\right).
\end{equation}
For each gate in the original circuit,
the required number of physical gates for implementing the corresponding logical gate in the fault-tolerant circuit is
\begin{equation}
  O\left(\polylog\left(\frac{t}{\epsilon}\right)\right).
\end{equation}
For implementing the fault-tolerant circuit as a whole,
the required number of physical gates is
\begin{equation}
  O\left(t\times\polylog\left(\frac{t}{\epsilon}\right)\right),
\end{equation}
which is only poly-logarithmically larger than the original circuit.
If we use post-selection in QEC, the post-selection can be another factor causing the overhead.
As discussed in Sec.~\ref{sec:qec_gkp}, let $p$ be a lower bound of the success probability of a post-selection, and each of the post-selection may incur an overhead cost of order $\frac{1}{p}$ in expectation.
Hence, to achieve polylog overhead including QEC, we will design the fault-tolerant protocol to keep the sum of the overhead costs caused by the post-selections at most poly-logarithmic.

\section{\label{sec:resource_state}Polylog-overhead universal MBQC protocol for photonic systems}

In this section, we introduce a family of hypergraph states that serve as resources for MBQC achieving poly-logarithmic overhead in simulating an arbitrary $N$-qubit $D$-depth quantum circuit only by $Z$ and $X$ measurements, which can be implemented by homodyne detection on GKP qubits as discussed in Sec.~\ref{sec:preliminaries}.
These hypergraph states, defined in Sec.~\ref{sec:definition}, are denoted by
\begin{equation}
    \label{eq:resources}
    \left\{\Ket{G_\mathrm{res}^{\left(N,D\right)}}:N,D\in\left\{1,2,\ldots\right\}\right\},
\end{equation}
where $G_\mathrm{res}^{\left(N,D\right)}$ is the corresponding hypergraph defined later as~\eqref{eq:g_res} and illustrated in Fig.~\ref{fig:resource}.
To prove the universality, we discuss the preparation complexity of these resource hypergraph states in Sec.~\ref{sec:preparation_complexity}.
Then in Sec.~\ref{sec:quantum_complexity}, we show our protocol for MBQC using $\Ket{G_\mathrm{res}^{\left(N,D\right)}}$ as a resource for simulating an arbitrary $N$-qubit $D$-depth quantum circuit composed of a computationally universal gate set $\left\{H,CCZ\right\}$.
How this protocol works will be illustrated in Fig.~\ref{fig:circuit_n_qubit_d_depth}, while the patterns of the $Z$ and $X$ measurements for implementing gates in a quantum circuit will be summarized in Fig.~\ref{fig:measurement}.
Note that our construction of the resource hypergraph states can be used for MBQC on arbitrary multi-qubit systems, while optimized based on the properties of GKP qubits.
In the following, the floor and ceiling functions are denoted by $\lfloor{}\cdots{}\rfloor$ and $\lceil{}\cdots{}\rceil$, respectively.

\subsection{\label{sec:definition}Definition of resource states}

In this subsection, we define a hypergraph $G_\mathrm{res}^{\left(N,D\right)}$ (shown later in Fig.~\ref{fig:resource}) representing a hypergraph state $\Ket{G_\mathrm{res}^{\left(N,D\right)}}$ which serves as a universal resource state for MBQC to simulate an arbitrary $N$-qubit $D$-depth quantum circuit composed of a computationally universal gate set $\left\{H,CCZ\right\}$.
For two hypergraphs $G_1=\left(V_1,E_1\right)$ and $G_2=\left(V_2,E_2\right)$,
their union $G_1\cup G_2$ is defined as a hypergraph
\begin{equation}
    \label{eq:union}
    G_1\cup G_2\coloneqq \left(V_1\cup V_2,E_1\cup E_2\right),
\end{equation}
where $V_1$ and $V_2$ may share vertices with the same label that are identified with each other.

\begin{figure}[t]
    \centering
    \includegraphics[width=3.4in]{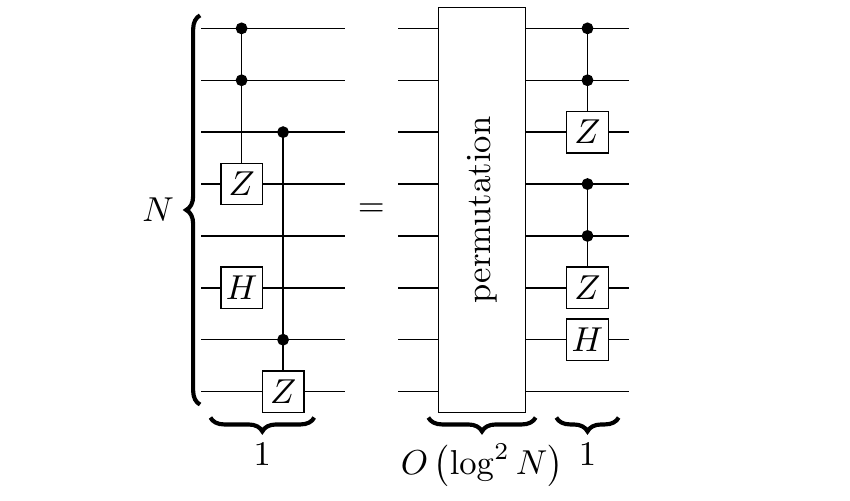}
    \caption{\label{fig:circuit_sorted}Rewriting an $N$-qubit $1$-depth quantum circuit composed of $\left\{H,CCZ\right\}$ in terms of geometrically local $CCZ$ gates using permutation of $N$ qubits. We implement this permutation using a sorting network, \textit{e.g.}, the odd-even merging networks of depth $O\left(\log^2 N\right)$, as shown in the main text. Note that equivalence of the circuits in the figure is up to permutation of the output.}
\end{figure}

\begin{figure}[t]
    \centering
    \includegraphics[width=3.4in]{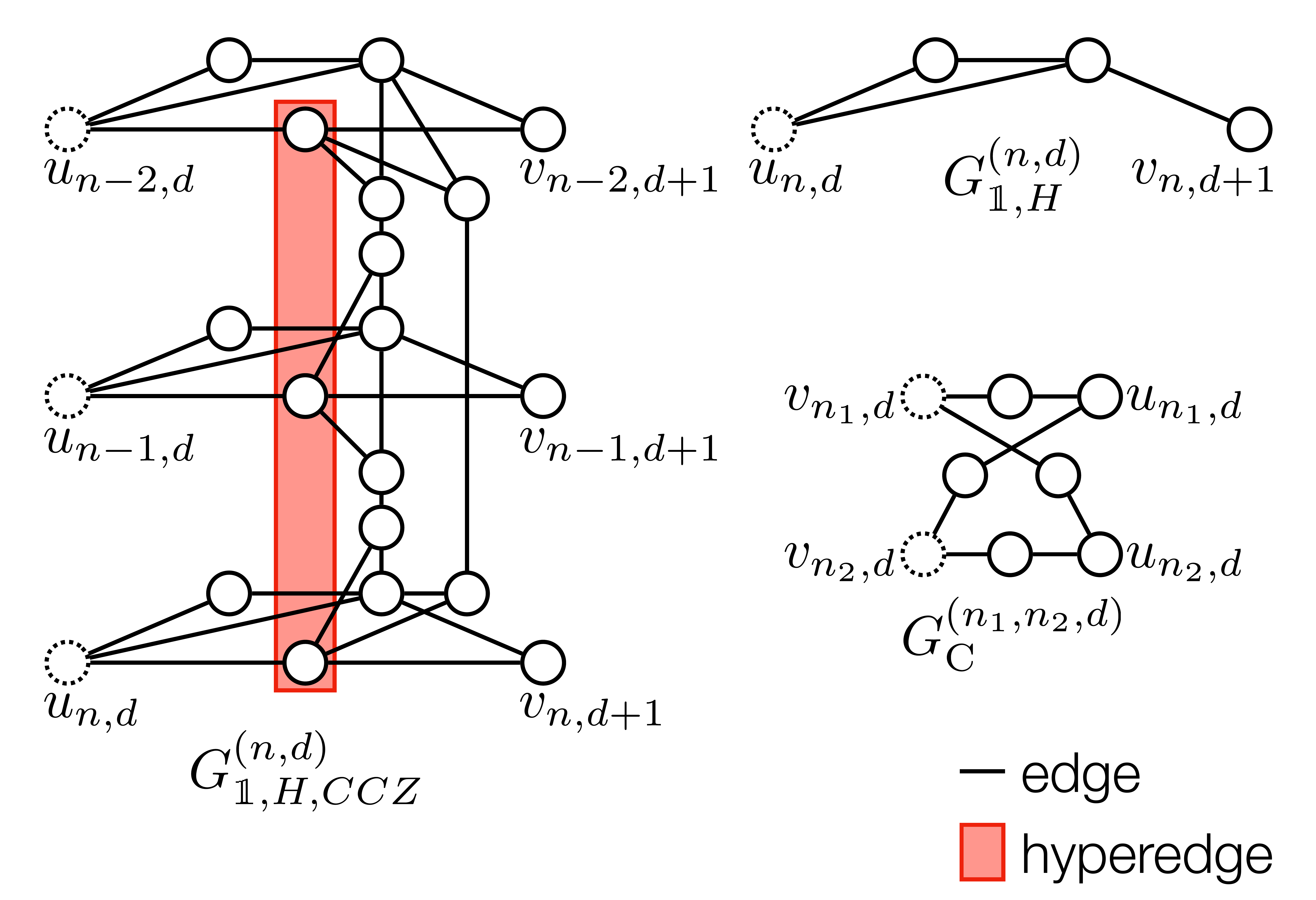}
    \caption{\label{fig:parts}Definitions of hypergraphs $G_{\mathbbm{1},H,CCZ}^{\left(n,d\right)}$, $G_{\mathbbm{1},H}^{\left(n,d\right)}$ and $G_\textup{C}^{\left(n_1,n_2,d\right)}$. Vertices are represented as circles, and an edge between two vertices and a hyperedge among three vertices are represented as a black line and a red rectangle, respectively. We may collectively refer to edges and hyperedges as hyperedges for brevity. Some of the vertices are labeled as shown in the figure, and vertices represented as dotted circles of each hypergraph will be identified with another hypergraph's vertices represented as solid circles having the same label when these hypergraphs are pasted together to obtain the union defined as~\eqref{eq:union}. Vertices and hyperedges that are not explicitly labeled will never be identified with any other when we take the union of hypergraphs.}
\end{figure}

\begin{figure}[t]
    \centering
    \includegraphics[width=3.4in]{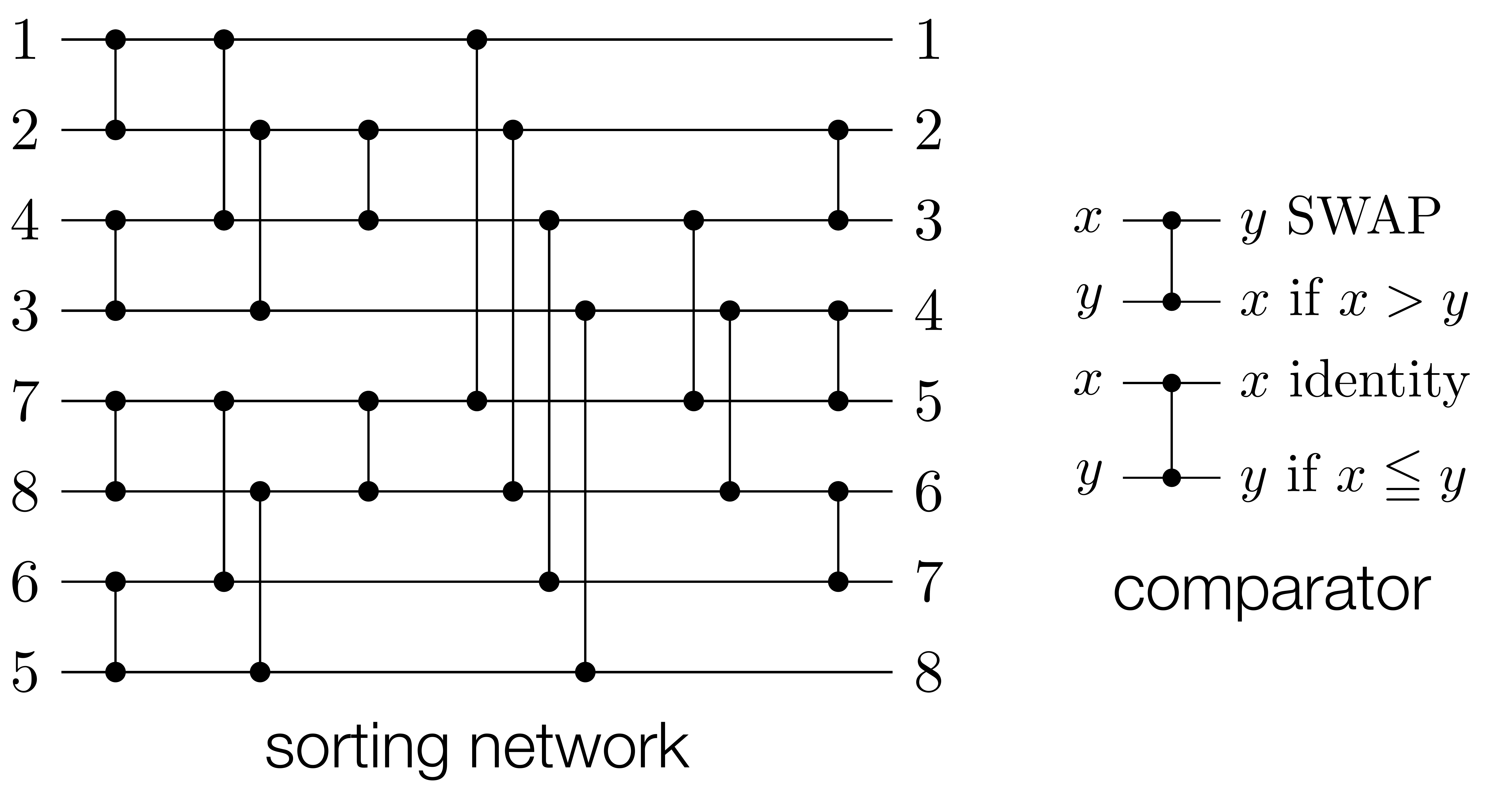}
    \caption{\label{fig:sorting_network}A sorting network of $N$ wires and the comparators represented as two dots connected by a vertical line, where the figure in particular shows the odd-even merging network for $N=8$. A sorting network is a fixed circuit composed of wires and the comparators acting as a $\textsc{SWAP}$ or identity gate to rearrange any input in ascending order, implementing arbitrary permutation.}
    \includegraphics[width=3.4in]{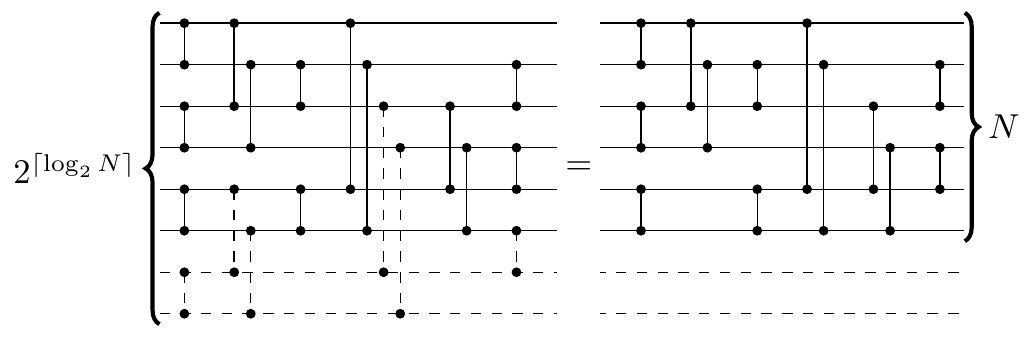}
    \caption{\label{fig:sort_arbitrary}Construction of a sorting network of $N$ wires from that of $2^{\left\lceil\log_2 N\right\rceil}$ wires, by removing the dashed wires and the dashed comparators involving the removed wires. For any $N$, we may use a sorting network of $N$ wires that can be obtained from the odd-even merging network of $2^{\left\lceil\log_2 N\right\rceil}$ wires by removing the last $2^{\left\lceil\log_2 N\right\rceil}-N$ wires.}
\end{figure}

Our construction of the hypergraph $G_\mathrm{res}^{\left(N,D\right)}$ consists of $D$ sub-hypergraphs $G_\textup{depth}^{\left(N,1\right)},G_\textup{depth}^{\left(N,2\right)},\ldots,G_\textup{depth}^{\left(N,D\right)}$ (shown later in Fig.~\ref{fig:one_depth}), where $G_\textup{depth}^{\left(N,d\right)}$ for each $d\in\left\{1,\ldots,D\right\}$ represents a hypergraph state for simulating an arbitrary $N$-qubit $1$-depth quantum circuit,
so that simulations of an $N$-qubit $D$-depth quantum circuit using $G_\mathrm{res}^{\left(N,D\right)}$ can be achieved by simulating arbitrary $N$-qubit $1$-depth quantum circuits repeatedly $D$ times.
As illustrated in Fig.~\ref{fig:circuit_sorted}, any $N$-qubit $1$-depth quantum circuit, which may contain geometrically nonlocal $CCZ$ gates, can be implemented (up to permutation of the output) by first permuting the $N$ qubits appropriately, followed by performing an $N$-qubit $1$-depth quantum circuit whose $CCZ$ gates are geometrically local.
Our construction of $G_\textup{depth}^{\left(N,d\right)}$ consists of a graph $G_\textup{S}^{\left(N,d\right)}$ (shown later in Fig.~\ref{fig:sort_graph}) for arbitrarily sorting the $N$ qubits in the circuit, hypergraphs $G_{\mathbbm{1},H,CCZ}^{\left(n,d\right)}$ for performing an arbitrary $3$-qubit $1$-depth circuit composed of $\left\{H,CCZ\right\}$ (including $\mathbbm{1}^{\otimes 3}$) on the $(n-2)$th, $(n-1)$th, and $n$th qubits in the circuit, and graphs $G_{\mathbbm{1},H}^{\left(n,d\right)}$ for performing $\mathbbm{1}$ or $H$ on the $n$th qubit in the circuit.
The graph $G_\textup{S}^{\left(N,d\right)}$ for permuting the $N$ qubits consists of graphs $G_\textup{C}^{\left(n_1,n_2,d\right)}$ for performing a $\textsc{SWAP}$ or identity gate on the $n_1$th and $n_2$th qubits in the circuit.
Definitions of these hypergraphs (which include graphs as special cases) $G_{\mathbbm{1},H,CCZ}^{\left(n,d\right)}$, $G_{\mathbbm{1},H}^{\left(n,d\right)}$, and $G_\textup{C}^{\left(n_1,n_2,d\right)}$ are given in Fig.~\ref{fig:parts}, where some of the vertices are labeled as shown in Fig.~\ref{fig:parts}.
Each unlabeled vertex in Fig.~\ref{fig:parts} is considered to have a different label from any other and is never identified with other vertices when we take the union of hypergraphs.
In the following, we define $G_\textup{S}^{\left(N,d\right)}$, $G_\textup{depth}^{\left(N,d\right)}$, and $G_\mathrm{res}^{\left(N,D\right)}$ using the hypergraphs $G_{\mathbbm{1},H,CCZ}^{\left(n,d\right)}$, $G_{\mathbbm{1},H}^{\left(n,d\right)}$, and $G_\textup{C}^{\left(n_1,n_2,d\right)}$.

Our construction of the graph $G_\textup{S}^{\left(N,d\right)}$ for the permutation is based on sorting networks~\cite{K4}.
As illustrated in Fig.~\ref{fig:sorting_network},
a sorting network is a \textit{fixed} circuit that is composed of comparators and $N$ wires, and rearranges $N$ arbitrary inputs from $\left\{1,\ldots,N\right\}$ in ascending order, where each of the $N$ wires represents each of $1,\ldots,N$, and each comparator is a two-input two-output gate acting as a $\textsc{SWAP}$ or identity gate so that the two inputs are output in ascending order.
Using a sorting network, we can rearrange $N$ inputs in arbitrary order by appropriately performing a $\textsc{SWAP}$ or identity gate at each of the comparators in the sorting network.

While construction of the shallowest sorting network is believed to be computationally hard~\cite{P2},
several explicit constructions of (not necessarily optimal but practical) sorting networks are known to have poly-logarithmic depths, such as the bitonic sorters, the odd-even merging networks, and the pairwise sorting networks.
The bitonic sorters~\cite{B10} are sorting networks defined for $N=2^l$ and suitable for graphics processing unit sorting~\cite{K7}, having depth
\begin{equation}
    \frac{1}{2}\left(\log_2 N\right)\left(\log_2 N +1\right),
\end{equation}
and the number of comparators is
\begin{equation}
    \frac{1}{4}N\left(\log_2^2 N + \log_2 N\right).
\end{equation}
The odd-even merging networks~\cite{B10} are also defined for $N=2^l$, having the same depth as the bitonic sorters
\begin{equation}
    \label{eq:depth_sorting_network}
    \frac{1}{2}\left(\log_2 N\right)\left(\log_2 N +1\right),
\end{equation}
and a smaller number of comparators than that of the bitonic sorters for $N\geqq 4$
\begin{equation}
    \label{eq:comparators_sorting_network}
    \frac{1}{4}N\left(\log_2^2 N - \log_2 N + 4 \right)-1.
\end{equation}
The pairwise networks~\cite{P3} have the same depth and number of comparators as the odd-even merging networks, but are different construction.
As shown in Fig.~\ref{fig:sort_arbitrary}, given any sorting network of $2^{\left\lceil\log_2 N\right\rceil}$ wires, a sorting network of $N$ wires can be obtained by removing $2^{\left\lceil\log_2 N\right\rceil}-N$ wires and the comparators that involve these removed wires.
Although any sorting network  of $N$ wires can be used for our construction of $G_\textup{S}^{\left(N,d\right)}$,
in the following, we specifically use the odd-even merging network of $N$ wires obtained from that of $2^{\left\lceil\log_2 N\right\rceil}$ wires by removing the last wires in the same way as the one in Fig.~\ref{fig:sort_arbitrary}.
For brevity, a sorting network may refer to this construction of sorting network,

\begin{figure*}[t]
  \centering
  \includegraphics[width=7.0in]{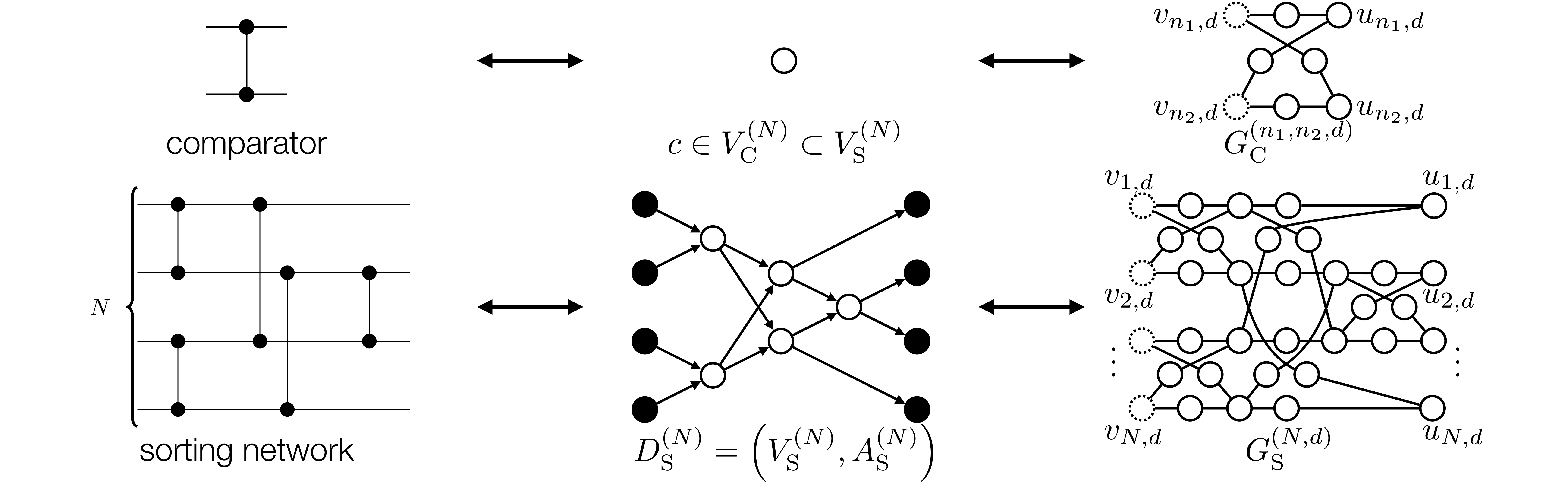}
  \caption{\label{fig:sort_graph}A graph $G_\textup{S}^{\left(N,d\right)}$ defined as~\eqref{eq:G_S_N_d} using $G_\textup{C}^{\left(n_1,n_2,d\right)}$. For a sorting network of $N$ wires such as the odd-even merging network, let $D_\textup{S}^{\left(N\right)}=\left(V_\textup{S}^{\left(N\right)},A_\textup{S}^{\left(N\right)}\right)$ be a directed acyclic graph representing the sorting network, where the white vertices $c\in V_\textup{C}^{\left(N\right)}\subset V_\textup{S}^{\left(N\right)}$ represent the comparators, and the black vertices represent the inputs and the outputs of the sorting network. Using $D_\textup{S}^{\left(N\right)}$, we define $G_\textup{S}^{\left(N,d\right)}$ in the main text as the graph obtained by replacing each comparator acting of the $n_1$th and $n_2$th wires of the sorting network with $G_\textup{C}^{\left(n_1,n_2,d\right)}$.}
\end{figure*}

\begin{figure}[t]
    \centering
    \includegraphics[width=3.4in]{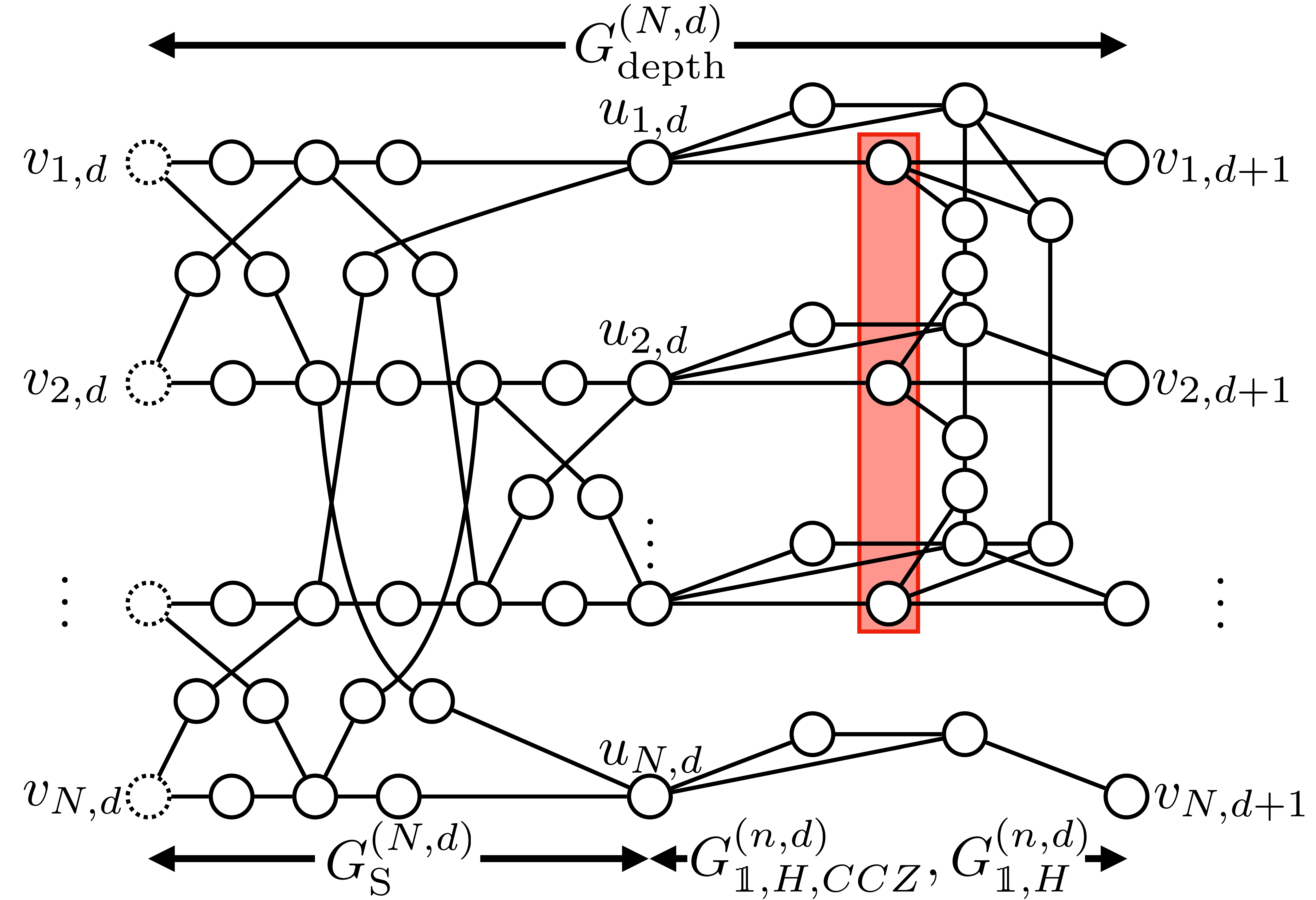}
    \caption{\label{fig:one_depth}A hypergraph $G_\textup{depth}^{\left(N,d\right)}$ defined as~\eqref{eq:one_depth} using $G_\textup{S}^{\left(N,d\right)}$, $G_{\mathbbm{1},H,CCZ}^{\left(n,d\right)}$, and $G_{\mathbbm{1},H}^{\left(n,d\right)}$. Based on Fig.~\ref{fig:circuit_sorted} on rewriting any $N$-qubit $1$-depth quantum circuit composed of $\left\{H,CCZ\right\}$ in terms of permutation and geometrically local $CCZ$ gates, the permutation part in the rewritten circuit is replaced with $G_\textup{S}^{\left(N,d\right)}$ for a sorting network, the $H$ and $CCZ$ part is replaced with as many $G_{\mathbbm{1},H,CCZ}^{\left(n,d\right)}$s as possible, and the remainder is replaced with $G_{\mathbbm{1},H}^{\left(n,d\right)}$s, where $G_{\mathbbm{1},H,CCZ}^{\left(n,d\right)}$ for each $n=3m,\, m\in\left\{1,\ldots,\left\lfloor\frac{N}{3}\right\rfloor\right\}$ corresponds to an arbitrary $3$-qubit $1$-depth circuit composed of $\left\{H,CCZ\right\}$ on the $(3m-2)$th, $(3m-1)$th, and $(3m)$th qubits in the rewritten circuit, and $G_{\mathbbm{1},H}^{\left(n,d\right)}$ for each $n\in\left\{3\left\lfloor\frac{N}{3}\right\rfloor+1,\ldots,N\right\}$ corresponds to $H$ or $\mathbbm{1}$ on the $n$th qubit.}
\end{figure}

\begin{figure*}[t]
    \centering
    \includegraphics[width=7.0in]{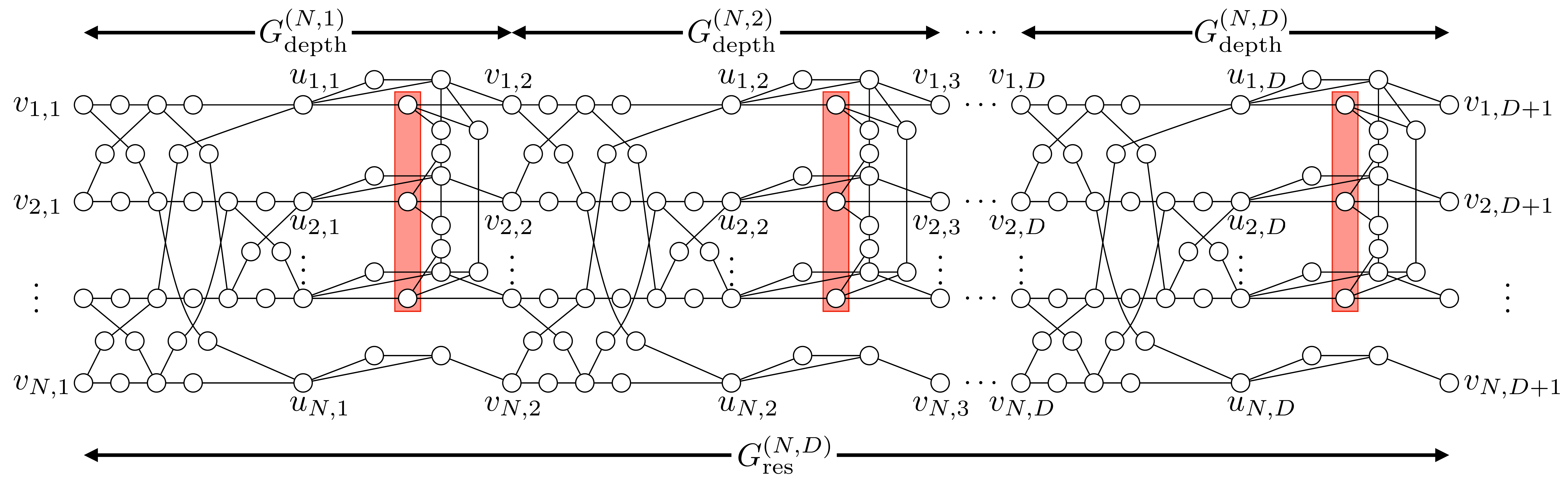}
    \caption{\label{fig:resource}A hypergraph $G_\mathrm{res}^{\left(N,D\right)}$ defined as~\eqref{eq:g_res} using $G_\textup{depth}^{\left(N,d\right)}$. An arbitrary $N$-qubit $D$-depth quantum circuit composed of $\left\{H,CCZ\right\}$ can be performed by repeating $N$-qubit $1$-depth quantum circuits $D$ times, and $G_\mathrm{res}^{\left(N,D\right)}$ is defined as a hypergraph where $G_\textup{depth}^{\left(N,d\right)}$ is repeated $D$ times.}
    \includegraphics[width=7.0in]{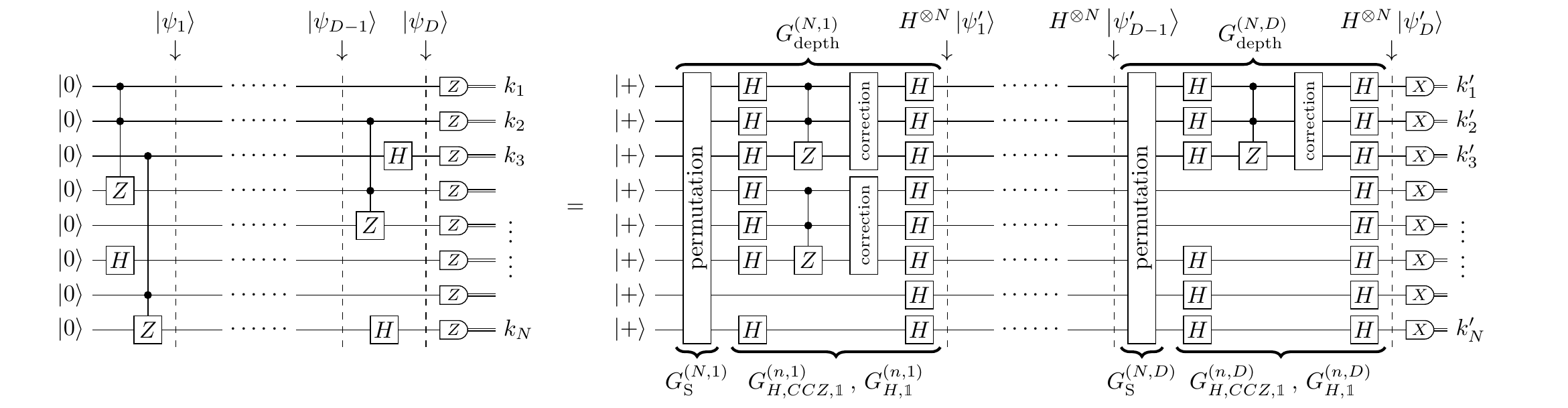}
    \caption{\label{fig:circuit_n_qubit_d_depth}A circuit identity between an arbitrary $N$-qubit $D$-depth geometrically nonlocal quantum circuit composed of $\left\{H,CCZ\right\}$ on the left-hand side and its equivalent circuit on the right-hand side illustrating our MBQC protocol using the resource hypergraph state $\Ket{G_\mathrm{res}^{\left(N,D\right)}}$, where $G_\mathrm{res}^{\left(N,D\right)}$ is a hypergraph in Fig.~\ref{fig:resource} representing $\Ket{G_\mathrm{res}^{\left(N,D\right)}}$. For each depth $d\in\left\{1,\ldots,D\right\}$, let $\Ket{\psi_d}$ denote an $N$-qubit state that can be generated from $\Ket{0}^{\otimes N}$ by the first $d$-depth part of the quantum circuit illustrated on the left-hand side. In the corresponding part of the equivalent circuit on the right-hand side, we obtain an $N$-qubit state $H^{\otimes N}\Ket{\psi_d^\prime}$ from $\Ket{+}^{\otimes N}$, where $\Ket{\psi_d^\prime}$ coincides with $\Ket{\psi_d}$ up to permutation of its $N$ qubits. Our MBQC protocol implements arbitrary permutations using resources corresponding to $G_\textup{S}^{\left(N,d\right)}$, followed by implementing geometrically local $CCZ$ gates, $H$ gates, and identity gates $\mathbbm{1}=H^2$ using resources represented by $G_{\mathbbm{1},H,CCZ}^{\left(n,d\right)}$ and $G_{\mathbbm{1},H}^{\left(n,d\right)}$, so that the $N$ qubits corresponding to $v_{1,d},\ldots,v_{N,d}$ of $G_\mathrm{res}^{\left(N,D\right)}$ can be prepared in $H^{\otimes N}\Ket{\psi_{d-1}^\prime}$. After repeating these implementations $D$ times to prepare $H^{\otimes N}\Ket{\psi_D^\prime}$, we perform $X$ measurements to obtain classical outcomes $k_1^\prime,\ldots,k_N^\prime$, as illustrated on the right-hand side of the circuit identity. These outcomes in MBQC are equivalent to the outcomes $k_1,\ldots,k_N$ of the $Z$ measurements in the circuit model on the left-hand side of the circuit identity, up to permutation.}
\end{figure*}

Given a sorting network of $N$ wires,
$G_\textup{S}^{\left(N,d\right)}$ is defined in such a way that $G_\textup{C}^{\left(n_1,n_2,d\right)}$ plays the role of each comparator in the sorting network,
as illustrated in Fig.~\ref{fig:sort_graph}.
To provide a formal definition of $G_\textup{S}^{\left(N,d\right)}$, represent the given sorting network of $N$ wires as a directed acyclic graph $D_\textup{S}^{\left(N\right)}=\left(V_\textup{S}^{\left(N\right)},A_\textup{S}^{\left(N\right)}\right)$, where each vertex $v\in V_\textup{S}^{\left(N\right)}$ represents an input, an output, or a comparator of the sorting network, and each arc (\textit{i.e.}, directed edge) $a\in A_\textup{S}^{\left(N\right)}$ represents a wire connecting these components of the sorting network.
The vertices that are incident with two incoming arcs and two outgoing arcs represent the comparators, colored white in $D_\textup{S}^{\left(N\right)}$ of Fig.~\ref{fig:sort_graph}.
The $N$ vertices that are incident with no incoming arc and one outgoing arc represent the $N$ inputs, and the $N$ vertices that are incident with one incoming arc and no outgoing arc represent the $N$ outputs, colored black in Fig.~\ref{fig:sort_graph}.
The set of the vertices of $D_\textup{S}^{\left(N\right)}$ representing the comparators is denoted by $V_\textup{C}^{\left(N\right)}\subset V_\textup{S}^{\left(N\right)}$.
For each comparator $c\in V_\textup{C}^{\left(N\right)}$, let $a_0,a_1\in A_\textup{S}^{\left(N\right)}$ be the two incoming arcs, and $a_2,a_3\in A_\textup{S}^{\left(N\right)}$ be the two outgoing arcs.
Then, define ${G^\prime}_\textup{C}^{\left(c\right)}$ as a graph $G_\textup{C}^{\left(n_1,n_2,d\right)}$ with its vertex labels $v_{n_1,d}$, $v_{n_2,d}$, $u_{n_1,d}$, and $u_{n_2,d}$ relabeled as $w_{a_0}$, $w_{a_1}$, $w_{a_2}$, and $w_{a_3}$, respectively.
Using the union to equate vertices labeled $w_a$ for the same arc $a\in A_\textup{S}^{\left(N\right)}$, we define
\begin{equation}
  \label{eq:G_S_N_d}
  {G^\prime}_\textup{S}^{\left(N,d\right)}\coloneqq \bigcup_{c\in V_\textup{C}^{\left(N\right)}}{G^\prime}_\textup{C}^{\left(c\right)},
\end{equation}
and define
$G_\textup{S}^{\left(N,d\right)}$
as a graph ${G^\prime}_\textup{S}^{\left(N,d\right)}$ with its labels $w_{a}$ for outgoing arc $a$ of the $N$ inputs relabeled as $v_{1,d},\ldots,v_{N,d}$, and with its labels $w_{a}$ for incoming arc $a$ of the $N$ outputs relabeled as $u_{1,d},\ldots,u_{N,d}$.
The labels $w_{a}$ for any other arc $a$ are to be removed.

For any $N$ and $d$,
the hypergraph $G_\textup{depth}^{\left(N,d\right)}$ is defined so that any $N$-qubit $1$-depth quantum circuit composed of $\left\{H,CCZ\right\}$ can be simulated by MBQC using the corresponding hypergraph state.
This $N$-qubit $1$-depth quantum circuit can be rewritten as a quantum circuit using permutation and geometrically local $CCZ$ gates, as illustrated in Fig.~\ref{fig:circuit_sorted}.
Suppose that this permutation is performed so that these geometrically local $CCZ$ gates act only on $(3m-2)$th, $(3m-1)$th, and $(3m)$th qubits in the circuit for some $m\in\left\{1,\ldots,\left\lfloor \frac{N}{3}\right\rfloor\right\}$.
On the remaining $(N\mod 3)$ qubits, \textit{i.e.}, $n$th qubits for $n\in\left\{3\left\lfloor \frac{N}{3}\right\rfloor + 1,\ldots,N\right\}$,
$H$ and $\mathbbm{1}$ may act but $CCZ$ does not act.
As illustrated in Fig.~\ref{fig:one_depth},
$G_\textup{depth}^{\left(N,d\right)}$ corresponding to this rewritten quantum circuit is defined as
\begin{align}
    \label{eq:one_depth}
    &G_\textup{depth}^{\left(N,d\right)}\coloneqq\nonumber\\
    &\quad G_\textup{S}^{\left(N,d\right)}\cup\left(\bigcup_{m=1}^{\left\lfloor\frac{N}{3}\right\rfloor}G_{\mathbbm{1},H,CCZ}^{\left(3m,d\right)}\right)\cup\left(\bigcup_{n=3\left\lfloor \frac{N}{3}\right\rfloor + 1}^{N}G_{\mathbbm{1},H}^{\left(n,d\right)}\right).
\end{align}

For any $N$ and $D$,
the hypergraph $G_\mathrm{res}^{\left(N,D\right)}$ is defined so that any $N$-qubit $D$-depth quantum circuit composed of $\left\{H,CCZ\right\}$ can be simulated by MBQC using the corresponding hypergraph state.
As illustrated in Fig.~\ref{fig:resource},
$G_\textup{res}^{\left(N,D\right)}$ is a hypergraph where $G_\textup{depth}^{\left(N,d\right)}$ for performing $N$-qubit $1$-depth quantum circuits is repeated $D$ times, defined as
\begin{equation}
    \label{eq:g_res}
    G_\mathrm{res}^{\left(N,D\right)}\coloneqq\bigcup_{d=1}^{D}G_\textup{depth}^{\left(N,d\right)}.
\end{equation}
In the following subsections, we may use a recursive relation of $G_\mathrm{res}^{\left(N,D\right)}$, \textit{i.e.},
\begin{equation}
    \label{eq:g_res_inductive}
    \begin{alignedat}{2}
        G_\mathrm{res}^{\left(N,1\right)}&= G_\textup{depth}^{\left(N,1\right)}\quad&&\left(D=1\right),\\
        G_\mathrm{res}^{\left(N,D\right)}&= G_\mathrm{res}^{\left(N,D-1\right)}\cup G_\textup{depth}^{\left(N,D\right)}\quad&&\left(D>1\right).
    \end{alignedat}
\end{equation}

\subsection{\label{sec:preparation_complexity}Complexity of resource state preparation}

We analyze the required number of operations for preparing the hypergraph state $\Ket{G_\mathrm{res}^{\left(N,D\right)}}$ that serves as a resource for our MBQC protocol.
To analyze this preparation complexity,
we count the numbers of vertices and hyperedges of the hypergraph $G_\mathrm{res}^{\left(N,D\right)}$ defined as~\eqref{eq:g_res}, where hyperedges may refer to edges between two vertices and hyperedges among three.
This counting yields the following proposition.

\begin{proposition}[\label{prp:num_vertices_hyperedges}The numbers of vertices and hyperedges]
    The number of vertices of the hypergraph $G_\mathrm{res}^{\left(N,D\right)}$ defined as~\eqref{eq:g_res} is
    \begin{equation}
        \label{eq:num_vertices}
      \frac{3}{2}{DN\left\lceil \log_2 N\right\rceil}^2 + O\left(DN\right).
    \end{equation}
    The number of hyperedges of $G_\mathrm{res}^{\left(N,D\right)}$ is
    \begin{equation}
        \label{eq:num_hyperedges}
      2{DN\left\lceil \log_2 N\right\rceil}^2 + O\left(DN\right).
    \end{equation}
    In particular, for any $N$ and $D$, the number of hyperedges among three vertices of $G_\mathrm{res}^{\left(N,D\right)}$ is
    \begin{equation}
        \label{eq:num_ccz}
        D\left\lfloor\frac{N}{3}\right\rfloor.
    \end{equation}
\end{proposition}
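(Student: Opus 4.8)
The plan is to prove the three claims by direct enumeration, exploiting the recursive decomposition \eqref{eq:g_res} of $G_\mathrm{res}^{(N,D)}$ into $D$ depth-layers $G_\textup{depth}^{(N,d)}$ and the decomposition \eqref{eq:one_depth} of each layer into the sorting graph $G_\textup{S}^{(N,d)}$ together with $\lfloor N/3\rfloor$ copies of $G_{\mathbbm{1},H,CCZ}^{(3m,d)}$ and at most two copies of $G_{\mathbbm{1},H}^{(n,d)}$. Each building block in Fig.~\ref{fig:parts} is a \emph{fixed} finite hypergraph, contributing a constant number of vertices and hyperedges; hence the blocks $G_{\mathbbm{1},H,CCZ}^{(3m,d)}$ and $G_{\mathbbm{1},H}^{(n,d)}$ together contribute only $O(DN)$ vertices and $O(DN)$ hyperedges over all $d$ and $m$, which are absorbed into the error terms. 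The content of \eqref{eq:num_vertices} and \eqref{eq:num_hyperedges} therefore lies entirely in counting $G_\textup{S}^{(N,d)}$, and of \eqref{eq:num_ccz} in tracking which blocks carry a three-vertex hyperedge.

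For the sorting graph, recall from \eqref{eq:G_S_N_d} and Fig.~\ref{fig:sort_graph} that $G_\textup{S}^{(N,d)}$ is obtained by substituting a copy of $G_\textup{C}$ for every comparator of the odd-even merging network, identifying the vertices $w_a$ shared along each wire. The number of comparators is given for $N$ a power of two by \eqref{eq:comparators_sorting_network}, with leading term $\tfrac14 N\lceil\log_2 N\rceil^2$; for general $N$ the network of Fig.~\ref{fig:sort_arbitrary}, obtained from $2^{\lceil\log_2 N\rceil}$ wires by deletion, has the same leading term, the deletions affecting only lower-order terms. Reading off Fig.~\ref{fig:parts}, each copy of $G_\textup{C}$ contributes a fixed number of \emph{private} vertices (drawn unlabeled, never identified) and a fixed number of hyperedges (all private, since no hyperedge of $G_\textup{C}$ is labeled), while the $w_a$-vertices are identified so that their total number equals the number of wire segments, $N+2\times(\#\textup{comparators})$. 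Adding the private-vertex contribution $(\#\textup{comparators})\times(\textup{private vertices of }G_\textup{C})$ and similarly multiplying the hyperedge count of $G_\textup{C}$ by $\#\textup{comparators}$ produces the per-layer leading terms $\tfrac32 N\lceil\log_2 N\rceil^2$ and $2N\lceil\log_2 N\rceil^2$ once the constants from Fig.~\ref{fig:parts} are substituted.

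Assembling the layers, $G_\mathrm{res}^{(N,D)}=\bigcup_{d=1}^D G_\textup{depth}^{(N,d)}$ glues consecutive layers only along $O(N)$ interface vertices, and within a layer $G_\textup{S}^{(N,d)}$ is glued to the gate blocks only along the $O(N)$ vertices $v_{n,d}$, $u_{n,d}$; all these identifications, together with the $G_{\mathbbm{1},H,CCZ}$ and $G_{\mathbbm{1},H}$ blocks, contribute $O(DN)$, so summing over $d$ gives \eqref{eq:num_vertices} and \eqref{eq:num_hyperedges}. For \eqref{eq:num_ccz}, among all building blocks only $G_{\mathbbm{1},H,CCZ}^{(3m,d)}$ carries a hyperedge on three vertices, and exactly one such (implementing the $CCZ$ gate); these are never identified with one another because they lie among vertices that Fig.~\ref{fig:parts} leaves unlabeled, and there are $\lfloor N/3\rfloor$ of the blocks in each of the $D$ layers, giving exactly $D\lfloor N/3\rfloor$ with no error term.

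The main obstacle is the bookkeeping of identified versus private vertices under the union \eqref{eq:union}: one must count the shared vertices $w_a$ of $G_\textup{S}^{(N,d)}$ once per wire segment rather than once per comparator endpoint, and avoid double counting the inter-layer and layer-to-block interfaces. Pinning the leading coefficients to exactly $\tfrac32$ and $2$ then rests on two inputs beyond this excerpt — the precise vertex and hyperedge counts of $G_\textup{C}$ in Fig.~\ref{fig:parts}, and the fact that the wire-deletion construction of Fig.~\ref{fig:sort_arbitrary} preserves the $\tfrac14 N\lceil\log_2 N\rceil^2$ leading term of the comparator count for $N$ not a power of two. Everything else is routine and feeds only into the $O(DN)$ terms.
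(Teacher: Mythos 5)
Your proposal is correct and follows essentially the same route as the paper's proof: a direct count over the decomposition into $D$ layers, attributing the leading ${\textstyle\frac{3}{2}}DN\lceil\log_2 N\rceil^2$ and $2DN\lceil\log_2 N\rceil^2$ terms to the $\approx\frac{1}{4}N\lceil\log_2 N\rceil^2$ comparators of the odd-even merging network (with $6$ vertices and $8$ hyperedges charged to each copy of $G_\textup{C}$, matching your shared-plus-private accounting), absorbing the constant-size gate blocks into $O(DN)$, and locating all three-vertex hyperedges in the $D\lfloor N/3\rfloor$ copies of $G_{\mathbbm{1},H,CCZ}^{(n,d)}$. The bookkeeping conventions you flag (counting each block's vertices excluding those identified with a neighboring block's, and bounding the general-$N$ comparator count by the power-of-two case) are exactly how the paper resolves them.
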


Using this proposition on the number of vertices and edges of $G_\mathrm{res}^{\left(N,D\right)}$, the preparation complexity of $\Ket{G_\mathrm{res}^{\left(N,D\right)}}$ is shown in the following theorem.
We also discuss the efficiency of the preparation of these resource states in terms of the required number of $CCZ$ gates, after showing this theorem.

\begin{theorem}[\label{thm:preparation_complexity}Preparation complexity]
    For any $N$ and $D$,
    if we use a gate set $\left\{H,S,CZ,CCZ\right\}$ as shown in~\eqref{eq:H_S_CZ_CCZ},
    the preparation complexity of the hypergraph state $\Ket{G_\mathrm{res}^{\left(N,D\right)}}$ corresponding to the hypergraph $G_\mathrm{res}^{\left(N,D\right)}$ defined as~\eqref{eq:g_res} is
    \begin{equation}
      \frac{7}{2}{DN\left\lceil \log_2 N\right\rceil}^2 + O\left(DN\right),
    \end{equation}
    in terms of the required number of $H$, $CZ$, and $CCZ$ gates.
\end{theorem}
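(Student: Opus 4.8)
The plan is to reduce the claim entirely to the vertex and hyperedge counts already recorded in Proposition~\ref{prp:num_vertices_hyperedges}, by writing out the standard circuit that prepares a hypergraph state and checking that, for the gate set $\left\{H,S,CZ,CCZ\right\}$, this circuit uses exactly one elementary gate per vertex and one per hyperedge.

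First I would recall the definition of $\Ket{G}$ from Sec.~\ref{sec:mbqc}: starting from the fixed product state $\Ket{0}^{\otimes M}$, where $M$ is the number of vertices of $G$, apply $H$ to each qubit to obtain $\Ket{+}^{\otimes M}$, and then, for each hyperedge $e\in E$, apply the generalized controlled-$Z$ gate $C^{|e|}Z$ of~\eqref{eq:generalized_cz} to the qubits in $e$. Since the hypergraph $G_\mathrm{res}^{\left(N,D\right)}$ of~\eqref{eq:g_res} (Fig.~\ref{fig:resource}) contains only edges ($|e|=2$) and hyperedges among three vertices ($|e|=3$), every $C^{|e|}Z$ appearing here is either $CZ$ or $CCZ$, each of which is a single elementary gate of the chosen gate set; the $H$ gates appear only in the initialization of the $\Ket{+}$s, and no $S$ gate is needed. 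All gates after initialization are diagonal in the computational basis and hence commute, so the size of the preparation circuit, i.e.\ the number of elementary gates applied sequentially, is simply $M$ (the $H$ count) plus the number of edges (the $CZ$ count) plus the number of $3$-vertex hyperedges (the $CCZ$ count).

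Next I would substitute the counts from Proposition~\ref{prp:num_vertices_hyperedges}. The $H$ count equals the number of vertices, $\frac{3}{2}{DN\left\lceil\log_2 N\right\rceil}^2+O\left(DN\right)$. The number of edges equals the total number of hyperedges minus the number of $3$-vertex hyperedges, i.e.\ $2{DN\left\lceil\log_2 N\right\rceil}^2+O\left(DN\right)-D\left\lfloor\frac{N}{3}\right\rfloor=2{DN\left\lceil\log_2 N\right\rceil}^2+O\left(DN\right)$, which is the $CZ$ count; and the $CCZ$ count is $D\left\lfloor\frac{N}{3}\right\rfloor=O\left(DN\right)$. Adding the three contributions gives $\left(\frac{3}{2}+2\right){DN\left\lceil\log_2 N\right\rceil}^2+O\left(DN\right)=\frac{7}{2}{DN\left\lceil\log_2 N\right\rceil}^2+O\left(DN\right)$, which is the claimed preparation complexity in terms of the required number of $H$, $CZ$, and $CCZ$ gates.

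The only point requiring care is the bookkeeping of the $O(DN)$ terms: one must verify that the subleading corrections in the vertex and hyperedge counts of Proposition~\ref{prp:num_vertices_hyperedges}, together with the $CCZ$ contribution $D\lfloor N/3\rfloor$, are all genuinely $O(DN)$ and hence absorbed without disturbing the leading coefficient $\frac{7}{2}$; this is immediate from the statement of the proposition. In this sense the substance of the theorem lies in Proposition~\ref{prp:num_vertices_hyperedges} itself, namely the careful enumeration of vertices and hyperedges of the union~\eqref{eq:g_res} of the sorting-network graphs $G_\textup{S}^{\left(N,d\right)}$ (each of size $\Theta(N\log^2 N)$ owing to the $O(\log^2 N)$-depth odd-even merging network) with the constant-size pieces $G_{\mathbbm{1},H,CCZ}^{\left(n,d\right)}$ and $G_{\mathbbm{1},H}^{\left(n,d\right)}$, rather than in the present reduction, which is purely a matter of reading off the preparation circuit.
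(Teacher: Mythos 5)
Your proposal is correct and follows essentially the same route as the paper's own proof: prepare $\Ket{+}^{\otimes M}$ with one $H$ per vertex, apply one $CZ$ or $CCZ$ per hyperedge, and sum the counts $\frac{3}{2}{DN\left\lceil\log_2 N\right\rceil}^2+O(DN)$ and $2{DN\left\lceil\log_2 N\right\rceil}^2+O(DN)$ from Proposition~\ref{prp:num_vertices_hyperedges} to obtain the coefficient $\frac{7}{2}$. Your additional remarks on the commutativity of the diagonal gates and the split of hyperedges into $CZ$ and $CCZ$ contributions are harmless elaborations of the same argument.
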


\begin{proof}[\textbf{Proof of Proposition~\ref{prp:num_vertices_hyperedges}}]
    We count vertices and hyperedges of $G_\mathrm{res}^{\left(N,D\right)}$ in the following.

    As for vertices, consider first a case where $N$ is a power of $2$ so that $\log_2 N$ is an integer, while general cases of any $N$ are discussed later.
    The hypergraph $G_\mathrm{res}^{\left(N,D\right)}$ consists of $D$ sub-hypergraphs $G_\textup{depth}^{\left(N,d\right)}$, as shown in~\eqref{eq:g_res}.
    Each $G_\textup{depth}^{\left(N,d\right)}$ consists of one subgraph $G_\textup{S}^{\left(N,d\right)}$, $\left\lfloor\frac{N}{3}\right\rfloor$ sub-hypergraphs $G_{\mathbbm{1},H,CCZ}^{\left(n,d\right)}$, and $\left(N-3\left\lfloor \frac{N}{3}\right\rfloor\right)$ subgraphs $G_{\mathbbm{1},H,CCZ}^{\left(n,d\right)}$, as shown in~\eqref{eq:one_depth}.
    The graph $G_\textup{S}^{\left(N,d\right)}$ corresponding to a sorting network consists of subgraphs $G_\textup{C}^{\left(n_1,n_2,d\right)}$ corresponding to its comparators, as shown in~\eqref{eq:G_S_N_d},
    and when an odd-even merging network is used as the sorting network,
    there are $\left(\frac{1}{4}N\left(\log_2^2 N - \log_2 N + 4 \right)-1\right)$ subgraphs $G_\textup{C}^{\left(n_1,n_2,d\right)}$ as shown in~\eqref{eq:comparators_sorting_network}.
    The number of vertices of $G_\textup{C}^{\left(n_1,n_2,d\right)}$ other than $v_{n_1,d}$ and $v_{n_2,d}$ is 6,
    and the numbers of vertices of $G_{\mathbbm{1},H,CCZ}^{\left(n,d\right)}$ and $G_{\mathbbm{1},H}^{\left(n,d\right)}$ other than $u_{n-2,d}$, $u_{n-1,d}$, and $u_{n,d}$ are 18 and 3, respectively, where the counted vertices are illustrated as solid circles in Fig.~\ref{fig:parts}, and those illustrated as dotted circles in Fig.~\ref{fig:parts} are not counted for avoiding double counting.
    Counting $N$ vertices $v_{1,1},\ldots,v_{N,1}$ of $G_\mathrm{res}^{\left(N,D\right)}$ in addition to vertices of these sub-hypergraphs,
    for any $N$ that is a power of $2$,
    we obtain the number of vertices of $G_\mathrm{res}^{\left(N,D\right)}$
    \begin{equation}
        \begin{aligned}
            &N+D\left[\left(\frac{1}{4}N\left(\log_2^2 N- \log_2 N + 4 \right)-1\right)\cdot 6\right.\\
            &\quad\left.+ \left\lfloor\frac{N}{3}\right\rfloor\cdot 18 + \left(N-3\left\lfloor \frac{N}{3}\right\rfloor\right)\cdot 3\right].
        \end{aligned}
    \end{equation}
    In the general cases, let $\left|V_{G_\mathrm{res}^{\left(N,D\right)}}\right|$ denote the number of vertices of $G_\mathrm{res}^{\left(N,D\right)}$ for any $N$ and $D$,
    and then $\left|V_{G_\mathrm{res}^{\left(N,D\right)}}\right|$ satisfies
    \begin{equation}
      \label{eq:num_vertices_proof}
        \begin{aligned}
            &\left|V_{G_\mathrm{res}^{\left(N,D\right)}}\right|\\
            &\quad\leqq N+D\left[\left(\frac{1}{4}N\left({\left\lceil\log_2 N\right\rceil}^2- \left\lceil\log_2 N\right\rceil + 4 \right)-1\right)\cdot 6\right.\\
             &\qquad\left.+ \left\lfloor\frac{N}{3}\right\rfloor\cdot 18 + \left(N-3\left\lfloor \frac{N}{3}\right\rfloor\right)\cdot 3\right]\\
             &\quad=\frac{3}{2}{DN\left\lceil \log_2 N\right\rceil}^2+O\left(DN\right),
        \end{aligned}
    \end{equation}
    which yields~\eqref{eq:num_vertices}.

    As for hyperedges, the number of hyperedges of $G_\textup{C}^{\left(n_1,n_2,d\right)}$, $G_{\mathbbm{1},H,CCZ}^{\left(n,d\right)}$, and $G_{\mathbbm{1},H}^{\left(n,d\right)}$ are 8, 34, and 4, respectively,
    and in a similar way to the above counting of vertices,
    for any $N$ that is a power of $2$,
    we obtain the number of hyperedges of $G_\mathrm{res}^{\left(N,D\right)}$
    \begin{equation}
        \begin{aligned}
            &D\left[\left(\frac{1}{4}N\left(\log_2^2 N- \log_2 N + 4 \right)-1\right)\cdot 8\right.\\
            &\quad\left.+ \left\lfloor\frac{N}{3}\right\rfloor\cdot 34 + \left(N-3\left\lfloor \frac{N}{3}\right\rfloor\right)\cdot 4\right].
        \end{aligned}
    \end{equation}
    In the general cases, let $\left|E_{G_\mathrm{res}^{\left(N,D\right)}}\right|$ denote the number of hyperedges of $G_\mathrm{res}^{\left(N,D\right)}$ for any $N$ and $D$,
    and then $\left|E_{G_\mathrm{res}^{\left(N,D\right)}}\right|$ satisfies
    \begin{equation}
      \label{eq:num_edges_proof}
        \begin{aligned}
            &\left|E_{G_\mathrm{res}^{\left(N,D\right)}}\right|\\
            &\quad\leqq D\left[\left(\frac{1}{4}N\left({\left\lceil\log_2 N\right\rceil}^2- {\left\lceil\log_2 N\right\rceil} + 4 \right)-1\right)\cdot 8\right.\\
                           &\qquad\left.+ \left\lfloor\frac{N}{3}\right\rfloor\cdot 34 + \left(N-3\left\lfloor \frac{N}{3}\right\rfloor\right)\cdot 4\right]\\
                           &\quad=2{DN\left\lceil \log_2 N\right\rceil}^2 + O\left(DN\right),
        \end{aligned}
    \end{equation}
    which yields~\eqref{eq:num_hyperedges}.

    The number of hyperedges among three vertices of $G_\mathrm{res}^{\left(N,D\right)}$ is given by~\eqref{eq:num_ccz} because each $G_{\mathbbm{1},H,CCZ}^{\left(n,d\right)}$ has one hyperedge among three vertices, and $G_\textup{C}^{\left(n_1,n_2,d\right)}$ and $G_{\mathbbm{1},H,CCZ}^{\left(n,d\right)}$ have no hyperedges among three vertices.
\end{proof}

\begin{proof}[\textbf{Proof of Theorem~\ref{thm:preparation_complexity}}]
    The hypergraph state $\Ket{G_\mathrm{res}^{\left(N,D\right)}}$ can be prepared from $\Ket{0}\otimes\Ket{0}\otimes\cdots$ by applying $H$ gates to all the qubits, which yields $\Ket{+}\otimes\Ket{+}\otimes\cdots$, followed by applying $CZ$ and $CCZ$ gates corresponding to all the hyperedges of $G_\mathrm{res}^{\left(N,D\right)}$.
    The required number of $H$ gates equals the number of vertices of $G_\mathrm{res}^{\left(N,D\right)}$ given by~\eqref{eq:num_vertices_proof} in the proof of Proposition~\ref{prp:num_vertices_hyperedges}.
    In addition, the required number of $CZ$ and $CCZ$ gates equals the number of hyperedges of $G_\mathrm{res}^{\left(N,D\right)}$ given by~\eqref{eq:num_edges_proof} in the proof of Proposition~\ref{prp:num_vertices_hyperedges}.
    In total, the required number of $H$, $CZ$, and $CCZ$ gates are
    \begin{equation}
      \frac{7}{2}{DN\left\lceil \log_2 N\right\rceil}^2 + O\left(DN\right),
    \end{equation}
    which yields the conclusion.
\end{proof}

\begin{remark}[Saving of required number of qubits in MBQC]
  The hypergraph state $\Ket{G_\mathrm{res}^{\left(N,D\right)}}$ is designed so that the number of qubits for $\Ket{G_\mathrm{res}^{\left(N,D\right)}}$ should be small.
  As will be shown in Sec.~\ref{sec:quantum_complexity}, we can use $\Ket{G_\mathrm{res}^{\left(N,D\right)}}$ as a resource for MBQC simulating an $N$-qubit $D$-depth quantum circuit composed of a gate set $\left\{H,CCZ\right\}$.
  The saving of the number of qubits for $\Ket{G_\mathrm{res}^{\left(N,D\right)}}$ stems from two factors.
  First, the use of the sorting network in $\Ket{G_\mathrm{res}^{\left(N,D\right)}}$ reduces the number of qubits; in particular, in terms of $N$, the existing resource states for MBQC with polynomial overhead require $\Omega(N^{1+\alpha})$ qubits with $\alpha>0$ to simulate the $N$-qubit circuit due to the geometrical constraint as discussed in Sec.~\ref{sec:introduction}, but the required number of qubits for $\Ket{G_\mathrm{res}^{\left(N,D\right)}}$ is reduced to $O(N\log^2 N)$ as shown in Proposition~\ref{prp:num_vertices_hyperedges}.
  Second, we reduce the number of qubits required for implementing each $H$ and $CCZ$ gate in MBQC\@; for example, a hypergraph state $\Ket{G_3^1}$ in Ref.~\cite{Y1} uses $66$ qubits to implement an arbitrary $3$-qubit $1$-depth circuit composed of $\left\{H,CCZ\right\}$ by MBQC, but $\Ket{G_{\mathbbm{1},H,CCZ}^{\left(n,d\right)}}$ in Fig.~\ref{fig:parts} uses as few as $21$ qubits for this implementation.
  A drawback of $\Ket{G_\mathrm{res}^{\left(N,D\right)}}$ may arise from the fact that due to the sorting network, the preparation of $\Ket{G_\mathrm{res}^{\left(N,D\right)}}$ requires interactions that are not necessarily local with respect to some $2$- or $3$-dimensional geometry.
  However, the nonlocal interactions are vital to overcoming the geometrical constraint causing the polynomial overhead,
  and when we use photonic GKP qubits to prepare resource states for MBQC,
  we can move photonic systems in space feasibly to realize the nonlocal interactions.
  In the case of photonic architectures, rather than the nonlocal interactions, the cost of realizing non-Gaussian operations is crucial.
  Since a GKP state is a non-Gaussian state that is technologically costly to generate,
  the saving of the number of qubits in MBQC using $\Ket{G_\mathrm{res}^{\left(N,D\right)}}$ is advantageous to MBQC using the GKP qubits in reducing the cost of non-Gaussianity.
\end{remark}

\begin{remark}[The required number of $CCZ$ gates for resource state preparation]
    As discussed in Sec.~\ref{sec:mbqc},
    when we aim at preparing resource states for MBQC using photonic GKP qubits,
    Clifford gates implemented by Gaussian operations are preferable,
    and to simplify the resource state preparation, reducing the required number of non-Clifford $CCZ$ gates is advantageous.
    As will be shown in Sec.~\ref{sec:quantum_complexity}, the hypergraph state $\Ket{G_\mathrm{res}^{\left(N,D\right)}}$ is used as a resource for MBQC simulating an $N$-qubit $D$-depth quantum circuit composed of a gate set $\left\{H,CCZ\right\}$, and such a circuit may include $\left\lfloor\frac{N}{3}\right\rfloor$ $CCZ$ gates per depth.
    Theorem~\ref{thm:preparation_complexity} shows that the number of $CCZ$ gates that are used for preparing $\Ket{G_\mathrm{res}^{\left(N,D\right)}}$ is $D\left\lfloor\frac{N}{3}\right\rfloor$ in total, \textit{i.e.}, $\left\lfloor\frac{N}{3}\right\rfloor$ per depth of the circuit to be simulated, and all the other required operations than these $CCZ$ gates are Clifford gates that are also used for preparing graph states, that is, $H$ and $CZ$ gates.
    Note that the existing resource hypergraph states in Refs.~\cite{M2,M6,G4,Y1} do not consider this optimization in terms of the number of $CCZ$ gates;
    \textit{e.g.}, to implement an $N$-qubit $D$-depth circuit composed of $\left\{H,CCZ\right\}$, the resource hypergraph state in Ref.~\cite{Y1} uses $O\left(N^3\right)$ $CCZ$ gates per depth, rather than $O\left(N\right)$ in our case.
    Our construction of $\Ket{G_\mathrm{res}^{\left(N,D\right)}}$ is advantageous since no redundant $CCZ$ gate is used; that is, the $\left\lfloor\frac{N}{3}\right\rfloor$ $CCZ$ gates used in the preparation of $\Ket{G_\mathrm{res}^{\left(N,D\right)}}$ per depth are necessary for implementing at most $\left\lfloor\frac{N}{3}\right\rfloor$ $CCZ$ gates at each depth of the $N$-qubit quantum circuit, as long as we use a periodic resource state whose period corresponds to each depth of the circuit.
    Note that a general proof on the optimality of the number of $CCZ$ gates among all possible resource states without assuming the periodicity is unknown.
\end{remark}

\begin{remark}[\label{rem:depth}The depth for resource state preparation]
  While Theorem~\ref{thm:preparation_complexity} on the preparation complexity of the hypergraph state $\Ket{G_\mathrm{res}^{\left(N,D\right)}}$ shows the size of the quantum circuit for preparing $\Ket{G_\mathrm{res}^{\left(N,D\right)}}$,
  the depth of this quantum circuit can be as small as a constant in terms of $D$ and $N$, as shown in Appendix~\ref{sec:depth}.
  This constant-depth resource preparation where the gates are performed in parallel may be beneficial to reducing errors caused by finite decoherence time of quantum systems and loss in optical fibers in photonic MBQC, while the preparation complexity for sequential resource preparation also matters to photonic MBQC illustrated in Fig.~\ref{fig:introduction} where resource states are generated in order.
\end{remark}

\subsection{\label{sec:quantum_complexity}Protocol for universal MBQC and its quantum and classical complexities}

We prove the universality of MBQC using our resource hypergraph state $\Ket{G_\mathrm{res}^{\left(N,D\right)}}$, by constructing an MBQC protocol.
In particular, we show the following theorem on the complexity of our MBQC protocol.
The generality of our construction of universal resource states and the corresponding MBQC protocols are discussed in Remark~\ref{rem:generality} after giving the proof.
Remarkably, this theorem achieves a poly-logarithmic overhead in simulating quantum circuits by MBQC \textit{without assuming} geometrical locality of gates in the circuits to be simulated.
Moreover, this MBQC protocol can be performed only with $Z$ and $X$ measurements.

\begin{theorem}[\label{thm:universality}Universal resources for polylog-overhead MBQC with $Z$ and $X$ measurements]
    There exists an MBQC protocol that uses $Z$ and $X$ measurements for a family of hypergraph states ${\left\{\Ket{G_\mathrm{res}^{\left(N,D\right)}}\right\}}_{N,D}$ defined as~\eqref{eq:resources}, to simulate an arbitrary $N$-qubit $D$-depth quantum circuit composed of a computationally universal gate set $\left\{H,CCZ\right\}$ within the complexity
    \begin{equation}
        \label{eq:complexity}
        O\left(DN\log^2 N\right),
    \end{equation}
  where the complexity includes the preparation, quantum, and classical complexities as discussed in Sec.~\ref{sec:mbqc}.
\end{theorem}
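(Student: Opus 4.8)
The plan is to establish the three ingredients of the complexity separately---preparation, quantum, and classical---and then add them, exploiting the hierarchical construction of $G_\mathrm{res}^{(N,D)}$ from $G_\textup{depth}^{(N,d)}$, which in turn is built from $G_\textup{S}^{(N,d)}$, $G_{\mathbbm{1},H,CCZ}^{(n,d)}$, and $G_{\mathbbm{1},H}^{(n,d)}$. The preparation complexity is already settled by Theorem~\ref{thm:preparation_complexity}, which gives $\tfrac{7}{2}DN\lceil\log_2 N\rceil^2+O(DN)=O(DN\log^2 N)$; no further work is needed there. For the rest, I would first recall from Fig.~\ref{fig:circuit_sorted} that any $N$-qubit $1$-depth circuit over $\{H,CCZ\}$ equals, up to a known permutation of its output registers, a permutation of the $N$ qubits followed by a $1$-depth circuit whose $CCZ$ gates act only on the consecutive triples $(3m{-}2,3m{-}1,3m)$. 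That permutation is routed through the fixed odd-even merging network, whose depth is~\eqref{eq:depth_sorting_network} and whose comparator count is~\eqref{eq:comparators_sorting_network}, i.e.\ $O(N\log^2 N)$.

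Next I would prove one gadget lemma for each building block in Fig.~\ref{fig:parts}, asserting that measuring its internal (unlabeled) vertices in appropriate $Z$ or $X$ bases realizes, on the logical qubits flowing from the input vertices $v_{\cdot,d}$ to the output vertices $u_{\cdot,d}$, the intended operation up to a Pauli byproduct: $G_\textup{C}^{(n_1,n_2,d)}$ realizes either $\textsc{SWAP}$ or $\mathbbm{1}$ on its two wires, $G_{\mathbbm{1},H}^{(n,d)}$ realizes $H$ or $\mathbbm{1}$, and $G_{\mathbbm{1},H,CCZ}^{(n,d)}$ realizes an arbitrary $3$-qubit $1$-depth circuit over $\{H,CCZ\}$ on its three wires. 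Each such lemma follows from the standard stabilizer / gate-teleportation analysis of (hyper)graph-state MBQC as in Refs.~\cite{M2,M6,G4,Y1}: one propagates the stabilizers of $\Ket{G}$ through the chosen measurements and identifies the induced isometry, with the presence or absence of the $CCZ$ hyperedge controlling whether the $CCZ$ is applied. Gluing these along the union~\eqref{eq:union}, the graph $G_\textup{S}^{(N,d)}$ built from the comparators realizes an arbitrary permutation of the $N$ wires (the underlying odd-even merging network being a sorting network), so $G_\textup{depth}^{(N,d)}$ of~\eqref{eq:one_depth} realizes an arbitrary $N$-qubit $1$-depth circuit over $\{H,CCZ\}$ up to byproducts and the known output permutation; iterating over $d=1,\dots,D$ via~\eqref{eq:g_res_inductive} shows $\Ket{G_\mathrm{res}^{(N,D)}}$ simulates an arbitrary $N$-qubit $D$-depth circuit over $\{H,CCZ\}$, with the final $X$ measurements on the last layer's output vertices producing the sampled outcomes exactly as in Fig.~\ref{fig:circuit_n_qubit_d_depth}. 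Since $\{H,CCZ\}$ is computationally universal~\cite{S1}, this is universality in the required sense, and the number of measurements equals the number of vertices of $G_\mathrm{res}^{(N,D)}$, which is $O(DN\log^2 N)$ by Proposition~\ref{prp:num_vertices_hyperedges}; this bounds the quantum complexity.

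For the classical complexity I would account for preprocessing and feed-forward. Preprocessing converts the input circuit into a measurement pattern: for each depth $d$, compute the required permutation and its routing through the fixed sorting network (run the network on the corresponding input sequence, recording which comparators swap, at cost $O(N\log^2 N)$ per depth), then read off the $Z/X$ bases of the building blocks from the local $\{H,CCZ\}$ structure at that depth; this totals $O(DN\log^2 N)$. Feed-forward maintains the byproduct frame: Clifford gadgets update it in $O(1)$ each, and the one delicate point is that conjugating an $X$ byproduct through a $CCZ$ gadget produces a $CZ$ correction on the other two wires (since $CCZ\,X_1\,CCZ=X_1\,CZ_{23}$), so the frame is not purely Pauli but lies in a controlled subgroup of the Clifford group generated by Paulis and $CZ$s. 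I expect this bookkeeping to be the main obstacle: one must verify, following Refs.~\cite{M2,M6,G4,Y1}, that these $CZ$ corrections are absorbed either by adaptively selecting among the pre-built $Z/X$ sub-patterns of the downstream $CCZ$ gadgets or are deferred to the final readout, so that the frame stays representable and updatable within $O(\polylog)$ resources per step and contributes only $O(DN\log^2 N)$ to the classical complexity. Summing the preparation, quantum, and classical contributions yields the claimed complexity~\eqref{eq:complexity}, and dividing by the $O(DN)$ elementary gates of the simulated circuit gives the poly-logarithmic overhead.
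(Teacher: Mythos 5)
Your proposal follows essentially the same route as the paper: preparation complexity from Theorem~\ref{thm:preparation_complexity}, gadget-by-gadget measurement patterns for $G_\textup{C}^{\left(n_1,n_2,d\right)}$, $G_{\mathbbm{1},H}^{\left(n,d\right)}$, and $G_{\mathbbm{1},H,CCZ}^{\left(n,d\right)}$ glued along the union and iterated over depths via~\eqref{eq:g_res_inductive}, quantum complexity bounded by the vertex count of Proposition~\ref{prp:num_vertices_hyperedges}, and classical complexity split into sorting-network preprocessing plus byproduct feed-forward.

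The one point you flag as "the main obstacle" --- that conjugating an $X$ byproduct through $CCZ$ yields a $CZ$ correction, so the frame leaves the Pauli group --- is exactly the issue the resource state is engineered to absorb, and the paper closes it by construction rather than by deferral: the $G_{\mathbbm{1},H,CCZ}^{\left(n,d\right)}$ gadget contains dedicated pairs of vertices (the dashed vertices in Fig.~\ref{fig:measurement}) within the \emph{same} gadget that produced the byproduct, not downstream ones, and the identity~\eqref{eq:xx} (that $X$-measuring the two inner qubits of a four-vertex path induces a $CZ$ between the outer two, while $Z$-measuring them removes it) lets the protocol adaptively apply or omit each correcting $CZ$. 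This restores a purely Pauli frame before the next gadget, so no controlled-Clifford frame ever needs to be propagated; the remaining Pauli bookkeeping costs $O(1)$ per measurement because the maximal degree of $G_\mathrm{res}^{\left(N,D\right)}$ is five. With that substitution your argument is complete and matches the paper's.
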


To show the Theorem~\ref{thm:universality},
we recall state transformations of graph and hypergraph states using $Z$ and $X$ measurements, as well as commutation relations used for correcting byproduct operators induced by these $Z$ and $X$ measurements.
In the following,
given any hypergraph (including a graph), vertices are illustrated as circles, and an edge between two vertices and a hyperedge among three vertices are illustrated as a black line and a red rectangle, respectively, in the same way as Fig.~\ref{fig:parts}.
For a hypergraph state represented by the hypergraph, a qubit that is supposed to be measured in the $Z$ basis and that measured in the $X$ basis are represented as a vertex illustrated by a black and gray circle, respectively.
For clarity, a qubit where we input a state, which is not necessarily $\Ket{+}$, is illustrated by a circle surrounded by a square.
For $Z$ measurements in $\left\{\Ket{0},\Ket{1}\right\}$, the outcomes corresponding to $\Ket{0}$ and $\Ket{1}$ are denoted by $k=0$ and $k=1$, respectively.
For $X$ measurements in $\left\{\Ket{+},\Ket{-}\right\}$, the outcomes corresponding to $\Ket{+}$ and $\Ket{-}$ are denoted by $k=0$ and $k=1$, respectively.

Given a hypergraph state, if a $Z$ measurement is performed on a qubit,
the measured qubit is disentangled and removed from the rest of this hypergraph state,
while depending on the measurement outcome, byproduct operators may be applied to the neighboring qubits of this measured qubit on the corresponding hypergraph.
For example, the following circuit identity holds:
\begin{equation}
    \label{eq:z_measurement}
    \includegraphics[width=3.4in]{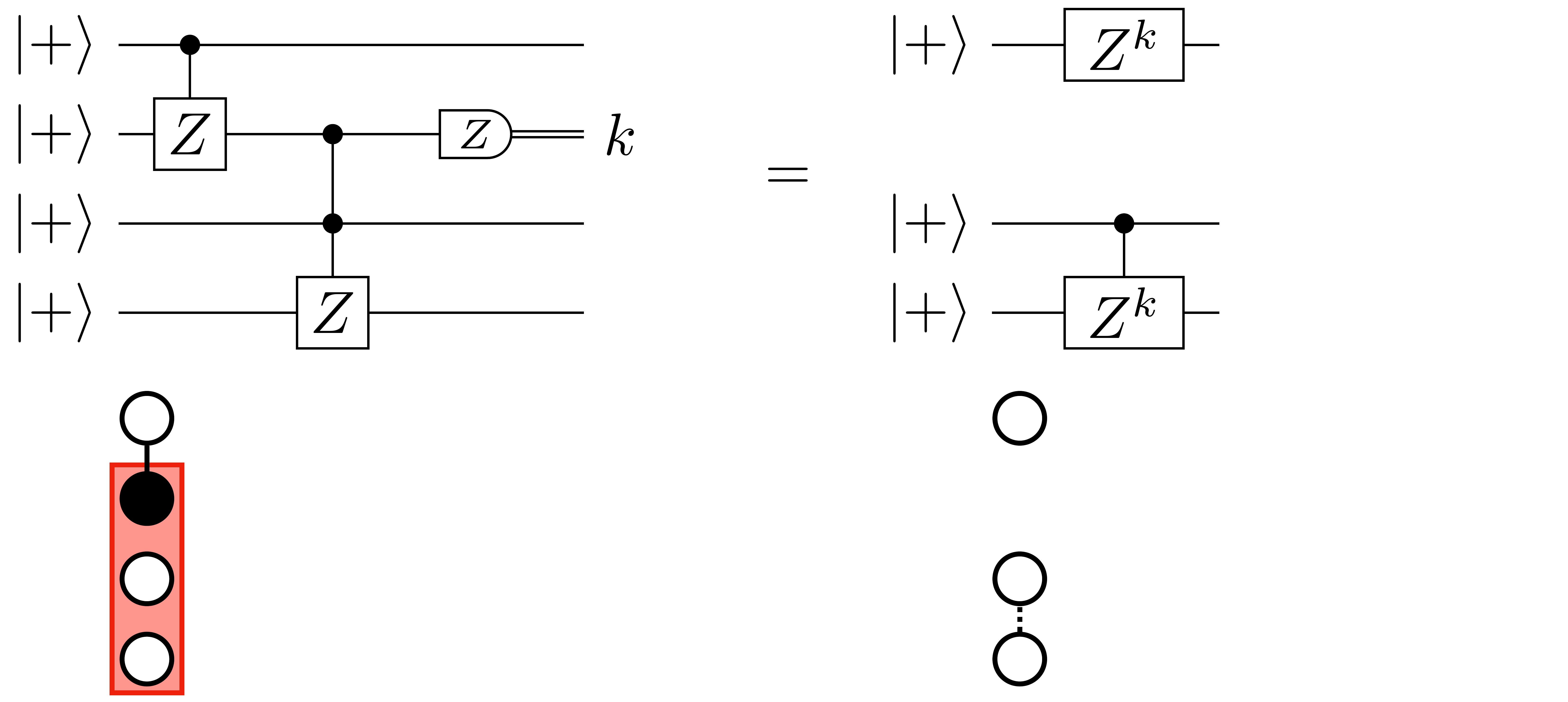}
\end{equation}
where $k$ is the outcome of the $Z$ measurement, the hypergraphs below the circuits correspond to hypergraph states obtained from the circuits, and the dotted edge represents byproduct $CZ^k$ depending on $k$ and acting on the two neighboring qubits of the measured qubit with respect to the initial hyperedge representing $CCZ$, while local Pauli byproduct $Z^k$ acting on another neighboring qubit with respect to the initial edge representing $CZ$ is not explicitly illustrated in the hypergraph.

Apart from this $Z$ measurement, an $X$ measurement performed on a qubit in a graph state can be understood as quantum teleportation~\cite{B13}.
In particular, if a qubit in a graph state is measured in the $X$ basis, the state of the measured qubit is transferred to a qubit represented as a neighbor on the graph, with the Hadamard gate $H$ and a byproduct Pauli operator applied to this state.
For example, the following circuit identity holds:
\begin{equation}
    \label{eq:x_measurement}
    \includegraphics[width=3.4in]{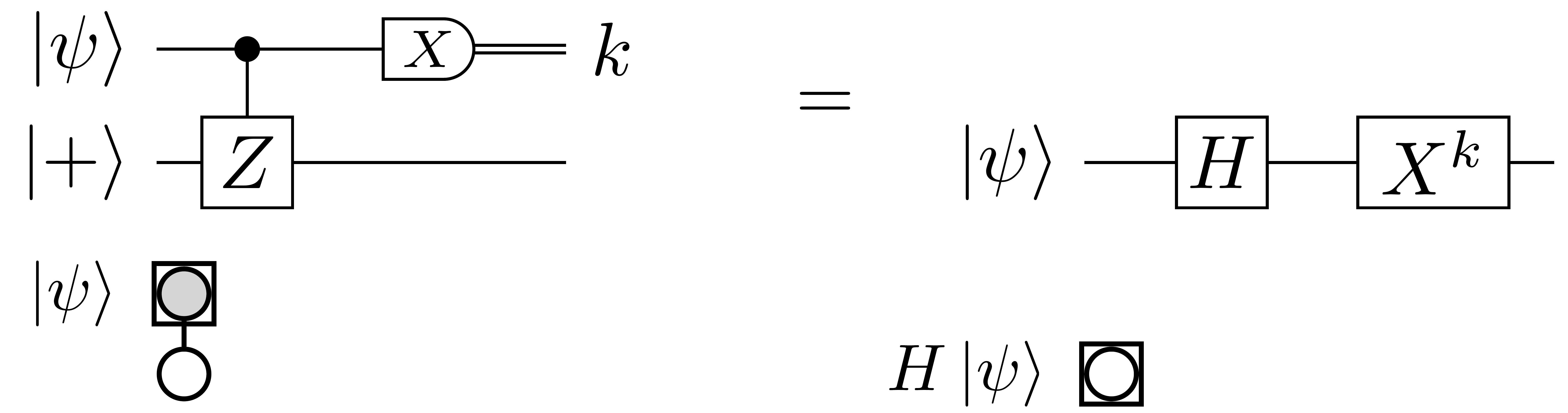}
\end{equation}
where $k$ is the outcome of the $X$ measurement, and local Pauli byproduct $X^k$ acting on the neighboring qubit of the measured qubit is not explicitly illustrated.

\begin{figure*}[t]
    \centering
    \includegraphics[width=7.0in]{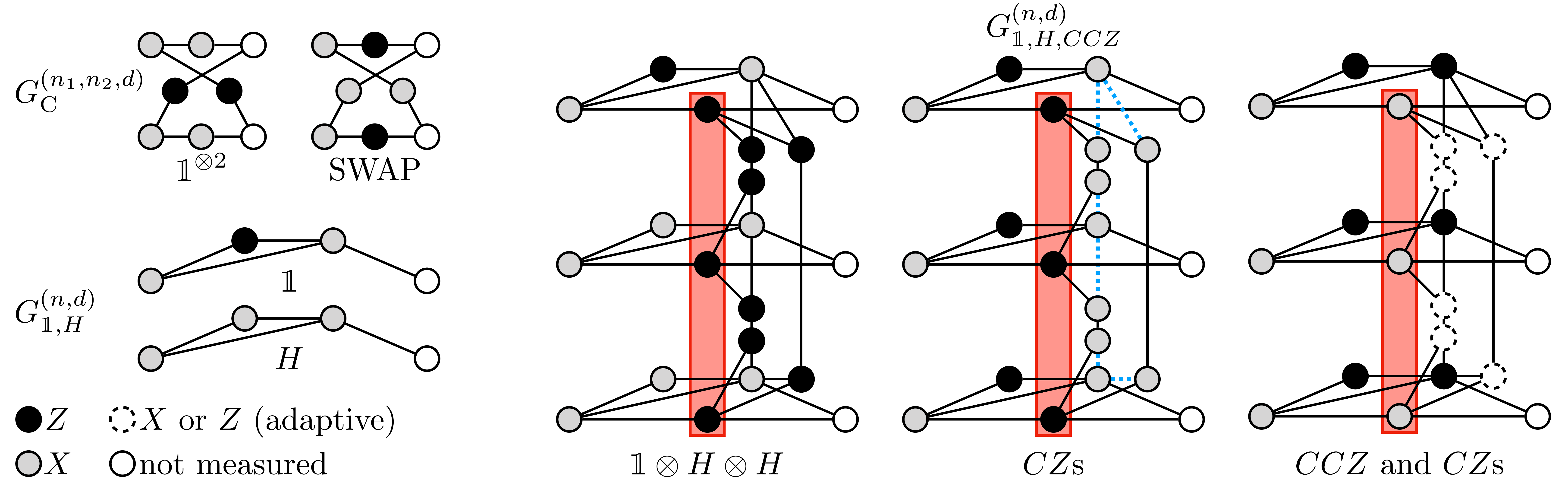}
    \caption{\label{fig:measurement}Measurement patterns of the resource hypergraph state $\Ket{G_\mathrm{res}^{\left(N,D\right)}}$ for implementing quantum gates used in our MBQC protocol in Theorem~\ref{thm:universality} to achieve universal quantum computation, where edges and hyperedges are illustrated in the same way as Fig.~\ref{fig:parts}. The patterns for implementing the identity and $\textsc{SWAP}$ gates using resources corresponding to the hypergraph $G_\textup{C}^{\left(n_1,n_2,d\right)}$ are shown on the top left, those for the identity and $H$ gates using $G_{\mathbbm{1},H}^{\left(n,d\right)}$ are on the bottom left, and those for the identity, $H$, $CZ$, and $CCZ$ gates using $G_{\mathbbm{1},H,CCZ}^{\left(n,d\right)}$ are on the right. For each pattern, the input qubits correspond to the leftmost vertices, and the output qubits that are not measured in the pattern are illustrated as the rightmost white vertices. Black and gray vertices represent qubits that are measured in the $Z$ and $X$ bases, respectively. For implementing $CCZ$ gates, the $CZ$ byproducts can be corrected by adaptively choosing $Z$ or $X$ measurements of qubits represented as dashed white vertices, so that appropriate $CZ$ gates for the correction can be adaptively implemented. As discussed in Appendix~\ref{sec:parallelizability}, to implement multiple $CCZ$ gates in parallel, we can perform these corrections of $CZ$ byproducts collectively later using the blue dotted edges for implementing $CZ$ gates, while these blue dotted edges are optional for the universality.}
\end{figure*}

To achieve deterministic MBQC using our resource hypergraph state $\Ket{G_\mathrm{res}^{\left(N,D\right)}}$,
byproduct operators induced by these $Z$ and $X$ measurements must be corrected appropriately.
For the correction, outcomes of these measurements are memorized temporarily and used in the feed-forward classical processes for calculating commutation relations
\begin{equation}
    \label{eq:cz_z}
    \includegraphics[width=3.4in]{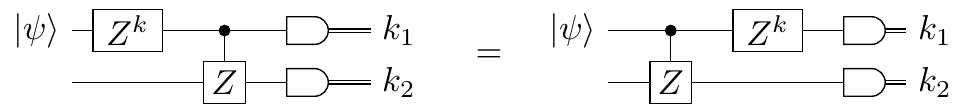}
\end{equation}
\begin{equation}
    \label{eq:cz_x}
    \includegraphics[width=3.4in]{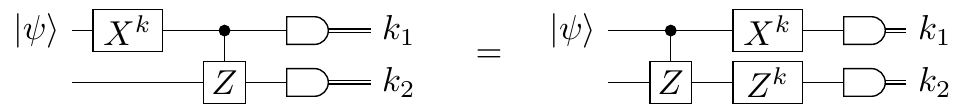}
\end{equation}
\begin{equation}
    \label{eq:ccz_z}
    \includegraphics[width=3.4in]{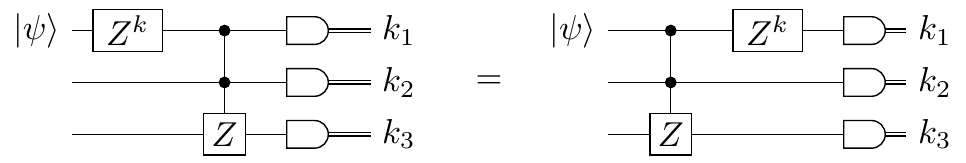}
\end{equation}
\begin{equation}
    \label{eq:ccz_x}
    \includegraphics[width=3.4in]{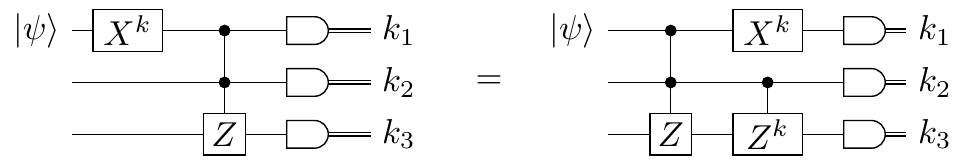}
\end{equation}
where byproduct operators conditioned on a previous measurement outcome $k$ and applied directly to the input state $\Ket{\psi}$ on the left-hand sides are translated into those applied just before any measurements on the right-hand sides by commuting them with the $CZ$ and $CCZ$ gates corresponding to hyperedges, which have been applied to the qubits in the initial preparation of the resource hypergraph state $\Ket{G_\mathrm{res}^{\left(N,D\right)}}$.
As long as $Z$ and $X$ measurements are concerned,
the $Z^k$ and $X^k$ byproduct operators applied just before the $Z$ and $X$ measurements on these right-hand sides only flip the outcomes of the $Z$ and $X$ measurements due to
\begin{alignat}{7}
    &Z\Ket{0}&&=&&\Ket{0},\,&&Z\Ket{1}&&=&-&\Ket{1},\\
    &X\Ket{0}&&=&&\Ket{1},\,&&X\Ket{1}&&=& &\Ket{0},\\
    &Z\Ket{+}&&=&&\Ket{-},\,&&Z\Ket{-}&&=& &\Ket{+},\\
    &X\Ket{+}&&=&&\Ket{+},\,&&X\Ket{-}&&=&-&\Ket{-},
\end{alignat}
and hence according to these equations, these local Pauli byproducts $Z^k$ and $X^k$ can be corrected by appropriately relabeling measurement outcomes that are labeled by $\left\{0,1\right\}$, without adaptively changing the measurement bases.
As for the $CZ^k$ byproduct operator applied before the measurements on the right-hand side of~\eqref{eq:ccz_x}, our MBQC protocol adaptively implement $CZ$ gates conditioned on $k$ to cancel this byproduct $CZ^k$.
This MBQC protocol is described in the following proof of Theorem~\ref{thm:universality}.

\begin{proof}[\textbf{Proof of Theorem~\ref{thm:universality}}]
    Using the resource hypergraph state $\Ket{G_\mathrm{res}^{\left(N,D\right)}}$,
    we construct an MBQC protocol whose complexity of simulating an arbitrary $N$-qubit $D$-depth quantum circuit composed of $\left\{H,CCZ\right\}$ satisfies the bound~\eqref{eq:complexity}.
    Since Theorem~\ref{thm:preparation_complexity} shows that the preparation complexity of $\Ket{G_\mathrm{res}^{\left(N,D\right)}}$ is $O\left(DN\log^2 N\right)$, it suffices to show that the quantum and classical complexities are also $O\left(DN\log^2 N\right)$, which we will show after providing the description of the MBQC protocol.

    \textbf{Sketch of the MBQC protocol}:
    Given an arbitrary $N$-qubit $D$-depth quantum circuit composed of $\left\{H,CCZ\right\}$ illustrated on the left-hand side of Fig.~\ref{fig:circuit_n_qubit_d_depth}, our MBQC protocol implements an equivalent quantum circuit illustrated on the right-hand side of Fig.~\ref{fig:circuit_n_qubit_d_depth}.
    Let $\Ket{\psi_d}$ for each $d\in\left\{1,\ldots,D\right\}$ denote the $N$-qubit state generated from $\Ket{0}^{\otimes N}$ by the first $d$-depth part of the given circuit (see the left-hand side of Fig.~\ref{fig:circuit_n_qubit_d_depth}).
    In the equivalent circuit,
    we obtain an $N$-qubit state $H^{\otimes N}\Ket{\psi_d^\prime}$ from $\Ket{+}^{\otimes N}$ after the corresponding part (see the right-hand side of Fig.~\ref{fig:circuit_n_qubit_d_depth}),
    where $\Ket{\psi_d^\prime}$ is given by permuting the $N$ qubits of $\Ket{\psi_d}$.
    In our MBQC protocol, each quantum gate applied to $\Ket{\psi_d}$ in the given circuit is implemented on $H^{\otimes N}\Ket{\psi_d^\prime}$ by first performing a set of the Hadamard gates $H$ to obtain $\Ket{\psi_d^\prime}$, and then applying the gate to $\Ket{\psi_d^\prime}$, followed by performing another set of Hadamard gates $H$; \textit{e.g.}, a $CCZ$ gate on $\Ket{\psi_d}$ is simulated by implementing $(H^{\otimes 3})CCZ (H^{\otimes 3})$ on $H^{\otimes N}\Ket{\psi_d^\prime}$, as shown in Fig.~\ref{fig:circuit_n_qubit_d_depth}.
    While classical outputs in the circuit model are obtained by $Z$ measurements for the $N$-qubit state $\Ket{\psi_D}$ output from the $D$-depth circuit,
    our MBQC protocol can prepare the state $H^{\otimes N}\Ket{\psi_D^\prime}$ on the $N$ qubits represented as the vertices $v_{1,D+1},\ldots,v_{N,D+1}$ of $G_\mathrm{res}^{\left(N,D\right)}$ illustrated in Fig.~\ref{fig:resource}, by appropriately performing $Z$ and $X$ measurements of all the rest of the qubits that are initially prepared in $\Ket{G_\mathrm{res}^{\left(N,D\right)}}$.
    By performing $X$ measurements for $H^{\otimes N}\Ket{\psi_D^\prime}$, we can obtain the same classical outputs as those of $Z$ measurements for $\Ket{\psi_D}$ in the corresponding circuit model up to permutation.

    Since $G_\mathrm{res}^{\left(N,D\right)}$ is inductively defined using the same hypergraphs $G_\textup{depth}^{\left(N,1\right)},\ldots,G_\textup{depth}^{\left(N,D\right)}$ as shown in~\eqref{eq:g_res_inductive},
    we prove by induction that for any $d\in\left\{1,\ldots,D\right\}$, MBQC using $\Ket{G_\mathrm{res}^{\left(N,d\right)}}$ can prepare $H^{\otimes N}\Ket{\psi_d^\prime}$.
    For this proof, it suffices to show that we can use each $G_\textup{depth}^{\left(N,d\right)}$ given in~\eqref{eq:one_depth} and composed of $G_\textup{S}^{\left(N,d\right)}$, $G_{\mathbbm{1},H,CCZ}^{\left(n,d\right)}$, and $G_{\mathbbm{1},H}^{\left(n,d\right)}$ to implement an arbitrary $N$-qubit $1$-depth quantum circuit up to permutation.
    As illustrated on the right-hand side of Fig.~\ref{fig:circuit_n_qubit_d_depth}, we show that our MBQC protocol  can achieve this implementation of an arbitrary $N$-qubit $1$-depth quantum circuit by performing an arbitrary permutation of $N$ qubits using $G_\textup{S}^{\left(N,d\right)}$, followed by performing identity, $H$, and geometrically local $CCZ$ gates using $G_{\mathbbm{1},H,CCZ}^{\left(n,d\right)}$ and $G_{\mathbbm{1},H}^{\left(n,d\right)}$.
    In the following, we provide patterns of the $Z$ and $X$ measurements for these implementations, followed by analysis of quantum and classical complexities of this MBQC protocol,
    where we refer to the vertices of hypergraphs $G_\textup{C}^{\left(n_1,n_2,d\right)}$, $G_{\mathbbm{1},H}^{\left(n,d\right)}$, and $G_{\mathbbm{1},H,CCZ}^{\left(n,d\right)}$ using the labels shown in Fig.~\ref{fig:parts}.

    \textbf{Implementation of permutations}:
    To implement the permutation using $G_\textup{S}^{\left(N,d\right)}$, by construction of $G_\textup{S}^{\left(N,d\right)}$ using a sorting network as shown in Fig.~\ref{fig:sort_graph},
    it suffices to show that each $G_\textup{C}^{\left(n_1,n_2,d\right)}$ corresponding to each comparator of the sorting network can be used for implementing identity and $\textsc{SWAP}$ gates from the two input qubits represented as vertices $v_{n_1,d}$ and $v_{n_2,d}$, to the two output qubits represented as $u_{n_1,d}$ and $u_{n_2,d}$.
    The measurement pattern of $G_\textup{C}^{\left(n_1,n_2,d\right)}$ for implementing the identity gate $\mathbbm{1}^{\otimes 2}$ for the two qubits is labeled ``$\mathbbm{1}^{\otimes 2}$'' on the top left of Fig.~\ref{fig:measurement}.
    In this case, $Z$ measurements acting as shown in~\eqref{eq:z_measurement} are used for deleting vertices so that the resulting graph consists of a unique path connecting $v_{n_1,d}$ and $u_{n_1,d}$, and that connecting $v_{n_2,d}$ and $u_{n_2,d}$.
    For each path, $X$ measurements acting as shown in~\eqref{eq:x_measurement} are used for teleporting the input state with the $H$ gate applied to the input state.
    Since $H^2=\mathbbm{1}$, repeating the $X$ measurements twice on each path transfers any input state of a qubit represented as $v_{n_1,d}$ (and $v_{n_2,d}$) to the same output state of that represented as $u_{n_1,d}$ (and $u_{n_2,d}$, respectively).
    The measurement pattern of $G_\textup{C}^{\left(n_1,n_2,d\right)}$ for implementing the $\textsc{SWAP}$ gate, which is labeled ``$\textsc{SWAP}$'' on the top left of Fig.~\ref{fig:measurement}, works in a similar way, using two paths connecting $v_{n_1,d}$ and $u_{n_2,d}$, and connecting $v_{n_2,d}$ and $u_{n_1,d}$.
    Appropriately choosing these measurement patterns for implementing identity and $\textsc{SWAP}$ gates for each subgraph $G_\textup{C}^{\left(n_1,n_2,d\right)}$ corresponding to a comparator,
    we can implement any permutation of $N$ qubits using $G_\textup{S}^{\left(N,d\right)}$, where any state can be input from the qubits represented as $v_{1,d},\ldots,v_{N,d}$ of $G_\textup{S}^{\left(N,d\right)}$, and an arbitrarily permuted state of this input state is output to those represented as $u_{1,d},\ldots,u_{N,d}$.

    \textbf{Implementation of identity gates}:
    Similarly to the above implementation of the identity gate using of $G_\textup{S}^{\left(N,d\right)}$,
    the measurement pattern of $G_{\mathbbm{1},H}^{\left(n,d\right)}$ labeled ``$\mathbbm{1}$'' on the bottom left of Fig.~\ref{fig:measurement} can be used for implementing the identity gate, where a $Z$ measurement is used for making a path, and $X$ measurements are used for teleporting the input state twice.
    As for $G_{\mathbbm{1},H,CCZ}^{\left(n,d\right)}$, the pattern labeled ``$\mathbbm{1}\otimes H\otimes H$'' in Fig.~\ref{fig:measurement} works in the same way, where all the qubits illustrated as the black vertices are removed from the resource hypergraph state by $Z$ measurements.

    \textbf{Implementation of $H$ gates}:
    While the above implementation of the identity gate uses the $H$ gates twice,
    we implement the $H$ gate by using it only once rather than three times.
    Using~\eqref{eq:x_measurement} and linear-algebraic calculation,
    we have the following circuit identity:
    \begin{widetext}
      \begin{equation}
        \label{eq:triangle}
        \includegraphics[width=7.0in]{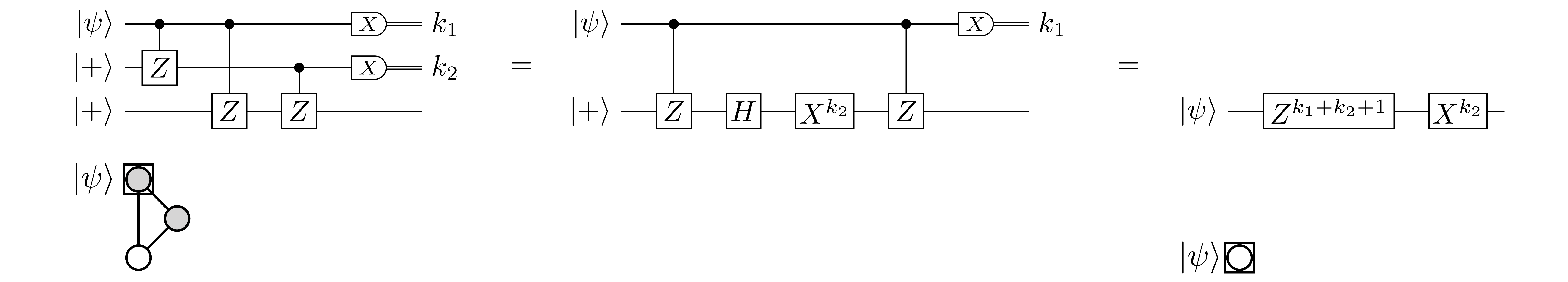}
      \end{equation}
    \end{widetext}
    indicating that given three qubits corresponding to three vertices that are connected as a cycle, $X$ measurements of two of these qubits implement the one-qubit identity gate, up to local Pauli byproduct operators.
    Thus, the measurement pattern of $G_{\mathbbm{1},H}^{\left(n,d\right)}$ that is labeled ``$H$'' on the bottom left of Fig.~\ref{fig:measurement} can be used for implementing the $H$ gate, where the input state of the qubit $u_{n,d}$ is first teleported to the qubit neighboring $v_{n,d}$ without applying $H$, and then teleported to the qubit $v_{n,d}$ with $H$ applied.
    As for $G_{\mathbbm{1},H,CCZ}^{\left(n,d\right)}$, the $H$ gate can be implemented by the same measurement pattern, as shown in the pattern labeled ``$\mathbbm{1}\otimes H\otimes H$'' in Fig.~\ref{fig:measurement}.

    \textbf{Implementation of $CZ$ gates}:
    In addition to these implementations of $\mathbbm{1}$ and $H$ using $G_{\mathbbm{1},H,CCZ}^{\left(n,d\right)}$,
    we can also use $G_{\mathbbm{1},H,CCZ}^{\left(n,d\right)}$ to implement $CZ$ gates by a measurement pattern labeled ``$CZ$s'' in Fig.~\ref{fig:measurement}.
    Note that this implementation of $CZ$ gates using the blue dotted edges in Fig.~\ref{fig:measurement} is optional for universal quantum computation to prove Theorem~\ref{thm:universality}; that is, a hypergraph state corresponding to $G_\mathrm{res}^{\left(N,D\right)}$ without these blue dotted edges still serves as a universal resource.
    However, this implementation of $CZ$ gates is beneficial to implementing multiple $CCZ$ gates in parallel as discussed in Appendix~\ref{sec:parallelizability}, and we will also apply a similar measurement pattern to correcting $CZ$ byproducts in implementing $CCZ$ gates.
    By performing $X$ measurements of the two qubits corresponding to a pair of connected vertices incident with blue dotted edges in Fig.~\ref{fig:measurement}, we can apply a $CZ$ gate to the two qubits represented as vertices connected to the measured qubits by the blue dotted edges, due to the following circuit identity:
    \begin{equation}
        \label{eq:xx}
        \includegraphics[width=3.4in]{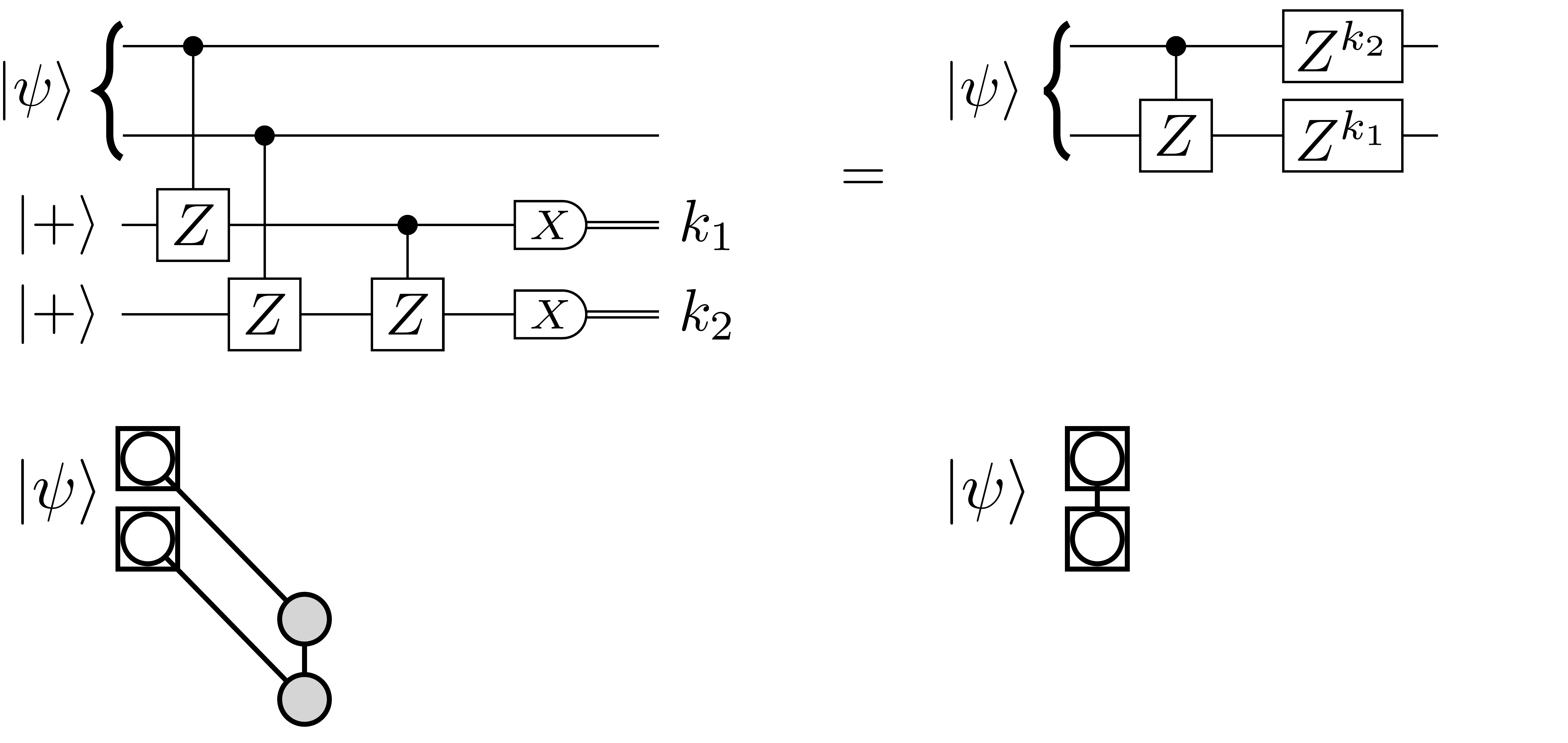}
    \end{equation}
    which indicates that given four qubits corresponding to four vertices connected in the form of a line graph, $X$ measurements of the inner two qubits directly connect the outer two by an edge representing $CZ$, up to local Pauli byproduct operators.
    Thus, the measurement pattern labeled ``$CZ$s'' in Fig.~\ref{fig:measurement} implements three $CZ$ gates acting on all the pairs of qubits in the $3$-qubit $1$-depth circuit corresponding to $G_{\mathbbm{1},H,CCZ}^{\left(n,d\right)}$, while $CZ$ gates for any two qubits can be implemented in the same way by removing unnecessary pairs of vertices by $Z$ measurements rather than the $X$ measurements.

    \textbf{Implementation of $CCZ$ gates}:
    To implement a $CCZ$ gate using $G_{\mathbbm{1},H,CCZ}^{\left(n,d\right)}$, instead of using the paths from $u_{n-2,d}$ to $v_{n-2,d}$, from $u_{n-1,d}$ to $v_{n-1,d}$, and from $u_{n,d}$ to $v_{n,d}$ for implementing identity and $H$ gates, we use the other paths from $u_{n-2,d},u_{n-1,d},u_{n,d}$ to $v_{n-2,d},v_{n-1,d},v_{n,d}$
    as shown in the measurement pattern labeled ``$CCZ$ and $CZ$s'' in Fig.~\ref{fig:measurement}.
    Due to the commutation relation~\eqref{eq:ccz_x}, two-qubit byproduct operators $CZ$ may appear in this measurement pattern, depending on the outcomes of $X$ measurements of the qubits represented as $u_{n-2,d}$, $u_{n-1,d}$, and $u_{n,d}$, as well as local Pauli byproduct operators caused by the previous measurements.
    Using~\eqref{eq:xx},
    we cancel these $CZ$ byproducts by adaptively choosing whether $Z$ or $X$ measurements are performed on two qubits that are represented as two connected dashed vertices in Fig.~\ref{fig:measurement} corresponding to each of these $CZ$ byproducts.

    \textbf{Complexity}:
    We analyze the quantum complexity and the classical complexity of this MBQC protocol, while the preparation complexity is $O(DN\log^2 N)$ as shown in Theorem~\ref{thm:preparation_complexity}.
    Regarding the quantum complexity,
    each $Z$ or $X$ measurement can be implemented by a constant-depth quantum circuit, and since there are $O\left(DN\log^2 N\right)$ qubits to be measured, the quantum complexity is
    \begin{equation}
        O\left(DN\log^2 N\right).
    \end{equation}

    Thus in the following, we analyze the classical complexity of our MBQC protocol.
    Given a classical description of any $N$-qubit $D$-depth quantum circuit composed of $\left\{H,CCZ\right\}$,
    we perform classical preprocessing before beginning measuring the qubits, so that this description of the circuit is converted into the form on the right-hand side of Fig.~\ref{fig:circuit_n_qubit_d_depth}.
    In such a conversion, for each of the $D$ depths, the sorting network is classically performed for identifying which of the comparators act as the identity gates and which as the $\textsc{SWAP}$ gates to achieve an appropriate permutation, taking $O\left(N\log^2 N\right)$ steps per depth for the odd-even merging networks.
    Thus, this classical preprocessing for all the $D$ depths takes $O\left(DN\log^2 N\right)$ steps in total, which determines measurement patterns to make the paths used in our MBQC protocol from each of $\left\{v_{1,d},\ldots,v_{N,d}\right\}$ to one of $\left\{v_{1,d+1},\ldots,v_{N,d+1}\right\}$ for each $d\in\left\{1,\ldots,D\right\}$.
    After this preprocessing, we obtain measurement patterns for the qubits except for those represented as dashed vertices of $G_{\mathbbm{1},H,CCZ}^{\left(n,d\right)}$ in Fig.~\ref{fig:measurement}, which may be used for corrections of $CZ$ byproducts in implementing $CCZ$ gates.

    In addition to this classical preprocessing,
    after performing each measurement of the qubits,
    we need feed-forward classical processes to calculate and memorize local Pauli byproduct operators applied to the $N$ qubits, so that the $CZ$ byproducts can be identified and corrected.
    The calculation of the byproducts can be performed within $O(DN\log^2 N)$ steps in total using the following procedure.
    For memorizing and tracing local Pauli byproducts for all the $O(DN\log^2 N)$ qubits (\textit{i.e.}, $\mathbbm{1}$, $X$, $Z$, or $XZ$ for each qubit), we assume that we can use an array data structure, where initialization of $O\left(DN\log^2 N\right)$ elements corresponding to these byproducts for the qubits takes $O\left(DN\log^2 N\right)$ steps, and random access to each element and rewriting each element take a constant step.
    In our MBQC protocol, before performing each $X$ measurement of a qubit represented by a vertex $v$, we list up all the neighboring vertices of $v$, and if some qubits corresponding to the neighboring vertices are supposed to be measured in the $Z$ basis, then we perform the $Z$ measurements of the neighboring qubits prior to the $X$ measurement for $v$.
    In the case of performing a $Z$ measurement of a qubit represented by a vertex $v'$, we list up all the neighboring vertices of $v'$ before the $Z$ measurement, and if some qubits corresponding to the neighboring vertices are supposed to be measured in the $Z$ basis, then we perform the $Z$ measurements of the neighboring qubits prior to the $Z$ measurement for $v'$.
    Applying this procedure recursively before each $X$ measurement, we can guarantee that each $X$ measurement is performed after  making the path by $Z$ measurements to remove the neighboring qubits.
    Meanwhile, we trace and correct local Pauli byproducts caused by the $X$ and $Z$ measurements, using the patterns~\eqref{eq:z_measurement},~\eqref{eq:x_measurement},~\eqref{eq:triangle}, and~\eqref{eq:xx} combined with the commutation relations~\eqref{eq:cz_z},~\eqref{eq:cz_x},~\eqref{eq:ccz_z}, and~\eqref{eq:ccz_x}.
    For each measurement, the number of qubits that are affected by local Pauli byproducts are \textit{upper bounded by a constant} since the maximal degree of $G_\mathrm{res}^{\left(N,D\right)}$ is five.
    Note that a degree of a vertex of a hypergraph refers to the number of hyperedges that include the vertex, and the maximal degree is the maximal of the degrees among all the vertices.
    Hence, the feed-forward classical processes for the measurements of all the $O\left(DN\log^2 N\right)$ qubits take $O\left(DN\log^2 N\right)$ steps in total.

    Therefore, including both the classical preprocessing and the feed-forward classical processes, the classical complexity is
    \begin{equation}
        O\left(DN\log^2 N\right).
    \end{equation}
    These analyses of quantum and classical complexities, together with the analysis of preparation complexity in Theorem~\ref{thm:preparation_complexity}, yield the conclusion.
\end{proof}

\begin{remark}[\label{rem:generality}Generality of our construction of universal resource states and MBQC protocols]
  Our construction of universal resource states $\Ket{G_\mathrm{res}^{\left(N,D\right)}}$ straightforwardly generalizes to other universal resource states in the following two ways.
  Firstly, we can use different sorting networks from the odd-even merging networks for implementing the qubit permutation.
  In particular, apart from the sorting networks discussed in Sec.~\ref{sec:definition}, namely, the bitonic sorters, the odd-even merging networks, and the pairwise networks, there exist sorting networks whose depth is $O\left(\log N\right)$ and whose number of comparators is $O\left(N\log N\right)$, asymptotically optimal in scaling~\cite{A2,A3,P4}.
  Instead of the odd-even merging networks, we may use such asymptotically optimal sorting networks to construct an MBQC protocol achieving the complexity
  \begin{equation}
    O\left(DN\log N\right),
  \end{equation}
  while the constant factors of the known asymptotically optimal sorting networks are impractically large.
  Note that it is an open problem to construct practical and asymptotically optimal sorting networks for any $N$, while shallower sorting networks than the odd-even merging networks are also known for some special $N$~\cite{K4}.
  In contrast to such large constant factors for the asymptotically optimal sorting networks, the odd-even merging networks yield practically small constant factors as shown in Proposition~\ref{prp:num_vertices_hyperedges} and Theorem~\ref{thm:preparation_complexity}.
  Secondly, in addition to this possibility of changing sorting networks corresponding to $G_\textup{S}^{n_1,n_2,d}$ in the definition of $G_\mathrm{res}^{N,D}$,
  hypergraphs $G_{\mathbbm{1},H,CCZ}^{\left(n,d\right)}$ and $G_{\mathbbm{1},H}^{\left(n,d\right)}$ for implementing each elementary gate in the gate set $\left\{H,CCZ\right\}$ can also be replaced with other entangled states for implementing a different gate set, without increasing the scaling of the complexity in terms of $D$ and $N$.
  Note that if this replacement uses an entangled state that is not a hypergraph state, the resulting resource state can be different from a hypergraph state, and the required measurement bases for universal MBQC are not necessarily the $Z$ and $X$ bases.
\end{remark}

\begin{remark}[\label{rem:verifiability}Verifiability]
  We remark on situations where the resource state $\Ket{G_\mathrm{res}^{\left(N,D\right)}}$ for the MBQC protocol in Theorem~\ref{thm:universality} are prepared by an untrusted quantum source that we need to verify using trusted measurement devices, so that we can guarantee that outcomes of MBQC protocols using these resource states are correct.
  Such situations may arise when an experimentalist constructs a quantum source for generating $\Ket{G_\mathrm{res}^{\left(N,D\right)}}$ and checks whether the source is working correctly, while the verification is also crucial for blind quantum computation~\cite{B14,B15}.
  We here discuss the complexity of verification of the multiqubit hypergraph state $\Ket{G_\mathrm{res}^{\left(N,D\right)}}$ for any $D$ and $N$.
  While the verification would be achievable by quantum state tomography~\cite{P1} at an exponential cost of the required number of states from the source in terms of $D$ and $N$,
  we show that in the verification of $\Ket{G_\mathrm{res}^{\left(N,D\right)}}$, the required number of states from the source can indeed be independent of $D$ and $N$.
  In the task of verification, the source repeatedly provides multiqubit quantum states that may not be independently and identically distributed, and we perform measurements of these states so that measurement outcomes can guarantee that this source provides $\Ket{G_\mathrm{res}^{\left(N,D\right)}}$.
  Each measurement of a state that is supposed to be $\Ket{G_\mathrm{res}^{\left(N,D\right)}}$ is called a test, and the state is said to pass a test if the outcomes of the measurements of all the qubits for this multiqubit state satisfy a certain success condition for passing the test.
  Given a significance level $\delta> 0$, an infidelity level $\epsilon> 0$, and $(t+1)$ states from a source,
  we use $t$ states for $t$ tests to verify whether the last one state $\rho$ is $\Ket{G_\mathrm{res}^{\left(N,D\right)}}$ or not.
  If the source provides $\Ket{G_\mathrm{res}^{\left(N,D\right)}}$ every time, then all the states from the source pass the tests,
  and on the other hand, if the fidelity between $\rho$ and $\Ket{G_\mathrm{res}^{\left(N,D\right)}}$ is smaller than $1-\epsilon$, that is,
  \begin{equation}
    \Braket{G_\mathrm{res}^{\left(N,D\right)}|\rho|G_\mathrm{res}^{\left(N,D\right)}}<1-\epsilon,
  \end{equation}
  then the probability of the states passing all the $t$ tests is at most $\delta$.
  A state-of-the-art protocol for this type of verification for a hypergraph state is given in Refs.~\cite{Z2,Z3,Z4}.
  The cost $t$ of verification in this protocol depends on the vertex coloring of the hypergraph corresponding to the hypergraph state, that is, the number $\chi$ of colors necessary for coloring vertices of the hypergraph so that neighboring vertices of the hypergraph have different colors.
  As shown in Appendix~\ref{sec:verifiability}, the minimal number of colors for the vertex coloring of the hypergraph $G_\mathrm{res}^{\left(N,D\right)}$ corresponding to $\Ket{G_\mathrm{res}^{\left(N,D\right)}}$ is $\chi=3$.
  Then, the protocol in Ref.~\cite{Z2} achieves the verification of $\Ket{G_\mathrm{res}^{\left(N,D\right)}}$ using $t$ tests where
  \begin{equation}
    t=\left\lfloor \frac{4.06}{\epsilon}\ln\frac{1}{\left(1-\epsilon\right)\delta}\right\rfloor,
  \end{equation}
  which is independent of $D$ and $N$.
  Note that this verification is implementable by nonadaptive measurements of the qubits for $\Ket{G_\mathrm{res}^{\left(N,D\right)}}$ only in the $Z$ and $X$ bases, \textit{i.e.}, the same measurement bases as the MBQC protocol using $\Ket{G_\mathrm{res}^{\left(N,D\right)}}$ in Theorem~\ref{thm:universality}.
  The complexity of the MBQC protocol in Theorem~\ref{thm:universality} using $\Ket{G_\mathrm{res}^{\left(N,D\right)}}$, combined with this verification, is in total
  \begin{equation}
    O\left(\left(DN\log^2 N\right)\times \frac{1}{\epsilon}\log\left(\frac{1}{\delta}\right)\right),
  \end{equation}
  and hence including the verification, the polylog overhead cost in this MBQC protocol using $\Ket{G_\mathrm{res}^{\left(N,D\right)}}$ in terms of $D$ and $N$ is still achievable.
  Note that combining this verification
  with the asymptotically optimal sorting networks (with a large constant factor) as discussed in Remark~\ref{rem:generality},
  we can also obtain an MBQC protocol with verification of complexity
  \begin{equation}
    O\left(\left(DN\log N\right)\times \frac{1}{\epsilon}\log\left(\frac{1}{\delta}\right)\right).
  \end{equation}
\end{remark}

\section{\label{sec:fault_tolerant}Fault-tolerant photonic MBQC protocol}

In this section, we construct a fault-tolerant MBQC protocol that achieves a polylog overhead using the multiqubit hypergraph state $\Ket{G_\mathrm{res}^{\left(N,D\right)}}$ given in Sec.~\ref{sec:resource_state} as a resource.
Aiming at implementing fault-tolerant MBQC using photonic architectures, we represent each qubit for $\Ket{G_\mathrm{res}^{\left(N,D\right)}}$ as a logical qubit of the concatenated $7$-qubit code at concatenation level $L$~\cite{PhysRevLett.77.793,S3,N4} using photonic GKP qubits as physical qubits, which we call the \textit{GKP $7$-qubit code}.
For an integer $l\in\left\{0,\ldots,L\right\}$, a level-$l$ logical qubit refers to a logical qubit of the GKP $7$-qubit code at concatenation level $l$, which consists of $7^l$ physical GKP qubits.
In Sec.~\ref{sec:fault_tolerant_mbqc_protocol}, we describe how to perform fault-tolerant MBQC if a sufficiently high-fidelity resource state $\Ket{G_\mathrm{res}^{\left(N,D\right)}}$ encoded by the GKP $7$-qubit code is given.
To prepare this resource state in a fault-tolerant way, in Sec.~\ref{sec:optimzed_sqec}, we introduce an optimized algorithm for reducing variances of GKP qubits, which improves the performance of SQEC summarized in Sec.~\ref{sec:qec_gkp}.
Then in Sec.~\ref{sec:resource_generation}, we provide a protocol for the fault-tolerant resource state preparation shown in Protocol~\ref{alg:preparation}.
This protocol for fault-tolerant resource state preparation is applicable in general; that is, we can use our fault-tolerant state preparation protocol for preparing not only $\Ket{G_\mathrm{res}^{\left(N,D\right)}}$ for our MBQC protocol but also for any multiqubit entangled state generated by Clifford and $T$ gates, while we will describe the protocol using $\Ket{G_\mathrm{res}^{\left(N,D\right)}}$ for clarity.
As for the overhead, we show in Sec.~\ref{sec:complexity_qec} that the overall overhead of the fault-tolerant MBQC protocol for implementing an $N$-qubit $D$-depth quantum circuit within a failure probability $\epsilon$ scales poly-logarithmically, and hence if we use $\Ket{G_\mathrm{res}^{\left(N,D\right)}}$, the polylog overhead of the MBQC protocol in Theorem~\ref{thm:universality} still holds including QEC\@.
As shown in Sec.~\ref{sec:threshold} by numerical calculation,
the threshold for our fault-tolerant MBQC protocol is $7.8$~dB in terms of the squeezing level of the GKP code.

\subsection{\label{sec:fault_tolerant_mbqc_protocol}Fault-tolerant MBQC protocol using encoded resource state}

\begin{figure*}[t]
  \centering
  \includegraphics[width=7.0in]{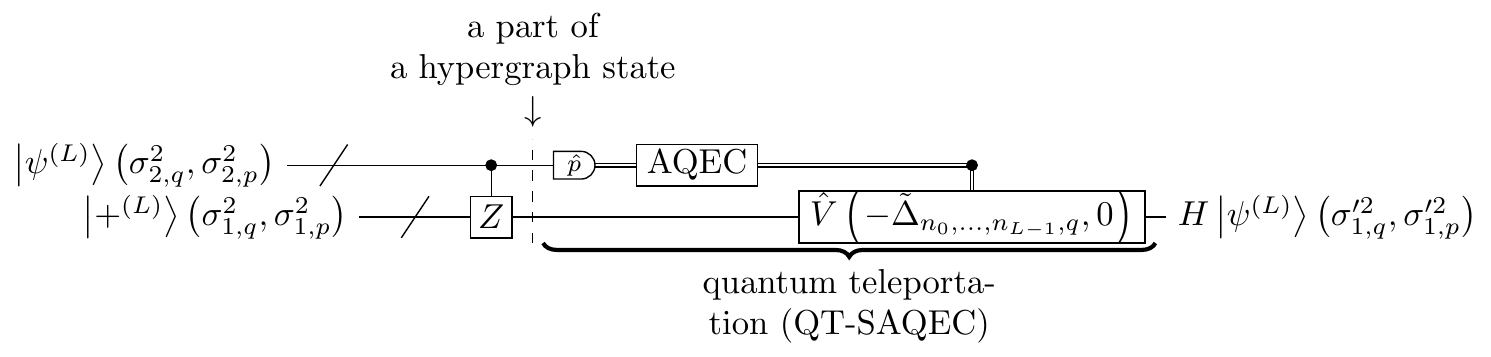}
  \caption{\label{fig:qt_sqec}Quantum-teleportation single-logical-qubit analog quantum error correction (QT-SAQEC) analogous to Knill's quantum-teleportation-based quantum error correction for a given GKP $7$-qubit code state $\Ket{\psi^{\left(L\right)}}$ at a concatenation level $L$. For all the physical GKP qubits labeled $\left(n_0,\ldots,n_{L-1}\right)$ that remains after the quantum teleportation, the variances $\left(\sigma_{n_0,\ldots,n_{L-1},1,q}^2,\sigma_{n_0,\ldots,n_{L-1},1,p}^2\right)$ in the $\hat{q}$ and $\hat{p}$ quadratures are collectively written as $\left(\sigma_{1,q}^2,\sigma_{1,p}^2\right)$. These GKP qubits are prepared in a fixed level-$L$ logical state $\Ket{+^{\left(L\right)}}$ and connected with the input GKP qubits prepared in $\Ket{\psi^{\left(L\right)}}$ by a transversally implemented $CZ$ gate, which can be a part of the resource hypergraph state for our MBQC protocol. For all these input GKP qubits labeled $\left(n_0,\ldots,n_{L-1}\right)$ and measured in the quantum teleportation, the variances $\left(\sigma_{n_0,\ldots,n_{L-1},2,q}^2,\sigma_{n_0,\ldots,n_{L-1},2,p}^2\right)$ are collectively written as $\left(\sigma_{2,q}^2,\sigma_{2,p}^2\right)$. Homodyne detection of the GKP qubits for $\Ket{\psi^{\left(L\right)}}$ in the $\hat{p}$ quadrature achieves the quantum teleportation in the same way as~\eqref{eq:x_measurement}, and at the same time, AQEC illustrated in Fig.~\ref{fig:analog_qec} yields estimates of the deviations of the measurement outcomes for all the measured GKP qubits for $\Ket{\psi^{\left(L\right)}}$. Thus, the QT-SAQEC corrects this deviation using the estimate $\tilde{\Delta}_{n_0,\ldots,n_{L-1},q}$ of the deviation given by~\eqref{eq:qt_saqec} for each GKP qubit. The variances after this correction $\left(\sigma_{n_0,\ldots,n_{L-1},1,q}^{\prime 2},\sigma_{n_0,\ldots,n_{L-1},1,p}^{\prime 2}\right)$ is given in the same way as $\hat{q}$-SQEC~\eqref{eq:sqec_q} illustrated in the upper part of Fig.~\ref{fig:sqec}, collectively written as $\left(\sigma_{1,q}^{\prime 2},\sigma_{1,p}^{\prime 2}\right)$.}
\end{figure*}

We describe a protocol for performing the MBQC protocol in Theorem~\ref{thm:universality} in a fault-tolerant way, using a given hypergraph state $\Ket{G_\mathrm{res}^{\left(N,D\right)}}$ shown in Sec.~\ref{sec:resource_state} that we here assume is represented as a logical state of the GKP $7$-qubit code in a sufficiently high fidelity.
To achieve the high fidelity of the logical state required for this protocol,
we will also show in Sec.~\ref{sec:resource_generation} how to prepare arbitrarily high-fidelity $\Ket{G_\mathrm{res}^{\left(N,D\right)}}$ encoded by the GKP $7$-qubit code from given GKP qubits at a better squeezing level than a threshold.
Throughout the resource state preparation shown in Sec.~\ref{sec:resource_generation},
we keep track of the variances of each physical GKP qubit,
and hence, we here assume that our protocol can use the values of the variances of each GKP qubit in the $\hat{q}$ and $\hat{p}$ quadratures for QEC\@.

Suppose that a sufficiently high-fidelity encoded state of $\Ket{G_\mathrm{res}^{\left(N,D\right)}}$ using the GKP $7$-qubit code is provided, where the variances of each GKP qubit are also given.
Then, we can perform the MBQC protocol in Theorem~\ref{thm:universality} by replacing $Z$ and $X$ measurements in the protocol with logical $Z$ and $X$ measurements implemented by homodyne detection in $\hat{q}$ and $\hat{p}$ quadratures, respectively, using transversal implementations of logical Clifford operations for the $7$-qubit code summarized in Sec.~\ref{sec:qec_gkp}.
Finite squeezing of GKP qubits may cause errors that are analogous to bit- and phase-flip errors at the logical level of the GKP code.
We correct these bit- and phase-flip errors using the analog quantum error correction (AQEC) summarized in Sec.~\ref{sec:qec_gkp}.
In the same way as~\eqref{eq:label},
we label each of the $7^L$ GKP qubit as
\begin{equation}
  \label{eq:label_L}
  (n_0,\ldots,n_{L-1})\in{\{1,\ldots,7\}}^L
\end{equation}
where $n_0$ is a label of a physical GKP qubit, and $n_l$ for each $l\in\left\{1,\ldots,L-1\right\}$ is a label of a level-$l$ logical qubit composed of $7$ level-$(l-1)$ logical qubits.
When performing the logical $Z$ or $X$ measurement of the GKP $7$-qubit code, we estimate a bit-valued logical measurement outcome $\tilde{b}^{(\mathrm{AQEC})}$ from real-valued outcomes of homodyne detection in such a way as~\eqref{eq:b_aqec} of the AQEC, and calculate the deviations of the outcome of homodyne detection
\begin{equation}
  \label{eq:Delta_L_aqec}
  \tilde{\Delta}_{n_0,\ldots,n_{L-1}}^{(\mathrm{AQEC})}
\end{equation}
using~\eqref{eq:deviation_estimate_aqec}.
We use $\tilde{b}^{(\mathrm{AQEC})}$ to perform the MBQC protocol in Theorem~\ref{thm:universality} at the logical level, while we additionally perform correction using the analog information $\tilde{\Delta}_{n_0,\ldots,n_{L-1}}^{(\mathrm{AQEC})}$ at the physical level.

The correction can be performed in analogy to Knill's quantum-teleportation-based quantum error correction~\cite{K5,K6}, but we exploit the analog information $\tilde{\Delta}_{n_0,\ldots,n_{L-1}}^{(\mathrm{AQEC})}$ obtained from the AQEC to improve the performance of QEC\@.
In the fault-tolerant MBQC protocol, after each logical $X$ measurement~\eqref{eq:x_measurement} for quantum teleportation,
we correct the deviation in a similar way as the single-qubit quantum error correction (SQEC) using $\tilde{\Delta}_q$ defined as~\eqref{eq:Delta_q}, as shown in Fig.~\ref{fig:qt_sqec}.
In particular, for each GKP qubit labeled as~\eqref{eq:label_L},
we obtain from~\eqref{eq:Delta_q} the deviation for the correction
\begin{equation}
  \label{eq:qt_saqec}
  \tilde{\Delta}_{n_0,\ldots,n_{L-1},q}\coloneqq\frac{\sigma_{n_0,\ldots,n_{L-1},1,q}^2}{\sigma_{n_0,\ldots,n_{L-1},1,q}^2+\sigma_{n_0,\ldots,n_{L-1},2,p}^2}\tilde{\Delta}_{n_0,\ldots,n_{L-1}}^{(\mathrm{AQEC})},
\end{equation}
where $\left(\sigma_{n_0,\ldots,n_{L-1},1,q}^2,\sigma_{n_0,\ldots,n_{L-1},1,p}^2\right)$ denotes the variances of the GKP qubit labeled $\left(n_0,\ldots,n_{L-1}\right)$ that remains after the quantum teleportation, and $\left(\sigma_{n_0,\ldots,n_{L-1},2,q}^2,\sigma_{n_0,\ldots,n_{L-1},2,p}^2\right)$ denotes the variances of the GKP qubit labeled $\left(n_0,\ldots,n_{L-1}\right)$ and measured in the quantum teleportation.
The variances of the GKP qubit labeled $\left(n_0,\ldots,n_{L-1}\right)$ after the quantum teleportation and the correction are given in the same way as~\eqref{eq:sqec_q}, denoted by $\left(\sigma_{n_0,\ldots,n_{L-1},1,q}^{\prime 2},\sigma_{n_0,\ldots,n_{L-1},1,p}^{\prime 2}\right)$.
We refer to this QEC method as \textit{quantum-teleportation single-logical-qubit analog quantum error correction} (QT-SAQEC).
Note that we do not perform this correction after logical $Z$ measurements of the GKP $7$-qubit code.

After performing each QT-SAQEC, we obtain the deviation $\tilde{\Delta}_{n_0,\ldots,n_{L-1},q}$ given by~\eqref{eq:qt_saqec}
in the $\hat{q}$ quadrature of each of the $7^L$ GKP qubits comprising a logical qubit of the GKP $7$-qubit code where the input state is teleported.
Throughout the protocol, we do not have to explicitly apply displacement transformation for correcting the deviation to physical GKP qubits for the resource state; instead, the deviation can be corrected by appropriately relabeling real-valued measurement outcomes of homodyne detection by subtracting the deviations from the outcomes.
In a $Z$ measurement of a GKP qubit labeled $(n_0,\ldots,n_{L-1})$, we correct the real-valued outcome $\tilde{q}\in\mathbb{R}$ of homodyne detection in the $\hat{q}$ quadrature by subtracting $\tilde{\Delta}_{n_0,\ldots,n_{L-1},q}$ from $\tilde{q}$.
Furthermore, in a $X$ measurement of a GKP qubit connected to the GKP qubit $(n_0,\ldots,n_{L-1})$ by a $CZ$ gate transversally implemented using~\eqref{eq:cz_quadrature}, we correct the real-valued outcome $\tilde{p}\in\mathbb{R}$ of homodyne detection in the $\hat{p}$ quadrature according to~\eqref{eq:cz_quadrature}, that is, by subtracting $\tilde{\Delta}_{n_0,\ldots,n_{L-1},q}$ from $\tilde{p}$.
In contrast to the $CZ$ gate,
a logical $CCZ$ gate of the GKP $7$-qubit code is not implemented transversally.
Although
an outcome $\tilde{p}$ of homodyne detection in the $\hat{p}$ quadrature of a GKP qubit connected to the GKP qubit $(n_0,\ldots,n_{L-1})$ by implementing the logical $CCZ$ gate may also be changed by $\tilde{\Delta}_{n_0,\ldots,n_{L-1},q}$ similarly to the $CZ$ gate,
we ignore the changes in $\tilde{p}$ potentially caused by the logical $CCZ$ gate;
that is, when we perform the $X$ measurement of a logical qubit of the GKP $7$-qubit code connected to two logical qubits by the logical $CCZ$ gate, we do not add or subtract the deviations in $\hat{q}$ of these two neighboring logical qubits in correcting the measurement outcome, but only subtract the deviations that can be calculated from the transversal $CZ$ gates.
In this case, even if we ignore the potential deviation at the physical level caused by the logical $CCZ$ gate of the GKP $7$-qubit code, we can correct the logical bit- and phase-flip errors caused by this ignorance at the logical level of the $7$-qubit code as long as the failure probability in QEC is sufficiently suppressed by preparing the high-fidelity resource state.

In this way, we can combine the MBQC protocol in Theorem~\ref{thm:universality} with the QT-SAQEC to achieve fault-tolerant MBQC by homodyne detection, as long as we can prepare a sufficiently high-fidelity hypergraph state $\Ket{G_\mathrm{res}^{\left(N,D\right)}}$ encoded by the GKP $7$-qubit code.

\subsection{\label{sec:optimzed_sqec}Optimized GKP single-qubit quantum error correction (SQEC)}

Before providing a protocol for preparing a high-fidelity encoded state $\Ket{G_\mathrm{res}^{\left(N,D\right)}}$ in the next subsection, we introduce an optimized algorithm of SQEC, where SQEC is summarized in Sec.~\ref{sec:qec_gkp}.
In the same way as Sec.~\ref{sec:qec_gkp}, consider a data GKP qubit $1$ and an auxiliary GKP qubit $2$ in the following, where the variances in the $\hat{q}$ and $\hat{p}$ quadratures are $\left(\sigma_{1,q}^2,\sigma_{1,p}^2\right)$ for $1$ and $\left(\sigma_{2,q}^2,\sigma_{2,p}^2\right)$ for $2$.
SQEC aims at reducing the variances $\sigma_{1,q}^2$ and $\sigma_{1,p}^2$ of the data GKP qubit that accumulate as we apply quantum gates,
using an auxiliary GKP qubit $2$ with variances $\sigma_{2,q}^2$ and $\sigma_{2,p}^2$ that are expected to be smaller than $\sigma_{1,q}^2$ and $\sigma_{1,p}^2$.
Observe that the variances $\sigma_{1,q}^2$ and $\sigma_{1,p}^2$ of the data GKP qubit eventually cause bit- and phase-flip errors of the GKP qubit, respectively,
and when we use the GKP $7$-qubit code for correcting the errors,
the eventual failure probability of QEC is determined by the larger variance of $\sigma_{1,q}^2$ and $\sigma_{1,p}^2$ since the larger variance causes more errors than the other.
Based on this observation, we here aim to minimize
\begin{equation}
  \max\left\{\sigma_{1,q}^2,\sigma_{1,p}^2\right\}
\end{equation}
by adjusting the variances $\left(\sigma_{2,q}^2,\sigma_{2,p}^2\right)$ of the auxiliary GKP qubit in SQEC\@.

Consider a case where a given data GKP qubit satisfies
\begin{equation}
  \sigma_{1,q}^2\geqq\sigma_{1,p}^2,
\end{equation}
and we optimize $\hat{q}$-SQEC~\eqref{eq:sqec_q} illustrated in the upper part of Fig.~\ref{fig:sqec}.
The case of $\sigma_{1,q}^2<\sigma_{1,p}^2$ for $\hat{p}$-SQEC~\eqref{eq:sqec_p} will be discussed later.
Suppose that we can prepare auxiliary GKP qubits in $\Ket{0}$, where the variances are $\left(\sigma_{2,q}^2,\sigma_{2,p}^2\right)$.
SQEC using this auxiliary GKP qubit would decrease $\sigma_{1,q}^2$ to $\sigma_{1,q}^{\prime 2}$ and increase $\sigma_{1,p}^2$ to $\sigma_{1,p}^{\prime 2}$, where $\sigma_{1,q}^{\prime 2}$ and $\sigma_{1,p}^{\prime 2}$ are given by~\eqref{eq:sqec_q}.
In contrast, we here optimize the variances of the auxiliary GKP qubit corresponding to $\left(\sigma_{2,q}^2,\sigma_{2,p}^2\right)$, so that neither of the variances of the data GKP qubit after SQEC corresponding to $\sigma_{1,q}^{\prime 2}$ and $\sigma_{1,p}^{\prime 2}$ can be dominant; that is, the variances corresponding to $\sigma_{1,q}^{\prime 2}$ and $\sigma_{1,p}^{\prime 2}$ can be as close as possible.
To achieve this optimization by only using auxiliary GKP qubits prepared in $\Ket{0}$ with variances $\left(\sigma_{2,q}^2,\sigma_{2,p}^2\right)$, we prepare $\Ket{0}$ with variances $\left(\frac{1}{m}\sigma_{2,q}^2,m\sigma_{2,p}^2\right)$ or $\left(m\sigma_{2,q}^2,\frac{1}{m}\sigma_{2,p}^2\right)$ for any $m\in\left\{1,2,\ldots\right\}$, which we call \textit{variance adjustment}.

\begin{figure}[t]
  \centering
  \includegraphics[width=3.4in]{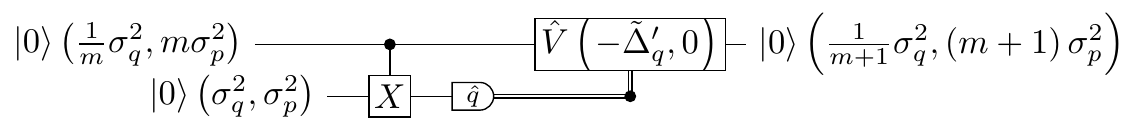}
  \includegraphics[width=3.4in]{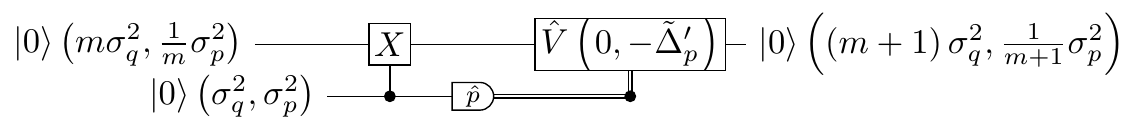}
  \caption{\label{fig:adjustment}Quantum circuits for variance adjustment to optimize the performance of SQEC in Fig~\ref{fig:sqec}. Given GKP qubits prepared in $\Ket{0}$ with variances $\left(\sigma_q^2,\sigma_p^2\right)$, the variance adjustment uses only these GKP qubits to prepare $\Ket{0}$ with variances $\left(\frac{1}{m}\sigma_q^2,m\sigma_p^2\right)$ by the upper circuit and $\left(m\sigma_q^2,\frac{1}{m}\sigma_p^2\right)$ by the lower circuit for any $m\in\left\{1,2,\ldots\right\}$. In the upper circuit, for each $m$, we input a GKP qubit prepared in $\Ket{0}$ with variances $\left(\frac{1}{m}\sigma_q^2,m\sigma_p^2\right)$, and the variance adjustment estimates the deviation in $\hat{q}$ of this input as $\tilde{\Delta}_q^\prime$ given by~\eqref{eq:Delta_q_prime}, followed by the displacement $\hat{V}$ defined as~\eqref{eq:translation}. As a result, the variances of the input changes into $\left(\frac{1}{m+1}\sigma_q^2,\left(m+1\right)\sigma_p^2\right)$ as shown in~\eqref{eq:variance_adjustment}. In the lower circuit, for each $m$, we input a GKP qubit prepared in $\Ket{0}$ with variances $\left(m\sigma_q^2,\frac{1}{m}\sigma_p^2\right)$, and estimate the deviation in $\hat{p}$ of this input as $\tilde{\Delta}_p^\prime$ given by~\eqref{eq:Delta_p_prime}, followed by $\hat{V}$. As a result, the variances of the input changes into $\left(\left(m+1\right)\sigma_q^2,\frac{1}{m+1}\sigma_p^2\right)$ as shown in~\eqref{eq:variance_adjustment_p}.}
\end{figure}

\begin{figure*}[t]
  \centering
  \includegraphics[width=7.0in]{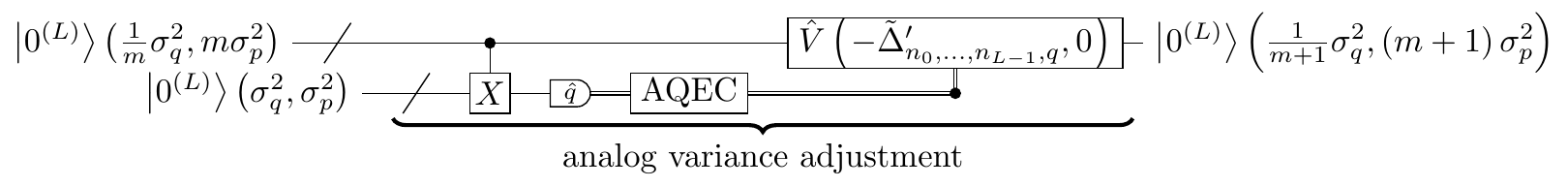}
  \includegraphics[width=7.0in]{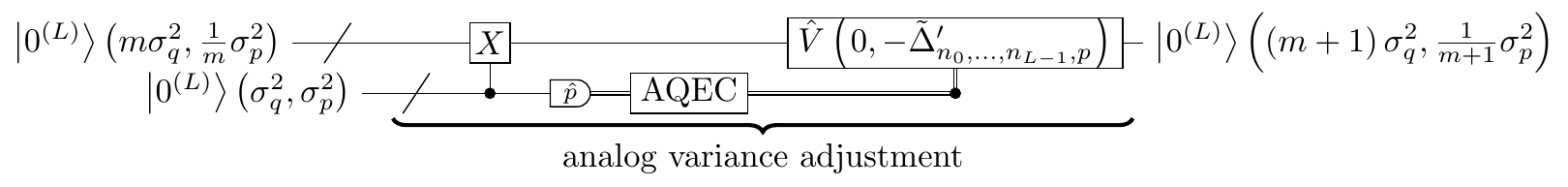}
  \includegraphics[width=7.0in]{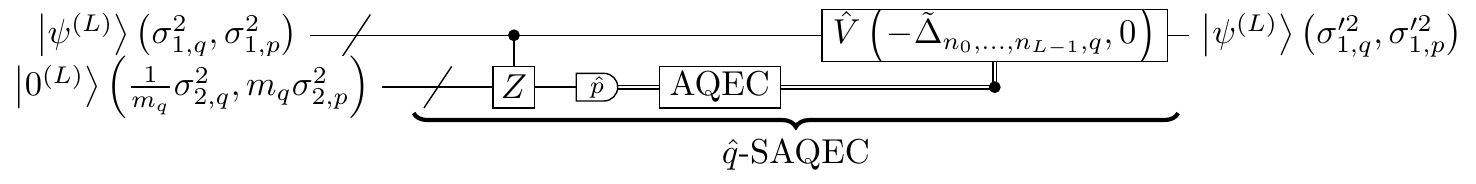}
  \includegraphics[width=7.0in]{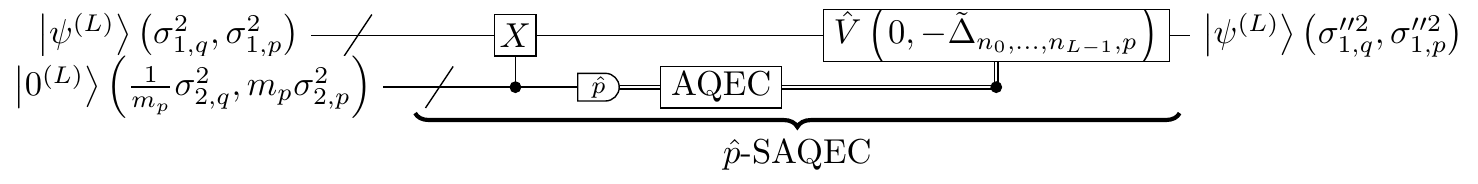}
  \caption{\label{fig:optimized_sqec}Analog variance adjustment in the first and second circuits corresponding to the upper and lower circuits in Fig.~\ref{fig:adjustment}, respectively, single-logical-qubit analog quantum error correction (SAQEC) analogous to Steane's quantum error correction for bit-flip errors ($\hat{q}$-SAQEC) for a given state $\Ket{\psi^{\left(L\right)}}$ of the GKP $7$-qubit code at concatenation level $L$ in the third circuit, and SAQEC analogous to that for phase-flip errors ($\hat{p}$-SAQEC) in the fourth circuit. The notations are the same as those for QT-SAQEC in Fig.~\ref{fig:qt_sqec}. Analog variance adjustment uses AQEC in Fig.~\ref{fig:analog_qec} to apply variance adjustment in Fig.~\ref{fig:adjustment} to quantum error-correcting codes, where the AQEC estimates the deviation as $\tilde{\Delta}_{n_0,\ldots,n_{L-1},q}^\prime$ given by~\eqref{eq:analog_variance_adjustment} and $\tilde{\Delta}_{n_0,\ldots,n_{L-1},p}^\prime$ given by~\eqref{eq:analog_variance_adjustment_p}. In $\hat{q}$- and $\hat{p}$-SAQEC, we use the analog variance adjustment to enhance the performance of SAQEC compared to conventional SQEC in Fig.~\ref{fig:sqec} without variance adjustment; that is, we adjust the variances $\left(\sigma_{2,q}^2,\sigma_{2,p}^2\right)$ of the auxiliary qubit labeled $2$ depending on the variances $\left(\sigma_{1,q}^2,\sigma_{1,p}^2\right)$ of the data qubit labeled $1$. The parameters $m_q$ and $m_p$ for the analog variance adjustment are given by~\eqref{eq:m_q} and~\eqref{eq:m_p}, respectively, where we use the sums of the variances of all the physical qubits instead of the variances in the single-qubit case. The variances after $\hat{q}$-SAQEC and $\hat{p}$-SAQEC, collectively written as $\left(\sigma_{1,q}^{\prime 2},\sigma_{1,p}^{\prime 2}\right)$ and $\left(\sigma_{1,q}^{\prime\prime 2},\sigma_{1,p}^{\prime\prime 2}\right)$, are given by~\eqref{eq:sqec_q} and~\eqref{eq:sqec_p}, respectively, in the same way as the single-qubit cases. In these circuits, the logical $CZ$ and $\textsc{CNOT}$ gates are transversally implemented by Gaussian operations on the $7^L$ physical GKP qubits as discussed in Sec.~\ref{sec:gkp}, and the AQEC decides estimates of the deviations, followed by correcting the deviations using $\tilde{\Delta}_{n_0,\ldots,n_{L-1},q}$ given by~\eqref{eq:qt_saqec} and $\tilde{\Delta}_{n_0,\ldots,n_{L-1},p}$ given by~\eqref{eq:p_saqec}.}
\end{figure*}

For the variance adjustment, we use the circuits shown in Fig.~\ref{fig:adjustment}.
In the upper and lower circuits of Fig.~\ref{fig:adjustment},
for each $m\in\left\{1,2,\ldots\right\}$, we input a GKP qubit prepared in $\Ket{0}$ with variances $\left(\frac{1}{m}\sigma_{q}^2,m\sigma_{p}^2\right)$ and $\left(m\sigma_{q}^2,\frac{1}{m}\sigma_{p}^2\right)$, respectively, and we also prepare $\Ket{0}$ of an auxiliary GKP qubit with variances $\left(\sigma_{q}^2,\sigma_{p}^2\right)$.
Then, using the $\textsc{CNOT}$ gate implemented by~\eqref{eq:cx_quadrature}, we measure $\hat{q}_1+\hat{q}_2$ and $\hat{p}_1+\hat{p}_2$, respectively, where $\hat{q}_1$ and $\hat{q}_2$ denote the $\hat{q}$ quadratures of the control mode and the target mode of the $\textsc{CNOT}$ gate in the upper circuit of Fig.~\ref{fig:adjustment}, and $\hat{p}_1$ and $\hat{p}_2$ denote the $\hat{p}$ quadratures of the control mode and the target mode of the $\textsc{CNOT}$ gate in the lower circuit of Fig.~\ref{fig:adjustment}.
In the same way as SQECs~\eqref{eq:Delta_q} and~\eqref{eq:Delta_p},
to estimate the deviation for the correction in the upper and lower circuits of Fig.~\ref{fig:adjustment},
we use, respectively,
\begin{align}
  \label{eq:Delta_q_prime}
  \tilde{\Delta}_q^\prime&\coloneqq\frac{\frac{1}{m}\sigma_{q}^2}{\frac{1}{m}\sigma_{q}^2+\sigma_{q}^2}\tilde{\Delta},\\
  \label{eq:Delta_p_prime}
  \tilde{\Delta}_p^\prime&\coloneqq\frac{\frac{1}{m}\sigma_{p}^2}{\frac{1}{m}\sigma_{p}^2+\sigma_{p}^2}\tilde{\Delta},
\end{align}
where $\tilde{\Delta}$ is given by~\eqref{eq:deviation_estimate}.
For each variance adjustment, the variances of the input GKP qubit change in the same way as SQECs~\eqref{eq:sqec_q} and~\eqref{eq:sqec_p}, that is, in the upper and lower circuits of Fig.~\ref{fig:adjustment}, respectively,
\begin{align}
  \label{eq:variance_adjustment}
  \left(\frac{1}{m}\sigma_{q}^2,m\sigma_{p}^2\right)&\to\left(\frac{1}{m+1}\sigma_{q}^2,\left(m+1\right)\sigma_{p}^2\right),\\
  \label{eq:variance_adjustment_p}
  \left(m\sigma_{q}^2,\frac{1}{m}\sigma_{p}^2\right)&\to\left(\left(m+1\right)\sigma_{q}^2,\frac{1}{m+1}\sigma_{p}^2\right).
\end{align}

To exploit these variance adjustments for improving the performance of the $\hat{q}$-SQEC,
we optimize $m$ so that we can obtain
\begin{align}
  &m=m_q^\ast\coloneqq\nonumber\\
  &\quad\sup\left\{m:\frac{\sigma_{1,q}^2\cdot m\sigma_{2,p}^2}{\sigma_{1,q}^2+m\sigma_{2,p}^2}\leqq\sigma_{1,p}^2+\frac{1}{m}\sigma_{2,q}^2,\right.\nonumber\\
  &\qquad \left. m\in\mathbb{N}\;\text{or }\frac{1}{m}\in\mathbb{N}\right\},
\end{align}
where $\mathbb{N}=\left\{1,2,\ldots\right\}$, and the right-hand side yields the case where $\sigma_{1,q}^{\prime 2}$ and $\sigma_{1,p}^{\prime 2}$ in~\eqref{eq:sqec_q} are close; that is, the larger of $\sigma_{1,q}^{\prime 2}$ and $\sigma_{1,p}^{\prime 2}$ is minimized.
Note that if $\sigma_{1,q}^2>\sigma_{1,p}^2$, then $m_q^\ast$ is finite, since we have for sufficiently small $m>0$
\begin{equation}
  \frac{\sigma_{1,q}^2\cdot m\sigma_{2,p}^2}{\sigma_{1,q}^2+m\sigma_{2,p}^2}<\sigma_{1,p}^2+\frac{1}{m}\sigma_{2,q}^2,
\end{equation}
and for sufficiently large $m$
\begin{equation}
  \frac{\sigma_{1,q}^2\cdot m\sigma_{2,p}^2}{\sigma_{1,q}^2+m\sigma_{2,p}^2}>\sigma_{1,p}^2+\frac{1}{m}\sigma_{2,q}^2,
\end{equation}
where the left-hand sides and the right-hand sides of these inequalities monotonically increase and decrease, respectively, as $m$ increases.
As $\sigma_{1,q}^2$ and $\sigma_{1,p}^2$ get closer,
$m_q^\ast$ gets larger,
and the larger variance $m_q^\ast\sigma_{2,p}^2$ of $\left(\frac{1}{m_q^\ast}\sigma_{2,q}^2,m_q^\ast\sigma_{2,p}^2\right)$ may cause more logical phase-flip errors.
To avoid the increase of these logical phase-flip errors, let $M$ be an upper bound of the larger variance of the GKP qubit,
and define
\begin{equation}
  \label{eq:m_q}
  m_q\coloneqq\begin{cases}
    m_q^\ast&\text{if }m_q^\ast\leqq M,\\
    M&\text{if }m_q^\ast > M,
  \end{cases}
\end{equation}
where we set in our protocol
\begin{equation}
  M=4.
\end{equation}
Consequently, the optimized $\hat{q}$-SQEC is performed by the same circuit as the upper part of Fig.~\ref{fig:sqec} yet using the auxiliary GKP qubit $2$ prepared in $\Ket{0}$ with variances $\left(\frac{1}{m_q}\sigma_{2,q}^2,m_q\sigma_{2,p}^2\right)$ adjusted by repeating the variance adjustment $m_q$ times if $m_q\geqq 1$ (or $\frac{1}{m_q}$ times if $m_q<1$).

In the same way, consider a case where a given data GKP qubit satisfies
\begin{equation}
  \sigma_{1,q}^2<\sigma_{1,p}^2,
\end{equation}
and we optimize $\hat{p}$-SQEC~\eqref{eq:sqec_p} illustrated in the lower part of Fig.~\ref{fig:sqec}.
Define
\begin{align}
  \label{eq:m_p}
  m_p^\ast&\coloneqq\sup\left\{m:\sigma_{1,q}^2+\frac{1}{m}\sigma_{2,q}^2\geqq \frac{\sigma_{1,p}^2\cdot m\sigma_{2,p}^2}{\sigma_{1,p}^2+m\sigma_{2,p}^2},\right.\nonumber\\
  &\qquad \left. m\in\mathbb{N}\;\text{or }\frac{1}{m}\in\mathbb{N}\right\},\\
  m_p&\coloneqq\begin{cases}
    m_p^\ast&\text{if }m_p^\ast\leqq M,\\
    M&\text{if }m_p^\ast > M,
  \end{cases}
\end{align}
where $m_p^\ast$ yields the case where $\sigma_{1,q}^{\prime\prime 2}$ and $\sigma_{1,p}^{\prime\prime 2}$ in~\eqref{eq:sqec_p} are close, and $M$ is the same as that in~\eqref{eq:m_q}.
Then, the optimized SQEC for $\hat{p}$ is performed by the same circuit as the lower part of Fig.~\ref{fig:sqec} yet using the auxiliary GKP qubit $2$ prepared in $\Ket{0}$ with the adjusted variances $\left(\frac{1}{m_p}\sigma_{2,q}^2,m_p\sigma_{2,p}^2\right)$.

In addition to these optimized SQECs for a single GKP qubit, we generalize SQEC to that for a logical one-qubit state of a quantum error-correcting code concatenated with the GKP code, which is analogous to Steane's quantum error correction~\cite{PhysRevLett.78.2252} but we exploit the AQEC to improve the performance.
We refer to this method for $\hat{q}$ and $\hat{p}$ as $\hat{q}$-SAQEC and $\hat{p}$-SAQEC, respectively, where SAQEC refers to \textit{single-logical-qubit analog quantum error correction}.
We may collectively refer to $\hat{q}$-SAQEC and $\hat{p}$-SAQEC as SAQECs\@.

In the SAQECs, we use the variance adjustment combined with the AQEC, which we call \textit{analog variance adjustment}.
The analog variance adjustment only uses auxiliary logical qubits of the level-$L$ GKP $7$-qubit code prepared in $\Ket{0^{(L)}}$ with variances $\left(\sigma_{n_0,\ldots,n_{L-1},q}^2,\sigma_{n_0,\ldots,n_{L-1},p}^2\right)$ to prepare $\Ket{0^{(L)}}$ with variances
$\left(\frac{1}{m}\sigma_{n_0,\ldots,n_{L-1},q}^2,m\sigma_{n_0,\ldots,n_{L-1},p}^2\right)$ or $\left(m\sigma_{n_0,\ldots,n_{L-1},q}^2,\frac{1}{m}\sigma_{n_0,\ldots,n_{L-1},p}^2\right)$ for any $m\in\left\{1,2,\ldots\right\}$, where $\left(\sigma_{n_0,\ldots,n_{L-1},q}^2,\sigma_{n_0,\ldots,n_{L-1},p}^2\right)$ denotes the variances of physical GKP qubits labeled $\left(n_0,\ldots,n_{L-1}\right)$ in the same way as~\eqref{eq:label_L}.
Corresponding to the upper and the lower circuits of Fig.~\ref{fig:adjustment} for the variance adjustment in the single-qubit case,
we perform the first and second circuits in Fig.~\ref{fig:optimized_sqec}, respectively, where the logical bit value and the deviations are decided by AQEC summarized in Sec.~\ref{sec:qec_gkp}.
Similarly to the estimation~\eqref{eq:qt_saqec} of the deviation in the QT-SAQEC,
we obtain from~\eqref{eq:Delta_q_prime} and~\eqref{eq:Delta_p_prime} the deviations for the correction
\begin{align}
  \label{eq:analog_variance_adjustment}
  \tilde{\Delta}_{n_0,\ldots,n_{L-1},q}^\prime&\coloneqq\frac{\frac{1}{m}\sigma_{n_0,\ldots,n_{L-1},q}^2}{\frac{1}{m}\sigma_{n_0,\ldots,n_{L-1},q}^2+\sigma_{n_0,\ldots,n_{L-1},q}^2}\tilde{\Delta}_{n_0,\ldots,n_{L-1}}^{(\mathrm{AQEC})},\\
  \label{eq:analog_variance_adjustment_p}
  \tilde{\Delta}_{n_0,\ldots,n_{L-1},p}^\prime&\coloneqq\frac{\frac{1}{m}\sigma_{n_0,\ldots,n_{L-1},p}^2}{\frac{1}{m}\sigma_{n_0,\ldots,n_{L-1},p}^2+\sigma_{n_0,\ldots,n_{L-1},p}^2}\tilde{\Delta}_{n_0,\ldots,n_{L-1}}^{(\mathrm{AQEC})},
\end{align}
where $\tilde{\Delta}_{n_0,\ldots,n_{L-1}}^{(\mathrm{AQEC})}$ is defined as~\eqref{eq:Delta_L_aqec}.
By performing the first and second circuits in Fig.~\ref{fig:optimized_sqec},
the variances of the input GKP qubits change in the same way as the single-qubit case~\eqref{eq:variance_adjustment} and~\eqref{eq:variance_adjustment_p}, respectively.

To describe the $\hat{q}$-SAQEC, consider a data logical qubit of the GKP $7$-qubit code at a concatenation level $L$ labeled $1$, and an auxiliary logical qubit of this GKP $7$-qubit code labeled $2$.
Let $\Ket{\psi^{\left(L\right)}}$ be a state of the logical qubit $1$,
and $\sigma_{1,q}^2$ and $\sigma_{1,p}^2$ be the sum of the variances $\left(\sigma_{n_0,\ldots,n_{L-1},1,q}^2,\sigma_{n_0,\ldots,n_{L-1},1,p}^2\right)$ in $\hat{q}$ and $\hat{p}$, respectively, for all the $7^L$ physical GKP qubits $\left(n_0,\ldots,n_{L-1}\right)$ that comprise $\Ket{\psi^{\left(L\right)}}$.
In addition, assume that we can prepare $\Ket{0^{\left(L\right)}}$ of an auxiliary logical qubit labeled $2$,
and let $\sigma_{2,q}^2$ and $\sigma_{2,p}^2$ be the sum of the variances $\left(\sigma_{n_0,\ldots,n_{L-1},2,q}^2,\sigma_{n_0,\ldots,n_{L-1},2,p}^2\right)$ of $\hat{q}$ and $\hat{p}$, respectively, for all the $7^L$ GKP qubits $\left(n_0,\ldots,n_{L-1}\right)$ for $\Ket{0^{\left(L\right)}}$.
Using these sums of the variances, calculate $m_q$ by~\eqref{eq:m_q} in the same way as the single-qubit case.
We perform the $\hat{q}$-SAQEC for $\Ket{\psi^{\left(L\right)}}$ of the data logical qubit using the third circuit of Fig.~\ref{fig:optimized_sqec}, where the variances of the auxiliary GKP qubits in $\Ket{0^{\left(L\right)}}$ are adjusted to $\left(\frac{1}{m_q}\sigma_{n_0,\ldots,n_{L-1},2,q}^2,m_q\sigma_{n_0,\ldots,n_{L-1},2,p}^2\right)$ using the analog variance adjustment repeatedly, and the logical bit value and the deviations are decided by the AQEC, followed by the same corrections as the QT-SAQEC~\eqref{eq:qt_saqec}.
The variances of the GKP qubit labeled $\left(n_0,\ldots,n_{L-1}\right)$ after the $\hat{q}$-SAQEC are given in the same way as~\eqref{eq:sqec_q}, denoted by $\left(\sigma_{n_0,\ldots,n_{L-1},1,q}^{\prime 2},\sigma_{n_0,\ldots,n_{L-1},1,p}^{\prime 2}\right)$.

As for $\hat{p}$-SAQEC, we perform the $\hat{p}$-SAQEC for $\Ket{\psi^{\left(L\right)}}$ using the fourth circuit of of Fig.~\ref{fig:optimized_sqec}, where the variances of the auxiliary GKP qubits are adjusted to $\left(\frac{1}{m_p}\sigma_{2,q}^2,m_p\sigma_{2,p}^2\right)$, and the AQEC is followed by similar corrections to QT-SAQEC~\eqref{eq:qt_saqec} but obtained from~\eqref{eq:Delta_p} rather than~\eqref{eq:Delta_q}, that is,
\begin{equation}
  \label{eq:p_saqec}
  \tilde{\Delta}_{n_0,\ldots,n_{L-1},p}\coloneqq\frac{\sigma_{n_0,\ldots,n_{L-1},1,p}^2}{\sigma_{n_0,\ldots,n_{L-1},1,p}^2+\sigma_{n_0,\ldots,n_{L-1},2,p}^2}\tilde{\Delta}_{n_0,\ldots,n_{L-1}}^{(\mathrm{AQEC})},
\end{equation}
where $\tilde{\Delta}_{n_0,\ldots,n_{L-1}}^{(\mathrm{AQEC})}$ is defined as~\eqref{eq:Delta_L_aqec}.
The variances of the GKP qubit labeled $\left(n_0,\ldots,n_{L-1}\right)$ after the $\hat{p}$-SAQEC are given in the same way as~\eqref{eq:sqec_p}, denoted by $\left(\sigma_{n_0,\ldots,n_{L-1},1,q}^{\prime\prime 2},\sigma_{n_0,\ldots,n_{L-1},1,p}^{\prime\prime 2}\right)$.

\begin{figure}[t]
    \centering
    \includegraphics[width=3.4in]{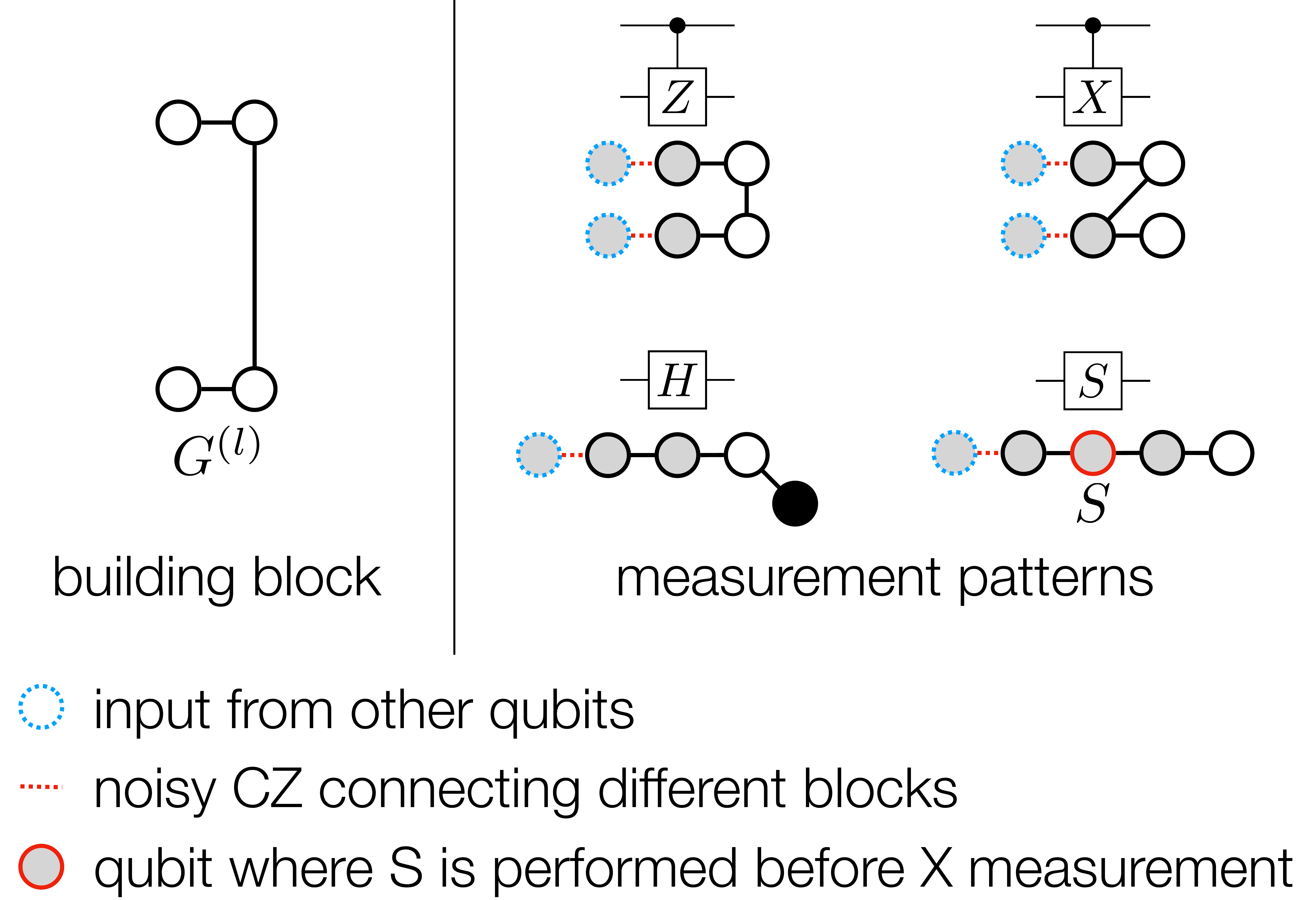}
    \caption{\label{fig:blocks}A graph on the left representing a graph state $\Ket{G^{\left(l\right)}}$ that serves as a building block for fault-tolerant preparation of the resource state $\Ket{G_{\mathrm{res}}^{\left(N,D\right)}}$ for our MBQC protocol, and the measurement patterns on the right that can be used for applying the gates in the figure to any given input state of the logical qubits (blue dotted circles) of the level-$l$ GKP $7$-qubit code. Each vertex represents a level-$l$ logical qubit of the GKP $7$-qubit code. In the same way as Fig.~\ref{fig:measurement}, the color of each vertex represents in which logical basis the qubit is to be measured. To apply a gate in the figure to the given GKP qubits, perform potentially noisy $CZ$ gates (red dotted lines) for connecting the given logical qubits and a building block, and then perform $\hat{p}$-SAQEC and $\hat{q}$-SAQEC in Fig~\ref{fig:optimized_sqec} in this order for each of the GKP qubits where the $CZ$ gates act, followed by performing $Z$ and $X$ measurements of the logical qubits using AQEC in Fig.~\ref{fig:analog_qec} and QT-SAQEC in Fig.~\ref{fig:qt_sqec}\@. In the implementation of the $S$ gate, we perform the $S$ gate on the logical qubit represented as the red vertex by transversal Gaussian operations, and then perform $\hat{p}$-SAQEC and $\hat{q}$-SAQEC in this order, followed by the $X$ measurement of this logical qubit. Note that $S^\dag$ can be implemented in the same way as $S$. Combining $\Ket{\frac{\pi}{8}^{(l)}}$ with $\Ket{G^{\left(l\right)}}$, we can also perform the $T$ gate by implementing each gate of the circuit in  Fig.~\ref{fig:gate_teleportation}, while $T^\dag=S^\dag T$ can be implemented by performing $T$ followed by $S^\dag$.}
\end{figure}

\subsection{\label{sec:resource_generation}Fault-tolerant resource state preparation}

\begin{algorithm*}[t]
  \caption{\label{alg:preparation}Fault-tolerant preparation of a hypergraph state $\Ket{G_\mathrm{res}^{\left(N,D\right)}}$ using the GKP $7$-qubit code.}
  \begin{algorithmic}[1]
    \Require{%
      \Statex{Positive integers $N$ and $D$ for an $N$-qubit $D$-depth quantum circuit composed of $\left\{H,CCZ\right\}$ to be implemented by MBQC\@.}
      \Statex{A parameter $\epsilon>0$ representing the failure probability to be achieved.}
      \Statex{Protocols for optimized single-qubit quantum error correction (SQEC) and single-logical-qubit analog quantum error correction (SAQEC) with variance adjustment that we have introduced in Sec.~\ref{sec:optimzed_sqec}, as shown in Fig.~\ref{fig:optimized_sqec}.}
      \Statex{Protocols for highly-reliable measurements (HRM) shown in Fig.~\ref{fig:hrm}.}
      \Statex{Protocols for analog quantum error correction (AQEC) shown in Fig.~\ref{fig:analog_qec}.}
      \Statex{Protocols using building blocks to implement quantum gates by measurement patterns in Fig.~\ref{fig:blocks} using quantum-teleportation SAQEC (QT-SAQEC) shown in Fig.~\ref{fig:qt_sqec}.}
      \Statex{Light sources that can emit GKP qubits prepared in $\Ket{+}$ and $\Ket{\frac{\pi}{8}}$ at a squeezing level larger than the threshold given in Sec.~\ref{sec:threshold}.}
    }
    \Ensure{%
      \Statex{Returning the resource hypergraph state $\Ket{G_\mathrm{res}^{\left(N,D\right)}}$ defined as~\eqref{eq:g_res} that is represented by logical qubits of the GKP $7$-qubit code at a concatenation level that can achieve fault-tolerant MBQC with a failure probability smaller than $\epsilon$.}
    }
    \Procedure{PrepareResource}{$N,D,\epsilon$}
    \Comment{Prepare $\Ket{G_\mathrm{res}^{\left(N,D\right)}}$ using GKP $7$-qubit code at concatenation level $L$.}
    \State{$L\gets$ a positive integer of an order $L=O\left(\log\left(\polylog\left(\frac{DN}{\epsilon}\right)\right)\right)$ representing the required concatenation level.}
    \State{$L_0\gets$ a constant positive integer, \textit{e.g.}, $L_0=4$, representing the concatenation level prepared with post-selection\@.}
    \For{$l\in\left\{0,\ldots,L_0\right\}$}
    \Comment{Prepare $\Ket{+^{\left(L_0\right)}}$, $\Ket{\frac{\pi}{8}^{\left(L_0\right)}}$, and $\Ket{G^{\left(L_0\right)}}$ with HRM and post-selection.}
  \Statex\Comment{In this loop, measurements are performed by HRMs, and the $7$-qubit code is used as the error-detecting code.}
    \If{$l=0$}
    \State{Prepare quantum states $\Ket{G^{\left(0\right)}}$ from GKP qubits in $\Ket{0}$ with HRM (Fig.~\ref{fig:initial}).}
    \Else%
    \If{$l=L_0$ or $l=L$}
    \State{Prepare quantum states $\Ket{+^{\left(l\right)}}$ from $\Ket{G^{\left(l-1\right)}}$ with HRM, using error detection once (Fig.~\ref{fig:preparation_7qubit_plus_optimized}).}
    \Else%
    \State{Prepare quantum states $\Ket{+^{\left(l\right)}}$ from $\Ket{G^{\left(l-1\right)}}$ with HRM, using error detection twice (Fig.~\ref{fig:preparation_7qubit_plus_optimized2}).}
    \EndIf%
    \If{$l=L_0$ or $l=L$}
    \State{Prepare quantum states $\Ket{\frac{\pi}{8}^{\left(l\right)}}$ from $\Ket{\frac{\pi}{8}^{(l-1)}}$ and $\Ket{G^{(l-1)}}$ with HRM, using error detection once (Fig.~\ref{fig:preparation_magic}).}
    \Else%
    \State{Prepare quantum states $\Ket{\frac{\pi}{8}^{\left(l\right)}}$ from $\Ket{\frac{\pi}{8}^{(l-1)}}$ and $\Ket{G^{(l-1)}}$ with HRM, using error detection twice (Fig.~\ref{fig:preparation_magic2}).}
    \EndIf%
    \State{Prepare quantum states $\Ket{G^{\left(l\right)}}$ from $\Ket{+^{\left(l\right)}}$ and $\Ket{G^{\left(l-1\right)}}$ with HRM (Fig.~\ref{fig:preparation_graph_state}).}
    \EndIf%
    \EndFor%

  \Statex\Comment{In the following, HRM is not used, and the $7$-qubit code is used as the error-correcting code.}

    \If{$L>L_0$}
    \For{$l\in\left\{L_0+1,\ldots,L\right\}$}
    \Comment{Prepare $\Ket{+^{\left(L\right)}}$, $\Ket{\frac{\pi}{8}^{\left(L\right)}}$, and $\Ket{G^{\left(L\right)}}$ without HRM\@.}
    \State{Prepare quantum states $\Ket{+^{\left(l\right)}}$ from $\Ket{G^{\left(l-1\right)}}$, using error detection once (Fig.~\ref{fig:preparation_7qubit_plus_optimized}).}
    \State{Prepare quantum states $\Ket{\frac{\pi}{8}^{\left(l\right)}}$ from $\Ket{\frac{\pi}{8}^{(l-1)}}$ and $\Ket{G^{(l-1)}}$, using error detection once (Fig.~\ref{fig:preparation_magic}).}
    \State{Prepare quantum states $\Ket{G^{\left(l\right)}}$ from $\Ket{+^{\left(l\right)}}$ and $\Ket{G^{\left(l-1\right)}}$ (Fig.~\ref{fig:preparation_graph_state}).}
    \EndFor%
    \EndIf%
    \State{Prepare quantum states $\Ket{CCZ^{\left(L\right)}}$ from $\Ket{\frac{\pi}{8}^{\left(L\right)}}$ and $\Ket{G^{\left(L\right)}}$ (Fig.~\ref{fig:preparation_ccz}).}
    \State{Prepare $\Ket{G_\mathrm{res}^{\left(N,D\right)}}$ from $\Ket{CCZ^{\left(L\right)}}$ and $\Ket{G^{\left(L\right)}}$.}
    \State{\Return$\Ket{G_\mathrm{res}^{\left(N,D\right)}}$.}
    \EndProcedure%
  \end{algorithmic}
\end{algorithm*}

In this section, we show a fault-tolerant resource state preparation protocol for obtaining an arbitrarily high-fidelity encoded state $\Ket{G_\mathrm{res}^{\left(N,D\right)}}$ using the GKP $7$-qubit code at a given concatenation level $L$, from given GKP qubits in $\Ket{+}$ and $\Ket{\frac{\pi}{8}}$ at a squeezing level above the threshold.
We give this protocol in Protocol~\ref{alg:preparation}.
Note that our resource state preparation protocol works in general, that is, also for preparing any other multiqubit entangled state than $\Ket{G_\mathrm{res}^{\left(N,D\right)}}$ that can be generated by Clifford and $T$ gates and is represented as a logical state of the GKP $7$-qubit code.

\textbf{Sketch of the fault-tolerant resource state preparation protocol}:
We outline our fault-tolerant resource state preparation protocol.
Each step will be detailed later.
For each concatenation level $l\in\left\{0,1,\ldots,L\right\}$ where level $0$ corresponds to physical GKP qubits,
we prepare level-$l$ logical qubits in $\Ket{+}$ and $\Ket{\frac{\pi}{8}}$, denoted respectively by
\begin{equation}
  \Ket{+^{\left(l\right)}}\text{and }\Ket{\frac{\pi}{8}^{\left(l\right)}}.
\end{equation}
We use $\Ket{+^{\left(l\right)}}$ for SAQEC introduced in Sec.~\ref{sec:optimzed_sqec}, and $\Ket{\frac{\pi}{8}^{\left(l\right)}}$ for implementing a logical non-Clifford gate of the GKP $7$-qubit code.
Furthermore, our protocol prepares a logical $4$-qubit graph state encoded using level-$l$ concatenation, which is denoted by
\begin{equation}
  \Ket{G^{\left(l\right)}},
\end{equation}
called a \textit{building block}, and represented as a graph illustrated on the left of Fig.~\ref{fig:blocks}.

The building block $\Ket{G^{\left(l\right)}}$ serves as a resource for implementing Clifford gates acting on other level-$l$ logical qubits,
in particular, the $CZ$ gate, the $\textsc{CNOT}$ gate, the $H$ gate, and the $S$ gate,
as shown on the right of Fig.~\ref{fig:blocks} that works in the same way as the MBQC protocol shown in Sec.~\ref{sec:quantum_complexity}.
Combined with $\Ket{G^{\left(l\right)}}$,
$\Ket{\frac{\pi}{8}^{\left(l\right)}}$ serves as a resource for implementing the $T$ gate on a level-$l$ logical qubit by implementing each gate in the circuit for state injection shown in Fig.~\ref{fig:gate_teleportation}.
We can also implement $T^\dag=S^\dag T$ by implementing $T$ followed by $S^\dag$, where $S^\dag$ can be implemented using the same measurement pattern as that of $S$.
Note that the implementation of the $H$ gate in Fig.~\ref{fig:blocks} is not explicitly used in our protocol for resource state preparation, but we include $H$ in Fig.~\ref{fig:blocks} to complete the universal gate set, \textit{i.e.}, the Clifford and $T$ gates.

To implement a gate shown on the right of Fig.~\ref{fig:blocks}, we first connect a building block $\Ket{G^{\left(l\right)}}$ with given level-$l$ logical qubits by directly applying $CZ$ gates,
which we can implement for the GKP $7$-qubit code transversally using Gaussian operations.
Second, to reduce the variances that accumulate due to the $CZ$ gates, for each of the directly applied $CZ$ gates, we perform $\hat{p}$-SAQEC and $\hat{q}$-SAQEC in this order for the two level-$l$ logical qubits on which the $CZ$ gate acts, using $\Ket{0^{\left(l\right)}}=H\Ket{+^{\left(l\right)}}$ with analog variance adjustment shown in Fig.~\ref{fig:optimized_sqec}.
Finally, we follow measurement patterns that are illustrated on the right of Fig.~\ref{fig:blocks} to implement the gate, where these measurement patterns are implemented using QT-SAQEC in Fig.~\ref{fig:qt_sqec} to suppress logical bit- and phase-flip errors.
The directly applied $CZ$ gates in the first step are potentially noisy in the sense that $CZ$ gates increase variances of the GKP qubits causing logical bit- and phase-flip errors eventually,
and hence we reduce the variances by the SAQEC\@.
Our resource state preparation protocol is designed to use reliable gates implemented by the variance-reduced building blocks as much as possible, so that the potentially noisy $CZ$ gate acts on each level-$l$ logical qubit \textit{at most once} per variance reduction by SAQECs.

We use the quantum gates implemented by the level-$(l-1)$ building blocks $\Ket{G^{\left(l-1\right)}}$
to prepare $\Ket{+^{\left(l\right)}}$, $\Ket{\frac{\pi}{8}^{\left(l\right)}}$, and $\Ket{G^{\left(l\right)}}$ from $\Ket{+^{\left(l-1\right)}}$ and $\Ket{\frac{\pi}{8}^{\left(l-1\right)}}$.
Repeating this preparation for each $l\in\left\{1,\ldots,L\right\}$, we obtain level-$L$ states $\Ket{+^{\left(L\right)}}$, $\Ket{\frac{\pi}{8}^{\left(L\right)}}$, and $\Ket{G^{\left(L\right)}}$.
In addition, using $\Ket{G^{\left(L\right)}}$ and $\Ket{\frac{\pi}{8}^{\left(L\right)}}$,
we generate level-$L$ logical three qubits prepared in $\Ket{CCZ}$ denoted by
\begin{equation}
  \Ket{CCZ^{\left(L\right)}}.
\end{equation}
Then, we use $\Ket{G^{\left(L\right)}}$s to add edges representing $CZ$ gates to $\Ket{CCZ^{\left(L\right)}}$s, so that we can deterministically prepare the resource state $\Ket{G_\mathrm{res}^{\left(N,D\right)}}$.
In the following, we describe each step of Protocol~\ref{alg:preparation} in detail.

\textbf{Decision of required concatenation level}:
Protocol~\ref{alg:preparation} in lines $2$ and $3$ begins with deciding the required concatenation level for achieving a given failure probability $\epsilon$.
The total number of elementary operations in implementing the MBQC protocol in Theorem~\ref{thm:universality} is of order
\begin{equation}
  O\left(DN\log^2 N\right).
\end{equation}
When we use the $7$-qubit code, or a concatenated quantum error-correcting code in general, for performing these operations in a fault-tolerant way within $\epsilon$,
it suffices to choose the concatenation level $L$ as shown in~\eqref{eq:overhead_qec}, \textit{i.e.},
\begin{align}
  \label{eq:L}
  L&=O\left(\log\left(\polylog\left(\frac{DN\log^2 N}{\epsilon}\right)\right)\right)\nonumber\\
   &=O\left(\log\left(\polylog\left(\frac{DN}{\epsilon}\right)\right)\right).
\end{align}

\begin{figure}[t]
    \centering
    \includegraphics[width=2.0in]{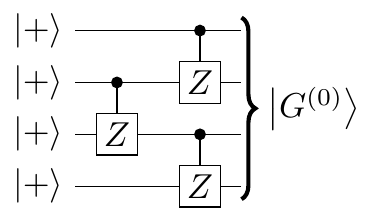}
    \caption{\label{fig:initial}A quantum circuit for preparing a level-$0$ building block $\Ket{G^{\left(0\right)}}$ in Fig.~\ref{fig:blocks} from $\Ket{+}$ of GKP qubits. The GKP qubits are initialized as $\Ket{+}=H\Ket{0}$. Logical Clifford gates on GKP qubits are implemented by Gaussian operations as shown in Sec.~\ref{sec:gkp}. Each gate in the circuit is immediately followed by optimized $\hat{p}$-SQEC and optimized $\hat{q}$-SQEC in this order to reduce the variances, where we introduce these optimized SQECs with variance adjustment in Sec.~\ref{sec:optimzed_sqec}.}
\end{figure}

To improve the threshold, our protocol uses techniques based on post-selection, namely, highly reliable measurements (HRMs) shown in Fig.~\ref{fig:hrm}, and the use of the $7$-qubit code as the error-detecting code rather than the error-correcting code as explained in Sec.~\ref{sec:qec_gkp}.
HRMs decrease misidentification of a bit-valued outcome in measuring a GKP qubit at the expense of probabilistically discarding the generated building blocks.
Let $p\in\left(0,1\right)$ be a lower bound of the success probability of an HRM\@.
If HRMs were used in generating level-$L$ building blocks, the joint probability of succeeding in all the $O\left(7^L\right)$ HRMs would be lower bounded by a probability of order
\begin{equation}
\label{eq:super_poly_small_probability}
  O\left(p^{\polylog\left(\frac{DN}{\epsilon}\right)}\right),
\end{equation}
which would decrease super-polynomially in terms of $D$ and $N$.
As discussed in Sec.~\ref{sec:overhead},
the super-polynomially small success probability~\eqref{eq:super_poly_small_probability} would incur a super-polynomial overhead cost.
Our protocol avoids this super-polynomial overhead by using HRMs only a \textit{constant} number of times at an early stage of concatenation levels $0,\ldots,L_0$ in generating the building blocks,
where $L_0\in\left\{0,1,\ldots\right\}$ is a constant that we set in our numerical simulation in Sec.~\ref{sec:threshold} of our protocol
\begin{equation}
  L_0=4.
\end{equation}
At the rest of the concatenation levels $L_0+1,\ldots,L$, we never use HRMs\@.
In the same way, at concatenation levels $0,\ldots,L_0$, we use the $7$-qubit code as the error-detecting code that can detect up to two errors, by discarding the state if any one or two errors are detected in measuring the stabilizers~\eqref{eq:syndromes} of the $7$-qubit code.
At the rest of the concatenation levels $L_0+1,\ldots,L$, we use the $7$-qubit code as the error-correcting code that can correct up to one error.
Since the parameter $L_0$ is a constant chosen independently of $N$, $D$, and $\epsilon$,
the number of the post-selections in the protocol at the early stage of concatenation levels $0,\ldots,L_0$ is bounded by a constant.
Thus, in order for the post-selections at levels $0,\ldots,L_0$ to succeed at least once, it suffices to repeat them a constant number of times in expectation; that is, the overhead arising from these post-selections is bounded by a constant.
Note that $L_0$ is chosen depending on the squeezing level of the GKP code so that the threshold of QEC can be improved compared to the protocol without the post-selections, as we will discuss in Sec.~\ref{sec:threshold}.

\textbf{Preparation of $\Ket{G^{(0)}}$ from GKP codewords} (Fig.~\ref{fig:initial}):
At the concatenation level $0$, Protocol~\ref{alg:preparation} in lines~6 prepares level-$0$ building-block states $\Ket{G^{\left(0\right)}}$ of physical GKP qubits using Gaussian operations and GKP qubits prepared initially in $\Ket{0}$.
The graph state $\Ket{G^{\left(0\right)}}$ is generated using a circuit illustrated in Fig.~\ref{fig:initial}.
This circuit begins with initializing GKP qubits in $\Ket{+}=H\Ket{0}$ using the initial GKP state $\Ket{0}$ and the Gaussian operations summarized in Sec.~\ref{sec:gkp}.
Then, $CZ$ gates are implemented by Gaussian operations.
Each application of the gates in the circuit is immediately followed by the optimized $\hat{p}$-SQEC and the optimized $\hat{q}$-SQEC in this order to reduce the variances, where we have introduced these optimized SQECs with variance adjustment in Sec.~\ref{sec:optimzed_sqec}.

Notice that in the circuit of Fig.~\ref{fig:initial}, the second and third qubits are more prone to bit- and phase-flip errors of GKP qubits caused by the $CZ$ gates compared to the first and the fourth qubits.
This is because two $CZ$ gates act on the second and third qubits, while only one $CZ$ gate act on the first and the fourth qubits.
Thus, to decrease propagation of these errors of the GKP qubits, we perform the $CZ$ gate on the second and third qubits before that on the first and second qubits and that on the third and fourth qubits, as in the circuit of Fig.~\ref{fig:initial}.

\begin{figure}[t]
    \centering
    \includegraphics[width=3.4in]{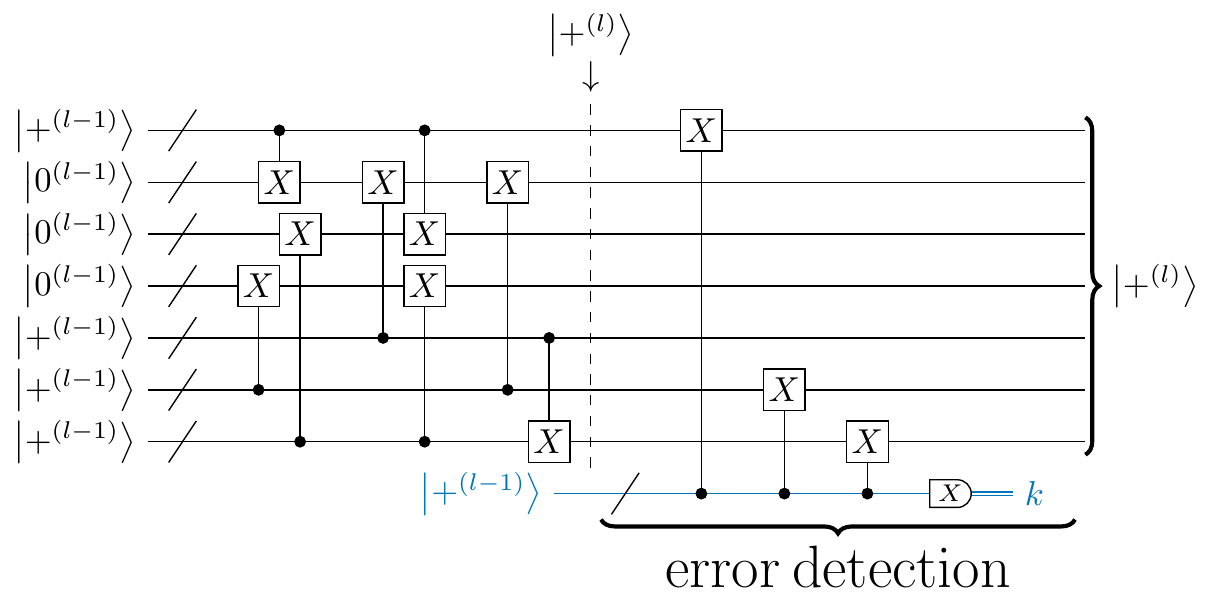}
    \includegraphics[width=3.4in]{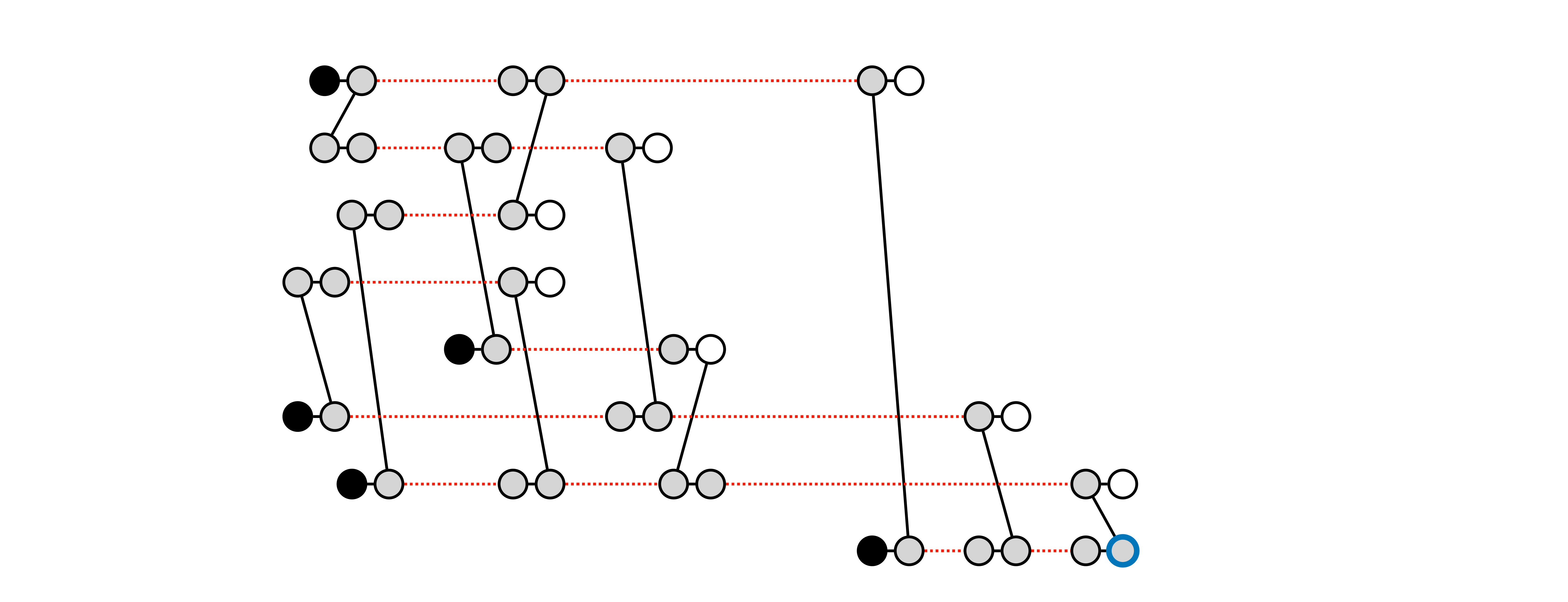}
  \caption{\label{fig:preparation_7qubit_plus_optimized}A quantum circuit using error detection once for preparing a level-$l$ logical qubit in $\Ket{+}$, denoted by $\Ket{+^{\left(l\right)}}$ for $l\in\left\{L_0,\ldots,L\right\}$. This circuit is implemented using multiple level-$\left(l-1\right)$ building blocks $\Ket{G^{\left(l-1\right)}}$ later prepared in Fig.~\ref{fig:preparation_graph_state}, by the measurement patterns in Fig.~\ref{fig:blocks}. The blue wire in the circuit is an auxiliary qubit whose measurement outcome is used for error detection; that is, if the measurement outcome is $k=1$, then the output state of this circuit is discarded. The circuit is designed so that even if any one of the gates suffers from an $X$ or $Z$ error, the output state in the case of $k=0$ should include at most one error. For clarity, the measurement patterns in this case are shown in the lower part, where the notations are the same as those in Fig.~\ref{fig:blocks}, and the blue-bordered vertex is the qubit whose logical bit value of the measurement outcome decided by the AQEC in Fig.~\ref{fig:analog_qec} is used for the error detection. Note that a part of $\Ket{G^{\left(l-1\right)}}$s can be used as $\Ket{+^{\left(l-1\right)}}$s and $\Ket{0^{\left(l-1\right)}}$s in the circuit to be implemented.}
\end{figure}

\begin{figure}[t]
    \centering
    \includegraphics[width=3.4in]{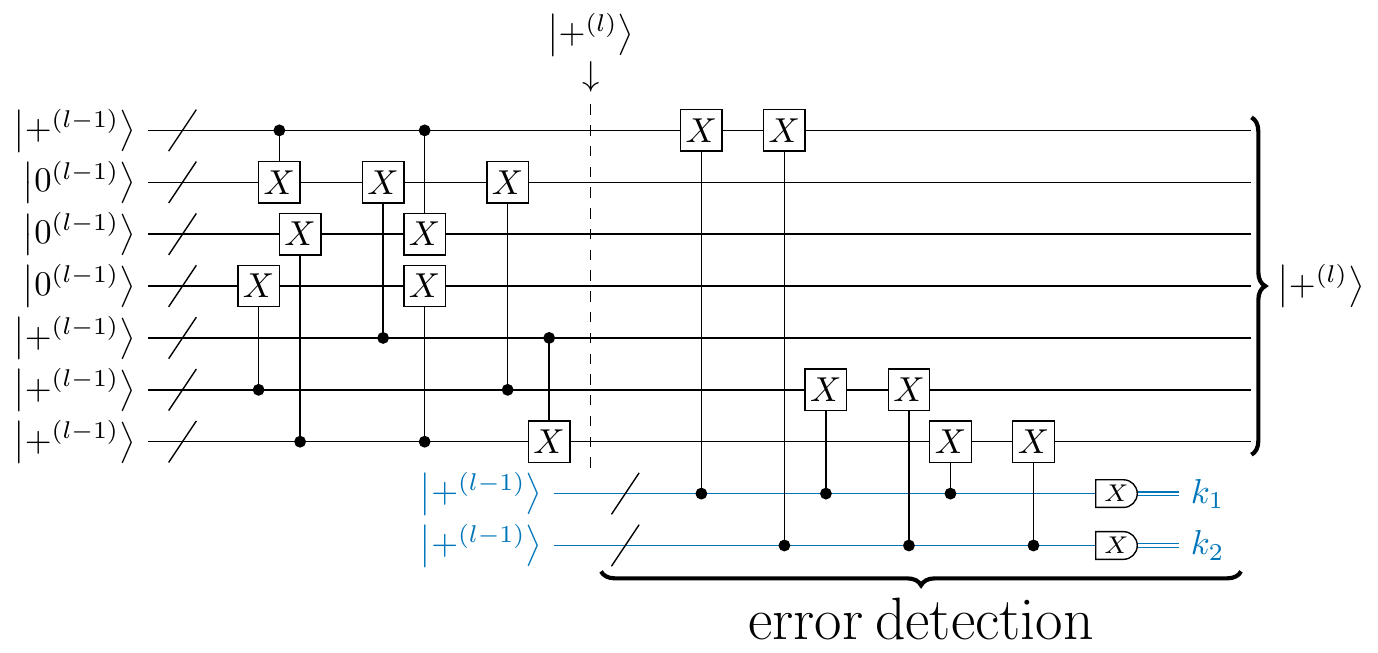}
    \caption{\label{fig:preparation_7qubit_plus_optimized2}A quantum circuit using error detection twice for preparing a level-$l$ logical qubit in $\Ket{+}$, denoted by $\Ket{+^{\left(l\right)}}$ for $l\in\left\{1,\ldots,L_0-1\right\}$. The circuit is implemented in the same way as Fig.~\ref{fig:preparation_7qubit_plus_optimized} by the measurement patterns in Fig.~\ref{fig:blocks}. The blue wires in the circuit are auxiliary qubits whose measurement outcomes are used for error detection; that is, if $k_1=1$ or $k_2=1$, then the output state of this circuit is discarded. The circuit is designed so that even if any two of the gates suffer from $X$ or $Z$ errors, the output state in the case of $k_1=k_2=0$ should include at most two errors.}
\end{figure}

\textbf{Preparation of $\Ket{+^{\left(l\right)}}$} (Fig.~\ref{fig:preparation_7qubit_plus_optimized} using error detection once, and Fig.~\ref{fig:preparation_7qubit_plus_optimized2} using error detection twice):
At the concatenation level $l\in\left\{L_0,L_0+1\ldots,L\right\}$, Protocol~\ref{alg:preparation} in lines~9 and~23 prepares a level-$l$ state $\Ket{+^{\left(l\right)}}$, using error detection \textit{once}, from $\Ket{G^{\left(l-1\right)}}$, which we here assume has already been prepared in the protocol.
We use $\Ket{+^{\left(l\right)}}$ for SAQECs in performing the measurement patterns in Fig.~\ref{fig:blocks}, while $\Ket{+^{\left(l\right)}}$ will also be used for preparing $\Ket{G^{(l)}}$ in Fig.~\ref{fig:preparation_graph_state}.
Based on a low-overhead circuit for preparing a codeword of the $7$-qubit code in Ref.~\cite{G6}, $\Ket{+^{\left(l\right)}}$ is prepared using a circuit illustrated in Fig.~\ref{fig:preparation_7qubit_plus_optimized}, which is implemented as shown in the lower part of Fig.~\ref{fig:preparation_7qubit_plus_optimized} using the measurement patterns in Fig.~\ref{fig:blocks}.
While the circuit starts from $\Ket{+^{\left(l-1\right)}}$s and $\Ket{0^{\left(l-1\right)}}$s,
we can use a part of $\Ket{G^{\left(l-1\right)}}$s as $\Ket{+^{\left(l-1\right)}}$s and $\Ket{0^{\left(l-1\right)}}$s in the circuit to be implemented.
Also at the concatenation level $l\in\left\{0,\ldots,L_0-1\right\}$, Protocol~\ref{alg:preparation} in line~11 prepares a level-$l$ state $\Ket{+^{\left(l\right)}}$, using error detection \textit{twice}, from $\Ket{G^{\left(l-1\right)}}$.
To achieve the preparation of $\Ket{+^{\left(l\right)}}$ using error detection twice,
we construct a circuit illustrated in Fig.~\ref{fig:preparation_7qubit_plus_optimized2}.
Whereas the circuit in Fig.~\ref{fig:preparation_7qubit_plus_optimized2} is similar to that in Fig.~\ref{fig:preparation_7qubit_plus_optimized} based on Ref.~\cite{G6}, our contribution here is to design the error detection in Fig.~\ref{fig:preparation_7qubit_plus_optimized2} so that not only one but also two errors can be detected, as explained in the following.

In the circuits of Figs.~\ref{fig:preparation_7qubit_plus_optimized} and~\ref{fig:preparation_7qubit_plus_optimized2}, we discard the output state if any error is detected; that is, $k=1$ in the circuit of Fig.~\ref{fig:preparation_7qubit_plus_optimized}, and $k_1=1$ or $k_2=1$ in the circuit of Fig.~\ref{fig:preparation_7qubit_plus_optimized2}.
Each gate in the circuits of Figs.~\ref{fig:preparation_7qubit_plus_optimized} and~\ref{fig:preparation_7qubit_plus_optimized2}, implemented using the measurement patterns in Fig.~\ref{fig:blocks}, may suffer from level-$(l-1)$ bit- and phase-flip errors, \textit{i.e.}, $X$ and $Z$ errors.
If an $X$ or $Z$ error occurs on a qubit in the circuits, the error may propagate to other qubits through the two-qubit gates in the circuits according to the commutation relations
\begin{align}
  \label{eq:x1}
  \includegraphics[width=3.4in]{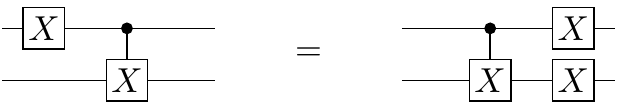}\\
  \label{eq:x2}
  \includegraphics[width=3.4in]{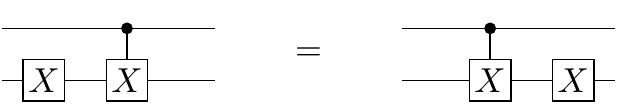}\\
  \label{eq:x3}
  \includegraphics[width=3.4in]{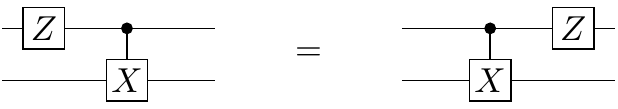}\\
  \label{eq:x4}
  \includegraphics[width=3.4in]{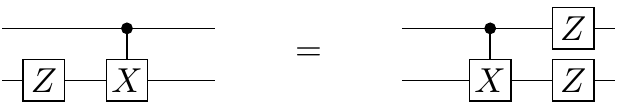},
\end{align}
which may cause $X$ and $Z$ errors on multiple qubits at the end of the circuits.
We use the circuit in Fig.~\ref{fig:preparation_7qubit_plus_optimized} when the level-$l$ $7$-qubit code for the output state $\Ket{+^{(l)}}$ is to be used as the error-correcting code that can correct up to one $X$ error and one $Z$ error.
In the circuit of Fig.~\ref{fig:preparation_7qubit_plus_optimized},
the preparation of $\Ket{+^{(l)}}$ is followed by one error detection, so that even if the gates in the circuit suffer from at most one $X$ error and at most one $Z$ error, the state at the end of the circuit should include at most one-qubit $X$ and $Z$ errors, which we can correct by the $7$-qubit code.
Similarly,
we use the circuit in Fig.~\ref{fig:preparation_7qubit_plus_optimized2} when the level-$l$ $7$-qubit code for the output state $\Ket{+^{(l)}}$ is to be used as the error-detecting code that can detect up to two $X$ errors and two $Z$ errors, by discarding the state when any error is detected as summarized in Sec.~\ref{sec:qec_gkp}.
In the circuit of Fig.~\ref{fig:preparation_7qubit_plus_optimized2},
the preparation of $\Ket{+^{(l)}}$ is followed by two error detections, so that even if the gates in the circuit suffer from at most \textit{two} $X$ errors and at most \textit{two} $Z$ errors in total,
the state at the end of the circuit should include at most \textit{two-qubit} $X$ and $Z$ errors, which we can \textit{detect} by the $7$-qubit code.
Note that as long as the squeezing level of the GKP qubits is larger than the threshold, the probability of detecting an error is expected to approach zero doubly exponentially as the concatenation level $l$ increases, and hence, the post-selection in the error detection of Figs.~\ref{fig:preparation_7qubit_plus_optimized} and~\ref{fig:preparation_7qubit_plus_optimized2} does not affect asymptotic scaling of the overhead.

In particular, to show these fault-tolerant properties of the circuits in Figs.~\ref{fig:preparation_7qubit_plus_optimized} and~\ref{fig:preparation_7qubit_plus_optimized2} mathematically,
we use the following definition of fault tolerance~\cite{G,Chamberland2019faulttolerantmagic}.
\begin{definition}[\label{def:fault_tolerance_correction}Fault tolerance in terms of quantum error correction]
In terms of the error correction,
a state-preparation protocol is said to be $1$-fault-tolerant for $X$ errors if the following conditions are satisfied:
\begin{itemize}
  \item if there are $s$ $X$ errors in the protocol for $s\leqq 1$, then the errors in the state prepared by the protocol have at most weight $s$;
  \item if there are $s$ $X$ errors in the protocol for $s\leqq 1$, then after the protocol, by performing an error-recovery (\textit{i.e.}, decoding) operation that is free from errors, the state prepared by the protocol with the errors is transformed into the error-free state that is supposed to be prepared.
\end{itemize}
The $1$-fault tolerance for $Z$ errors is defined in the same way as the $1$-fault tolerance for $X$ errors by replacing $X$ with $Z$.
If a state-preparation protocol is $1$-fault-tolerant for $X$ errors and also $1$-fault-tolerant for $Z$ errors, then the protocol is said to be $1$-fault-tolerant.
\end{definition}
In terms of this definition, the circuit in Fig.~\ref{fig:preparation_7qubit_plus_optimized} is $1$-fault-tolerant as shown in Ref.~\cite{G6}.
In the same way, the circuit in Fig.~\ref{fig:preparation_7qubit_plus_optimized2} is also $1$-fault-tolerant,
and moreover,
the circuit in Fig.~\ref{fig:preparation_7qubit_plus_optimized2} followed by the error detection using the $7$-qubit code as the error-detecting code can detect \textit{two} errors in a fault-tolerant way, as we show in Appendix~\ref{sec:fault_tolerance_preparation}.
In particular, the circuit in Fig.~\ref{fig:preparation_7qubit_plus_optimized2} has the following fault-tolerant property for two errors in terms of the error detection.
\begin{definition}[\label{def:fault_tolerance_detection}Fault-tolerant property in terms of quantum error detection]
In terms of the error detection,
a state-preparation protocol is said to have a fault-tolerant property for two errors if the following conditions are satisfied for both $X$ errors and $Z$ errors:
\begin{itemize}
  \item if there are $s$ $X$ ($Z$) errors in the protocol for $s\leqq 2$, then the errors in the state prepared by the protocol have at most weight $s$;
  \item if there are $s$ $X$ ($Z$) errors in the protocol for $s\leqq 2$, then after the protocol, by performing an error-detecting operation that is free from errors, the state prepared by the protocol with the errors is transformed into the error-free state that is supposed to be prepared, which should be satisfied conditioned on the success of the post-selection in the error-detecting operation.
\end{itemize}
\end{definition}

\begin{figure*}[t]
    \centering
    \includegraphics[width=7.0in]{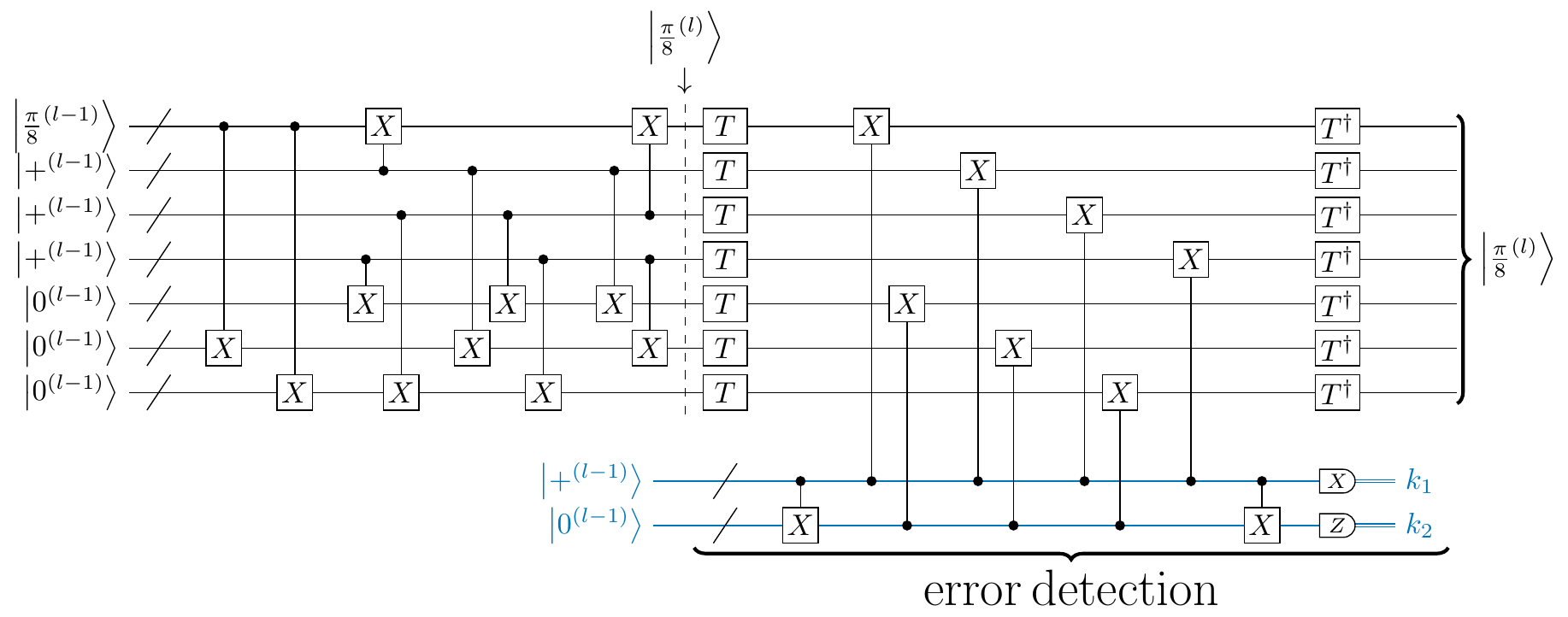}
  \caption{\label{fig:preparation_magic}A quantum circuit using error detection once for preparing a level-$l$ logical qubit in $\Ket{\frac{\pi}{8}}$, denoted by $\Ket{\frac{\pi}{8}^{\left(l\right)}}$ for $l\in\left\{L_0,\ldots,L\right\}$. This circuit is implemented using multiple level-$\left(l-1\right)$ building blocks $\Ket{G^{\left(l-1\right)}}$ later prepared in Fig.~\ref{fig:preparation_graph_state}, and $\Ket{\frac{\pi}{8}^{(l-1)}}$, by the measurement patterns in Fig.~\ref{fig:blocks}, where $\Ket{\frac{\pi}{8}^{(l-1)}}$ is used to perform $T$ gates and $T^\dag$ gates by implementing the circuit in Fig.~\ref{fig:gate_teleportation}. Note that a part of $\Ket{G^{\left(l-1\right)}}$ can be used as $\Ket{+^{\left(l-1\right)}}$s and $\Ket{0^{\left(l-1\right)}}$s in this circuit, in the same way as the implementation shown in the lower part of Fig.~\ref{fig:preparation_7qubit_plus_optimized}. Each blue wire in the circuit is an auxiliary qubit for error detection in the same way as Fig.~\ref{fig:preparation_7qubit_plus_optimized}; that is, if at least one of the measurement outcomes $k_1$ and $k_2$ is $1$, then the output state of this circuit is discarded. The circuit is designed so that even if any one of the gates suffers from an $X$ or $Z$ error, the output state in the case of $k_1=k_2=0$ should include at most one error.}
\end{figure*}

\begin{figure*}[t]
    \centering
    \includegraphics[width=7.0in]{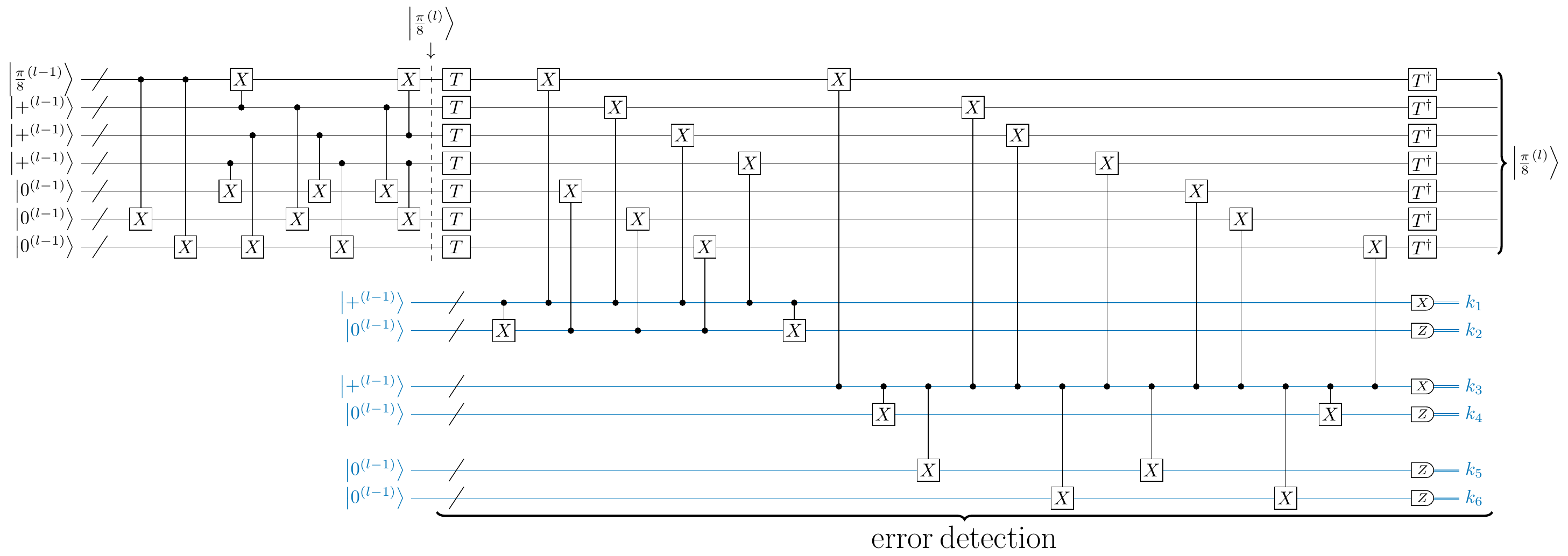}
    \caption{\label{fig:preparation_magic2}A quantum circuit using error detection twice for preparing a level-$l$ logical qubit in $\Ket{\frac{\pi}{8}}$, denoted by $\Ket{\frac{\pi}{8}^{\left(l\right)}}$ for $l\in\left\{1,\ldots,L_0-1\right\}$. This circuit is implemented in the same way as Fig.~\ref{fig:preparation_magic}, by the measurement patterns in Fig.~\ref{fig:blocks}. Each blue wire in the circuit is an auxiliary qubit for error detection in the same way as Fig.~\ref{fig:preparation_magic}; that is, if at least one of the measurement outcomes $k_1,\ldots,k_6$ is $1$, then the output state of this circuit is discarded. The circuit is designed so that even if any two of the gates suffer from $X$ or $Z$ errors, the output state in the case of $k_1=\cdots=k_6=0$ should include at most two errors.}
\end{figure*}

\textbf{Preparation of $\Ket{\frac{\pi}{8}^{\left(l\right)}}$} (Fig.~\ref{fig:preparation_magic} using error detection once, and Fig.~\ref{fig:preparation_magic2} using error detection twice):
At the concatenation level $l\in\left\{L_0,L_0+1\ldots,L\right\}$, Protocol~\ref{alg:preparation} in lines~14 and~24 prepares a level-$l$ magic state $\Ket{\frac{\pi}{8}^{\left(l\right)}}$ of the GKP $7$-qubit code, using error detection \textit{once}, from $\Ket{G^{\left(l-1\right)}}$ and $\Ket{\frac{\pi}{8}^{\left(l-1\right)}}$, which we here assume have already been prepared in the protocol.
Based on a low-overhead circuit for preparing a magic state of the $7$-qubit code in Ref.~\cite{G6}, we prepare $\Ket{\frac{\pi}{8}^{\left(l\right)}}$ using a circuit illustrated in Fig.~\ref{fig:preparation_magic}, which is implemented using the measurement patterns in Fig.~\ref{fig:blocks} in the same way as the preparation of $\Ket{+^{\left(l\right)}}$ in Fig.~\ref{fig:preparation_7qubit_plus_optimized}.
Note that the circuit in Fig.~\ref{fig:preparation_magic} has the same number of gates as an equivalent circuit used in Refs.~\cite{Chamberland2019faulttolerantmagic,Chamberland2020} based on the technique of flag qubits~\cite{C3,C4,R9,Chao2019}, but has a shorter depth than that in Ref.~\cite{Chamberland2019faulttolerantmagic}, which contributes to reducing errors.
While the circuit starts with $\Ket{+^{\left(l-1\right)}}$s and $\Ket{0^{\left(l-1\right)}}$s,
we can use a part of $\Ket{G^{\left(l-1\right)}}$s as $\Ket{+^{\left(l-1\right)}}$s and $\Ket{0^{\left(l-1\right)}}$s in the circuit to be implemented.
The $T$ gate in the circuit is implemented using the circuit for state injection shown in Fig.~\ref{fig:gate_teleportation}, where $\Ket{\frac{\pi}{8}^{(l-1)}}$ is used in addition to $\Ket{G^{(l-1)}}$.
Similarly, at the concatenation level $l\in\left\{0,\ldots,L_0-1\right\}$, Protocol~\ref{alg:preparation} in line~16 prepares a level-$l$ GKP magic state $\Ket{\frac{\pi}{8}^{\left(l\right)}}$, using error detection \textit{twice}, from $\Ket{G^{\left(l-1\right)}}$ and $\Ket{\frac{\pi}{8}^{\left(l-1\right)}}$.
To achieve the preparation of $\Ket{\frac{\pi}{8}^{\left(l\right)}}$ using error detection twice,
we construct a circuit illustrated in Fig.~\ref{fig:preparation_magic2}.
Whereas the circuit in Fig.~\ref{fig:preparation_magic2} is similar to that in Fig.~\ref{fig:preparation_magic} based on Ref.~\cite{G6}, our contribution here is to design the error detection in Fig.~\ref{fig:preparation_magic2}, so that not only one but also two errors can be detected as explained in the following.

In the circuits of Figs.~\ref{fig:preparation_magic} and~\ref{fig:preparation_magic2}, we discard the output state if any error is detected; that is, $k_1=1$ or $k_2=1$ in the circuit of Fig.~\ref{fig:preparation_magic}, and $k_1=1,k_2=1,\ldots$, or $k_6=1$ in the circuit of Fig.~\ref{fig:preparation_magic2}.
In the same way as Fig.~\ref{fig:preparation_7qubit_plus_optimized},
the circuit in Fig.~\ref{fig:preparation_magic} is designed so that even if the gates in the circuit suffer from at most one $X$ error and at most one $Z$ error, the state at the end of the circuit should include at most one-qubit errors, which we can correct using the $7$-qubit code as the error-correcting code; in particular, the circuit in Fig.~\ref{fig:preparation_magic2} is $1$-fault-tolerant in the sense of Definition~\ref{def:fault_tolerance_correction}~\cite{G6,Chamberland2019faulttolerantmagic}.
Also,
in the same way as Fig.~\ref{fig:preparation_7qubit_plus_optimized2},
the circuit in Fig.~\ref{fig:preparation_magic2} is designed so that even if the gates in the circuit suffer from at most two $X$ errors and at most two $Z$ errors, the state at the end of the circuit should include at most two-qubit errors, which we can detect using the $7$-qubit code as the error-detecting code;
in particular, the circuit in Fig.~\ref{fig:preparation_magic2} is also $1$-fault-tolerant, and moreover,
the circuit in Fig.~\ref{fig:preparation_magic2} followed by the error detection using the $7$-qubit code has a fault-tolerant property in detecting \textit{two} errors in the sense of Definition~\ref{def:fault_tolerance_detection}, as we show in Appendix~\ref{sec:fault_tolerance_preparation}.
Note that for the same reason as the error detection of $\Ket{+^{\left(l\right)}}$, the post-selection in the error detection of Figs.~\ref{fig:preparation_magic} and~\ref{fig:preparation_magic2} does not affect asymptotic scaling of the overhead, as long as the squeezing level of the GKP qubits is larger than the threshold.

\begin{figure}[t]
    \centering
    \includegraphics[width=2.0in]{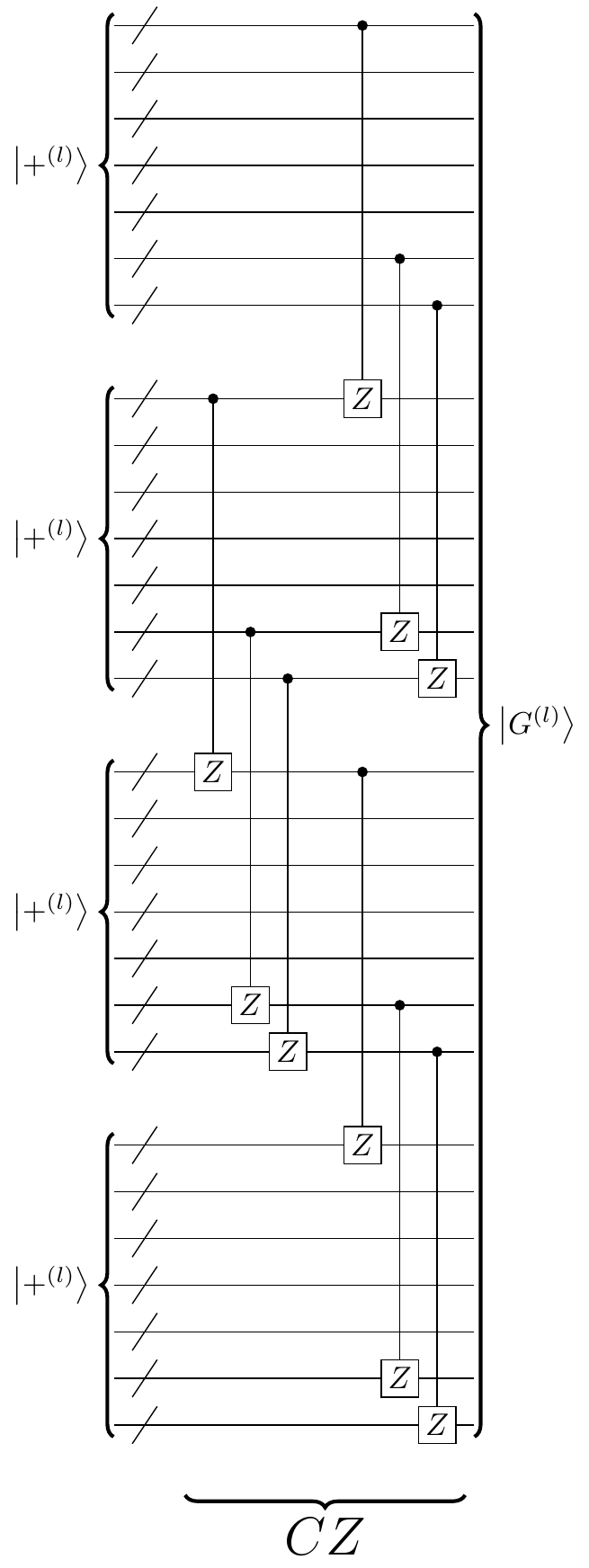}
    \caption{\label{fig:preparation_graph_state}A quantum circuit for preparing a level-$l$ building block $\Ket{G^{\left(l\right)}}$ for $l\in\left\{1,\ldots,L\right\}$. This circuit is implemented using level-$l$ states $\Ket{+^{\left(l\right)}}$ prepared in Figs.~\ref{fig:preparation_7qubit_plus_optimized} and~\ref{fig:preparation_7qubit_plus_optimized2}, and level-$\left(l-1\right)$ building blocks $\Ket{G^{\left(l-1\right)}}$, by the measurement patterns in Fig.~\ref{fig:blocks}. Note that at the logical level of the $7$-qubit code, this circuit implements the circuit in Fig.~\ref{fig:initial}.}
\end{figure}

\begin{figure}[t]
    \centering
    \includegraphics[width=3.4in]{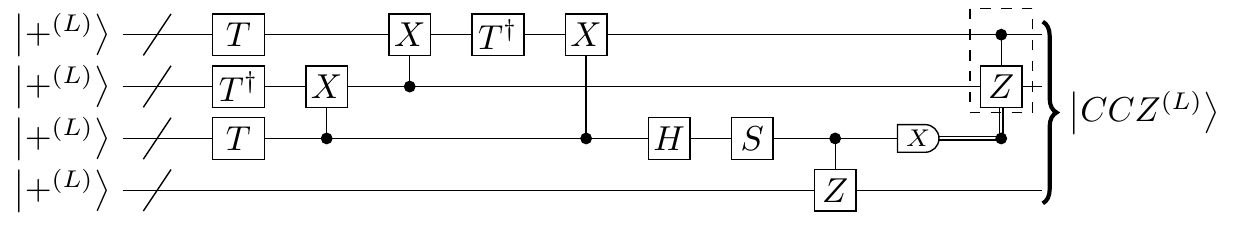}
    \caption{\label{fig:preparation_ccz}A quantum circuit for preparing level-$L$ logical qubits in $\Ket{CCZ}$, denoted by $\Ket{CCZ^{\left(L\right)}}$. This circuit is implemented using level-$L$ states $\Ket{\frac{\pi}{8}^{\left(L\right)}}$ prepared in Fig.~\ref{fig:preparation_magic} and level-$L$ building blocks $\Ket{G^{\left(L\right)}}$ prepared in Fig.~\ref{fig:preparation_graph_state}, by the measurement patterns in Fig.~\ref{fig:blocks}.}
\end{figure}

\textbf{Preparation of $\Ket{G^{\left(l\right)}}$} (Fig.~\ref{fig:preparation_graph_state}):
At a nonzero concatenation level $l\in\left\{1,\ldots,L_0\right\}$,
Protocol~\ref{alg:preparation} in lines~18 and~25 prepares $\Ket{G^{\left(l\right)}}$ from $\Ket{+^{\left(l\right)}}$ and $\Ket{G^{\left(l-1\right)}}$, which we here assume have already been prepared in the protocol.
The graph state $\Ket{G^{\left(l\right)}}$ is prepared from $\Ket{+^{(l)}}$s using a circuit illustrated in Fig.~\ref{fig:preparation_graph_state} by the transversal implementations of logical Clifford gates of the $7$-qubit code shown in Sec.~\ref{sec:qec_gkp}, which we implement using the measurement patterns in Fig.~\ref{fig:blocks}.

\textbf{Preparation of $\Ket{CCZ^{(L)}}$} (Fig.~\ref{fig:preparation_ccz}) \textbf{and $\Ket{G_\mathrm{res}^{(N,D)}}$}:
At the concatenation level $L$, Protocol~\ref{alg:preparation} in line~28 prepares level-$L$ states $\Ket{CCZ^{\left(L\right)}}$ from $\Ket{\frac{\pi}{8}^{\left(L\right)}}$ and $\Ket{G^{\left(L\right)}}$ given above.
Based on a circuit for implementing a $CCZ$ gate using four $T$ gates in Refs.~\cite{S2,C1}, $\Ket{CCZ^{\left(L\right)}}$ is generated by a circuit illustrated in Fig.~\ref{fig:preparation_ccz}, which we implement using the measurement patterns in Fig.~\ref{fig:blocks}.
Note that in the same way as Fig.~\ref{fig:preparation_7qubit_plus_optimized} for preparing $\Ket{+^{(l)}}$,
we can use a part of $\Ket{G^{\left(L\right)}}$s as $\Ket{+^{\left(L\right)}}$s in this circuit.
The $\Ket{CCZ^{\left(L\right)}}$s are used to generate the resource state $\Ket{G_\mathrm{res}^{(N,D)}}$, which can be obtained by adding edges representing $CZ$ gates to hyperedges of $\Ket{CCZ^{\left(L\right)}}$s using the building blocks $\Ket{G^{(L)}}$ and the measurement patterns in Fig.~\ref{fig:blocks}.
Note that we can use $\Ket{\frac{\pi}{8}^{\left(L\right)}}$ and $\Ket{G^{\left(L\right)}}$ to implement the universal gate set, the Clifford and $T$ gates, for preparing not only $\Ket{G_\mathrm{res}^{(N,D)}}$ but an arbitrarily multiqubit entangled state in a fault-tolerant way.

\subsection{\label{sec:complexity_qec}Complexity of fault-tolerant MBQC protocol and resource state preparation}

In this subsection, we show that the overhead cost of the fault-tolerant MBQC protocol given in this section, including the QEC, scales poly-logarithmically in implementing an $N$-qubit $D$-depth quantum circuit within a failure probability $\epsilon$.
In the following, we analyze the number of computational steps in Protocol~\ref{alg:preparation} for the resource state preparation shown in Sec.~\ref{sec:resource_generation}, and that in the fault-tolerant MBQC protocol shown in Sec.~\ref{sec:fault_tolerant_mbqc_protocol}.

First, we analyze the number of computational steps in Protocol~\ref{alg:preparation} for the resource state preparation shown in Sec.~\ref{sec:resource_generation}.
For implementing the MBQC protocol in Theorem~\ref{thm:universality} in a fault-tolerant way within $\epsilon$, the required concatenation level $L$ given by~\eqref{eq:L} shows that the number of physical GKP qubit for representing each level-$L$ logical qubit is
\begin{equation}
  \label{eq:building_block_qubits}
  7^L=O\left(\polylog\left(\frac{DN}{\epsilon}\right)\right).
\end{equation}
A constant number of the level-$L$ logical qubits comprise each level-$L$ building block $\Ket{G^{\left(L\right)}}$, and hence, the number of physical GKP qubits comprising each of $\Ket{+^{(L)}}$, $\Ket{\frac{\pi}{8}^{(L)}}$, and $\Ket{G^{\left(L\right)}}$ is of the same order as~\eqref{eq:building_block_qubits}.
Since the post-selections in Protocol~\ref{alg:preparation} are designed to incur at most a constant overhead cost as discussed in Sec.~\ref{sec:resource_generation}, Protocol~\ref{alg:preparation} can prepare each of $\Ket{+^{(L)}}$, $\Ket{\frac{\pi}{8}^{(L)}}$, and $\Ket{G^{\left(L\right)}}$ within a complexity per level-$L$ state in expectation
\begin{equation}
  O\left(\polylog\left(\frac{DN}{\epsilon}\right)\right).
\end{equation}
Note that the AQEC for a level-$L$ logical qubit can be performed efficiently within a complexity of the same order as~\eqref{eq:building_block_qubits}, as shown in~\eqref{eq:aqec_complexity}.
Due to Theorem~\ref{thm:preparation_complexity}, the required number of $\Ket{\frac{\pi}{8}^{(L)}}$ and $\Ket{G^{\left(L\right)}}$ for preparing the level-$L$ encoded resource state $\Ket{G_{\mathrm{res}}^{\left(N,D\right)}}$ is
\begin{equation}
  O\left(DN\log^2 N\right).
\end{equation}
Therefore, the preparation complexity of the fault-tolerant resource state preparation by Protocol~\ref{alg:preparation} is in expectation
\begin{align}
  &O\left(DN\log^2\left(N\right)\times \polylog\left(\frac{DN}{\epsilon}\right)\right)\nonumber\\
  &=O\left(DN\times \polylog\left(\frac{DN}{\epsilon}\right)\right).
\end{align}

Next,
we analyze the number of computational steps in the fault-tolerant MBQC protocol shown in Sec.~\ref{sec:fault_tolerant_mbqc_protocol}.
In this fault-tolerant MBQC protocol, each homodyne measurement in the MBQC protocol in Theorem~\ref{thm:universality} is replaced with AQEC for a level-$L$ logical qubit, whose cost is of the same order as~\eqref{eq:building_block_qubits}.
Thus, the complexity of this fault-tolerant MBQC protocol is again
\begin{align}
  &O\left(DN\log^2\left(N\right)\times \polylog\left(\frac{DN}{\epsilon}\right)\right)\nonumber\\
  &=O\left(DN\times \polylog\left(\frac{DN}{\epsilon}\right)\right).
\end{align}
Consequently, the overall fault-tolerant MBQC protocol including the fault-tolerant resource state preparation and this fault-tolerant computation using the resource has this complexity, the overhead of which scales poly-logarithmically.

\subsection{\label{sec:threshold}Threshold}

\begin{figure}[t]
    \centering
    \includegraphics[width=3.4in]{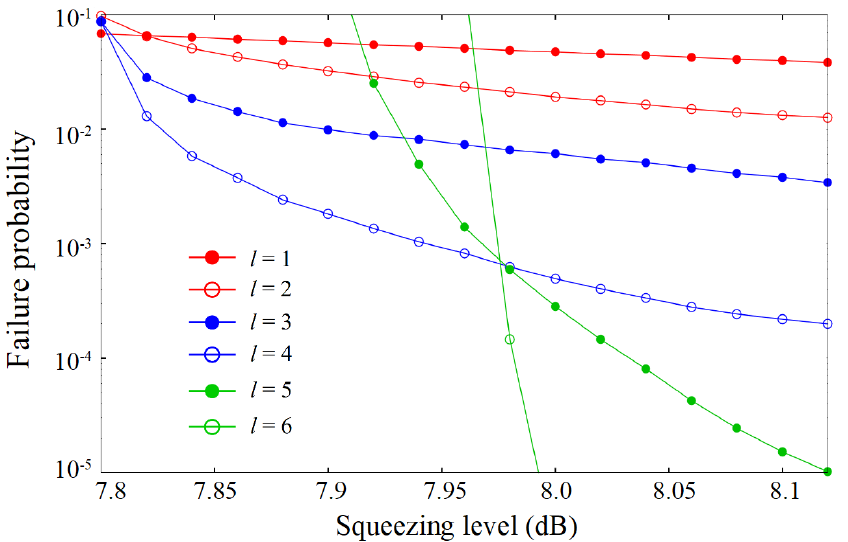}
  \caption{\label{fig:threshold}Simulation results of the failure probability of QEC for performing Clifford gates in the fault-tolerant MBQC protocol shown in Sec.~\ref{sec:fault_tolerant_mbqc_protocol} using the resource state prepared by Protocol~\ref{alg:preparation} in Sec.~\ref{sec:resource_generation}, where the parameter $L_0$ representing the concatenation level using post-selection is chosen as $L_0=4$. The failure probability is plotted as a function of the squeezing level $-10\log_{10}\left(2\sigma^2\right)$ in decibel of the GKP qubits at a concatenation level $l$, where $l$ is $1$ (filled red circles), $2$ (open red circles), $3$ (filled blue circles), $4$ (open blue circles), $5$ (filled green circles), and $6$ (open green circles). The threshold of our protocol is $7.8$~dB, where the parameter $L_0$ of the protocol should be chosen appropriately depending on the squeezing level of the GKP qubit. In particular, the figure shows that it suffices to choose $L_0=4$ when we can realize the GKP code at a squeezing level larger than $7.98$~dB.}
\end{figure}

In this subsection, we show the improvement of our fault-tolerant MBQC protocol in fault tolerance over the existing protocols by numerically calculating the threshold of our fault-tolerant MBQC protocol.

We simulate QEC in the fault-tolerant MBQC protocol shown in Sec.~\ref{sec:fault_tolerant_mbqc_protocol} using the resource state generated by Protocol~\ref{alg:preparation} in Sec.~\ref{sec:resource_generation}, by means of a Monte Carlo method in the same way as Refs.~\cite{F2,F6};
in particular, we simulate the QEC for Clifford gates.
The noise model is summarized in Sec.~\ref{sec:qec_gkp}.
In cases of the $7$-qubit code~\cite{G6,Chamberland2019faulttolerantmagic},
errors of the Clifford gates are usually dominant compared to those of the preparation of the magic state for implementing non-Clifford gates,
and in this paper, following the previous works~\cite{M4,F2,F6,N3} to compare with our work, we use the QEC for the Clifford gates to estimate the threshold.
Note that numerical simulation of AQEC in the preparation of the magic state $\Ket{\frac{\pi}{8}^{(L)}}$ of the level-$L$ GKP $7$-qubit code is computationally hard due to the non-Clifford gates, and hence, further research is needed to establish techniques for numerical simulation to confirm that the threshold for the preparation of the magic state using AQEC is indeed better than that for the Clifford gates.

A feature of our fault-tolerant MBQC protocol is that it switches between two protocols, that is, one with post-selection at earlier concatenation levels $0,\ldots,L_0$ to improve the threshold, and the other without post-selection at later concatenation levels $L_0+1,\ldots,L$ to achieve low overhead.
In contrast to fault-tolerant protocols using post-selection at all the concatenation levels such as Knill's error-correcting $C_4 / C_6$ architecture~\cite{K5,K6}, our protocol can control the trade-off between the failure probability and the overhead of QEC by choosing a parameter $L_0$ of the protocol.
As discussed in Sec.~\ref{sec:qec_gkp},
the threshold of our protocol is defined in terms of the squeezing level of initial GKP qubits;
by definition, as long as the squeezing level of the initial GKP qubits is strictly larger than the threshold, a fault-tolerant protocol can reduce the failure probability to arbitrarily small $\epsilon>0$ by increasing the concatenation level $L$.
The post-selection at earlier levels $0,\ldots,L_0$ in Protocol~\ref{alg:preparation} contributes to reducing errors, and thus to improving the threshold.
By contrast, although achieving lower overhead, a protocol without post-selection at later levels than $L_0$ would have a worse threshold than that with post-selection.
In Protocol~\ref{alg:preparation}, the parameter $L_0$ should be chosen appropriately so that the post-selection at the earlier levels $0,\ldots,L_0$ can reduce the failure probability of QEC at the level $L_0+1$ to this worse threshold of the protocol without post-selection.
Owing to this error reduction at the earlier levels, we can use the low-overhead protocol without post-selection at later levels $L_0+1,\ldots,L$ to reduce the failure probability arbitrarily.
As shown in~\eqref{eq:L},
the required level $L$ of all the concatenations may get large as we decrease $\epsilon$ and increase the size of the quantum circuit to be implemented;
by contrast, the parameter $L_0$ of Protocol~\ref{alg:preparation} is independent of $\epsilon$ and the size of the circuit to be implemented, determined depending on the squeezing level of the GKP code.
Note that as the squeezing level of the initial GKP qubits gets larger owing to progress on quantum technologies, we can choose smaller $L_0$ to reduce the constant factor of the overhead; in the smallest case, \textit{i.e.}, $L_0=0$, the overhead of our protocol reduces to that of the state-of-the-art low-overhead fault-tolerant protocols in Refs.~\cite{G6,Chamberland2019faulttolerantmagic} by construction.
Consequently, with the appropriate choice of the parameter $L_0$,
the threshold of our fault-tolerant protocol is given by that at the earlier concatenation level with post-selection.

In Fig.~\ref{fig:threshold}, we plot the failure probabilities of the QEC at concatenation levels
\begin{equation}
  l=1,2,3,4,5,6
\end{equation}
as a function of the squeezing level~\eqref{eq:squeezing_level}, $-10\log_{10}\left(2\sigma^2\right)$, in decibel of the initial physical GKP qubit~\eqref{eq:gkp_approximate_codeword_0} and~\eqref{eq:gkp_approximate_codeword_1}, where the parameter $L_0$ of the protocol is chosen as $L_0=4$.
The numerical result shows that the threshold value of the squeezing level is
\begin{equation}
  7.8\,\mathrm{dB}, (\sigma=0.288),
\end{equation}
and moreover, if we can realize the squeezing level of GKP qubits larger than $7.98$~dB, ($\sigma=0.282$), then it suffices to choose $L_0=4$ to suppress the failure probability arbitrarily by increasing the concatenation level.
Compared to the threshold value $8.3$ dB of the best existing protocol~\cite{F6} in the same noise model, the threshold value of our protocol is improved by $\approx 0.5$ dB.
This improvement arises from two factors.
First, logical bit- and phase-flip errors in preparing the building blocks of the resource state are reduced by the techniques using the post-selection at the earlier concatenation levels, that is, the HRM and the use of the $7$-qubit code as the error-detecting code.
Second, throughout the preparation of the resource state for our MBQC protocol, the variances of GKP qubits are maintained small by the optimized SQEC algorithm with the variance adjustment that we have developed in Sec.~\ref{sec:optimzed_sqec},
and also by our protocol design for reducing variance accumulation in preparing the resource state deterministically from the building blocks;
in particular, the protocol in Ref.~\cite{F6} applies \textit{two} potentially noisy $CZ$ gates to each GKP qubit to obtain the resource $3$-dimensional cluster state, but our protocol applies only \textit{one} to each in the measurement patterns of Fig.~\ref{fig:blocks}.
Consequently, our protocol can achieve the better threshold and at the same time can control the trade-off between the failure probability and the overhead of QEC,
serving as a starting point of future research on optimizing the parameters of the protocol to balance the fault tolerance and the overhead based on advances of quantum technologies in experiments.

\section{\label{sec:conclusion}Discussion and Conclusion}

We have constructed a protocol for measurement-based quantum computation (MBQC) that can implement any $N$-qubit $D$-depth quantum circuit composed of a computationally universal gate set, Hadamard ($H$) and controlled-controlled-$Z$ ($CCZ$) gates, at a poly-logarithmic (polylog) overhead cost in terms of $N$ and $D$, even if no geometrical constraint is imposed on the $CCZ$ gates in the original circuit to be implemented.
This polylog-overhead MBQC protocol is advantageous over the existing MBQC protocol incurring polynomial overhead in implementing the geometrically nonlocal circuit as discussed in Sec.~\ref{sec:introduction}.
Furthermore, we have developed a fault-tolerant MBQC protocol for implementing this MBQC protocol on photonic architectures at a polylog overhead cost including quantum error correction (QEC), where the threshold of QEC in our protocol outperforms the existing state-of-the-art protocol~\cite{F6} for continuous-variable (CV) quantum computation\@.
Our protocols can be implemented by homodyne detection and Gaussian entangling operations with light sources that can emit a one-qubit computational-basis state $\Ket{0}$ or a magic state $\Ket{\frac{\pi}{8}}$ encoded in optical modes using the Gottesman-Kitaev-Preskill (GKP) code.
Our MBQC protocols establish a new way for implementing quantum algorithms at a polylog overhead cost in a fault-tolerant way, where we can realize not only exponential but also any polynomial quantum speedups, without canceling out the speedups.

To optimize the overhead cost of MBQC in Sec.~\ref{sec:resource_state}, we discover a structure of multipartite quantum entanglement for implementing an arbitrary qubit permutation within a polylog overhead, introducing the idea of the sorting networks to MBQC\@.
The entanglement structures that we have introduced are applicable not only to our resource state but to a wide class of other resource states for MBQC, as discussed in Remark~\ref{rem:generality}.
In recent theoretical progress on MBQC, useful entanglement structures for MBQC, including the hypergraph states~\cite{M2}, have found in states of many-body quantum systems characterized by local-interaction Hamiltonians that are potentially relevant to condensed matter physics, but geometrical constraints of the interactions inevitably cause polynomial overheads in implementing quantum computation.
We open up another perspective on MBQC\@; rather than obtaining the entanglement structures from Hamiltonians with local interactions, we arbitrarily design as optimized entanglement structures as we can.
This discloses what complexity is allowed in MBQC within the law of quantum mechanics and bridges a gap between the theoretical progress on MBQC and the experimental implementation of MBQC especially by means of photonic architectures.
In this direction, it remains an open question whether it is possible or not to achieve constant-overhead MBQC, \textit{i.e.}, a further improvement of our results,
while constant-space-overhead QEC has recently been attracting growing interest~\cite{F9,Gottesman2013} as another aspect of reducing overhead in implementing quantum computation.

To achieve the improved threshold of QEC in Sec.~\ref{sec:fault_tolerant}, we optimize the conventionally used single-qubit QEC for reducing variances of a GKP-encoded qubit by introducing variance-adjustment techniques, and also develop techniques for post-selecting high-fidelity building-block states that can be transformed into our resource state for MBQC\@.
Using our protocol concatenating the GKP code with the $7$-qubit code,
we can prepare any multiqubit entangled state at a logical level, and hence can implement MBQC using any multiqubit resource state in a fault-tolerant way, in contrast to existing fault-tolerant MBQC protocols using specific resource states~\cite{R8,R7,R5,R6,N1,N5}.
The $7$-qubit code has been gaining in importance due to its potential applicability to small- and intermediate-scale quantum computation at a few concatenation levels by means of flag qubits~\cite{C3,C4,R9,Chao2019}.
In contrast, the QEC techniques that we have introduced indicate that also on the large scale, the performance of QEC using the $7$-qubit code in MBQC can outperform that of topological QEC using the surface code combined with the GKP code~\cite{F2,F6,N3,V4,Hanggli2020} in terms of the threshold.
The threshold of our protocol is $7.8$~dB in terms of the required squeezing level of the GKP code, outperforming $8.3$~dB of the best existing protocol~\cite{F6} in the same noise model.
Comparable to this threshold, experiments on generating GKP codewords have demonstrated $7.4$--$9.5$ dB using a superconducting cavity~\cite{C8} and $5.5$--$7.3$ dB using a trapped-ion mechanical oscillator~\cite{F7},
while it remains a technological challenge to transform the GKP codewords of these matter-based systems into those of flying qubits.
Another possible way to realize the GKP code in optical modes is all-optical GKP state preparation using photon-number-resolving (PNR) detectors~\cite{Eaton2019,Tzitrin2019,Yamasaki2019}, circumventing the state transformation from matter to the flying CV mode.
As the squeezing level of the GKP code realizable in experiments gets larger,
we can reduce the constant factor in the overhead of QEC by controlling a parameter of our fault-tolerant MBQC protocol, while further research is needed to optimize the parameters of the protocol based on advances in quantum technologies.
Our results open up a route of taking flight from the geometrically constrained matter that can prepare GKP codewords, and of advancing techniques for implementing MBQC using flying photonic systems, towards establishing architectures for implementing quantum computation without canceling out not only exponential but also polynomial quantum speedups.

\acknowledgments{This work was supported by CREST (Japan Science and Technology Agency) JPMJCR1671 and Cross-ministerial Strategic Innovation Promotion Program (SIP) (Council for Science, Technologyand Innovation (CSTI)). KF acknowledges financial support by donations from Nichia Corporation. YT thanks Yasuhiro Takahashi for helpful discussions. YT is supported by MEXT Quantum Leap Flagship Program (MEXT Q-LEAP) Grant Number JPMXS0118067394.}

\appendix

\section{\label{sec:depth}Constant-depth resource state preparation}

In this appendix, we show that there exists a constant-depth quantum circuit for preparing the hypergraph state $\Ket{G_\mathrm{res}^{\left(N,D\right)}}$ for any $N$ and $D$ represented as a hypergraph $G_\mathrm{res}^{\left(N,D\right)}$ shown in Sec.~\ref{sec:resource_state}, which serves as a resource for the MBQC protocol in Theorem~\ref{thm:universality}.
In particular, we show that any part of $\Ket{G_\mathrm{res}^{\left(N,D\right)}}$ corresponding to a sub-hypergraph of $G_\mathrm{res}^{\left(N,D\right)}$ can be prepared by a constant-depth quantum circuit.
In the context of the photonic MBQC, this constant-depth preparation is preferable because the loss caused by an optical fiber can be upper bounded by a constant.

Given a hypergraph state (including a graph state as a special case), the \textit{preparation depth} refers to the required depth of the circuit for preparing the state, which is characterized by edge coloring~\cite{B22} of the hypergraph representing the hypergraph state.
In the context of the edge coloring, a hypergraph $G=\left(V,E\right)$ is said to be $k$-edge-colorable if $k$ different colors can be assigned to its hyperedges in such a way that no vertex is incident with two different hyperedges $e_1,e_2\in E$ of the same color; \textit{i.e.},
\begin{equation}
    \nexists v\in V\; \text{such that } v\in e_1\; \text{and }v\in e_2.
\end{equation}
Note that given a hypergraph $G$ in general, it is \textsf{NP}-hard (and hence computationally hard) to compute the optimal (\textit{i.e.}, minimal) $k$ such that $G$ is $k$-edge-colorable~\cite{B22}.
A hypergraph state corresponding to a $k$-edge-colorable hypergraph can be prepared by a $(k+1)$-depth quantum circuit from $\Ket{0}\otimes\Ket{0}\otimes\cdots$ consisting of the Hadamard gate and the generalized controlled-$Z$ gate, where the first of the $k+1$ depths corresponds to converting $\Ket{0}\otimes\Ket{0}\otimes\cdots$ to $\Ket{+}\otimes\Ket{+}\otimes\cdots$ by the Hadamard gates, and each of the other $k$ depths corresponds to applying the generalized controlled-$Z$ gates represented as hyperedges having the same color.
In particular, if a hypergraph corresponding to an $M$-qubit hypergraph state is $K$-edge-colorable where $K$ is a constant that is independent of $M$, the preparation depth of this hypergraph state is constant even if $M$ is arbitrarily large.

We show that $\Ket{G_\mathrm{res}^{\left(N,D\right)}}$ can be prepared by a $6$-depth quantum circuit from $\Ket{0}\otimes\Ket{0}\otimes\cdots$.
In the proof, we exploit the periodicity of $G_\mathrm{res}^{\left(N,D\right)}$ for proving that the hypergraph $G_\mathrm{res}^{\left(N,D\right)}$ is $5$-edge-colorable for any $N$ and $D$, while not $4$-edge-colorable, which yields the following.

\begin{theorem}[\label{thm:preparation_depth}Constatnt-depth preparation of resouce hypergraph states]
    For any $N$, $D$, any sub-hypergraph of the hypergraph $G_\mathrm{res}^{\left(N,D\right)}$ defined as~\eqref{eq:g_res} is $5$-edge-colorable but not necessarily $4$-edge-colorable,
    and if we use a gate set $\left\{H,S,CZ,CCZ\right\}$ as shown in~\eqref{eq:H_S_CZ_CCZ},
    a hypergraph state corresponding this sub-hypergraph can be prepared from a fixed product state $\Ket{0}\otimes\Ket{0}\otimes\cdots$ by a $6$-depth quantum circuit.
\end{theorem}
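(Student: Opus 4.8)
The plan is to combine a generic ``depth from edge-coloring'' reduction with an explicit $5$-edge-coloring of the hypergraph $G_\mathrm{res}^{\left(N,D\right)}$ that exploits its periodic block structure, and to observe that a single degree-five vertex already obstructs $4$-edge-colorability.

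First I would reduce everything to edge-coloring. If a hypergraph is $k$-edge-colorable, then the corresponding hypergraph state can be prepared from $\Ket{0}\otimes\Ket{0}\otimes\cdots$ by a $(k+1)$-depth circuit: one depth of Hadamard gates produces $\Ket{+}\otimes\Ket{+}\otimes\cdots$, and then for each of the $k$ color classes one applies, in parallel, the generalized controlled-$Z$ gates $C^{\left|e\right|}Z$ of~\eqref{eq:generalized_cz} for all hyperedges $e$ of that color; by the definition of an edge-coloring these gates act on disjoint sets of qubits and hence constitute one depth layer. Every hyperedge of $G_\mathrm{res}^{\left(N,D\right)}$ has cardinality $2$ or $3$ (ordinary edges and the single $CCZ$ hyperedge inside each $G_{\mathbbm{1},H,CCZ}^{\left(n,d\right)}$ of Fig.~\ref{fig:parts}), so only $CZ$ and $CCZ$ gates appear, and $\left\{H,CZ,CCZ\right\}\subseteq\left\{H,S,CZ,CCZ\right\}$ as in~\eqref{eq:H_S_CZ_CCZ}. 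Finally, any sub-hypergraph $G'\subseteq G_\mathrm{res}^{\left(N,D\right)}$ inherits a valid edge-coloring by restriction, so it suffices to $5$-edge-color $G_\mathrm{res}^{\left(N,D\right)}$ itself; then every such $G'$ is $5$-edge-colorable and its hypergraph state is prepared in depth $5+1=6$.

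Next I would construct an explicit $5$-edge-coloring of $G_\mathrm{res}^{\left(N,D\right)}$ over the palette $\left\{1,\ldots,5\right\}$. The key point is that $G_\mathrm{res}^{\left(N,D\right)}$ is a periodic union of the blocks $G_\textup{depth}^{\left(N,d\right)}$ of~\eqref{eq:one_depth}, each assembled from three fixed gadget types --- the comparator graphs $G_\textup{C}^{\left(n_1,n_2,d\right)}$ tiled along the odd-even merging network inside $G_\textup{S}^{\left(N,d\right)}$, and the gate gadgets $G_{\mathbbm{1},H,CCZ}^{\left(n,d\right)}$ and $G_{\mathbbm{1},H}^{\left(n,d\right)}$ --- glued only along the labeled boundary vertices $v_{n,d}$, $u_{n,d}$, and the wire vertices $w_a$ of Fig.~\ref{fig:parts}. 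I would (i) fix once and for all an internal $5$-coloring of each of the three gadget types, prescribing which colors appear on the hyperedges incident to each boundary vertex; (ii) color the comparator gadgets along each wire of the sorting network so that at every shared wire vertex $w_a$ the hyperedges coming from the comparator on its left and from the one on its right receive different colors --- each wire is a path of comparators, so this is a coloring along a path with $5$ available colors and only a bounded number of active colors per vertex, which always succeeds; and (iii) verify compatibility at the interface vertices $v_{n,d}$ (where a gate gadget of block $d-1$ meets $G_\textup{S}^{\left(N,d\right)}$) and $u_{n,d}$ (where $G_\textup{S}^{\left(N,d\right)}$ meets a gate gadget of block $d$), using the fact that the maximal degree of $G_\mathrm{res}^{\left(N,D\right)}$ is five (established in the proof of Theorem~\ref{thm:universality}) so that each vertex carries at most five hyperedges, and arranging the prescribed boundary patterns so that the color sets used on the two sides of each interface are disjoint. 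Since the blocks are identical for every $d$, once the coloring rule and the interface check are done for one block and one block-to-block interface, the coloring extends to all $D$ blocks by periodicity, with the input vertices of the first block and the output vertices of the last block treated as the obvious boundary cases.

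For the failure of $4$-edge-colorability it suffices to exhibit one sub-hypergraph that needs five colors: take a vertex $v$ of $G_\mathrm{res}^{\left(N,D\right)}$ of degree five --- such a vertex exists because the maximal degree equals five --- and let $G'$ consist of $v$, the five hyperedges incident to $v$, and the vertices they contain; these five hyperedges pairwise share $v$, so no two may receive the same color and $G'$ is not $4$-edge-colorable, giving the ``not necessarily'' clause. The main obstacle is step (iii): designing the internal gadget colorings together with the along-the-wire rule so that all prescribed boundary patterns actually fit at the $v_{n,d}$ and $u_{n,d}$ vertices without ever forcing a sixth color, in particular at degree-five vertices adjacent to a $CCZ$ hyperedge. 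The periodicity collapses this into a finite verification --- one period plus one interface --- which is the crux of the argument; the degree bound of five is what makes five colors possible, and the explicit patterns are what make the gluing go through.
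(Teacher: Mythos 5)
Your proposal follows essentially the same route as the paper: the degree-$5$ vertex rules out $4$-edge-colorability, the $(k+1)$-depth circuit comes from the edge-coloring exactly as you describe, and the $5$-edge-coloring is built by fixing colorings of the three gadgets and gluing them via the periodicity. The only thing you defer --- your step (iii) --- is what the paper supplies explicitly in Fig.~\ref{fig:edge_coloring}, and its choice makes your steps (ii)--(iii) automatic: every comparator gadget receives the \emph{identical} coloring with input-side edges drawn from $\{\text{green},\text{yellow}\}$ and output-side edges from $\{\text{pink},\text{black}\}$, so the color sets meeting at every shared wire vertex and at every $v_{n,d}$/$u_{n,d}$ interface are disjoint by construction and no along-the-path case analysis is needed.
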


\begin{figure}[t]
    \centering
    \includegraphics[width=3.4in]{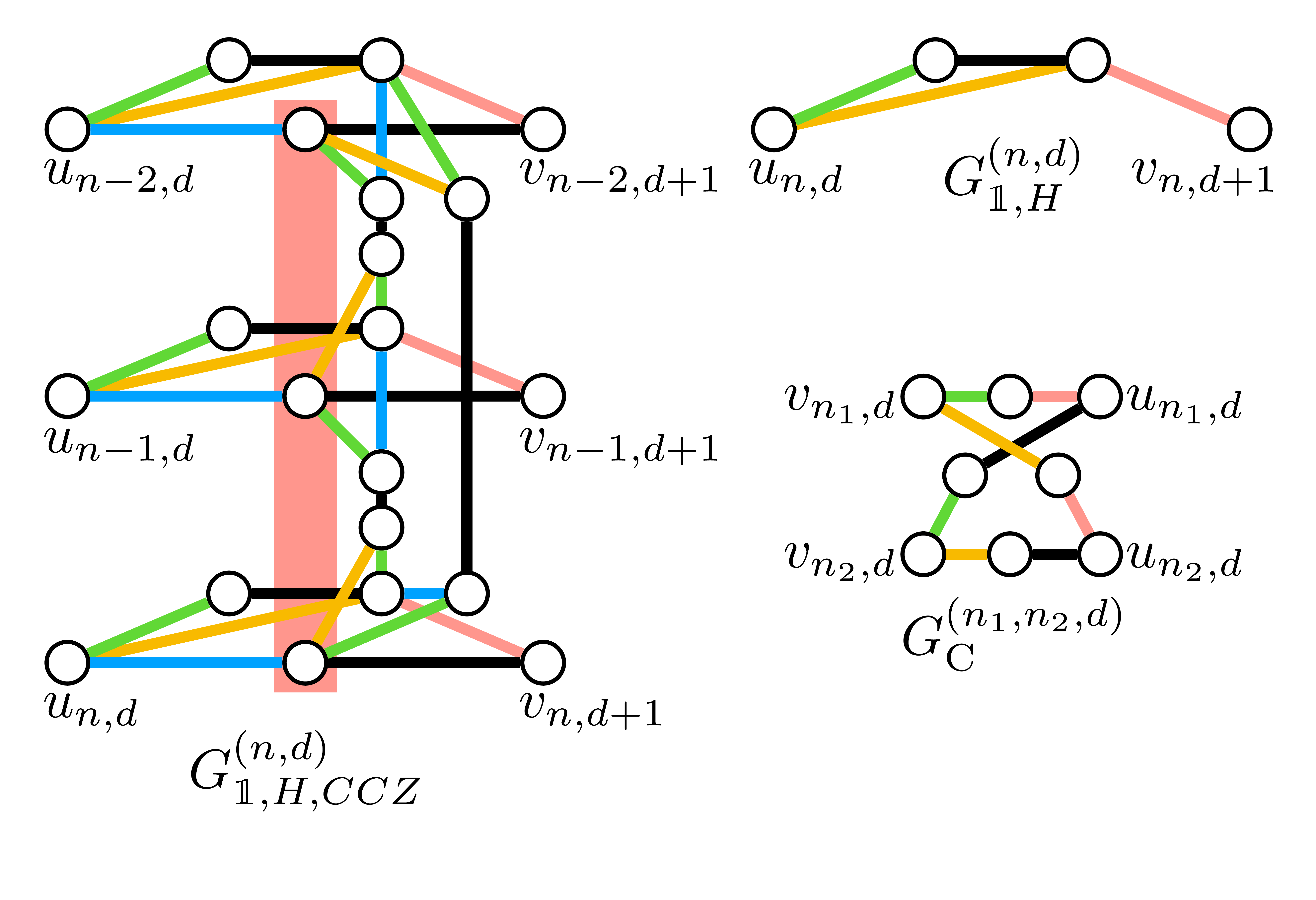}
    \caption{\label{fig:edge_coloring}Edge coloring of sub-hypergraphs of the hypergraph $G_\mathrm{res}^{\left(N,D\right)}$ in Fig.~\ref{fig:resource}. Hyperedges of these hypergraphs $G_{\mathbbm{1},H,CCZ}^{\left(n,d\right)}$, $G_{\mathbbm{1},H}^{\left(n,d\right)}$, and $G_\textup{C}^{\left(n_1,n_2,d\right)}$ in Fig.~\ref{fig:parts} can be colored using at most five colors, namely, green, yellow, blue, pink, and black, and due to the periodicity of $G_\mathrm{res}^{\left(N,D\right)}$, we show in the main text that hyperedges of $G_\mathrm{res}^{\left(N,D\right)}$ can also be colored using these five colors. As shown in Protocol~\ref{alg:const_depth_preparation}, this edge coloring provides a protocol represented by a $6$-depth quantum circuit for preparing the corresponding resource hypergraph state, which serves as a resource for our MBQC protocol in Theorem~\ref{thm:universality}.}
\end{figure}

\begin{algorithm*}[t]
  \caption{\label{alg:const_depth_preparation}Preparation of the hypergraph state $\Ket{G_\mathrm{res}^{\left(N,D\right)}}$ based on the edge coloring of the hypergraph $G_\mathrm{res}^{\left(N,D\right)}$ obtained from Fig.~\ref{fig:edge_coloring}.}
  \begin{algorithmic}[1]
    \State{Apply the Hadamard gate $H$ to each of the $M$ qubits initialized as $\Ket{0}$ in parallel to obtain $\Ket{+}^{\otimes M}$, where $M$ is the number of qubits for the hypergraph state $\Ket{G_\mathrm{res}^{\left(N,D\right)}}$.}
    \For{$c\in\left\{\text{green, yellow, blue, pink, black}\right\}$, where $c$ represents each of the five colors used in the edge coloring of $G_\mathrm{res}^{\left(N,D\right)}$,}
    \Statex{\Comment{Repeated five times in total.}}
    \State{In parallel apply $CZ$ and $CCZ$ gates corresponding to the hyperedges of $G_\mathrm{res}^{\left(N,D\right)}$ colored $c$.}
    \EndFor{}
  \end{algorithmic}
\end{algorithm*}

\begin{proof}[\textbf{Proof}]
  Since the maximum degree of the vertices of the hypergraph $G_\mathrm{res}^{\left(N,D\right)}$, which is $5$, implies that $G_\mathrm{res}^{\left(N,D\right)}$ cannot be $4$-edge-colorable, it suffices to show that $G_\mathrm{res}^{\left(N,D\right)}$ is $5$-edge-colorable.
  Correspondingly, we obtain Protocol~\ref{alg:const_depth_preparation} for preparing $\Ket{G_\mathrm{res}^{\left(N,D\right)}}$, which can be represented by a $6$-depth quantum circuit.
  Note that it suffices to consider preparation of the whole state $\Ket{G_\mathrm{res}^{\left(N,D\right)}}$ since a circuit for this preparation can also be used for any hypergraph state corresponding to a sub-hypergraph of $G_\mathrm{res}^{\left(N,D\right)}$.

  To show that $G_\mathrm{res}^{\left(N,D\right)}$ is $5$-edge-colorable,
  observe the periodicity of $G_\mathrm{res}^{\left(N,D\right)}$ composed of three fixed hypergraphs $G_{\mathbbm{1},H,CCZ}^{\left(n,d\right)}$, $G_{\mathbbm{1},H}^{\left(n,d\right)}$, and $G_\textup{C}^{\left(n_1,n_2,d\right)}$ shown in Fig.~\ref{fig:parts}.
  Hyperedges of these hypergraphs $G_{\mathbbm{1},H,CCZ}^{\left(n,d\right)}$, $G_{\mathbbm{1},H}^{\left(n,d\right)}$, and $G_\textup{C}^{\left(n_1,n_2,d\right)}$ can be colored by five colors, namely, green, yellow, blue, pink, and black, as illustrated in Fig.~\ref{fig:edge_coloring}.
  In this edge coloring, edges of $G_{\mathbbm{1},H,CCZ}^{\left(n,d\right)}$ with which $u_{n-2,d}$, $u_{n-1,d}$, and $u_{n,d}$ are incident are colored green, yellow, and blue, and those with which $v_{n-2,d+1}$ $v_{n-1,d+1}$, and $v_{n,d+1}$ are incident are colored pink and black.
  In the same way, edges of $G_{\mathbbm{1},H}^{\left(n,d\right)}$ with which $u_{n,d}$ is incident are colored green, yellow, and blue, and those with which $v_{n,d+1}$ is incident are colored pink and black.
  As for $G_\textup{C}^{\left(n_1,n_2,d\right)}$, edges of $G_\textup{C}^{\left(n_1,n_2,d\right)}$ with which $v_{n_1,d}$ and $v_{n_2,d}$ are incident are colored green and yellow, and those with which $u_{n_1,d}$ and $u_{n_2,d}$ are incident are colored pink and black.
  By repeating $G_\textup{C}^{\left(n_1,n_2,d\right)}$ having this edge coloring to construct $G_\textup{S}^{\left(N,d\right)}$ in the same way as shown in~\eqref{eq:G_S_N_d} and Fig.~\ref{fig:sort_graph},
  we obtain an edge coloring of $G_\textup{S}^{\left(N,d\right)}$ where edges with which $v_{1,d},\ldots,v_{N,d}$ are incident can be colored green and yellow, and those with which $u_{1,d},\ldots,u_{N,d}$ are incident can be colored pink and black.
  By combining these edge colorings of $G_{\mathbbm{1},H,CCZ}^{\left(n,d\right)}$, $G_{\mathbbm{1},H}^{\left(n,d\right)}$, and $G_\textup{S}^{\left(N,d\right)}$ to construct $G_\textup{depth}^{\left(N,d\right)}$ in the same way as shown in~\eqref{eq:one_depth} and Fig.~\ref{fig:one_depth},
  we obtain an edge coloring of $G_\textup{depth}^{\left(N,d\right)}$ where edges with which $v_{1,d},\ldots,v_{N,d}$ are incident are colored green and yellow, and those with which $v_{1,d+1},\ldots,v_{N,d+1}$ are incident are colored pink and black.
  By repeating $G_\textup{depth}^{\left(N,d\right)}$ with this edge coloring to construct $G_\mathrm{res}^{\left(N,D\right)}$ in the same way as shown in~\eqref{eq:g_res} and Fig.~\ref{fig:resource},
  we obtain an edge coloring of $G_\mathrm{res}^{\left(N,D\right)}$ by the five colors, which proves that $G_\mathrm{res}^{\left(N,D\right)}$ is $5$-edge-colorable.

  This edge coloring of $G_\mathrm{res}^{\left(N,D\right)}$ yields  Protocol~\ref{alg:const_depth_preparation} for preparing $\Ket{G_\mathrm{res}^{\left(N,D\right)}}$.
  This protocol consists of $6$ computational depths, and hence can be represented by a $6$-depth quantum circuit.
\end{proof}

\section{\label{sec:parallelizability}Parallelizability}

\begin{figure*}[t]
    \centering
    \includegraphics[width=7.0in]{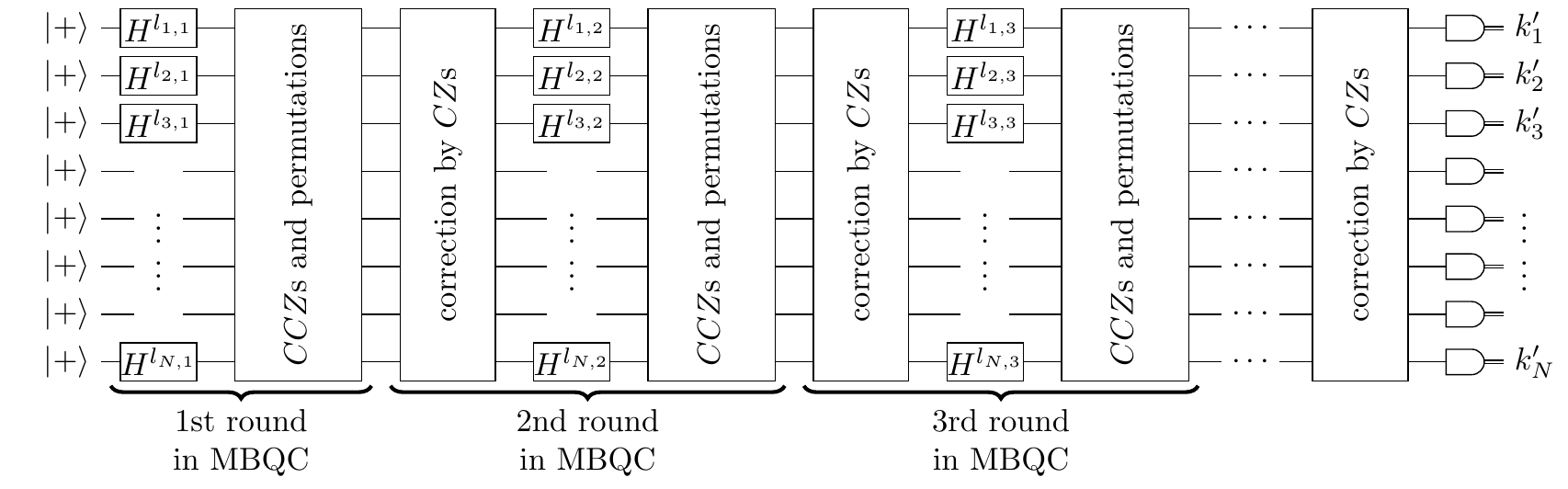}
    \caption{\label{fig:parallelizablity}A quantum circuit representing implementation of a quantum circuit by MBQC using parallel measurements of the resource hypergraph state $\Ket{G_\mathrm{res}^{\left(N,D^\prime\right)}}$ defined as~\eqref{eq:resources} and shown in Fig.~\ref{fig:resource}. In the circuit, $l_{n,d}\in\left\{0,1\right\}$ represents whether the $H$ gate on the $n$th qubit in the depth corresponding to the $d$th measurement round in MBQC is applied or not. In particular, the $H$ gate is applied if $l_{n,d}=1$, and not if $l_{n,d}=0$. We can rewrite an arbitrary $N$-qubit $D$-depth quantum circuit composed of a gate set $\left\{H,CCZ\right\}$ into this form. The required number of rounds of parallel measurements in this MBQC protocol can be smaller than $D$, while the required number of qubits for $\Ket{G_\mathrm{res}^{\left(N,D^\prime\right)}}$ can be larger; \textit{i.e.}, it may hold that $D^\prime> D$, which exhibits the trade-off between spatial and temporal resources in MBQC\@.}
\end{figure*}

We discuss parallelizability of MBQC protocols using our construction of the resource hypergraph state $\Ket{G_\mathrm{res}^{\left(N,D\right)}}$ shown in Sec.~\ref{sec:resource_state}.
In contrast to the MBQC protocol shown in Theorem~\ref{thm:universality} where qubits for $\Ket{G_\mathrm{res}^{\left(N,D\right)}}$ are measured sequentially, we here consider a situation where measurements on different qubits in MBQC can be performed in parallel, as long as the measurements in each measurement round are not conditioned on the outcomes of any other measurements that are performed simultaneously in the same round or will be performed in future rounds.
In this situation of parallel MBQC, there can be a trade-off relation between spatial and temporal resources as we recall in Sec.~\ref{sec:def_parallelizability}.
We discuss in Sec.~\ref{sec:parallel} how this trade-off arises in the case of MBQC using $\Ket{G_\mathrm{res}^{\left(N,D\right)}}$ as a resource.

\subsection{\label{sec:def_parallelizability}Definition of spatial and temporal resources in MBQC}

In MBQC, there are two types of computational resources analogous to those in the context of the space-time trade-off, as discussed in Ref.~\cite{M8}.
One is the number of qubits for a resource state to simulate any $N$-qubit $D$-depth quantum circuit, called a spatial resource, and the other is the required rounds of parallel measurements for MBQC to simulate the circuit, called a temporal resource.
These spatial and temporal resources are in a trade-off relation in the sense that in some cases, protocols for MBQC may use a larger number of qubits than those for a quantum circuit to be simulated,
so that the rounds of parallel measurements in MBQC can be smaller than the depth of the circuit to be simulated~\cite{B5}.
This compression of the measurement rounds in MBQC is possible because a part of quantum circuit can be simulated in MBQC without adaptively changing measurement bases, and corrections of byproduct operators can be collectively performed later.
For example, MBQC using the cluster state for simulating a quantum circuit composed of Clifford gates and a one-qubit non-Clifford gate can simulate any part of the circuit consisting only of Clifford gates without any adaptation of measurement bases, and hence, Clifford gates can be parallelized in this case~\cite{R2}.
Similar parallelizability is also shown for MBQC using a hypergraph state for simulating circuits consisting of $\left\{H,CCZ\right\}$, where $CCZ$ gates can be parallelized~\cite{G4}.
Parallelizability in MBQC is useful for potentially decreasing the required time steps for obtaining outcomes of a computation~\cite{B4,H1,G4} compared to the circuit model, while parallelizable gates depend on resource states for MBQC\@.

\subsection{\label{sec:parallel}Parallelizability and trade-off between spatial and temporal resources in MBQC}

We show a trade-off between spatial and temporal resources in MBQC using $\Ket{G_\mathrm{res}^{\left(N,D\right)}}$.
As discussed in Sec.~\ref{sec:def_parallelizability},
MBQC may exhibit a trade-off between the spatial and temporal resources, \textit{i.e.}, the number of qubits for resources and the number of rounds of measurements in MBQC protocols using these resources.
In cases of MBQC using $\Ket{G_\mathrm{res}^{\left(N,D\right)}}$, we show such a trade-off in parallel implementation of multiple $CCZ$ gates.

While the MBQC protocol that we show in the proof of Theorem~\ref{thm:universality} corrects byproducts after each measurement,
some of these corrections can be delayed and collectively performed later, so that multiple measurements can be performed in parallel without any adaptation conditioned on the other measurement outcomes.
To use the parallel measurements in MBQC for simulating a given $N$-qubit $D$-depth quantum circuit composed of $\left\{H,CCZ\right\}$, instead of rewriting this circuit into the form on the right-hand side in Fig.~\ref{fig:circuit_n_qubit_d_depth} that is used for the MBQC protocol in Theorem~\ref{thm:universality},
we rewrite it into a form shown in Fig.~\ref{fig:parallelizablity}, which consists of parts composed only of the $H$ gates and parts composed only of the $CCZ$ gates.

When $CCZ$ gates are implemented in MBQC using $\Ket{G_\mathrm{res}^{\left(N,D\right)}}$ and the measurement pattern labeled ``$CCZ$ and $CZ$s'' in Fig.~\ref{fig:measurement}, it is possible to delay the corrections of $CZ$ byproducts by performing $Z$ measurements for removing all the qubits represented as the dashed vertices in Fig.~\ref{fig:measurement}, and in such a case, the corrections are performed later by the measurement pattern for adaptively implementing $CZ$ gates, which is labeled ``$CZ$s'' in Fig.~\ref{fig:measurement}.
While the set of local Pauli byproduct operators generated by $\left\{X,Z\right\}$ is closed under the conjugation with $H$ in the sense of the group theory
\begin{align}
    \label{eq:h_z}
    HXH&=Z,\\
    \label{eq:h_x}
    HZH&=X,
\end{align}
the set of all the possible byproduct operators generated by $\left\{X,Z,CZ\right\}$ is not,
and hence, the $CZ$ byproducts are needed to be corrected before applying $H$ gates.
Thus, a part of the circuit composed only of $CCZ$ gates, which can be of an arbitrarily large depth and may include geometrically nonlocal gates that are implemented as geometrically local gates combined with permutations, can be simulated in MBQC using $\Ket{G_\mathrm{res}^{\left(N,D\right)}}$ by only one round of parallel measurements for implementing $CCZ$ gates as well as permutations.
After implementing this part composed of the $CCZ$ gates,
we perform a collective correction of $CZ$ byproducts in the next round of parallel measurements for implementing $CZ$ gates that may be geometrically nonlocal and hence are implemented as geometrically local $CZ$ gates combined with permutations.
Note that the permutations can be implemented in the same way as the MBQC protocol in Theorem~\ref{thm:universality} using a sorting network.
Due to commutation relations shown in~\eqref{eq:cz_z},~\eqref{eq:cz_x},~\eqref{eq:h_z}, and~\eqref{eq:h_x}, corrections of one-qubit Pauli byproducts in implementing $CZ$ and $H$ gates can also be delayed.
Hence, a part of the circuit composed only of $H$ gates after the part composed only of the $CCZ$ gates can be simulated by MBQC in the same round of parallel measurements as the round for correcting $CZ$ byproducts.

Therefore, in rewriting the given circuit into the form shown in Fig.~\ref{fig:parallelizablity},
each layer of the rewritten circuit consists of the corrections of the $CZ$ byproducts conditioned on measurement outcomes in the previous round of implementing the $CCZ$ gates, and the implementations of the $H$ gates, followed by the implementations of the $CCZ$ gates, where MBQC uses one round of parallel measurements per implementation of each layer.
Then using this rewritten circuit in Fig.~\ref{fig:parallelizablity}, we can simulate the original $D$-depth quantum circuit by MBQC that may require a smaller number of rounds of parallel measurements than $D$, where the number of these rounds is determined by that of the layers in the rewritten circuit as shown in Fig.~\ref{fig:parallelizablity}.
Note that similar parallelization of $CCZ$ gates and $\textsc{SWAP}$ gates is also possible in MBQC using other hypergraph states than $\Ket{G_\mathrm{res}^{\left(N,D\right)}}$~\cite{G4}.

As for the required number of resource qubits for this parallelized simulation of the $N$-qubit $D$-depth quantum circuits,
the MBQC protocol in this section may require a hypergraph state $\Ket{G_\mathrm{res}^{\left(N,D^\prime\right)}}$ for some $D^\prime>D$.
This hypergraph state is larger than $\Ket{G_\mathrm{res}^{\left(N,D\right)}}$ in terms of the number of qubits, where the extra qubits corresponding to $\left(D^\prime-D\right)$ repetitions of $\Ket{G_\textup{depth}^{\left(N,d\right)}}$ are used for corrections of $CZ$ byproducts.
Note that in terms of the complexity of MBQC using sequential measurements, $D^\prime$ can be too large to achieve the poly-logarithmic overhead, while we assume in this section that an arbitrary number of measurements can be performed in parallel.
In this regard, this parallelized MBQC protocol requires more spatial resources, that is, the number of the qubits, than that in Theorem~\ref{thm:universality}, for reducing temporal resources in terms of the number of rounds of parallel measurements, exhibiting the trade-off between spatial and temporal resources in MBQC using $\Ket{G_\textup{depth}^{\left(N,d\right)}}$.

\section{\label{sec:verifiability}Vertex coloring}

\begin{figure}[t]
    \centering
    \includegraphics[width=3.4in]{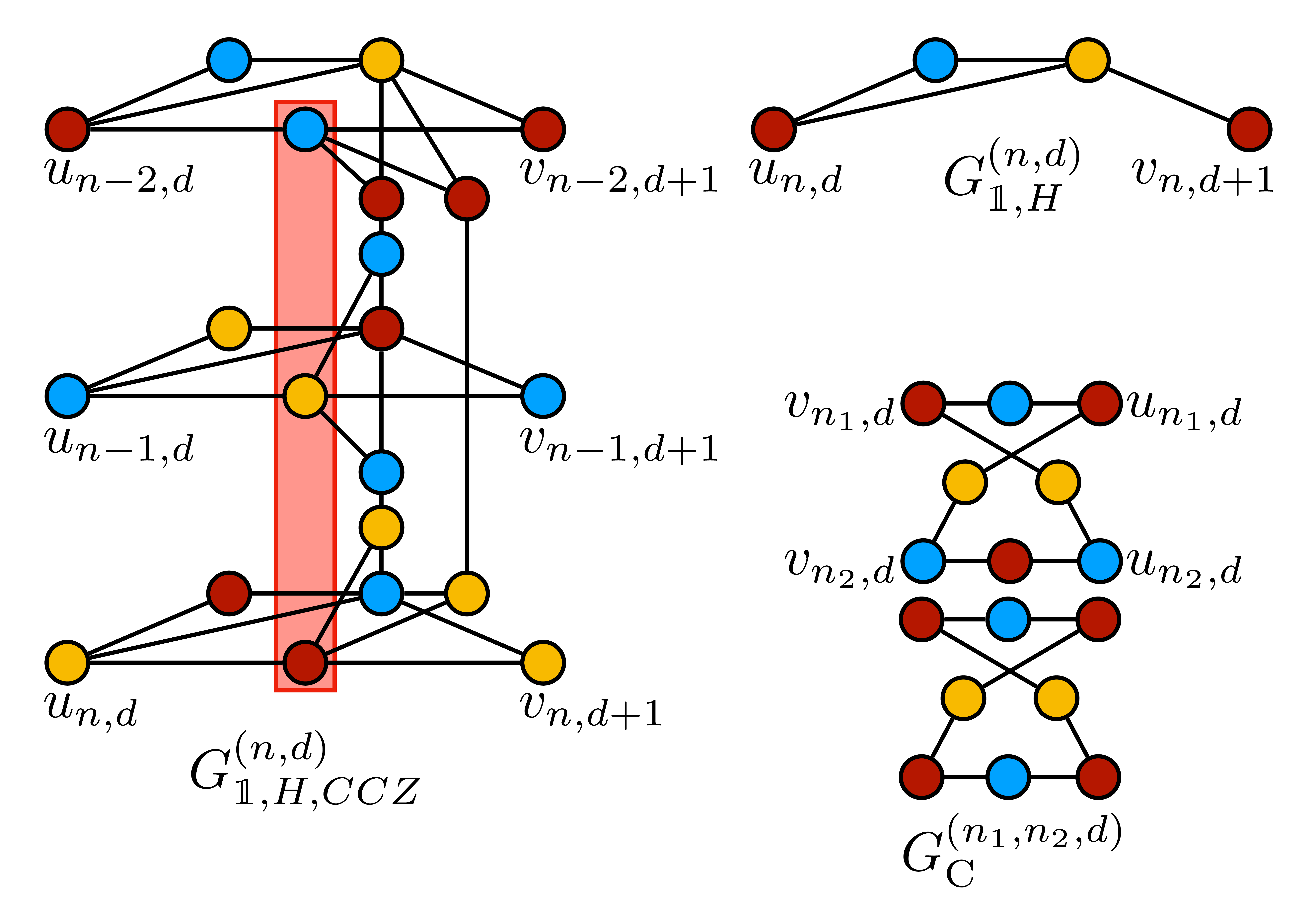}
    \caption{\label{fig:vertex_coloring}Vertex coloring of sub-hypergraphs of the hypergraph $G_\mathrm{res}^{\left(N,D\right)}$ in Fig.~\ref{fig:resource}. Vertex coloring is used in verification of MBQC as discussed in Remark~\ref{rem:verifiability}. Vertices of these hypergraphs $G_{\mathbbm{1},H,CCZ}^{\left(n,d\right)}$, $G_{\mathbbm{1},H}^{\left(n,d\right)}$, and $G_\textup{C}^{\left(n_1,n_2,d\right)}$ in Fig.~\ref{fig:parts} can be colored using three colors, namely, red, blue, and yellow.
      The figure shows a vertex coloring of $G_{\mathbbm{1},H}^{\left(n,d\right)}$ with $u_{n,d}$ and $v_{n,d+1}$ colored red, but in the same way as this vertex coloring, we can obtain vertex colorings of $G_{\mathbbm{1},H}^{\left(n,d\right)}$ with $u_{n,d}$ and $v_{n,d+1}$ colored in the same color, for any color such as blue and yellow.
    In the same way as the vertex colorings of $G_\textup{C}^{\left(n_1,n_2,d\right)}$ in the figure, we can obtain vertex colorings of $G_\textup{C}^{\left(n_1,n_2,d\right)}$ with $v_{n_1,d}$ and $u_{n_1,d}$ colored in the same color and with $v_{n_2,d}$ and $u_{n_2,d}$ colored in the same color, for any colors regardless of whether these two colors are the same or not. Then in the main text, using the periodicity of $G_\mathrm{res}^{\left(N,D\right)}$, we show the vertex coloring of $G_\mathrm{res}^{\left(N,D\right)}$ with these three colors.}
\end{figure}

In this appendix, we show that three colors suffice for the vertex coloring~\cite{B22} of the hypergraph $G_\mathrm{res}^{\left(N,D\right)}$ shown in Sec.~\ref{sec:resource_state} for any $N$ and $D$.
The vertex coloring is used for the verification of MBQC as discussed in Remark~\ref{rem:verifiability}.
A hypergraph $G=\left(V,E\right)$ is said to be $k$-vertex-colorable, or $k$-colorable for short, if $k$ different colors can be assigned to its vertices so that no two vertices $v_1$ and $v_2$ of the same color are the neighbor of each other; \textit{i.e.},
\begin{equation}
  \nexists e\in E\; \text{such that } v_1\in e\; \text{and }v_2\in e.
\end{equation}
The minimum $k$ for which $G$ is $k$-colorable is called the chromatic number of $G$, denoted by $\chi$.
Note that given a hypergraph $G$, it is \textsf{NP}-hard in general (and hence computationally hard) to compute the chromatic number $\chi$ of $G$~\cite{B22}.

We here show that for any $N$ and $D$, the chromatic number $\chi$ of $G_\mathrm{res}^{\left(N,D\right)}$ is $\chi=3$.
The proof exploits the periodicity of $G_\mathrm{res}^{\left(N,D\right)}$ for constructing a vertex coloring of $G_\mathrm{res}^{\left(N,D\right)}$ using three colors, while $G_\mathrm{res}^{\left(N,D\right)}$ cannot be $2$-colorable, which yields the following.

\begin{theorem}[Vertex coloring of hypergraphs for resource hypergraph states]
    For any $N$ and $D$, the chromatic number $\chi$ of the hypergraph $G_\mathrm{res}^{\left(N,D\right)}$ defined as~\eqref{eq:g_res} is
    \begin{equation}
      \chi=3.
    \end{equation}
\end{theorem}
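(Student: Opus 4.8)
The plan is to prove the two inequalities $\chi \geqq 3$ and $\chi \leqq 3$ for the hypergraph $G_\mathrm{res}^{\left(N,D\right)}$ defined as~\eqref{eq:g_res}. The lower bound $\chi \geqq 3$ is the easy direction: a hypergraph is $2$-colorable if and only if it contains no odd cycle among the $2$-vertex hyperedges (edges). One inspects the building-block hypergraphs $G_{\mathbbm{1},H,CCZ}^{\left(n,d\right)}$, $G_{\mathbbm{1},H}^{\left(n,d\right)}$, and $G_\textup{C}^{\left(n_1,n_2,d\right)}$ in Fig.~\ref{fig:parts} and points to a triangle (a $3$-cycle of edges) appearing in one of them --- indeed the circuit identity~\eqref{eq:triangle} already relies on three vertices connected as a cycle, so such a triangle is present inside $G_{\mathbbm{1},H}^{\left(n,d\right)}$ and hence inside $G_\mathrm{res}^{\left(N,D\right)}$. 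A triangle forces $\chi \geqq 3$, so $G_\mathrm{res}^{\left(N,D\right)}$ is not $2$-colorable.

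For the upper bound $\chi \leqq 3$, the approach mirrors the $5$-edge-colorability argument in the proof of Theorem~\ref{thm:preparation_depth}: exploit the periodicity of $G_\mathrm{res}^{\left(N,D\right)}$, which is assembled from the three fixed pieces $G_{\mathbbm{1},H,CCZ}^{\left(n,d\right)}$, $G_{\mathbbm{1},H}^{\left(n,d\right)}$, $G_\textup{C}^{\left(n_1,n_2,d\right)}$ glued along labeled vertices. First I would exhibit explicit $3$-colorings of each of these three pieces, as drawn in Fig.~\ref{fig:vertex_coloring}, using the colors red, blue, yellow. The crucial point is to show these colorings are \emph{compatible at the gluing interfaces}: for $G_\textup{C}^{\left(n_1,n_2,d\right)}$ one wants a $3$-coloring in which $v_{n_1,d}$ and $u_{n_1,d}$ receive the same color and $v_{n_2,d}$ and $u_{n_2,d}$ receive the same color, and this should be achievable for \emph{any} prescribed pair of colors on the two wires (whether or not those two colors coincide); for $G_{\mathbbm{1},H}^{\left(n,d\right)}$ one wants $u_{n,d}$ and $v_{n,d+1}$ forced to (or freely assignable to) a common color; similarly for $G_{\mathbbm{1},H,CCZ}^{\left(n,d\right)}$ one wants the input-boundary vertices $u_{n-2,d},u_{n-1,d},u_{n,d}$ and output-boundary vertices $v_{n-2,d+1},v_{n-1,d+1},v_{n,d+1}$ colored consistently so that a single color class can be propagated. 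Granted these interface properties, one assembles: the comparators with matching wire-colors give a $3$-coloring of $G_\textup{S}^{\left(N,d\right)}$ in which all the $v_{\cdot,d}$ (inputs) and all the $u_{\cdot,d}$ (outputs) carry assignable colors; combining with $G_{\mathbbm{1},H,CCZ}$ and $G_{\mathbbm{1},H}$ along~\eqref{eq:one_depth} yields a $3$-coloring of $G_\textup{depth}^{\left(N,d\right)}$ where the $v_{\cdot,d}$ and $v_{\cdot,d+1}$ layers are colored consistently; then stacking $D$ copies via~\eqref{eq:g_res} and matching the $v_{\cdot,d+1}$ layer of one depth to the $v_{\cdot,d+1}$ layer of the next gives the $3$-coloring of the whole $G_\mathrm{res}^{\left(N,D\right)}$.

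The main obstacle I anticipate is verifying the interface-compatibility claim for $G_\textup{C}^{\left(n_1,n_2,d\right)}$ in the case where both wires must receive the \emph{same} color: one must check, by direct case analysis on the internal $6$-vertex structure of $G_\textup{C}$, that there still exists a proper $3$-coloring with $v_{n_1,d}=u_{n_1,d}=v_{n_2,d}=u_{n_2,d}$ all equal. If some internal vertex were adjacent to all four boundary vertices this could fail, so the argument genuinely depends on the specific graph in Fig.~\ref{fig:parts}; the verification is a finite check but must be done carefully, and it is the step that makes the three-color bound tight rather than an artifact. Once that case and the analogous (easier) checks for $G_{\mathbbm{1},H}$ and $G_{\mathbbm{1},H,CCZ}$ are in hand, the inductive assembly over comparators, then over the three gate-types, then over depths is routine, and combining with $\chi\geqq 3$ gives $\chi=3$.
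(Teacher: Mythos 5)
Your proposal is correct, and for the substantive direction ($\chi\leqq 3$) it takes exactly the paper's route: exploit the periodicity of $G_\mathrm{res}^{\left(N,D\right)}$, exhibit three-colorings of the three building blocks $G_{\mathbbm{1},H,CCZ}^{\left(n,d\right)}$, $G_{\mathbbm{1},H}^{\left(n,d\right)}$, and $G_\textup{C}^{\left(n_1,n_2,d\right)}$ as in Fig.~\ref{fig:vertex_coloring}, verify the interface conditions you name (in particular that $G_\textup{C}^{\left(n_1,n_2,d\right)}$ admits a proper three-coloring with $v_{n_1,d}=u_{n_1,d}$ and $v_{n_2,d}=u_{n_2,d}$ for \emph{any} pair of boundary colors, equal or not), and assemble through $G_\textup{S}^{\left(N,d\right)}$, $G_\textup{depth}^{\left(N,d\right)}$, and the $D$-fold union; the paper's concrete realization propagates red, blue, and yellow along the wires $3m-2$, $3m-1$, $3m$. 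The only divergence is the lower bound: you witness $\chi\geqq 3$ by a triangle of two-vertex edges inside $G_{\mathbbm{1},H}^{\left(n,d\right)}$ (which does exist, via the cycle underlying~\eqref{eq:triangle}), whereas the paper simply observes that any hyperedge among three vertices already forces three distinct colors under its definition of vertex coloring, since all three vertices are pairwise neighbors. Both witnesses are valid, but note that your stated equivalence ``$2$-colorable iff no odd cycle among the $2$-vertex edges'' is false for hypergraphs under this definition --- a lone three-vertex hyperedge has no odd cycle of edges yet is not $2$-colorable --- so you should either drop that characterization or use the hyperedge argument directly; the direction of the implication you actually invoke (triangle implies not $2$-colorable) is unaffected.
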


\begin{proof}[\textbf{Proof}]
    It suffices to prove that $G_\mathrm{res}^{\left(N,D\right)}$ is $3$-colorable, that is, $\chi\leqq 3$, since the hyperedges of $G_\mathrm{res}^{\left(N,D\right)}$ among three vertices yields $\chi\geqq 3$.
    We show a construction of the vertex coloring of $G_\mathrm{res}^{\left(N,D\right)}$ using three colors, namely, red, blue, and yellow, where for any $n=3m$, the vertices labeled 
    \begin{equation}
      v_{3m-2,1},u_{3m-2,1},v_{3m-2,2},u_{3m-2,2},\ldots,v_{3m-2,D+1}
    \end{equation}
    are colored red,
    \begin{equation}
      v_{3m-1,1},u_{3m-1,1},v_{3m-1,2},u_{3m-1,2},\ldots,v_{3m-1,D+1}
    \end{equation}
    are colored blue, and
    \begin{equation}
      v_{3m,1},u_{3m,1},v_{3m,2},u_{3m,2},\ldots,v_{3m,D+1}
    \end{equation}
    are colored yellow.
    In the same way as the proof of Theorem~\ref{thm:preparation_depth},
    recall that $G_\mathrm{res}^{\left(N,D\right)}$ is composed of three fixed hypergraphs $G_{\mathbbm{1},H,CCZ}^{\left(n,d\right)}$, $G_{\mathbbm{1},H}^{\left(n,d\right)}$, and $G_\textup{C}^{\left(n_1,n_2,d\right)}$ shown in Fig.~\ref{fig:parts}.
    Vertices of $G_{\mathbbm{1},H,CCZ}^{\left(n,d\right)}$, $G_{\mathbbm{1},H}^{\left(n,d\right)}$, and $G_\textup{C}^{\left(n_1,n_2,d\right)}$ can be colored by three colors, namely, yellow, blue, and red, as illustrated in Fig.~\ref{fig:vertex_coloring}.
    In the vertex coloring of $G_{\mathbbm{1},H,CCZ}^{\left(n,d\right)}$ in Fig.~\ref{fig:vertex_coloring}, the vertices are colored so that $v_{3m-2,d}$ and $u_{3m-2,d+1}$ can be colored red, $v_{3m-1,d}$ and $u_{3m-1,d+1}$ colored blue, and $v_{3m,d+1}$ and $u_{3m,d+1}$ colored yellow, for any $m$ and $d$.
    The vertex coloring of $G_{\mathbbm{1},H}^{\left(n,d\right)}$ in Fig.~\ref{fig:vertex_coloring} yields those with $u_{n,d}$ and $v_{n,d+1}$ colored in the same color, for any $n$, $d$, and color.
    The vertex colorings of $G_{\mathrm{C}}^{\left(n_1,n_2,d\right)}$ in Fig.~\ref{fig:vertex_coloring} shows those with $v_{n_1,d}$ and $u_{n_1,d}$ colored in the same color and with $v_{n_2,d}$ and $u_{n_2,d}$ colored in the same color, for any $n_1$, $n_2$, $d$, and colors, where these two colors can either be the same or different.
    Then, using the periodicity of $G_\mathrm{res}^{\left(N,D\right)}$ in the same way as the edge coloring in the proof of Theorem~\ref{thm:preparation_depth},
    we obtain a vertex coloring of $G_\mathrm{res}^{\left(N,D\right)}$ by the three colors, which proves that $G_\mathrm{res}^{\left(N,D\right)}$ is $3$-colorable.
\end{proof}

\section{\label{sec:fault_tolerance_preparation}Fault tolerance of state preparation}

\begin{figure*}[t]
    \centering
    \includegraphics[width=7.0in]{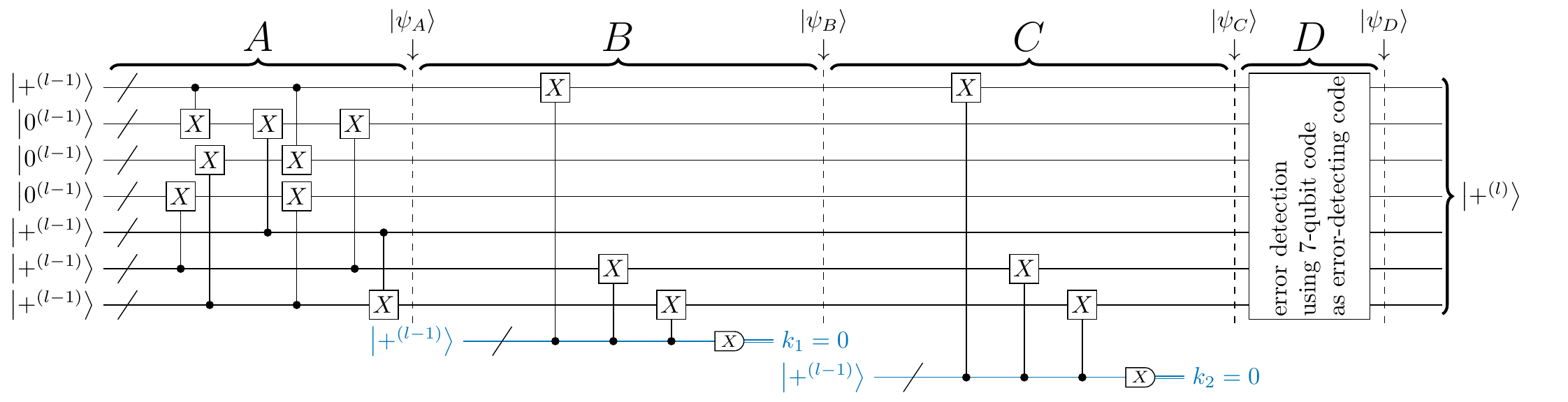}
    \caption{\label{fig:preparation_7qubit_explanation}A quantum circuit equivalent to the circuit in Fig.~\ref{fig:preparation_7qubit_plus_optimized2} followed by the error detection using the 7-qubit code as the error-detecting code. We decompose the circuit into four parts, namely, $A$, $B$, $C$, and $D$, where the states obtained from the parts $A$, $AB$, $ABC$, and $ABCD$ of the circuit are denoted by $\Ket{\psi_A}$, $\Ket{\psi_B}$, $\Ket{\psi_C}$, and $\Ket{\psi_D}$, respectively.}
    \includegraphics[width=7.0in]{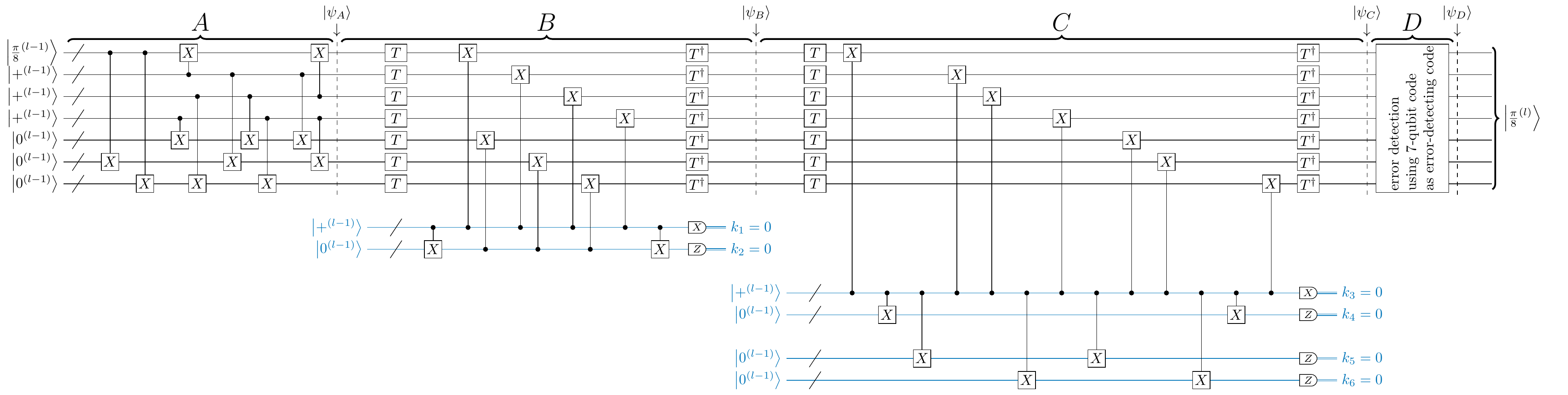}
    \caption{\label{fig:preparation_magic_explanation}A quantum circuit equivalent to the circuit in Fig.~\ref{fig:preparation_magic2} followed by the error detection using the 7-qubit code as the error-detecting code. We decompose the circuit into four parts, namely, $A$, $B$, $C$, and $D$, where the states obtained from the parts $A$, $AB$, $ABC$, and $ABCD$ of the circuit are denoted by $\Ket{\psi_A}$, $\Ket{\psi_B}$, $\Ket{\psi_C}$, and $\Ket{\psi_D}$, respectively. Note that the circuit in Fig.~\ref{fig:preparation_magic2} and that in the figure are equivalent since ${T^\dag}^{\otimes 7}$ in the part $B$ and ${T}^{\otimes 7}$ in the part $C$ cancel out due to $T^\dag T=\mathbbm{1}$.}
\end{figure*}

In this appendix, we analyze fault tolerance of preparation of $\Ket{+^{(l)}}$ using error detection twice (Fig.~\ref{fig:preparation_7qubit_plus_optimized2}), and that of preparation of $\Ket{\frac{\pi}{8}^{(l)}}$ using error detection twice (Fig.~\ref{fig:preparation_magic2}).
We show that the circuits in Figs.~\ref{fig:preparation_7qubit_plus_optimized2} and~\ref{fig:preparation_magic2} are $1$-fault-tolerant in terms of the error correction (Definition~\ref{def:fault_tolerance_correction}), and also have the fault-tolerant property for two errors in terms of the error detection (Definition~\ref{def:fault_tolerance_detection}).
In particular,
our proof will mainly focus on showing the fault-tolerant property for two errors in terms of the error detection,
since the $1$-fault tolerance in terms of the error correction can be shown in the same way as our analysis as well as the established arguments in Refs.~\cite{G6,Chamberland2019faulttolerantmagic}.
We prove the following two propositions.

\begin{proposition}
  [\label{prp:preparation_7qubit_plus_optimized2}Fault tolerance of preparation of $\Ket{+^{(l)}}$ using error detection twice]
  The circuit in Fig.~\ref{fig:preparation_7qubit_plus_optimized2} is $1$-fault-tolerant in terms of the error correction (Definition~\ref{def:fault_tolerance_correction}) and has the fault-tolerant property for two errors in terms of the error detection (Definition~\ref{def:fault_tolerance_detection}).
\end{proposition}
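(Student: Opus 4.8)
The plan is to analyze error propagation through the circuit of Fig.~\ref{fig:preparation_7qubit_plus_optimized2} using the equivalent decomposition shown in Fig.~\ref{fig:preparation_7qubit_explanation} into the four parts $A$, $B$, $C$, and $D$, tracking how level-$(l-1)$ $X$ and $Z$ errors move through the $\textsc{CNOT}$ gates by means of the commutation relations~\eqref{eq:x1}--\eqref{eq:x4}. Since the circuit and the $7$-qubit code treat $X$- and $Z$-type errors symmetrically up to the roles played by the two error-detection ancillas, it suffices to give the argument for $X$ errors and invoke the analogous argument for $Z$ errors; the statements for general Pauli faults then follow by treating an $XZ$ fault as simultaneously one $X$ fault and one $Z$ fault, exactly as the conjunction built into Definitions~\ref{def:fault_tolerance_correction} and~\ref{def:fault_tolerance_detection} requires.

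First I would dispatch the $1$-fault tolerance in terms of error correction (Definition~\ref{def:fault_tolerance_correction}). The single-detection circuit of Fig.~\ref{fig:preparation_7qubit_plus_optimized}, obtained from the same construction of Ref.~\cite{G6}, is already $1$-fault-tolerant, and Fig.~\ref{fig:preparation_7qubit_plus_optimized2} differs only by an additional verification ancilla while being designed so that each physical $\textsc{CNOT}$ still spreads a single fault to at most one data block. Hence the $1$-fault tolerance follows by the same location-by-location case analysis: a fault on or before an error-detection ancilla either flips that ancilla's outcome or leaves a residual error of weight $\leq 1$ on the output block, and in the $k_1=k_2=0$ branch the residual error has weight $\leq 1$ and is removed by decoding the $7$-qubit code. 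The substantive content is therefore the fault-tolerant property for two errors in terms of error detection (Definition~\ref{def:fault_tolerance_detection}).

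For the two-error case I would enumerate pairs of fault locations across the parts $A$, $B$, $C$, $D$ and the two detection ancillas, and in each case compute the propagated error together with the detection outcomes $(k_1,k_2)$ and the $7$-qubit-code syndrome of the output. The structural facts to establish are: (i) a fault on either detection ancilla (its preparation or its measurement) contributes no error to the output data block, so two such faults are harmless; (ii) a single data-side fault propagates to weight $\leq 1$ on the output, inherited from the $1$-fault analysis, so two data-side faults propagate to weight $\leq 2$; and (iii) the two rounds of detection are placed so that every pair of data-side faults whose combined propagation would yield a weight-$2$ error lying on a nontrivial codeword representative --- the kind of error the final $7$-qubit error-detection step alone would miss --- flips at least one of $k_1$, $k_2$. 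Granting (i)--(iii), conditioned on $k_1=k_2=0$ and on the final error-detecting operation accepting, the residual error has weight $\leq 2$ and acts trivially on the code space, which is precisely the conclusion of Definition~\ref{def:fault_tolerance_detection}.

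The main obstacle is establishing (iii): it demands checking all the ``middle of the circuit'' fault pairs, i.e.\ faults occurring between the two detection ancillas, each of which can fan out through several $\textsc{CNOT}$s, and verifying that the two detection ancillas are coupled to enough of the data wires to witness any such dangerous propagation. I expect this to reduce to a finite but somewhat tedious enumeration that parallels the flag-qubit analyses of Refs.~\cite{G6,Chamberland2019faulttolerantmagic}, and it is the step where the specific wiring of Fig.~\ref{fig:preparation_7qubit_plus_optimized2} --- not a generic argument --- is essential. The same reasoning with $X$ and $Z$ interchanged and the two ancillas exchanged covers the $Z$-error case, which completes the proof.
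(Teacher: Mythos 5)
Your skeleton matches the paper's: the same decomposition into the parts $A$, $B$, $C$, $D$ of Fig.~\ref{fig:preparation_7qubit_explanation}, the same case split on where the two faults land, and the same appeal to the $1$-fault tolerance of the single-detection circuit of Fig.~\ref{fig:preparation_7qubit_plus_optimized} for the mixed case. But the crux of your argument --- your step (iii), which you explicitly leave as a ``finite but somewhat tedious enumeration'' over fault pairs in the middle of the circuit --- is exactly the part the paper does \emph{not} do by enumeration, and you have not supplied the idea that makes the enumeration unnecessary. The paper's key observation is a coset reduction: whatever Pauli error two (indeed, any number of) faults confined to $A\cup B$ produce on the codeword, it is equivalent modulo the stabilizers~\eqref{eq:syndromes} to $E_n^{(X)}E_m^{(Z)}E_\mathrm{L}^{(X)}E_\mathrm{L}^{(Z)}$, i.e.\ a weight-$\leq 1$ $X$ error times a weight-$\leq 1$ $Z$ error times logical operators. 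The logical $X$ is absorbed because $X_\mathrm{L}\Ket{+^{(l)}}=\Ket{+^{(l)}}$; the logical $Z$ is precisely what the second detection round in $C$ (which checks stabilization by $X_\mathrm{L}$) is there to catch; and the residual weight-$\leq 2$ error has a nontrivial syndrome and is caught by $D$. This is why no pair-by-pair check of ``dangerous propagation onto a trivial-syndrome logical representative'' is needed for faults in $A\cup B$; the only locations requiring explicit sub-cases are the single faults in $C$.

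There is also a concrete error in your structural fact (i): a fault on a detection ancilla does \emph{not}, in general, contribute nothing to the data block. The ancilla is coupled to the data by $\textsc{CNOT}$s, so by the commutation relation~\eqref{eq:x2} an $X$ fault on the ancilla in part $C$ propagates back into the seven data qubits --- as the paper's Case~2 analysis spells out, it produces $E_{n'}^{(X)}E_\mathrm{L}^{(X)}$ on the data. The reason this is harmless is not absence of propagation but the specific identity $X_\mathrm{L}\Ket{+^{(l)}}=\Ket{+^{(l)}}$, i.e.\ a property of the \emph{state being prepared}, not of the wiring. As written, (i) would fail for a generic target state, so you cannot use it as a black box; you need the stabilizer argument in its place. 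Until (iii) is replaced by the coset-reduction argument (or an actual completed enumeration) and (i) is corrected, the proof is not complete.
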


\begin{proposition}
[\label{prp:preparation_magic2}Fault tolerance of preparation of $\Ket{\frac{\pi}{8}^{(l)}}$ using error detection twice]
  The circuit in Fig.~\ref{fig:preparation_magic2} is $1$-fault-tolerant in terms of the error correction (Definition~\ref{def:fault_tolerance_correction}) and has the fault-tolerant property for two errors in terms of the error detection (Definition~\ref{def:fault_tolerance_detection}).
\end{proposition}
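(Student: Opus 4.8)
The plan is to prove Proposition~\ref{prp:preparation_magic2} by a direct fault-path analysis of the circuit in Fig.~\ref{fig:preparation_magic2}, organized around the decomposition into the four parts $A,B,C,D$ shown in Fig.~\ref{fig:preparation_magic_explanation}. First I would exploit the equivalence noted in the caption of Fig.~\ref{fig:preparation_magic_explanation}: the transversal ${T^\dag}^{\otimes 7}$ in part $B$ and the transversal ${T}^{\otimes 7}$ in part $C$ cancel because $T^\dag T=\mathbbm{1}$, so it suffices to analyze the resulting circuit in which the only non-Clifford resources are the level-$(l-1)$ magic states $\Ket{\frac{\pi}{8}^{(l-1)}}$ consumed in the state-injection subcircuits of Fig.~\ref{fig:gate_teleportation}. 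Modulo those injection blocks, the circuit has exactly the $\textsc{CNOT}$/measurement skeleton of the $\Ket{+^{(l)}}$-preparation circuit of Fig.~\ref{fig:preparation_7qubit_plus_optimized2}; hence Proposition~\ref{prp:preparation_7qubit_plus_optimized2}, whose proof proceeds by the same method and which I would establish first, already handles every fault located outside the injection blocks. What remains genuinely new is to account for faults inside the injection blocks and to verify that the six ancilla detections producing $k_1,\ldots,k_6$ still catch the dangerous two-fault patterns.

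The core tool is the linearity of Pauli-error propagation: any single fault can be written as a sum of an $X$-type and a $Z$-type component, and the two types propagate independently through the Clifford skeleton according to the commutation relations~\eqref{eq:x1},~\eqref{eq:x2},~\eqref{eq:x3}, and~\eqref{eq:x4}. I would first fix a fault model in which each location — a level-$(l-1)$ state preparation ($\Ket{0^{(l-1)}}$, $\Ket{+^{(l-1)}}$, or $\Ket{\frac{\pi}{8}^{(l-1)}}$), a gate implemented by the measurement patterns of Fig.~\ref{fig:blocks}, or a measurement — can introduce an arbitrary $X$ and/or $Z$ error on one level-$(l-1)$ logical qubit; in particular a faulty injection block is equivalent to an ideal block preceded by such a Pauli error, so injection faults reduce to the same bookkeeping. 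I would then tabulate, for the $X$-component and the $Z$-component of every single fault, the pair (support of the propagated error at the output of $ABCD$, subset of $\{k_1,\ldots,k_6\}$ that is flipped). By the circuit construction along the lines of Ref.~\cite{G6} and~\cite{Chamberland2019faulttolerantmagic}, this table shows that any single fault whose propagated output component has weight $\geqq 2$ necessarily flips at least one $k_i$, and that no single fault produces an uncorrectable (weight-$\geqq 2$, unflagged) error — which already yields the $1$-fault tolerance of Definition~\ref{def:fault_tolerance_correction}.

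The fault-tolerant property for two errors (Definition~\ref{def:fault_tolerance_detection}) then follows from the same table together with a pairwise argument: for any two faults I would combine their individual propagation data and verify, by a case split on which of the parts $A,B,C,D$ each fault lies in and on their $X/Z$ types, that either the combined propagated $X$-error (respectively $Z$-error) has weight $\leqq 2$, or at least one $k_i$ is flipped so the post-selection discards the state. Conditioned on all $k_i=0$, the residual error has weight $\leqq 2$ and is therefore removed by an error-free error-detecting operation on the $7$-qubit code, yielding the exact $\Ket{\frac{\pi}{8}^{(l)}}$. Two mild points need care: cancellations between the $X$-parts (or $Z$-parts) of distinct faults must be tracked, which is handled by treating the two Pauli types separately throughout; and a fault in the final detection part $D$ can only corrupt a syndrome bit, never the data, so it is harmless once a second fault is simultaneously present.

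The step I expect to be the main obstacle is precisely this two-fault enumeration through the $\textsc{CNOT}$ network feeding the six flag ancillas: checking exhaustively that every pattern which would leave a weight-$\geqq 2$ residual error is flagged by some $k_i$. This is a finite but somewhat lengthy case analysis; I would streamline it by reusing the single-fault table, by invoking Proposition~\ref{prp:preparation_7qubit_plus_optimized2} to dispatch all faults in the Clifford skeleton, and by exploiting the structural regularity that the equivalent circuit of Fig.~\ref{fig:preparation_magic_explanation} inherits from Fig.~\ref{fig:preparation_7qubit_explanation}, so that only the additional injection-block locations require a fresh check.
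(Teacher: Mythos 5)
Your high-level skeleton matches the paper's: the same decomposition into parts $A$, $B$, $C$, $D$ of Fig.~\ref{fig:preparation_magic_explanation}, the cancellation of ${T^\dag}^{\otimes 7}$ against ${T}^{\otimes 7}$, a case split on where the two faults sit, and separate tracking of $X$- and $Z$-type propagation with flag-based post-selection. However, there is a genuine gap at the heart of your plan. You treat the circuit as ``the $\textsc{CNOT}$/measurement skeleton of Fig.~\ref{fig:preparation_7qubit_plus_optimized2} plus injection blocks'' and propose a purely combinatorial weight-versus-flag tabulation. This misses the structural fact the paper's proof rests on: $\Ket{\frac{\pi}{8}}$ is stabilized by the Clifford operator $TXT^\dag$, whose logical version is implemented transversally as ${\left(T^\dag XT\right)}^{\otimes 7}$, and parts $B$ and $C$ together realize a measurement of this logical stabilizer. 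For two faults confined to $A$ or $B$ (the paper's Case 1), the propagated error at the end of $B$ can have arbitrary weight and need not flip any $k_i$ deterministically, so a weight-counting argument cannot conclude that the residual error has weight $\leqq 2$. The paper resolves this by a projection argument: conditioned on all detections passing, the error-free parts $C$ and $D$ project onto the intersection of the $+1$ eigenspace of ${\left(T^\dag XT\right)}^{\otimes 7}$ with the code space, which is spanned by $\Ket{\frac{\pi}{8}^{(l)}}$ alone, so the output is exact. Without invoking this stabilizer property, your case of both faults lying in $A$ does not close.

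Two further points. First, the reduction to Proposition~\ref{prp:preparation_7qubit_plus_optimized2} is an overreach: the magic-state circuit's part $C$ has four auxiliary qubits ($k_3,\ldots,k_6$) arranged so that an $X$ fault on the $k_3$ ancilla, which would propagate into the data via~\eqref{eq:x2}, necessarily flips at least one of $k_4$, $k_5$, $k_6$; this flag structure has no counterpart in the $\Ket{+^{(l)}}$ circuit and must be checked afresh, as the paper does in its Cases 2--4. Second, a Pauli $X$ error commuted through $T$ or $T^\dag$ becomes $XS^\dag$ or $XS$ up to phase, i.e., non-Pauli; the paper makes your ``linearity'' remark precise by expanding $S=\frac{1+\mathrm{i}}{2}\mathbbm{1}+\frac{1-\mathrm{i}}{2}Z$ (and similarly $S^\dag$) so that every error branch is again a product of $X$ and $Z$ before the weight and flag analysis is applied. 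Your fault model of ``an ideal block preceded by a Pauli error'' does not by itself dispose of this, since errors originating before an injection block still propagate through its $T$ gate.
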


\begin{proof}[Proof of Proposition~\ref{prp:preparation_7qubit_plus_optimized2}]
  We analyze the fault-tolerant property for two errors in terms of the error detection,
  since the $1$-fault tolerance in terms of the error correction can be shown essentially in the same way as our analysis and Ref.~\cite{G6}.
  To analyze the circuit in Fig.~\ref{fig:preparation_7qubit_plus_optimized2} followed by the error detection using the 7-qubit code as the error-detecting code,
  we decompose the circuit into four parts, namely, $A$, $B$, $C$, and $D$ in Fig.~\ref{fig:preparation_7qubit_explanation},
  where the states obtained from the parts $A$, $AB$, $ABC$, and $ABCD$ of the circuit are denoted by $\Ket{\psi_A}$, $\Ket{\psi_B}$, $\Ket{\psi_C}$, and $\Ket{\psi_D}$, respectively.
  Since the $7$-qubit code is a Calderbank-Shor-Steane (CSS) code, the analyses of the fault-tolerant property for two $X$ errors and that for two $Z$ errors will be similar, and in the following, errors may refer to either $X$ and $Z$ errors unless stated otherwise.

  We consider the following exhaustive cases of two errors on $ABC$ in Fig.~\ref{fig:preparation_7qubit_explanation}, which corresponds to the circuit in Fig.~\ref{fig:preparation_7qubit_plus_optimized2}.
  \begin{enumerate}
    \item\label{case:1} Two errors occur in $A$ or $B$.
    \item\label{case:2} One error occurs in $A$ or $B$, and one error occurs in $C$.
    \item\label{case:4} Two errors occur in $C$.
  \end{enumerate}

  \textbf{Case~\ref{case:1}}:
  After the parts $A$ and $B$ with errors, we obtain a state $\Ket{\psi_B}$.
  Using stabilizers~\eqref{eq:syndromes} of the $7$-qubit code,
  we can reduce the $X$ and $Z$ errors in $\Ket{\psi_B}$ to a combination of a one-qubit $X$ error, a one-qubit $Z$ error, a logical $X$ error, and a logical $Z$ error.
  For $n\in\{1,\ldots,7\}$, let $X_n$ and $Z_n$ denote the one-qubit $X$ error and the one-qubit $Z$ error, respectively, on the $n$th qubit of the $7$-qubit code.
  Then, we can write $\Ket{\psi_B}$ suffering from the errors as
  \begin{equation}
    \label{eq:state_error_plus}
    \Ket{\psi_B}=E_n^{(X)} E_m^{(Z)} E_\mathrm{L}^{(X)} E_\mathrm{L}^{(Z)}\Ket{+^{(l)}},
  \end{equation}
  where $E_n^{(X)}\in\{\mathbbm{1},X_n\}$ for $n\in\{1,\ldots,7\}$ represents the one-qubit $X$ error on the $n$th qubit, $E_m^{(Z)}\in\{\mathbbm{1},Z_m\}$ for $m\in\{1,\ldots,7\}$ represents the one-qubit $Z$ error on the $m$th qubit, and $E_\mathrm{L}^{(X)}\in\left\{\mathbbm{1}^{\otimes 7},X_\mathrm{L}\right\}$ and $E_\mathrm{L}^{(Z)}\in\left\{\mathbbm{1}^{\otimes 7},Z_\mathrm{L}\right\}$ represent logical $X$ and $Z$ errors in terms of the logical operators $X_\mathrm{L}$ and $Z_\mathrm{L}$ defined as~\eqref{eq:logical}.
  Since $\Ket{+^{(l)}}$ is stabilized by $X_\mathrm{L}$
  \begin{equation}
    \label{eq:stabilizer_plus}
    X_\mathrm{L}\Ket{+^{(l)}}=\Ket{+^{(l)}},
  \end{equation}
  $\Ket{\psi_B}$ in~\eqref{eq:state_error_plus} can indeed be written as
  \begin{equation}
    \label{eq:state_error_plus_no_x}
    \Ket{\psi_B}=E_n^{(X)} E_m^{(Z)} E_\mathrm{L}^{(Z)}\Ket{+^{(l)}}.
  \end{equation}
  In the part $C$, we detect the logical $Z$ error by measuring whether the state is stabilized by $X_\mathrm{L}$.
  In the part $D$, we detect up to two-qubit $X$ and $Z$ errors.
  Hence, after $C$ and $D$, the state~\eqref{eq:state_error_plus_no_x} is projected onto
  \begin{equation}
    \label{eq:state_error_plus_no_error}
    \Ket{\psi_D}=\Ket{+^{(l)}},
  \end{equation}
  which includes no error.

  \textbf{Case~\ref{case:2}}:
  Since the parts $A$ and $B$ are the same as the circuit in Fig.~\ref{fig:preparation_7qubit_plus_optimized},
  one error in $A$ and $B$ causes at most one-qubit error in $\Ket{\psi_B}$ as shown in Ref.~\cite{G6}; that is, we have
  \begin{equation}
    \Ket{\psi_B}=E_n^{(X)} E_m^{(Z)}\Ket{+^{(l)}}.
  \end{equation}
  Since one error occurs in $C$, we have the following cases, which show that $\Ket{\psi_C}$ is $\Ket{+^{(l)}}$ with at most two-qubit errors.
  \begin{itemize}
    \item If the one-qubit $X$ error and the one-qubit $Z$ error in $C$ occur on the $n^\prime$th qubit and the $m^\prime$th qubit of the $7$-qubit code, respectively, then we have
      \begin{equation}
        \label{eq:psi_c_1}
        \Ket{\psi_C}=E_{n^\prime}^{(X)} E_{m^\prime}^{(Z)} \left(E_n^{(X)} E_m^{(Z)}\Ket{+^{(l)}}\right).
      \end{equation}
    \item If the one-qubit $X$ error in $C$ occurs in the auxiliary qubit for the error detection in $C$ with measurement outcome $k_2$, the $X$ error propagates to the qubits for $\Ket{+^{(l)}}$ according to the commutation relation~\eqref{eq:x2}, and in the same way as~\eqref{eq:state_error_plus}, we have for some $n^\prime$
      \begin{align}
        \label{eq:psi_c_3}
        \Ket{\psi_C}&=E_{n^\prime}^{(X)} E_\mathrm{L}^{(X)} \left(E_n^{(X)} E_m^{(Z)}\Ket{+^{(l)}}\right)\nonumber\\
                    &\propto E_{n^\prime}^{(X)} \left(E_n^{(X)} E_m^{(Z)}\Ket{+^{(l)}}\right),
      \end{align}
      where we use~\eqref{eq:stabilizer_plus} to ignore the logical $X$ error.
    \item If the one-qubit $Z$ error in $C$ occurs in the auxiliary qubits for the error detection in $C$, the $Z$ error never propagates to the qubits for $\Ket{+^{(l)}}$.
  \end{itemize}
  Since we can detect up to two-qubit errors in $D$, we obtain from any of~\eqref{eq:psi_c_1} and~\eqref{eq:psi_c_3}
  \begin{equation}
    \Ket{\psi_D}=\Ket{+^{(l)}}.
  \end{equation}

  \textbf{Case~\ref{case:4}}:
  We have
  \begin{equation}
    \Ket{\psi_B}=\Ket{+^{(l)}}.
  \end{equation}
  In the same way as Case~\ref{case:2}, due to the fact that the logical $X$ error in $\Ket{+^{(l)}}$ can be ignored, two errors in $C$ cause at most two-qubit errors
  \begin{equation}
    \Ket{\psi_C}= E_{n^\prime}^{(X)} E_n^{(X)} E_{m^\prime}^{(Z)} E_m^{(Z)}\Ket{+^{(l)}}.
  \end{equation}
  Then, we can detect these two-qubit errors in $D$
  \begin{equation}
    \Ket{\psi_D}=\Ket{+^{(l)}}.
  \end{equation}

  Therefore, even if the circuit in Fig.~\ref{fig:preparation_7qubit_explanation} includes two errors, $\Ket{\psi_C}$ in any of these cases includes at most two-qubit errors that can be detected in $D$.
  Thus, we obtain the conclusion.
\end{proof}

\begin{proof}[Proof of Proposition~\ref{prp:preparation_magic2}]
  In the same way as the proof of Proposition~\ref{prp:preparation_7qubit_plus_optimized2},
  we analyze the fault-tolerant property for two errors in terms of the error detection,
  since the $1$-fault tolerance in terms of the error correction can be shown essentially in the same way as our analysis and Refs.~\cite{G6,Chamberland2019faulttolerantmagic}.
  To analyze the circuit in Fig.~\ref{fig:preparation_magic2} followed by the error detection using the 7-qubit code as the error-detecting code,
  we consider an equivalent circuit in Fig.~\ref{fig:preparation_magic_explanation} that is decomposed into four parts, namely, $A$, $B$, $C$, and $D$ in Fig.~\ref{fig:preparation_magic_explanation},
  where the states obtained from the parts $A$, $AB$, $ABC$, and $ABCD$ of the circuit are denoted by $\Ket{\psi_A}$, $\Ket{\psi_B}$, $\Ket{\psi_C}$, and $\Ket{\psi_D}$, respectively.
  Note that the circuits in Figs.~\ref{fig:preparation_magic2} and~\ref{fig:preparation_magic_explanation} are equivalent since ${T^\dag}^{\otimes 7}$ in the part $B$ of Fig.~\ref{fig:preparation_magic_explanation} and ${T}^{\otimes 7}$ in the part $C$ of Fig.~\ref{fig:preparation_magic_explanation} cancel out due to $T^\dag T=\mathbbm{1}$.
  The circuit in Fig.~\ref{fig:preparation_magic_explanation} includes non-Clifford gates $T$ and $T^\dag$ following commutation relations
  \begin{align}
    TX&=\mathrm{e}^{\mathrm{i}\frac{\pi}{4}}XS^\dag T,\\
    T^\dag X&=\mathrm{e}^{-\mathrm{i}\frac{\pi}{4}}XS T^\dag,\\
    TZ&=ZT,\\
    T^\dag Z&=ZT^\dag.
  \end{align}
  Thus, even if $X$ and $Z$ errors occur on qubits in the circuit,
  the state prepared by this circuit may include errors represented by Clifford gates including $S$ and $S^\dag$, not only $X$ and $Z$.
  However, we can expand the Clifford gates representing the errors using
  \begin{align}
    S&=\frac{1+\mathrm{i}}{2}\mathbbm{1}+\frac{1-\mathrm{i}}{2}Z,\\
    S^\dag &=\frac{1-\mathrm{i}}{2}\mathbbm{1}+\frac{1+\mathrm{i}}{2}Z,
  \end{align}
  so that each term can be represented in terms of $X$ and $Z$.
  Based on this expansion, we analyze the $X$ and $Z$ errors in the following of this proof,
  and in the same way as the proof of Proposition~\ref{prp:preparation_7qubit_plus_optimized2}, errors may refer to either $X$ and $Z$ errors unless stated otherwise.

  We consider the following exhaustive cases of two errors on $ABC$ in Fig.~\ref{fig:preparation_magic_explanation}, which corresponds to the circuit in Fig.~\ref{fig:preparation_magic2}.
  \begin{enumerate}
    \item\label{case:1magic} Two errors occur in $A$ or $B$.
    \item\label{case:2magic} One error occurs in $A$, and one error occurs in $C$.
    \item\label{case:4magic} One error occurs in $B$, and one error occurs in $C$.
    \item\label{case:6magic} Two errors occur in $C$.
  \end{enumerate}

  \textbf{Case~\ref{case:1magic}}:
  We here show that given $\Ket{\psi_B}$ with errors,
  after $C$ and $D$ without error, the state is projected onto
  \begin{equation}
    \Ket{\psi_D}=\Ket{\frac{\pi}{8}^{(l)}},
  \end{equation}
  which includes no error.
  In $B$ and $C$ for the error detection of $\Ket{\frac{\pi}{8}^{(l)}}$,
  similarly to the magic state preparation in Refs.~\cite{G6,Chamberland2019faulttolerantmagic},
  we exploit the fact that a one-qubit state $\Ket{\frac{\pi}{8}}$ is stabilized by a Clifford gate $TXT^\dag$
  \begin{equation}
    TXT^\dag\Ket{\frac{\pi}{8}}=\Ket{\frac{\pi}{8}}.
  \end{equation}
  Due to the transversal implementation~\eqref{eq:transversal_implementation} of Clifford gates for the $7$-qubit code,
  a logical Clifford gate $TXT^\dag=\mathrm{e}^{-\mathrm{i}\frac{\pi}{4}}SX$ of the $7$-qubit code at the concatenation level $l$ is implemented by ${\left(T^\dag X T\right)}^{\otimes 7}={\left(\mathrm{e}^{\mathrm{i}\frac{\pi}{4}}S^\dag X\right)}^{\otimes 7}$ at the concatenation level $l-1$; \textit{e.g.}, a level-$1$ state $\Ket{\frac{\pi}{8}^{(1)}}$ of the $7$-qubit code is stabilized by
  \begin{equation}
    {\left(T^\dag XT\right)}^{\otimes 7}\Ket{\frac{\pi}{8}^{(1)}}=\Ket{\frac{\pi}{8}^{(1)}}.
  \end{equation}
  Without any error,
  $\Ket{\psi_B}$ after $B$ would be projected onto a $+1$ eigenstate of ${\left(T^\dag XT\right)}^{\otimes 7}$ in the same way as Ref.~\cite{G6},
  and in Case~\ref{case:1}, $\Ket{\psi_C}$ after $C$ is projected onto a $+1$ eigenstate of ${\left(T^\dag XT\right)}^{\otimes 7}$.
  Note that if no error occurs, only the first auxiliary qubit in $C$ with measurement outcome $k_3$ is necessary for this projection, while we will use the other three auxiliary qubits in $C$ with measurement outcomes $k_4$, $k_5$, and $k_6$ later to detect errors in the auxiliary qubits.
  The obtained $+1$ eigenstate $\Ket{\psi_C}$ of ${\left(T^\dag XT\right)}^{\otimes 7}$ may not be $\Ket{\frac{\pi}{8}^{(l)}}$ since this $+1$ eigenstate of ${\left(T^\dag XT\right)}^{\otimes 7}$ is not necessarily stabilized by the stabilizers of the $7$-qubit code.
  After $D$ without error, the state is projected onto a $+1$ eigenstate of all the stabilizers~\eqref{eq:syndromes} of the $7$-qubit code.
  Since ${\left(T^\dag XT\right)}^{\otimes 7}$ is a logical operator commuting with all the stabilizers of the $7$-qubit code,
  we obtain the $+1$ eigenstate of ${\left(T^\dag XT\right)}^{\otimes 7}$ in the code space of the $7$-qubit code, \textit{i.e.},
  \begin{equation}
  \Ket{\psi_D}=\Ket{\frac{\pi}{8}^{(l)}}.
  \end{equation}

  \textbf{Case~\ref{case:2magic}}:
  In the same way as Case~\ref{case:1magic},
  given $\Ket{\psi_A}$ with errors,
  after $B$ without error, the state is projected onto a $+1$ eigenstate of ${\left(T^\dag XT\right)}^{\otimes 7}$.
  Then, since one error occurs in $C$, we have the following cases, which show that $\Ket{\psi_C}$ is a $+1$ eigenstate of ${\left(T^\dag XT\right)}^{\otimes 7}$ with at most one-qubit errors.
  \begin{itemize}
    \item If the one-qubit $X$ error and the one-qubit $Z$ error in $C$ occur on one of the qubits for $\Ket{\frac{\pi}{8}^{(l)}}$, then after $C$, we have a $+1$ eigenstate of ${\left(T^\dag XT\right)}^{\otimes 7}$ with one-qubit errors.
    \item If the one-qubit $X$ error in $C$ occurs in the first auxiliary qubit for the error detection in $C$ with measurement outcome $k_3$, the $X$ error propagates to the qubits for $\Ket{\frac{\pi}{8}^{(l)}}$ according to the commutation relation~\eqref{eq:x2}. However, the $X$ error also propagates to the second auxiliary qubit with measurement outcome $k_4$, and the measurement outcome of the second auxiliary qubit becomes $k_4=1$; that is, the state is discarded.
    \item If the one-qubit $X$ error in $C$ occurs in one of the second, third, and fourth auxiliary qubits with measurement outcomes $k_4$, $k_5$, and $k_6$, respectively, the $X$ error never propagates to the qubits for $\Ket{\frac{\pi}{8}^{(l)}}$.
    \item If the one-qubit $Z$ error in $C$ occurs in the auxiliary qubits for the error detection in $C$, the $Z$ error never propagates to the qubits for $\Ket{\frac{\pi}{8}^{(l)}}$.
  \end{itemize}
  Note that a $+1$ eigenstate $\Ket{\psi_C}$ of ${\left(T^\dag XT\right)}^{\otimes 7}$ with one-qubit errors means that by applying one-qubit operators to $\Ket{\psi_C}$ for correcting the one-qubit errors, we could obtain one of the $+1$ eigenstates of ${\left(T^\dag XT\right)}^{\otimes 7}$.
  Since the $7$-qubit code has distance $3$,
  after applying $D$ to $\Ket{\psi_C}$ even with one-qubit errors, we obtain the $+1$ eigenstate of ${\left(T^\dag XT\right)}^{\otimes 7}$ projected onto the code space of the $7$-qubit code, \textit{i.e.},
  \begin{equation}
    \Ket{\psi_D}=\Ket{\frac{\pi}{8}^{(l)}}.
  \end{equation}

  \textbf{Case~\ref{case:4magic}}:
  We have
  \begin{equation}
    \Ket{\psi_A}=\Ket{\frac{\pi}{8}^{(l)}}.
  \end{equation}
  As shown in Ref.~\cite{G6}, if $B$ includes one error, then after $B$, we have a $+1$ eigenstate of ${\left(T^\dag XT\right)}^{\otimes 7}$ with at most one-qubit errors.
  Then after $C$ with another error, due to the same analysis as Case~\ref{case:2magic},
  we have a $+1$ eigenstate of ${\left(T^\dag XT\right)}^{\otimes 7}$ with at most two-qubit errors.
  In this case, we can detect these two-qubit errors in $D$
  \begin{equation}
    \Ket{\psi_D}=\Ket{\frac{\pi}{8}^{(l)}}.
  \end{equation}

  \textbf{Case~\ref{case:6magic}}:
  We have
  \begin{equation}
    \Ket{\psi_B}=\Ket{\frac{\pi}{8}^{(l)}}.
  \end{equation}
  Then, since two errors occur in $C$, we have the following cases, which show that $\Ket{\psi_C}$ is a $+1$ eigenstate of ${\left(T^\dag XT\right)}^{\otimes 7}$ with at most two-qubit errors.
  \begin{itemize}
    \item For $n=1,2$, if $n$ errors in $C$ occur on $n$ out of the $7$ qubits for $\Ket{\frac{\pi}{8}^{(l)}}$, then after $C$, we have a $+1$ eigenstate of ${\left(T^\dag XT\right)}^{\otimes 7}$ with $n$-qubit errors.
    \item If one or two $X$ errors in $C$ occur in the first auxiliary qubit for the error detection in $C$ with measurement outcome $k_3$, the $X$ errors propagate to the qubits for $\Ket{\frac{\pi}{8}^{(l)}}$ according to the commutation relation~\eqref{eq:x2}. However, each $X$ error also propagates to at least one of the second, third, and fourth auxiliary qubits with measurement outcomes $k_4$, $k_5$, and $k_6$. The part $C$ of the circuit in Fig.~\ref{fig:preparation_magic_explanation} is designed so that at least one of the measurement outcomes of these auxiliary qubits should become $1$ in this case; that is, the state is discarded.
    \item If one or two $X$ errors in $C$ occur in one of the second, third, and fourth auxiliary qubits with measurement outcomes $k_4$, $k_5$, and $k_6$, respectively, the $X$ errors never propagate to the qubits for $\Ket{\frac{\pi}{8}^{(l)}}$.
    \item If one or two $Z$ errors in $C$ occur in the auxiliary qubits for the error detection in $C$, the $Z$ errors never propagate to the qubits for $\Ket{\frac{\pi}{8}^{(l)}}$.
  \end{itemize}
  Then, we can detect up to two-qubit errors in $D$
  \begin{equation}
    \Ket{\psi_D}=\Ket{+^{(l)}}.
  \end{equation}

  Therefore, even if the circuit in Fig.~\ref{fig:preparation_magic_explanation} includes two errors, $\Ket{\psi_D}$ in any of these cases is $\Ket{\frac{\pi}{8}^{(l)}}$.
  Also note that since the $7$-qubit code has distance $3$, errors in $\Ket{\psi_C}$ can be reduced to at most two-qubit errors.
  Thus, we obtain the conclusion.
\end{proof}

\bibliography{citation_bibtex}

\end{document}